\documentclass[letterpaper,unpublished,onecolumn,11pt]{quantumarticle}
\pdfoutput=1
\usepackage[numbers,sort&compress]{natbib}
\usepackage{amsmath, amssymb, amsthm, bm}
\usepackage{mathtools}
\usepackage{graphicx}
\usepackage{nicefrac}
\usepackage{nicematrix}
\usepackage{physics}
\usepackage{standalone}
\usepackage{blkarray}
\usepackage{tikz}
\usepackage{tikz-cd}
\usetikzlibrary{positioning}
\usetikzlibrary{quantikz2}
\usetikzlibrary{external}
\tikzset{external/mode=convert with system call}
\tikzset{external/figure name=main-figure}
\tikzexternalize[prefix=tikz-cache/]
\AtBeginEnvironment{tikzcd}{\tikzexternaldisable}
\AtEndEnvironment{tikzcd}{\tikzexternalenable}

\usepackage{subfig}
\usepackage{ifthen}
\usetikzlibrary{fit}
\usetikzlibrary{shapes.geometric}
\tikzset{
dot/.style = {circle, fill, minimum size=#1,
              inner sep=0pt, outer sep=0pt}
dot/.default = 6pt 
}
\usepackage{rotating}
\usepackage{makecell}
\usepackage{tabularx}
\usepackage{booktabs}
\usepackage{multirow}
\usepackage{hyperref}
\hypersetup{colorlinks=true, linkcolor=blue, urlcolor=blue}
\usepackage[capitalize]{cleveref}
\usepackage{stackengine}
\definecolor{niceblue}{RGB}{55, 110, 215}
\definecolor{nicered}{RGB}{214, 39, 40}
\definecolor{purple}{RGB}{148, 103, 189}
\definecolor{nicegreen}{RGB}{98, 159, 116}
\definecolor{gold}{RGB}{239,191,4}

\newtheorem{theorem}{Theorem}[section]
\newtheorem{corollary}{Corollary}[section]
\newtheorem{lemma}{Lemma}[section]
\newtheorem{proposition}{Proposition}[section]

\theoremstyle{definition}
\newtheorem{definition}{Definition}[section]
\newtheorem{example}{Example}[section]
\newtheorem*{remark}{Remark}

\DeclareMathOperator{\im}{Im}
\DeclareMathOperator{\spackle}{spackle}
\DeclareMathOperator{\Hom}{Hom}
\DeclareMathOperator{\backle}{backle}
\DeclareMathOperator{\mbqc}{MBQC}
\DeclareMathOperator{\RANK}{rk}
\DeclareMathOperator\supp{supp}
\DeclareMathOperator\argmin{argmin}

\DeclareRobustCommand{\rvdots}{%
  \vbox{
    \baselineskip4\p@\lineskiplimit\z@
    \kern-\p@
    \hbox{.}\hbox{.}\hbox{.}
  }
}

\newcommand{\akd}[1]{}
\newcommand{\arthur}[1]{}
\newcommand{\mv}[1]{}
\newcommand{\itz}[1]{}

\newcommand{\labeledmeas}[2]{
    \overset{{\color{nicered} #1}}{#2}
}
\newcommand{\labeledwire}[2]{
    \wire[l][1]["\scriptstyle \color{#1} #2"{above,pos=0.2}]{a}
}
\newcommand{\gatebox}[3][]{
    \gategroup[#2,steps=#3,style={dashed,rounded
    corners,fill=blue!20, inner
    xsep=2pt},background,label style={label
    position=above,anchor=north,yshift=0.5cm}]{#1}
}

\newcommand{\mygate}[2]{%
    \gate[#1, style={fill=none}]{#2}%
}

\title{Fault-tolerant transformations of spacetime codes}
\author{Arthur Pesah}
\affiliation{Xanadu Quantum Technologies Inc., Toronto, Ontario, M5G 2C8, Canada}
\affiliation{University College London, Gower Street, London WC1E 6BT, United Kingdom}
\author{Austin K. Daniel}
\affiliation{Xanadu Quantum Technologies Inc., Toronto, Ontario, M5G 2C8, Canada}
\author{Ilan Tzitrin}
\affiliation{Xanadu Quantum Technologies Inc., Toronto, Ontario, M5G 2C8, Canada}
\author{Michael Vasmer}
\affiliation{Xanadu Quantum Technologies Inc., Toronto, Ontario, M5G 2C8, Canada}
\affiliation{Inria Paris, 48 rue Barrault, Paris 75013, France}
\affiliation{Institute for Quantum Computing, Waterloo, Ontario, N2L 3G1, Canada}
\affiliation{Perimeter Institute for Theoretical Physics, Waterloo, Ontario, N2L 2Y5, Canada}

\begin{document}

\begin{abstract}
    Recent advances in quantum error-correction (QEC) have shown that it is often beneficial to understand fault-tolerance as a dynamical process---a circuit with redundant measurements that help correct errors---rather than as a static code equipped with a syndrome extraction circuit.
    Spacetime codes have emerged as a natural framework to understand error correction at the circuit level while leveraging the traditional QEC toolbox.
    Here, we introduce a framework based on chain complexes and chain maps to model spacetime codes and transformations between them. We show that stabilizer codes, quantum circuits, and decoding problems can all be described using chain complexes, and that the equivalence of two spacetime codes can be characterized by specific maps between chain complexes, the \textit{fault-tolerant maps}, that preserve the number of encoded qubits, fault distance, and minimum-weight decoding problem. As an application of this framework, we extend the foliated cluster state construction from stabilizer codes to any spacetime code, showing that any Clifford circuit can be transformed into a measurement-based protocol with the same fault-tolerant properties. To this protocol, we associate a chain complex which encodes the underlying decoding problem, generalizing previous cluster state complex constructions.
    Our method enables the construction of cluster states from non-CSS, subsystem, and Floquet codes, as well as from logical Clifford operations on a given code.
\end{abstract}

\maketitle

\tableofcontents

\section{Introduction}

\subsection{Motivation}

Fault-tolerant quantum computing architectures have traditionally been designed in a ``static" code-centric approach. In this paradigm, constructing a code, building syndrome extraction circuits, and performing logical operations are treated as separate processes.
However, recent advances have shown that it is often beneficial to take a more circuit-centric approach to quantum error correction. For instance, Floquet codes~\cite{hastings2021dynamically,Kesselring_2024, Townsend_Teague_2023, Davydova_2023} have a stabilizer group that evolves through a sequence of non-commuting measurements, making a dynamical description essential.
The order of measurements influences both the fault-tolerant properties of the circuit and the logical operation being applied~\cite{Davydova_2024}. Similarly, the precise scheduling of gauge measurements of a subsystem code~\cite{kribs2005unified,poulin2005stabilizer,kribs2006operator} can have an impact on the performance of the code~\cite{Higgott_2021,gidney2023less,alam2024dynamical,alam2025baconshor}.
Finally, the tools and frameworks developed for analyzing error-correcting circuits have already had a large impact in the field: they have helped design better circuit-level decoders~\cite{higgott2023improved}, more efficient syndrome extraction circuits~\cite{mcewen2023relaxing,gidney2023new,shaw2024lowering}, and noise reduction methods with less overhead than traditional quantum error-correction~\cite{delfosse2024lowcostnoisereductionclifford}.
They have also provided new insights into the study of problematic ``hook errors''~\cite{beverland2024fault}.

Several frameworks have been proposed to formally study fault-tolerant circuits.
The first, originally proposed by Bacon et al.~\cite{bacon2015sparse} and more recently revived by Pryadko \cite{pryadko2020maximumlikelihood} and Gottesman \cite{gottesman2022opportunities}, associates a spacetime subsystem code with any Clifford circuit.
In this code, each physical qubit corresponds to a spacetime location in the circuit.
Gauge operators are defined for each gate, measurement, and input stabilizer, and represent trivial circuit errors. The stabilizers of this subsystem code are then associated either with redundancies between measurements—--often referred to as detectors~\cite{gidney2021stim}—--or with measurements that become redundant when additional ones are introduced at the beginning or end of the circuit (which can be viewed as incomplete detectors).
Given a circuit noise model, the maximum-likelihood decoding problem can also be expressed in terms of this subsystem code \cite{pryadko2020maximumlikelihood}.
In a related framework proposed by Delfosse and Paetznick~\cite{delfosse2023spacetime}, spacetime codes are modeled as stabilizer codes, with only the full detectors as stabilizers.
Overall, viewing a fault-tolerant circuit as a spacetime code allows to understand the fault-tolerant properties of the circuit in terms of the properties of the code, such as the number of encoded qubits throughout the computation, the fault distance of the circuit (the minimal number of physical errors that leads to a logical error on the output), and the circuit-level decoding problem.

A natural next step in understanding spacetime codes is determining when two are equivalent.
In the realm of static codes, studying code equivalence has led to valuable insights.
For instance, defining two code families as equivalent if they are related by a geometrically-local Clifford unitary operator, it has been shown that $D$-dimensional color codes are equivalent to $D$ copies of the toric code~\cite{yoshida2011classification,bombin2012universal,bombin2014structure,kubica2015unfolding}.
This insight has enabled the construction of color code decoders based on toric code decoders~\cite{delfosse2014decoding,kubica2023efficient}.
More general classifications of quantum code equivalence classes have recently been developed~\cite{khesin2024equivalence, cross2025small}.
Establishing a notion of equivalence in the context of error-correcting circuits is particularly valuable, as compiling a fault-tolerant process depends on the hardware architecture: different quantum devices have distinct gate sets, connectivity constraints, and noise models. Finding a device-tailored compilation that preserves fault tolerance is therefore a crucial challenge.

As a simple example to understand this, let us consider the \texttt{SWAP} gate and its compilation into three \texttt{CNOT} gates:
\begin{align*}
    \tikzsetnextfilename{sec1-swap}
    \begin{quantikz}
        & \gate[2]{\text{\texttt{SWAP}}} &  \\
        & \ghost{\text{\texttt{SWAP}}} &
    \end{quantikz}
    \rightarrow
    \tikzsetnextfilename{sec1-swap-cnots}
    \begin{quantikz}
        & \ctrl{1} & \targ{}& \ctrl{1} &  \\
        & \targ{}& \ctrl{-1} & \targ{}&
    \end{quantikz}
\end{align*}
While the circuit on the left propagates any single-qubit $X$ or $Z$ error into a single-qubit error on the other qubit, the circuit on the right can propagate single-qubit errors occurring in between the \texttt{CNOT} gates into two-qubit errors.
If a \texttt{SWAP} gate is therefore present in a fault-tolerant circuit, compiling it into \texttt{CNOT}s can lead to a reduction in the fault distance compared to directly exchanging the qubits. On the other hand, we might want to consider the two circuits below as equivalent:
\begin{align*}
    \tikzsetnextfilename{sec1-mxx-with-double-hadamard}
    \begin{quantikz}
        & \gate[2]{M_{XX}} & \gate{H} & \gate{H} & \gate[2]{M_{XX}} & \\
        &                  & \gate{H} & \gate{H} &                  &
    \end{quantikz}
    \rightarrow
    \tikzsetnextfilename{sec1-mxx-without-double-hadamard}
    \begin{quantikz}
        & \gate[2]{M_{XX}} & \gate[2]{M_{XX}} & \\
        &                  &                  &
    \end{quantikz}
\end{align*}
where $M_{XX}$ and represents a two-qubit $XX$ measurements.
Indeed, while adding Hadamard gates might increase the probability of errors in between the two measurements, it does not fundamentally change the decoding problem or fault distance of the circuit.

An example of recent progress in understanding equivalent compilations of fault-tolerant processes relates to the problem of ``Floquetifying'' codes \cite{Townsend_Teague_2023, delafuente2024xyzrubycode, delafuente2024dynamical, rodatz2024floquetifying,xu2025faulttolerant}.
This problem can be formulated as follows: given a family of stabilizer codes, find a sequence of two-qubit measurements such that the resulting Floquet code maintains the same fault-tolerant properties as the original code family.
Most progress towards this goal has been made by formulating the problem in ZX-calculus~\cite{Coecke_2011,coecke2018picturing} and deriving local diagrammatic rewrite rules that preserve the local structure of detectors and the logical channel action. More recently, Ref.~\cite{rodatz2024floquetifying} has demonstrated the existence of ZX rewrite rules that preserve the fault distance of the circuit.
ZX-calculus has also enabled distance-preserving compilations of surface code circuits optimized for superconducting architectures~\cite{mcewen2023relaxing}.

Another example of adapting fault-tolerant protocols to specific architectures is the para\-digm of measurement-based quantum computing (MBQC), which is particularly relevant to photonic architectures~\cite{bourassa2021blueprintscalable,tzitrin2021faulttolerant,walshe2025linearoptical,aghaeerad2025scaling,bartolucci2021fusionbased,bombin2023fault,alexander2024manufacturable}.
In MBQC, quantum computation is performed by preparing specific resource states, known as \textit{graph states} or \textit{cluster states}, which are then measured in a particular sequence to execute the computation \cite{raussendorf2003measurement,raussendorf2006fault,broadbent2009parallelizing}.
A stabilizer code~\cite{gottesman1997stabilizer} can be transformed into a cluster state through a process called foliation \cite{bolt2016foliated,BrownUniversalFTMBQC2020}, resulting in a measurement-based protocol that acts as a memory channel and preserves the fault-tolerant properties and decoding problem of the original code (in the phenomenological noise model).
Foliation was first introduced for the surface code~\cite{raussendorf2003measurement}, later extended to all CSS codes~\cite{bolt2016foliated}, and more recently generalized to arbitrary stabilizer codes and subsystem codes~\cite{BrownUniversalFTMBQC2020} and certain Floquet codes~\cite{paesani2023highthreshold}.
Foliation was also recently used to map CSS code error correction circuits to mixed-state phases of (decohered) cluster states~\cite{negari2024spacetime}.
Additionally, fault-tolerant MBQC protocols can be understood by associating a chain complex---called the cluster state complex or fault complex---with the protocol~\cite{newman2020generating,bombin2023fault,hillmann2024single}, which allows one to consider protocols beyond foliation~\cite{nickerson2018measurement,newman2020generating}.

While the equivalence between the MBQC and the circuit model of fault-tolerance is well-understood in the context of the surface code---for instance with ZX-calculus~\cite{bombin2024unifying}---, the more general case has not yet been rigorously proven.

\subsection{Main results}

This work advances the understanding of spacetime codes (viewed as chain complexes) by studying their equivalence under transformations that preserve fault-tolerant properties.
Using this understanding, we provide a method to transform any Clifford circuit into an MBQC protocol with equivalent fault-tolerant properties and formulate its decoding problem using cluster state complexes. Specifically, we establish the following:
\begin{enumerate}
    \item Stabilizer codes, subsystem codes and quantum circuits can all be described using chain complexes. While chain complexes have traditionally been used to construct, analyze, and transform CSS codes---and more recently, cluster states---we show that they can be applied more generally.
    \item The equivalence of two spacetime codes can be characterized by a relaxed version of chain maps---that we call \textit{weak chain maps}---between their associated chain complexes. We define a \textit{fault-tolerant map} as a weak chain map that preserves the number of encoded qubits, fault distance, and minimum-weight solution to the decoding problem. We further establish the existence of two specific fault-tolerant maps, which we term reduction rules A and B.
    \item Any Clifford circuit composed of single-qubit Clifford gates, controlled Pauli gates (where the Pauli operator may act on multiple qubits), and single-qubit Pauli measurements can be transformed into a measurement-based protocol with equivalent fault-tolerant properties.
    As examples, we construct cluster states from non-CSS, subsystem, and Floquet codes, as well as from logical Clifford operations on a given code.
    The decoding problem of these protocols can be represented by a novel cluster state complex, which generalizes previous formulations. Notably, our cluster state complexes can be non-bipartite, incorporate
    $Y$ measurements, and include input and output nodes.
\end{enumerate}

\subsection{Reader's guide}

Despite the length of this paper and a certain dependence between sections, some readers could find value in consulting only particular parts. We provide here a few pathways through the paper depending on different readers' interests:
\begin{itemize}
    \item For a reader only interested in \textbf{understanding our chain complex formalism} for representing and manipulating codes beyond CSS stabilizer ones, \cref{sec:qec-from-chain-complexes} should be sufficient. A reader already familiar with chain complexes might want to jump directly to \cref{example:non-css-code} and \cref{example:subsystem-code} to learn how they are used to represent non-CSS and subsystem codes, before going to \cref{sec:fault-tolerant-maps}, in which the new notion of fault-tolerant map between complexes is introduced. The particular fault-tolerant maps that we introduce---rules A and B---can be understood graphically from \cref{fig:reduction-rules} on a first read, while \cref{sec:rule-a} and \cref{sec:rule-b} can be consulted if the formal proof of their properties is of interest.
    \item A reader mainly interested in \textbf{learning about spacetime codes} can start their reading in \cref{sec:spacetime-codes}, in which we review this formalism while synthesizing previous approaches in the literature~\cite{bacon2015sparse,delfosse2023spacetime,gottesman2022opportunities}.
    In particular, we show in \cref{sec:stabilizers-spacetime-code}---and more precisely \cref{prop:stabilizers_backle_spackle}---that the stabilizers of the spacetime code of Delfosse and Paetznick~\cite{delfosse2023spacetime} are also stabilizers of the spacetime subsystem code of Bacon et al.~\cite{bacon2015sparse}, which can be obtained through the forward and backward propagations of measurements operators.
    The link between spacetime codes and chain complexes is only briefly mentioned, and it is therefore not required to have read \cref{sec:qec-from-chain-complexes} for an overall understanding of this section.
    \item For a reader interested in
    \textbf{understanding MBQC through the lens of chain complexes}, \cref{sec:fault-tolerant-cluster-states} is the most relevant section.
    Our main object of study, the cluster state complex, is introduced in \cref{sec:cluster-state-complex} (\cref{def:cluster-state-complex}) and shown to be equivalent to the spacetime code of the corresponding MQBC circuit in \cref{theorem:mbqc-circuit-equivalent-to-cluster-state-complex}.
    While this construction itself can be read independently from the rest of the paper, its motivation and relation to spacetime codes can only be understood after reading \cref{sec:qec-from-chain-complexes} and \cref{sec:spacetime-codes}.
    \item A reader whose objective is to \textbf{construct MBQC protocols from Clifford circuits}---with the goal of simulating or experimentally realizing them---should
    first get familiar with spacetime codes by reading through \cref{sec:spacetime-codes}, and with our MBQC notations by reading through \cref{sec:mbqc} and \cref{sec:spacetime-code-cluster-state}. They can then read through \cref{sec:from-spacetime-codes-to-cluster-states} while skipping the proofs of the different lemmas and propositions. Understanding those proofs require familiarity with the chain complex formalism of \cref{sec:qec-from-chain-complexes}.
    \item For a reader whose objective is to \textbf{construct MBQC protocols from non-CSS codes, subsystem codes or dynamical codes}, understanding \cref{def:cluster-state-complex} and consulting the main theorems of \cref{sec:from-stabilizer-codes-to-cluster-states} or \cref{sec:from-floquet-codes-to-cluster-states} should be sufficient.
    Understanding the proofs requires familiarity with all the preceding sections, and in particular all the lemmas of \cref{sec:from-spacetime-codes-to-cluster-states}.
\end{itemize}

\section{Quantum error correction from chain complexes}
\label{sec:qec-from-chain-complexes}
\subsection{Notation}

To every $n$-qubit Pauli operator $P$, we associate a $2n$-dimensional vector $v(P)$ in the binary symplectic format, where the first $n$ components represent the presence or absence of a Pauli $X$ operator on each qubit, and the last $n$ components the presence or absence of a Pauli $Z$ operator. For instance, the operator $P=X\otimes Y$ can be written as a vector $v(P)=(11|01)$. The \textit{symplectic matrix} $\Omega$ is defined to be
    \begin{align*}
        \Omega = \begin{pmatrix}
            0_n & I_n \\
            I_n & 0_n
        \end{pmatrix},
    \end{align*}
    where $0_n$ and $I_n$ are, respectively, the $n \times n$ zero matrix and identity matrix. One can readily check that two Pauli operators $P_1$ and $P_2$ commute if and only if $v(P_1)^T\Omega v(P_2)=0$, where all operations are performed modulo 2.

\subsection{Homological quantum error correction}

\begin{definition}[Chain and cochain complexes]
    A length-$\ell$ \textit{chain complex} is a collection of $\ell+1$ $\mathbb{F}_2$-vector spaces $C_0,\ldots,C_\ell$, together with $\ell$ linear maps, called \textit{boundary operators}, $\partial_1,\ldots,\partial_\ell$, with $\partial_i: C_i \rightarrow C_{i-1}$, such that they obey the \textit{chain complex condition} $\partial_{i-1} \circ \partial_{i}=0$ for every $i$. We write a chain complex in the following diagrammatic form:
    \begin{align}
        C_\ell \xrightarrow{\partial_\ell} C_{\ell-1} \xrightarrow{\partial_{\ell-1}} \ldots \xrightarrow{\partial_1} C_{0}.
    \end{align}
    We denote such a chain complex by $C_{\bullet}$. The \textit{cochain complex} associated to $C_{\bullet}$ is the collection of vector spaces $C^0,\ldots,C^\ell$, where $C^i=\Hom\left(C_i,\mathbb{Z}_2\right)$ (the set of homomorphisms from $C_i$ to $\mathbb{Z}_2$), along with \textit{coboundary operators} $\partial^1,\ldots,\partial^\ell$, with $\partial^i:C^{i-1} \rightarrow C^i$ defined as $\partial^i(f)=f \circ \partial_i$. We write it as
    \begin{align}
        C^\ell \xleftarrow{\partial^\ell} C^{\ell-1} \xleftarrow{\partial^{\ell-1}} \ldots \xleftarrow{\partial^1} C^{0}
    \end{align}
    and denote it by $C^\bullet$.
\end{definition}

\begin{remark}
    If every space $C_i$ is finite-dimensional, we can identify the spaces $C_i$ and $C^i$.
    Indeed, for every linear form $f \in C^i$, there exists a unique vector $\tilde{f} \in C_i$ such that $f(x)=\tilde{f}^T x$ for every $x \in C_i$. The map $\psi: f \mapsto \tilde{f}$ is an isomorphism between $C^i$ and $C_i$ which can be written in matrix form as
    \begin{align}
        \psi: \begin{pmatrix}
            a_1 & \dots & a_n
        \end{pmatrix} \in C^1
        \mapsto
        \begin{pmatrix}
            a_1 \\ \vdots \\ a_n
        \end{pmatrix} \in C_1.
    \end{align}
    Every coboundary operator can then be associated with a map $\tilde{\partial}^i=\psi \circ \partial^i \circ \psi^{-1}:C_{i-1} \rightarrow C_i$. Writing the boundary and coboundary operators in matrix form, we then have the identity $\tilde{\partial}^i=\partial_i^T$. In the rest of this work, we assume all the chain complexes are defined over finite-dimensional spaces, and we omit the tilde from $\tilde{f}$ and $\tilde{\partial_i}$, implicitly identifying every dual space $C^i$ with $C_i$.
\end{remark}

\begin{definition}[Cycles and boundaries]
    An $i$-th \textit{cycle} of a chain complex $C_\bullet$ is an element of $\ker(\partial_i)$. An $i$-th \textit{boundary} is an element of $\Im(\partial_{i+1})$. Similarly, an $i$-th \textit{cocycle} is an element of $\ker(\partial^{i+1})$, and an $i$-th \textit{coboundary} is an element of $\Im(\partial^i)$.
\end{definition}

\begin{definition}[Homology and cohomology groups]
    The $i$-th \textit{homology group} of a chain complex $C_\bullet$ (for $0 \leq i \leq \ell$) is defined by $H_i(C_\bullet)=\ker(\partial_i)/\Im(\partial_{i+1})$.
    The $i$-th \textit{cohomology group} is the homology group of the corresponding cochain complex, that is, $H^i(C_\bullet)=H_i(C^\bullet)=\ker(\partial^{i+1})/\im(\partial^i)$.
    To resolve the cases $i=0$ and $i=\ell$, we extend the chain complex to the left and right by the zero space and define the corresponding maps $\partial_0$ and $\partial_{l+1}$ as zero maps:
    \begin{align}
        0 \xrightarrow{\partial_{\ell+1}=0} C_\ell \xrightarrow{\partial_\ell} C_{\ell-1} \xrightarrow{\partial_{\ell-1}} \ldots \xrightarrow{\partial_1} C_{0} \xrightarrow{\partial_0=0} 0.
    \end{align}
    For any representative $x$ of $H_i(C_\bullet)$ or $H^i(C_\bullet)$, we denote by $[x]$ its corresponding coset.
\end{definition}

As we will soon see, length-2 chain complexes play a special role in quantum error correction. In this context, it will be convenient to have a particular terminology for elements of the first homology and cohomology groups.

\begin{definition}[Logical errors and logical correlations]
    For any length-2 chain complex
    \begin{align}
        C_2 \xrightarrow{\partial_2} C_1 \xrightarrow{\partial_1} C_0
    \end{align}
    we call \textit{logical errors} the elements of $H_1(C_\bullet)=\ker(\partial_1)/\im(\partial_2)$, and \textit{logical correlations} the elements of $H^1(C_\bullet)=\ker\left(\partial_2^T\right)/\im\left(\partial_1^T\right)$.
    We say that a logical error (logical correlation) is \textit{non-trivial} if it is not zero, that is, if its representatives do not belong to $\im(\partial_2)$ ($\im\left(\partial_1^T\right)$).
\end{definition}
A fundamental theorem in homological algebra---the universal coefficient theorem---relates the homology and cohomology groups \cite{newman2020generating, Hatcher2000}:

\begin{theorem}[Universal coefficient theorem]
    \label{theorem:universal-coefficient}
    The linear map $\phi:H^i(C_\bullet) \rightarrow \Hom\left(H_i(C_\bullet), \mathbb{Z}_2\right)$ defined by $\phi([f])([x])=f(x)$ is a well-defined isomorphism.
\end{theorem}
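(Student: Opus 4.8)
The plan is to establish the three defining properties of an isomorphism in turn: that $\phi$ is well-defined, injective, and surjective. Linearity of $\phi$ is immediate, since the formula $\phi([f])([x])=f(x)$ is bilinear in $f$ and $x$. The one structural fact I will lean on everywhere is that $\mathbb{Z}_2$ is a field, so that in a finite-dimensional $\mathbb{Z}_2$-vector space every subspace has a complement and, in particular, every linear functional defined on a subspace extends to the whole space; this is precisely the ingredient that makes the $\mathrm{Ext}$ term in the general universal coefficient sequence vanish. Throughout I write $Z_i=\ker(\partial_i)$ and $B_i=\im(\partial_{i+1})$, so that $H_i(C_\bullet)=Z_i/B_i$, and I use the identification $C^i\cong C_i$ under which a cocycle $f\in\ker(\partial^{i+1})$ is a functional on $C_i$ satisfying $f\circ\partial_{i+1}=0$ and a coboundary is one of the form $g\circ\partial_i$ with $g\in C^{i-1}$.

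\emph{Well-definedness.} I would verify two independences. First, $\phi([f])$ is a genuine functional on $H_i(C_\bullet)$: if $x'=x+\partial_{i+1}y$ is another representative of $[x]$, then $f(x')-f(x)=f(\partial_{i+1}y)=(\partial^{i+1}f)(y)=0$ because $f$ is a cocycle. Second, $\phi([f])$ depends only on the class $[f]$: if $f'=f+g\circ\partial_i$ for some $g\in C^{i-1}$, then $f'(x)-f(x)=g(\partial_i x)=0$ for every cycle $x$. Hence $\phi$ is a well-defined linear map $H^i(C_\bullet)\to\Hom(H_i(C_\bullet),\mathbb{Z}_2)$.

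\emph{Injectivity.} Suppose $\phi([f])=0$, i.e.\ $f$ vanishes on $Z_i=\ker(\partial_i)$. Then $f$ factors through the isomorphism $C_i/Z_i\cong\im(\partial_i)$ induced by $\partial_i$, producing a functional $\bar g$ on $\im(\partial_i)\subseteq C_{i-1}$ with $f=\bar g\circ\partial_i$. Extending $\bar g$ to a functional $g\in C^{i-1}$ on all of $C_{i-1}$ gives $f=g\circ\partial_i=\partial^i(g)\in\im(\partial^i)$, so $[f]=0$.

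\emph{Surjectivity.} Given $\chi\in\Hom(H_i(C_\bullet),\mathbb{Z}_2)$, pull it back along the quotient $Z_i\to Z_i/B_i=H_i(C_\bullet)$ to obtain a functional $\chi'$ on $Z_i$ vanishing on $B_i$, and then extend $\chi'$ to a functional $f\in C^i$ on all of $C_i$. Since $B_i=\im(\partial_{i+1})\subseteq Z_i$ and $f|_{Z_i}=\chi'$ kills $B_i$, we get $f\circ\partial_{i+1}=0$, so $f$ is a cocycle; and $\phi([f])([x])=f(x)=\chi'(x)=\chi([x])$ for every cycle $x$, so $\phi([f])=\chi$. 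I do not expect any genuine obstacle in this argument: the only point requiring care is the repeated extension of functionals from a subspace, which is automatic over a field but would fail over $\mathbb{Z}$ and reintroduce the usual $\mathrm{Ext}^1(H_{i-1}(C_\bullet),\mathbb{Z}_2)$ correction, a situation that does not arise here.
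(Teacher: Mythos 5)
Your proof is correct and complete. The paper itself does not prove this theorem—it states it as a standard fact and cites Newman et al.\ and Hatcher—so there is no in-text argument to compare against; your proof is the standard universal-coefficient argument specialized to field coefficients, correctly isolating the one place where the field structure of $\mathbb{Z}_2$ matters (the extension of functionals from a subspace, which is what kills the $\mathrm{Ext}$ term), and all three steps (well-definedness on both the homology and cohomology classes, injectivity via factoring through $C_i/\ker(\partial_i)\cong\im(\partial_i)$, surjectivity via pullback and extension) check out.
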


\begin{corollary} \label{cor:intersection-form}
    The \textrm{intersection form} $\Phi: C^i \times C_i \rightarrow \mathbb{Z}_2$ defined as $\Phi(f,x)=f^Tx$ is invariant on every coset of $H^i(C_\bullet) \times H_i(C_\bullet)$, that is
    \begin{align}
        \Phi(f,x)=\Phi(f',x') \; \text{ if } [f]=[f'] \; \text{ and } \; [x]=[x'].
    \end{align}
    In other words, whether a given cycle and cocycle intersect on an even or odd number of elements only depend on their cosets, not on the specific representatives chosen.
\end{corollary}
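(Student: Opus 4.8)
The plan is to obtain the corollary as an immediate consequence of the universal coefficient theorem (\cref{theorem:universal-coefficient})---indeed, the statement is little more than an unpacking of the fact that the pairing $\phi([f])([x]) = f(x)$ is \emph{well-defined}. Concretely, well-definedness of $\phi$ amounts to two things: for a fixed cocycle $f$, the value $f(x)$ depends only on the homology class $[x]$; and for a fixed cycle $x$, it depends only on the cohomology class $[f]$. Since, under the identification of $C^i$ with $C_i$ recalled in the Remark above, $f(x) = f^Tx = \Phi(f,x)$, these two facts combine to give $\Phi(f,x) = f(x) = f(x') = f'(x') = \Phi(f',x')$ whenever $[f]=[f']$ and $[x]=[x']$, which is exactly the claimed invariance.

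For a self-contained argument that also makes the role of the chain complex condition visible, I would instead verify the invariance by a direct computation. Writing $f' = f + \partial^i h$ for some $h \in C^{i-1}$ (since $[f]=[f']$) and $x' = x + \partial_{i+1}y$ for some $y \in C_{i+1}$ (since $[x]=[x']$), I would expand
\[
\Phi(f',x') = (f + \partial^i h)^T(x + \partial_{i+1}y) = f^Tx + f^T\partial_{i+1}y + (\partial^i h)^Tx + (\partial^i h)^T\partial_{i+1}y,
\]
and then eliminate the three cross terms in turn: $f^T\partial_{i+1}y = 0$ because $f$ is a cocycle, i.e.\ $\partial_{i+1}^Tf = 0$; $(\partial^i h)^Tx = h^T\partial_i x = 0$ because $x$ is a cycle, using $\partial^i = \partial_i^T$; and $(\partial^i h)^T\partial_{i+1}y = h^T\partial_i\partial_{i+1}y = 0$ by the chain complex condition $\partial_i\partial_{i+1}=0$. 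What remains is $\Phi(f',x') = f^Tx = \Phi(f,x)$.

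The only point demanding any care---and it is bookkeeping rather than a genuine obstacle---is keeping the identifications $C^i \cong C_i$ and $\partial^i = \partial_i^T$ straight, so that ``cocycle'' correctly translates to $f \in \ker(\partial_{i+1}^T)$, ``coboundary'' to $f \in \im(\partial_i^T)$, and every transpose lands where expected. Once $\Phi(f,x) = \Phi(f',x')$ is established, the final sentence of the corollary is simply its reading over $\mathbb{Z}_2$: the parity of the number of coordinates on which a cycle and a cocycle overlap is a function of their cosets alone.
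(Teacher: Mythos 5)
Your first argument is precisely the paper's own proof: the paper simply observes that $\Phi(f,x)=f^Tx=\phi([f])([x])$ with $\phi$ the map of \cref{theorem:universal-coefficient}, so invariance on cosets is inherited from the well-definedness of $\phi$. Your second, self-contained computation is also correct and is worth noting as a genuine alternative: by expanding $(f+\partial_i^T h)^T(x+\partial_{i+1}y)$ and killing the three cross terms via the cocycle condition $\partial_{i+1}^Tf=0$, the cycle condition $\partial_i x=0$, and $\partial_i\partial_{i+1}=0$, you prove the invariance without invoking the universal coefficient theorem at all, which makes the role of the chain complex condition explicit and would let the corollary stand even if one did not wish to import \cref{theorem:universal-coefficient}. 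Both routes are sound; the paper opts for the one-line reduction to $\phi$.
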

\begin{proof}
    We have $\Phi(f,x)=f^T x=\phi([f])([x])$, where $\phi$ is the map defined in \cref{theorem:universal-coefficient}. Therefore, $\Phi(f,x)$ only depends on the cosets of $f$ and $x$.
\end{proof}

\begin{corollary} \label{cor:basis-intersection}
    There exists a basis $\{e_j\}$ of $H_i(C_\bullet)$ and $\{e^j\}$ of $H^i(C_\bullet)$ such that for every pair of representatives $x \in e_j$ and $f \in e^k$, we have $\Phi(f,x) = \delta_{jk}$
\end{corollary}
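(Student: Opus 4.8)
The plan is to use the universal coefficient theorem (\cref{theorem:universal-coefficient}) to reduce the statement to a purely bilinear-algebra fact about finite-dimensional $\mathbb{F}_2$-vector spaces and then exhibit the dual bases explicitly. First I would note that $\phi: H^i(C_\bullet) \to \Hom(H_i(C_\bullet),\mathbb{Z}_2)$ is an isomorphism, so in particular $\dim H^i(C_\bullet) = \dim H_i(C_\bullet) =: r$, and that by \cref{cor:intersection-form} the map $\bar\Phi([f],[x]) := \Phi(f,x)$ is a well-defined bilinear pairing $H^i(C_\bullet) \times H_i(C_\bullet) \to \mathbb{Z}_2$. The content of \cref{theorem:universal-coefficient} is exactly that this pairing is nondegenerate: if $\bar\Phi([f],[x]) = 0$ for all $[x]$ then $\phi([f]) = 0$, hence $[f] = 0$; and conversely if $\bar\Phi([f],[x]) = 0$ for all $[f]$ then $[x]$ is annihilated by every functional in the image of $\phi$, which is all of $\Hom(H_i,\mathbb{Z}_2)$, forcing $[x] = 0$.

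Next I would invoke the standard fact that a nondegenerate bilinear pairing between two finite-dimensional vector spaces of the same dimension $r$ admits dual bases. Concretely, pick any basis $\{v_1,\dots,v_r\}$ of $H_i(C_\bullet)$. Nondegeneracy on the right means the linear map $H^i(C_\bullet) \to \mathbb{Z}_2^r$ sending $[f]$ to $(\bar\Phi([f],v_1),\dots,\bar\Phi([f],v_r))$ is injective, hence (equal dimensions) an isomorphism; let $\{e^1,\dots,e^r\}$ be the preimages of the standard basis vectors, so $\bar\Phi(e^j,v_k) = \delta_{jk}$. I would then need to adjust the basis of $H_i$ so that it is also dual from the other side; but in fact the relation $\bar\Phi(e^j,v_k)=\delta_{jk}$ already forces $\{v_k\}$ to be exactly the dual basis to $\{e^j\}$ under the (now established) nondegenerate pairing, so no further adjustment is needed — set $e_j := v_j$. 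The claimed identity $\Phi(f,x) = \delta_{jk}$ for arbitrary representatives $x \in e_j$, $f \in e^k$ is then immediate from \cref{cor:intersection-form}, since $\Phi$ on representatives only depends on cosets.

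The one point requiring a little care — and the main obstacle, though a minor one — is making sure the pairing is nondegenerate on \emph{both} sides from the single statement that $\phi$ is an isomorphism. The left-nondegeneracy (in the $H^i$ slot) is the surjectivity/injectivity of $\phi$ read directly; the right-nondegeneracy (in the $H_i$ slot) requires additionally that $\Hom(H_i(C_\bullet),\mathbb{Z}_2)$ separates points of $H_i(C_\bullet)$, which holds for finite-dimensional $\mathbb{F}_2$-spaces, together with the surjectivity of $\phi$ so that every separating functional is realized as some $\bar\Phi([f],\cdot)$. I would spell this out in one or two sentences and then conclude. Everything else is the routine linear-algebra construction of dual bases with respect to a nondegenerate form, which I would not belabor.
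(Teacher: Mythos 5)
Your proposal is correct and is essentially the paper's proof: the $e^j$ you obtain as preimages of the standard basis vectors under $[f]\mapsto(\bar\Phi([f],v_1),\ldots,\bar\Phi([f],v_r))$ are exactly $\phi^{-1}(f^j)$ for $\{f^j\}$ the dual basis of $\Hom(H_i(C_\bullet),\mathbb{Z}_2)$, which is precisely how the paper constructs them. The only difference is that you spend effort establishing two-sided nondegeneracy of the pairing, which the direct pull-back of the dual basis through the isomorphism $\phi$ makes unnecessary.
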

\begin{proof}
    Let $\{e_j\}$ a basis of $H_i(C_\bullet)$.
    Let $\{f^j\}$ the dual basis of $H_i(C_\bullet)$, that is, the basis of $\Hom\left(H_i(C_\bullet), \mathbb{Z}_2\right)$ such that $f^j(e_k)=\delta_{jk}$ for every $j,k$.
    Since $\phi$ is an isomorphism, for every $j$ there exists a unique $e^j \in H^i(C_\bullet)$ such that $\phi(e^j)=f^j$.
    Since an isomorphism sends a basis to a basis, $\{e^j\}$ is itself a basis of $H^i(C_\bullet)$.
    Moreover, $\Phi(e^j,e_k)=\phi(e^j)(e_k)=f^j(e_k)=\delta_{jk}$.
\end{proof}

In the specific case of length-2 chain complexes, we can interpret the intersection form as the parity of the number of elements in the intersection of a logical error and a logical correlation. \cref{cor:basis-intersection} then implies that there exists a basis of logical errors and logical correlations, such that every basis logical error intersects non-trivially with a unique basis logical correlation.

We next introduce two properties of length-2 chain complexes, which require fixing a basis of $C_1$ in order to assign a weight to each vector. Specifically, for an $\mathbb{F}_2$-vector space with basis $\{b_i\}$, the weight of a vector $x = \sum_i x_i b_i$ is defined as $|x| = \sum_i x_i$ (where here the $x_i$ are interpreted as integers).

\begin{definition}[Distance]
    The \textit{distance} $d(C_\bullet)$ of a length-2 chain complex $C_\bullet$, equipped with a basis $\{b_i\}_i$ for $C_1$, is the minimal weight of a non-trivial logical error, that is
    \begin{align}
        d=\min_{e \in \ker(\partial_1) \setminus \im(\partial_2)} |e|
    \end{align}
    where $|e|=\sum_i e_i$ when writing $e=\sum_i e_i b_i$.
\end{definition}

\begin{definition}[Minimum-weight decoding]
    The \textit{minimum-weight decoding (MWD) function} of a length-2 chain complex equipped with a basis for $C_1$ is the function $x^\star:C_1 \rightarrow C_0$ defined as
    \begin{align}
        x^\star(s)=\argmin |x| \;\text{ s.t. }\; \partial_1(x)=s.
    \end{align}
\end{definition}

\begin{example}[CSS code]
    Let's consider an $n$-qubit Calderbank-Shor-Steane (CSS) code \cite{calderbank1996good, steane1996error} defined by the two parity-check matrices $H_X$ and $H_Z$, of dimensions $m_X \times n$ and $m_Z \times n$, respectively. Then the following diagram defines a valid chain complex:
    \begin{align}
        \mathcal{S}_X \xrightarrow{H_X^T} \mathcal{Q} \xrightarrow{H_Z} \mathcal{S}_Z,
    \end{align}
    where $\mathcal{Q}=\mathbb{Z}_2^n$, $\mathcal{S}_X=\mathbb{Z}_2^{m_X}$ and $\mathcal{S}_Z=\mathbb{Z}_2^{m_Z}$.
    Indeed, the chain complex condition $H_Z H_X^T=0$ is equivalent to the commutation between all the $X$ and $Z$ stabilizers of the code.
    Moreover, any length-2 chain complex can be used to define a CSS code, by interpreting the two boundary operators as parity-check matrices for $X$ and $Z$ stabilizers, making it a convenient tool to classify and generate CSS codes.
    We note that in this chain complex, logical errors (as we defined them above) are only the $X$ logical operators, while logical correlations are the $Z$ logical operators.
    The intersection property of \cref{cor:basis-intersection} tells us that there exists a basis of $X$ logical operators $\{\bar{X}_1,\ldots,\bar{X_k}\}$ and a basis of $Z$ logical operators $\{\bar{Z}_1,\ldots,\bar{Z_k}\}$ such that the intersection between $\bar{X}_i$ and $\bar{Z}_j$ is odd if and only if $i=j$, independently of the representatives chosen.
    The distance of this chain complex correspond to the $X$-distance of the CSS code and the minimum-weight decoding function to the decoding of $X$ errors.
    The asymmetry between $X$ and $Z$ in our definition is resolved when considering a different chain complex formulation of stabilizer codes, which we present in the next example.
\end{example}

\begin{example}[Non-CSS code] \label{example:non-css-code}
    While less commonly used in quantum error correction, non-CSS Pauli stabilizer codes~\cite{gottesman1997stabilizer} can also be written as chain complexes.
    Let us consider a generic stabilizer code defined by a parity-check matrix $H$ of dimension $m \times 2n$ written in the binary symplectic format. Then the following diagram defines a valid chain complex:
    \begin{align}
        \mathcal{S} \xrightarrow{H^T} \mathcal{E} \xrightarrow{H \Omega} \mathcal{S},
    \end{align}
    where $\mathcal{S}=\mathbb{Z}_2^m$, and $\mathcal{E}=\mathbb{Z}_2^{2n}$. Indeed, similarly to the CSS case, the chain complex condition $H \Omega H^T=0$ is equivalent to the commutation of the stabilizers. However, contrary to the CSS case, not all length-2 chain complexes define a stabilizer code in this way, as we have the additional constraints that $\partial_1=\partial_2^T \Omega$ and $\mathcal{E}$ has even dimension.
\end{example}

\begin{example}[Subsystem code] \label{example:subsystem-code}
    Let us consider an $n$-qubit Pauli subsystem code~\cite{kribs2005unified,poulin2005stabilizer,kribs2006operator} defined by a \textit{gauge matrix} $H_G$ of dimension $m_G \times 2n$.
    We denote its parity-check matrix by $H_S$, which has dimension $m_S \times 2n$.
    The following diagram then defines a valid chain complex, that we call the \textit{gauge complex of the subsystem code}:
    \begin{align}
        \mathcal{G} \xrightarrow{H_G^T} \mathcal{E} \xrightarrow{H_S \Omega} \mathcal{S},
    \end{align}
    where $\mathcal{G}=\mathbb{Z}_2^{m_G}$, $\mathcal{S}=\mathbb{Z}_2^{m_S}$, $\mathcal{Q}$ is a $\mathbb{Z}_2$-vector space of dimension $n$, and $\mathcal{E}=\mathbb{Z}_2^{2n}$.
    Indeed, the chain complex condition $H_S \Omega H_G^T = 0$ is equivalent to the commutation of gauge operators and stabilizers.
    Not every chain complex gives rise to a subsystem code, since we have the additional constraints that $C_1$ is even and $\Im\left(\Omega \partial_0^T\right) \subset \Im (\partial_1)$ (every stabilizer is itself a gauge operator).
    Elements of the kernel of $H_S\Omega$ correspond to the undetectable errors of the code.
    The homology group $\ker(H_S\Omega)/\Im\left(H_G^T\right)$ (group of logical errors) of this chain complex corresponds to the dressed logical operators $\mathcal{N}(\mathcal{S})/\mathcal{G}$ of the code, while the cohomology group $\ker(H_G)/\Im\left(\Omega H_S^T\right)$ (group of logical correlations) corresponds to the bare logical operators $\mathcal{N}(\mathcal{G})/\mathcal{S}$.
    Notably, we can specialize this definition to CSS subsystem and stabilizer codes~\cite{Liu2024subsystemcsscodes}. For a CSS subsystem code, all the spaces and maps split into $X$ and $Z$ parts:
    \begin{align}
        \mathcal{G}_X\oplus \mathcal{G}_Z \xrightarrow{\left(H_G^X\right)^T \oplus \left(H_G^Z\right)^T} \mathcal{Q} \oplus \mathcal{Q} \xrightarrow{H_S^Z \oplus H_S^X} \mathcal{S}_Z \oplus \mathcal{S}_X,
    \end{align}
    which is the direct sum of the two complexes
    \begin{align}
        \mathcal{G}_X \xrightarrow{\left(H_G^X\right)^T} \mathcal{Q} \xrightarrow{H_S^Z} \mathcal{S}_Z
    \end{align}
    and
    \begin{align}
        \mathcal{G}_Z \xrightarrow{\left(H_G^Z\right)^T} \mathcal{Q} \xrightarrow{H_S^X} \mathcal{S}_X.
    \end{align}
    In the special case of a CSS stabilizer code, where $\mathcal{G}_X=\mathcal{S}_X$ and $\mathcal{G}_Z=\mathcal{S}_Z$, we recover two copies of the usual chain complex for CSS codes:
    \begin{align}
        \mathcal{S}_X \xrightarrow{\left(H_S^X\right)^T} \mathcal{Q} \xrightarrow{H_S^Z} \mathcal{S}_Z
    \end{align}
    and
    \begin{align}
        \mathcal{S}_Z \xrightarrow{\left(H_S^Z\right)^T} \mathcal{Q} \xrightarrow{H_S^X} \mathcal{S}_X.
    \end{align}
\end{example}

\subsection{Graphical representation of a chain complex}
\label{sec:graphical-rep-chain-complex}

\begin{figure}
    \centering
    \includegraphics[width=0.2\textwidth]{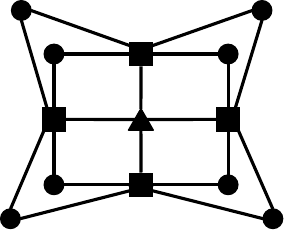}
    \caption{ \label{fig:chain-complex-example}
        Example of graphical representation of a chain complex. Circles, squares and triangles represent basis elements of $C_2$, $C_1$ and $C_0$ respectively. The connections between circles and squares are specified by the boundary operator $\partial_2$, seen as a biadjacency matrix, while the connections between triangles and squares are specified by $\partial_1$. The chain complex condition $\partial_1 \circ \partial_2 = 0$ can be verified by checking that the neighborhood of any circle intersects with the neighborhood of any triangle on an even number of squares.
    }
\end{figure}

Let us consider a length-2 chain complex
\begin{align}
    C_2 \xrightarrow{\partial_2} C_1 \xrightarrow{\partial_1} C_0,
\end{align}
where $C_2$, $C_1$ and $C_0$ have respective dimensions $n_2$, $n_1$ and $n_0$. We also equip each of the three spaces with a basis. The boundary operators $\partial_1$ and $\partial_2$ can then be represented as matrices of dimension $n_1 \times n_0$ and $n_2 \times n_1$, respectively. Interpreting $\partial_1$ as the biadjacency matrix of a graph with $n_1$ nodes of type $C_1$ (one for each basis element of $C_1$) and $n_0$ nodes of type $C_0$, and $\partial_2$ as the biadjacency matrix of a graph with $n_2$ nodes of type $C_2$ and $n_1$ nodes of type $C_1$, we can represent the chain   complex as a graph with three types of nodes. The edges between nodes of type $C_2$ and $C_1$ are specified by the matrix $\partial_2$, the edges between nodes of type $C_1$ and $C_0$ are specified by the matrix $\partial_1$, etc. We will always represent nodes of type $C_2$ by circles, nodes of type $C_1$ by squares, and nodes of type $C_0$ by triangles. An example of such chain complex representation is shown in \cref{fig:chain-complex-example}.

In the case that our chain complex represents a subsystem code
\begin{align}
    \mathcal{G} \xrightarrow{H_G^T} \mathcal{E} \xrightarrow{H_S \Omega} \mathcal{S}
\end{align}
we can moreover split the nodes of type $\mathcal{E}$ into $X$ and $Z$ type, each node representing a single-qubit $X$ or $Z$ error at a certain location. We call the nodes of type $\mathcal{G}$ \textit{gauge nodes}, the nodes of type $\mathcal{E}$ \textit{error nodes}, and the nodes of type $\mathcal{S}$ \textit{detector nodes}. Note that due to the $\Omega$ matrix in the definition of $\partial_1$, detector nodes are connected to opposite Pauli error nodes in the graph than in their definition: a stabilizer $S=X_1Z_2$ is represented by a triangular node connected to square nodes $Z_1$ and $X_2$.
An example of gauge complex, corresponding to the Bacon-Shor code \cite{bacon2006operator}, is shown in \cref{fig:bacon-shor-equivalence-b}.

\begin{figure}
    \centering
    \tikzsetnextfilename{sec2-bacon-shor-equivalence-a}
    \subfloat[]{ \label{fig:bacon-shor-equivalence-a}
        \scalebox{.95}{\begin{tikzpicture}[scale=0.92]
    \foreach \x in {0,...,3}
        \foreach \y in {0,...,3}
            \node[dot=5pt, fill=black] (\x\y) at (\x,\y) {};

    \foreach \x in {0,...,3}
        \foreach \y [count=\yi] in {0,...,2}
            \draw[black,thick] (\x\y) -- (\x\yi) (\y\x) -- (\yi\x);

    \node[] () at (-0.3,1) {\footnotesize X};
    \node[] () at (-0.3,2) {\footnotesize X};

    \node[] () at (0,-0.3) {\footnotesize Z};
    \node[] () at (1,-0.3) {\footnotesize Z};
\end{tikzpicture}}
    }
    \tikzsetnextfilename{sec2-bacon-shor-equivalence-b}
    \subfloat[]{ \label{fig:bacon-shor-equivalence-b}
        \scalebox{.95}{\begin{tikzpicture}[scale=0.92, triangle/.style = {fill=black, regular polygon, regular polygon sides=3}]   
    \foreach \y in {0,...,2}
        \node[triangle,fill=black, inner sep=0pt, minimum size=8pt] (SX\y) at (1.5,\y+0.5) {};

    \foreach \y in {0,...,2}
        \foreach \x in {0,...,3}
            \draw[black,thick] (SX\y) -- (\x,\y) (SX\y) -- (\x,\y+1);

    \foreach \x in {0,...,2}
        \foreach \y in {0,...,3}
            \node[circle, fill=black, inner sep=0pt, minimum size=5pt] (GZ\x\y) at (\x+0.5,\y) {};

    \foreach \y in {0,...,3}
        \foreach \x in {0,...,2}
            \draw[black,thick] (\x+0.5,\y) -- (\x,\y) (\x+0.5,\y) -- (\x+1,\y);

    \foreach \x in {0,...,3}
        \foreach \y in {0,...,3}
            \node[fill=niceblue, inner sep=0, minimum size=12pt] (\x\y) at (\x,\y) {\color{white} \tiny Z};

    \foreach \x in {0,...,2}
        \node[triangle,fill=black, inner sep=0, minimum size=8pt] (SZ\x) at (4+\x+0.5, 1.5) {};

    \foreach \x in {0,...,2}
        \foreach \y in {0,...,3}
            \draw[black,thick] (SZ\x) -- (4+\x,\y) (SZ\x) -- (4+\x+1,\y);

    \foreach \x in {0,...,3}
        \foreach \y in {0,...,2}
            \node[circle, fill=black, inner sep=0pt, minimum size=5pt] (GX\x\y) at (4+\x,\y+0.5) {};

    \foreach \y in {0,...,2}
        \foreach \x in {0,...,3}
            \draw[black,thick] (4+\x,\y+0.5) -- (4+\x,\y) (4+\x,\y+0.5) -- (4+\x,\y+1);

    \foreach \x in {0,...,3}
        \foreach \y in {0,...,3}
            \node[fill=nicered, inner sep=0, minimum size=12pt] (\x\y) at (4+\x,\y) {\color{white} \tiny X};
            
    \node[] () at (0,-0.3) {\phantom{\footnotesize Z}};
    \node[] () at (1,-0.3) {\phantom{\footnotesize Z}};
\end{tikzpicture}}
    }
    \tikzsetnextfilename{sec2-bacon-shor-equivalence-c}
    \subfloat[]{ \label{fig:bacon-shor-equivalence-c}
        \scalebox{.95}{\begin{tikzpicture}[scale=0.92, triangle/.style = {fill=black, regular polygon, regular polygon sides=3}]
    \foreach \y in {0,...,2}
        \node[triangle,fill=black, inner sep=0, minimum size=8pt] (SX\y) at (0,\y+0.5) {};

    \foreach \y in {0,...,2}
        \draw[black,thick] (SX\y) -- (0,\y) (SX\y) -- (0,\y+1);

    \foreach \y in {0,...,3}
        \node[fill=niceblue, inner sep=0, minimum size=12pt] (0\y) at (0,\y) {\color{white} \tiny Z};

    \foreach \x in {0,...,2}
        \node[triangle,fill=black, inner sep=0, minimum size=8pt] (SZ\x) at (1+\x+0.5,1.5) {};

    \foreach \x in {0,...,2}
        \draw[black,thick] (SZ\x) -- (1+\x,1.5) (SZ\x) -- (1+\x+1,1.5);

    \foreach \x in {0,...,3}
        \node[fill=nicered, inner sep=0, minimum size=12pt] (\x0) at (1+\x,1.5) {\color{white} \tiny X};

    \node[] () at (0,-0.3) {\phantom{\footnotesize Z}};
    \node[] () at (1,-0.3) {\phantom{\footnotesize Z}};
\end{tikzpicture}}
    }
    \caption{Gauge complex and reduced gauge complex of the Bacon-Shor code \cite{bacon2006operator}. \textbf{(a)} Bacon-Shor code on a $4 \times 4$ grid with qubits on vertices. Vertical and horizontal edges represent $X$ and $Z$ gauge operators, respectively. The $X$ stabilizers and $Z$ stabilizers can be generated by pairs of rows and columns, respectively. \textbf{(b)} Gauge complex associated to those gauge and stabilizer generators. \textbf{(c)} The reduced gauge complex: using rule A, we can merge all the $Z$ nodes in the same row, and all the $X$ nodes in the same column, showing the equivalence of the Bacon-Shor code decoding problem to that of two repetition codes.}
    \label{fig:bacon-shor-equivalence}
\end{figure}
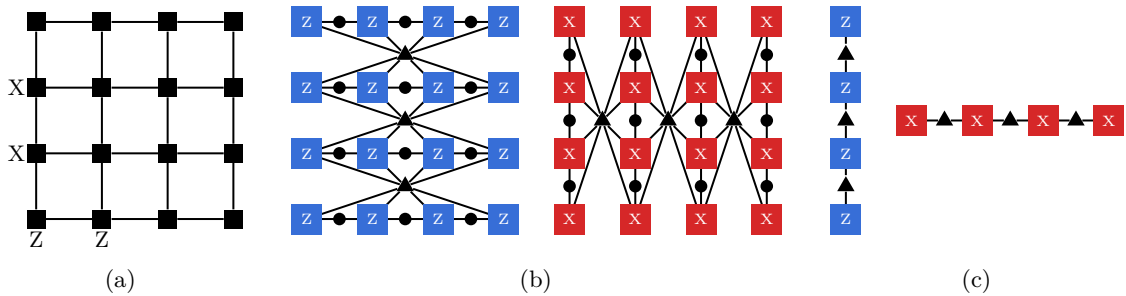

\subsection{Fault-tolerant maps}
\label{sec:fault-tolerant-maps}

We would now like to develop a notion of equivalence between chain complexes, such that two chain complexes are equivalent if they have the same distance and number of logical cosets, and if the minimum-weight decoding function of one chain complex can be mapped to the minimum-weight decoding function of any other chain complex in the equivalence class.

A natural way to transform a chain complex while preserving its structure is to use a chain map.
A chain map $f:C_{\bullet} \rightarrow C'_{\bullet}$ between two length-2 chain complexes
\begin{align*}
    C_2 \xrightarrow{\partial_2} C_1 \xrightarrow{\partial_1} C_0
\end{align*}
and
\begin{align*}
    C'_2 \xrightarrow{\partial'_2} C'_1 \xrightarrow{\partial'_1} C'_0
\end{align*}
is a triple of linear maps $(f_0,f_1,f_2)$ with $f_i:\mathcal{C}_i \rightarrow \mathcal{C}'_i$ for $i \in \{0,1,2\}$, such that the diagram
\begin{equation}
    \begin{tikzcd}[column sep=20pt, row sep=20pt, every cell/.append style={inner sep=4pt}]
        {C_2}  & {C_1} & {C_0} \\
        {C'_2} & {C'_1} & {C'_0}
        \arrow["{\partial_2}", from=1-1, to=1-2]
        \arrow["{f_2}"', from=1-1, to=2-1]
        \arrow["{\partial_1}", from=1-2, to=1-3]
        \arrow["{f_1}"', from=1-2, to=2-2]
        \arrow["{f_0}"', from=1-3, to=2-3]
        \arrow["{\partial'_2}", from=2-1, to=2-2]
        \arrow["{\partial'_1}", from=2-2, to=2-3]
    \end{tikzcd}
\end{equation}
commutes, that is, $f_1 \circ \partial_2 = \partial_2' \circ f_2$ and $f_0 \circ \partial_1=\partial_1' \circ f_1$.
It can be shown that chain maps preserve cycles (elements of $\ker \partial_1$) and boundaries (elements of $\Im \partial_0$). Moreover, any chain map $f:C_{\bullet} \rightarrow C'_{\bullet}$ induces a well-defined map $f^\star:H_1(C_{\bullet}) \rightarrow H_1(C'_{\bullet})$ on the homology groups.
We note that chain maps preserving the properties of CSS codes have been studied previously in the context of decoding topological codes~\cite{delfosse2014decoding,kubica2023efficient,bauer2025planar}.

However, the commutation requirement of chain maps is too strong for our purpose: we will soon present some examples of maps that preserve the desired properties of our complex without making the diagram above commute.
We introduce here a weaker notion of chain maps, which still induces a well-defined map on the homology groups and preserves enough structure to make distance and decoding preservation possible.

\begin{definition}[Weak chain map]
        A weak chain map $f:C_{\bullet} \rightarrow C'_{\bullet}$ between two length-2 chain complexes
    \begin{align*}
        C_2 \xrightarrow{\partial_2} C_1 \xrightarrow{\partial_1} C_0
    \end{align*}
    and
    \begin{align*}
        C'_2 \xrightarrow{\partial'_2} C'_1 \xrightarrow{\partial'_1} C'_0
    \end{align*}
    is a triple of linear maps $(f_0,f_1,f_2)$ with $f_i:\mathcal{C}_i \rightarrow \mathcal{C}'_i$ for $i \in \{0,1,2\}$, such that $\Im(f_1 \circ \partial_2) \subset \Im(\partial_2')$ and $f_0 \circ \partial_1=\partial_1' \circ f_1$.
\end{definition}
We note that chain maps are particular instances of weak chain maps, since if $f$ is a chain map, we have $\Im(f_1 \circ \partial_2) = \Im(\partial_2' \circ f_2) \subset \Im(\partial_2')$. We now show that weak chain maps also induce well-defined maps at the homology level.

\begin{theorem}
    \label{theorem:weak-chain-map-induces-homology-map}
    A weak chain map $f:C_{\bullet} \rightarrow C'_{\bullet}$ induces a map $f^*:H_1(C_{\bullet}) \rightarrow H_1(C'_{\bullet})$ defined as $f^*([x])=[f_1(x)]$ for any $x \in \ker(\partial_1)$. This map is well-defined, that is, $f_1(x) \in \ker(\partial_1')$ for all $x \in \ker(\partial_1)$, and $[f_1(x_1)]=[f_1(x_2)]$ for all $x_1, x_2 \in \ker(\partial_1)$ such that $[x_1]=[x_2]$.
\end{theorem}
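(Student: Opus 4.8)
The plan is to verify directly the two assertions contained in the statement: that $f_1$ sends $1$-cycles to $1$-cycles, and that the induced assignment on cosets does not depend on the chosen representative. Both follow from the two defining conditions of a weak chain map, with the commutation of the right square handling the first point and the weakened condition on the left handling the second.

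First I would show $f_1\bigl(\ker(\partial_1)\bigr) \subseteq \ker(\partial_1')$. Let $x \in \ker(\partial_1)$. Using the commutativity of the right square, $f_0 \circ \partial_1 = \partial_1' \circ f_1$, we compute $\partial_1'\bigl(f_1(x)\bigr) = f_0\bigl(\partial_1(x)\bigr) = f_0(0) = 0$, so $f_1(x)$ is a $1$-cycle of $C'_\bullet$ and $[f_1(x)]$ is a well-defined element of $H_1(C'_\bullet)$.

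Next I would check independence of the representative. Suppose $x_1, x_2 \in \ker(\partial_1)$ satisfy $[x_1] = [x_2]$, i.e.\ $x_1 + x_2 \in \im(\partial_2)$ (working over $\mathbb{F}_2$). Write $x_1 + x_2 = \partial_2(y)$ for some $y \in C_2$. By linearity of $f_1$,
\begin{align*}
    f_1(x_1) + f_1(x_2) = f_1(x_1 + x_2) = f_1\bigl(\partial_2(y)\bigr) \in \im(f_1 \circ \partial_2) \subseteq \im(\partial_2'),
\end{align*}
where the final inclusion is exactly the defining condition of a weak chain map. Hence $f_1(x_1)$ and $f_1(x_2)$ differ by a boundary and $[f_1(x_1)] = [f_1(x_2)]$ in $H_1(C'_\bullet)$. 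This shows $f^*([x]) := [f_1(x)]$ is well defined; linearity of $f^*$ is then inherited from linearity of $f_1$ together with the fact that the quotient maps $\ker(\partial_1) \to H_1(C_\bullet)$ and $\ker(\partial_1') \to H_1(C'_\bullet)$ are linear.

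There is no substantive obstacle here; the only point worth emphasizing is that the second step uses \emph{only} the inclusion $\im(f_1 \circ \partial_2) \subseteq \im(\partial_2')$ and never the existence of a map $f_2$ making the left square commute. This is precisely why relaxing chain maps to weak chain maps still yields a functorial action on first homology, which is what makes the weaker notion useful for the subsequent constructions.
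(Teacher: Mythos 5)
Your proof is correct and follows essentially the same route as the paper's: the right-square commutation $f_0 \circ \partial_1 = \partial_1' \circ f_1$ gives that cycles map to cycles, and the inclusion $\Im(f_1 \circ \partial_2) \subseteq \Im(\partial_2')$ gives representative-independence via $f_1(x_1+x_2)=f_1(\partial_2(y))$. Your closing remark that only the weakened left-square condition is needed is exactly the point of the definition, and nothing is missing.
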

\begin{proof}
    Let us start by showing that $f_1$ maps cycles to cycles. Let $x \in \ker(\partial_1)$. Then $\partial_1'(f_1(x))=f_0(\partial_1(x))=0$. Therefore, $f_1(x) \in \ker(\partial_1')$.

    Let us now show that the coset of $f_1(x)$ only depends on the coset of $x$. Let $x_1,x_2 \in \ker(\partial_1)$ such that $[x_1]=[x_2]$. Then $f_1(x_1)+f_1(x_2)=f_1(x_1+x_2)$. Since $[x_1+x_2]=0$, we must have $x_1+x_2 \in \Im(\partial_2)$. Therefore, there exists $y \in C_2$ such that $x_1+x_2=\partial_2 (y)$. Hence, $f_1(x_1+x_2)=f_1(\partial_2(y)) \in \Im(\partial_2')$ by the definition of a weak chain map. Therefore, $[f_1(x_1) + f_1(x_2)]=0$, that is, $[f_1(x_1)]=[f_1(x_2)]$.
\end{proof}

We are particularly interested in the case where the induced map $f^\star$ is an isomorphism, that is, a one-to-one map from logical errors of one chain complex to logical errors of the other chain complex. A chain map that is not necessarily an isomorphism, but whose induced map is, is called a \textit{quasi-isomorphism}. We generalize this notion for weak chain maps.
\begin{definition}[Weak quasi-isomorphism]
    A weak quasi-isomorphism is a weak chain map $f$ such that the induced map $f^\star$ is a group isomorphism.
\end{definition}
For the purpose of this work, we would also like our chain maps to preserve the distance and decoding function of the chain complexes.

\begin{definition}[Distance-preserving map]
    A weak chain map $f: C_\bullet \rightarrow C'_\bullet$ is said to be \textit{distance-preserving} if $d(C_\bullet)=d(C'_\bullet)$.
\end{definition}

\begin{definition}[Decoding-preserving map]
    A weak chain map $f: C_\bullet \rightarrow C'_\bullet$ is said to be \textit{decoding-preserving} if the minimum-weight decoding functions $x^\star$ and $x'^\star$ of $C_\bullet$ and $C'_\bullet$ are related by $f_1 \circ x^\star = x'^\star \circ f_0$.
\end{definition}

We can finally define our notion of fault-tolerant map:
\begin{definition}[Fault-tolerant map]
    A weak chain map is called \textit{fault-tolerant} if it is a distance-preserving and decoding-preserving weak quasi-isomorphism.
\end{definition}
This notion allows us to define an equivalence class of chain complex as follows.
\begin{definition}[Equivalence of chain complexes]
    Two chain complexes $C_{\bullet}$ and $C'_{\bullet}$ are equivalent if there exists a fault-tolerant map $f:C_{\bullet} \rightarrow C'_{\bullet}$ and a fault-tolerant map $g:C'_{\bullet} \rightarrow C_{\bullet}$
\end{definition}

\begin{figure}
    \centering
    \tikzsetnextfilename{sec2-rule-a}
    \subfloat[]{ \label{fig:reduction-rule-a}
        \begin{tikzpicture}[scale=0.92, triangle/.style = {fill=black, regular polygon, regular polygon sides=3}]
    \node[fill=black, inner sep=0, minimum size=6pt] (Q1) at (0,0) {};
    \node[circle, fill=black, inner sep=0pt, minimum size=5pt] (G1) at (1,0) {};
    \node[fill=black, inner sep=0, minimum size=6pt] (Q2) at (2,0) {};

    \node[triangle,fill=black, inner sep=0, minimum size=10pt] (S1) at (1,1) {};

    \draw[black,thick] (Q1) -- (Q2);
    \draw[black,thick] (Q1) -- (S1);
    \draw[black,thick] (Q2) -- (S1);

    \node[circle,fill=black, inner sep=0, minimum size=1pt] () at (-0.5,0.1) {};
    \node[circle,fill=black, inner sep=0, minimum size=1pt] () at (-0.5,0) {};
    \node[circle,fill=black, inner sep=0, minimum size=1pt] () at (-0.5,-0.1) {};

    \node[circle,fill=black, inner sep=0, minimum size=1pt] () at (2.5,0.1) {};
    \node[circle,fill=black, inner sep=0, minimum size=1pt] () at (2.5,0) {};
    \node[circle,fill=black, inner sep=0, minimum size=1pt] () at (2.5,-0.1) {};

    \node[circle,fill=black, inner sep=0, minimum size=1pt] () at (0.9,1.5) {};
    \node[circle,fill=black, inner sep=0, minimum size=1pt] () at (1,1.5) {};
    \node[circle,fill=black, inner sep=0, minimum size=1pt] () at (1.1,1.5) {};

    \draw[black,thick] (Q1) -- (-0.6,0.3);
    \draw[black,thick] (Q1) -- (-0.6,-0.3);

    \draw[black,thick] (Q2) -- (2.6,0.3);
    \draw[black,thick] (Q2) -- (2.6,-0.3);

    \draw[black,thick] (S1) -- (0.7,1.6);
    \draw[black,thick] (S1) -- (1.3,1.6);

    \draw[<->, thick] (3,0.5) -- (4,0.5);

    \node[fill=black, inner sep=0, minimum size=6pt] (Q) at (4+1,0) {};

    \node[triangle,fill=black, inner sep=0, minimum size=10pt] (S) at (4+1,1) {};

    \draw[black,thick] (Q) -- (S);

    \node[circle,fill=black, inner sep=0, minimum size=1pt] () at (5-0.5,0.1) {};
    \node[circle,fill=black, inner sep=0, minimum size=1pt] () at (5-0.5,0) {};
    \node[circle,fill=black, inner sep=0, minimum size=1pt] () at (5-0.5,-0.1) {};

    \node[circle,fill=black, inner sep=0, minimum size=1pt] () at (5+0.5,0.1) {};
    \node[circle,fill=black, inner sep=0, minimum size=1pt] () at (5+0.5,0) {};
    \node[circle,fill=black, inner sep=0, minimum size=1pt] () at (5+0.5,-0.1) {};

    \node[circle,fill=black, inner sep=0, minimum size=1pt] () at (4+0.9,1.5) {};
    \node[circle,fill=black, inner sep=0, minimum size=1pt] () at (4+1,1.5) {};
    \node[circle,fill=black, inner sep=0, minimum size=1pt] () at (4+1.1,1.5) {};

    \draw[black,thick] (Q) -- (5-0.6,0.3);
    \draw[black,thick] (Q) -- (5-0.6,-0.3);

    \draw[black,thick] (Q) -- (5+0.6,0.3);
    \draw[black,thick] (Q) -- (5+0.6,-0.3);

    \draw[black,thick] (S) -- (4+0.7,1.6);
    \draw[black,thick] (S) -- (4+1.3,1.6);

\end{tikzpicture}
    }
    \tikzsetnextfilename{sec2-rule-b}
    \subfloat[]{ \label{fig:reduction-rule-b}
        \begin{tikzpicture}[scale=0.92, triangle/.style = {fill=black, regular polygon, regular polygon sides=3}]
    \begin{scope}[xshift=-3.5cm]
        \node[circle, fill=black, inner sep=0pt, minimum size=5pt] (G1) at (0,0) {};
        \node[fill=black, inner sep=0, minimum size=6pt] (Q1) at (1,0) {};
        \node[circle, fill=black, inner sep=0pt, minimum size=5pt] (G2) at (2,0.5) {};
        \node[circle, fill=black, inner sep=0pt, minimum size=5pt] (G3) at (2,-0.5) {};

        \draw[black,thick] (G1) -- (Q1);
        \draw[black,thick] (Q1) -- (G2);
        \draw[black,thick] (Q1) -- (G3);

        \node[circle,fill=black, inner sep=0, minimum size=1pt] () at (2,0.15) {};
        \node[circle,fill=black, inner sep=0, minimum size=1pt] () at (2,0) {};
        \node[circle,fill=black, inner sep=0, minimum size=1pt] () at (2,-0.15) {};

        \node[circle,fill=black, inner sep=0, minimum size=1pt] () at (2.4,0.6) {};
        \node[circle,fill=black, inner sep=0, minimum size=1pt] () at (2.4,0.5) {};
        \node[circle,fill=black, inner sep=0, minimum size=1pt] () at (2.4,0.4) {};

        \node[circle,fill=black, inner sep=0, minimum size=1pt] () at (2.4,-0.4) {};
        \node[circle,fill=black, inner sep=0, minimum size=1pt] () at (2.4,-0.5) {};
        \node[circle,fill=black, inner sep=0, minimum size=1pt] () at (2.4,-0.6) {};

        \draw[black,thick] (G2) -- (2.5,0.8);
        \draw[black,thick] (G2) -- (2.5,0.2);

        \draw[black,thick] (G3) -- (2.5,-0.8);
        \draw[black,thick] (G3) -- (2.5,-0.2);
    \end{scope}

    \begin{scope}
        \draw[<->, thick] (-0.5,0) -- (0.5,0);
    \end{scope}

    \begin{scope}[xshift=-1cm]
        \node[circle, fill=black, inner sep=0pt, minimum size=5pt] (G2) at (2,0.5) {};
        \node[circle, fill=black, inner sep=0pt, minimum size=5pt] (G3) at (2,-0.5) {};

        \node[circle,fill=black, inner sep=0, minimum size=1pt] () at (2,0.15) {};
        \node[circle,fill=black, inner sep=0, minimum size=1pt] () at (2,0) {};
        \node[circle,fill=black, inner sep=0, minimum size=1pt] () at (2,-0.15) {};

        \node[circle,fill=black, inner sep=0, minimum size=1pt] () at (2.4,0.6) {};
        \node[circle,fill=black, inner sep=0, minimum size=1pt] () at (2.4,0.5) {};
        \node[circle,fill=black, inner sep=0, minimum size=1pt] () at (2.4,0.4) {};

        \node[circle,fill=black, inner sep=0, minimum size=1pt] () at (2.4,-0.4) {};
        \node[circle,fill=black, inner sep=0, minimum size=1pt] () at (2.4,-0.5) {};
        \node[circle,fill=black, inner sep=0, minimum size=1pt] () at (2.4,-0.6) {};

        \draw[black,thick] (G2) -- (2.5,0.8);
        \draw[black,thick] (G2) -- (2.5,0.2);

        \draw[black,thick] (G3) -- (2.5,-0.8);
        \draw[black,thick] (G3) -- (2.5,-0.2);
    \end{scope}

\end{tikzpicture}
    }
    \caption{Reduction rules A and B. Interpreting the represented chain complex as a CSS code, we can see that rule A corresponds to merging the two qubits in the support of a weight-two $X$ stabilizer, while rule B corresponds to removing a qubit in the support of a weight-one $X$ stabilizer.}
    \label{fig:reduction-rules}
\end{figure}
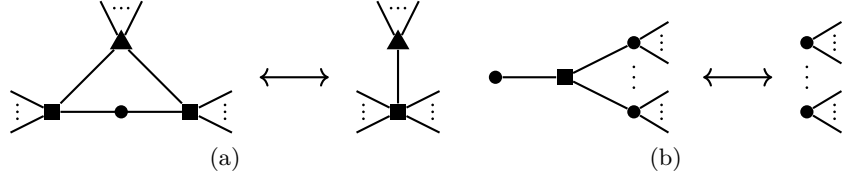
We now introduce two specific fault-tolerant maps, called reduction rules A and B.

\subsection{Rule A} \label{sec:rule-a}

Rule A is represented graphically in \cref{fig:reduction-rule-a}.
To describe it, let us consider a chain complex
\begin{align*}
    C_2 \xrightarrow{\partial_2} C_1 \xrightarrow{\partial_1} C_0
\end{align*}
along with bases $s_1,\ldots,s_{n_0}$ for $C_0$, $e_1,\ldots,e_{n_1}$ for $C_1$ and $g_1,\ldots,g_{n_2}$ for $C_2$.
To apply the rule, we assume the existence of a node of type $C_2$ only connected to two nodes of type $C_1$.
Without loss of generality, we assume that the node of type $C_2$ is $g_{n_2}$ and the two nodes of type $C_1$ are $e_{n_1-1}$ and $e_{n_1}$, that is, $\partial_2(g_{n_2})=e_{n_1-1}+e_{n_1}$.
Rule A then consists of removing $g_{n_2}$ and merging $e_{n_1-1}$ and $e_{n_1}$.
More precisely, it is a pair of linear maps $(f_A,h_A)$ with $f_A:C_{\bullet}:C'_{\bullet}$ and $h_A:C_{\bullet}:C'_{\bullet}$, where $C'_{\bullet}$ is defined as follows:
\begin{itemize}
    \item $C'_2$ has dimension $n_2-1$ and is generated by a basis $g'_1,\ldots,g'_{n_2-1}$,
    \item $C'_1$ has dimension $n_1-1$ and is generated by a basis $e'_1,\ldots,e'_{n_1-1}$,
    \item $C'_0=C_0$,
    \item The boundary operators transform the following way:
    \begin{align}
        \partial_2 &=
        \begin{pNiceArray}{ccc|c}[first-row,first-col]
            & g_1 & \cdots & g_{n_2-1} & g_{n_2} \\
            e_1 & \Block{3-3}{\tilde{\partial}_2}  &   &   & 0 \\
            \vdots & & &   & \vdots \\
            e_{n_1-2} &   &   &   & 0 \\
            \cline{1-4}
            e_{n_1-1} & \Block{1-3}{a} & &   & 1 \\
            e_{n_1} & \Block{1-3}{b} & &   & 1
        \end{pNiceArray}
        & \mapsto\;\;\; &
        \partial'_2=
        \begin{pNiceArray}{ccc}[first-row,first-col]
            & g'_1 & \cdots & g'_{n_2-1} \\
            e'_1 & \Block{3-3}{\tilde{\partial}_2}  &   & \\
            \vdots & & & \\
            e'_{n_1-2} &   &   & \\
            \cline{1-3}
            e'_{n_1-1} & \Block{1-3}{a+b} & &
        \end{pNiceArray},
        \\[2em]
        \partial_1 & =
        \begin{pNiceArray}{ccc|cc}[first-row, first-col]
            & e_1 & \cdots & e_{n_1-2} & e_{n_1-1} & e_{n_1} \\
            s_1 & \Block{3-3}{\tilde{\partial}_1} & & & \Block{3-1}{c} & \Block{3-1}{c} \\
            \vdots & & & & & \\
            s_{n_0} & & & & &
        \end{pNiceArray}
        & \mapsto\;\;\; &
        \partial'_1=
        \begin{pNiceArray}{ccc|c}[first-row, first-col]
            & e'_1 & \cdots & e'_{n_1-2} & e'_{n_1-1} \\
            s'_1 & \Block{3-3}{\tilde{\partial}_1} & & & \Block{3-1}{c} \\
            \vdots & & & & \\
            s'_{n_0} & & & &
        \end{pNiceArray}.
    \end{align}
    Note that $\partial_1$ can be written this way because if a node $s_i$ is connected to $e_{n_1-1}$, it must also be connected to $e_{n_1}$ for the chain complex condition to be valid (otherwise, $s_i$ and $g_{n_2}$ would have a neighborhood overlapping on an odd number of elements).
\end{itemize}
The map $f_A=(f_0,f_1,f_2)$ is defined as follows:
\begin{itemize}
    \item $f_2(g_{n_2})=0$ and $f_2(g_i)=g'_i$ for all $i \leq n_2-1$,
    In matrix form:
    \begin{align}
        f_2=
        \begin{pNiceArray}{ccc|c}[first-row, first-col]
            & g_1 & \cdots & g_{n_2-1} & g_{n_2} \\
            g'_1 & \Block{3-3}{I_{n_2-1}}  &   &   & 0 \\
            \vdots & & &   & \vdots \\
            g'_{n_2-1} &   &   &   & 0
        \end{pNiceArray},
    \end{align}
    \item $f_1(e_{n_1})=e'_{n_1-1}$ and $f_1(e_i)=e'_i$ for all $i \leq n_1-1$,
    In matrix form:
    \begin{align}
        f_1=
        \begin{pNiceArray}{ccc|c}[first-row, first-col]
            & e_1 & \cdots & e_{n_1-1} & e_{n_1} \\
            e'_1 & \Block{4-3}{I_{n_1-1}}  &   &   & 0 \\
            \vdots & & &   & \vdots \\
            e'_{n_1-2} &   &   &   & 0 \\
            e'_{n_1-1} &   &   &   & 1
        \end{pNiceArray},
    \end{align}
    \item $f_0(s_i)=s'_i$ for all $i \leq n_0$,
\end{itemize}
and the map $h_A=(h_0,h_1,h_2)$ is defined as follows:
\begin{itemize}
    \item $h_2(g'_i)=g_i$ for all $i \leq n_2-1$. In matrix form:
    \begin{align}
        h_2=
        \begin{pNiceArray}{ccc}[first-row,first-col]
            & g'_1 & \cdots & g'_{n_2-1} \\
            g_1 & \Block{3-3}{I_{n_2-1}}  &   & \\
            \vdots & & & \\
            g_{n_2-1} &   &   & \\
            \cline{1-3}
            g_{n_2} & \Block{1-3}{0} & &
        \end{pNiceArray},
    \end{align}
    \item $h_1(e'_i)=e_i$ for all $i \leq n_1-1$. In matrix form:
    \begin{align}
        h_1=
        \begin{pNiceArray}{ccc}[first-row,first-col]
            & e'_1 & \cdots & e'_{n_1-1} \\
            e_1 & \Block{3-3}{I_{n_1-1}}  &   & \\
            \vdots & & & \\
            e_{n_1-1} &   &   & \\
            \cline{1-3}
            e_{n_1} & \Block{1-3}{0} & &
        \end{pNiceArray},
    \end{align}
    \item $h_0(s'_i)=s_i$ for all $i \leq n_0$.
\end{itemize}

To prove that $f_A$ is a fault-tolerant map, we will need the following two lemmas:
\begin{lemma}
    \label{lemma:condition-quasi-isomorphism}
    Let $f:C_{\bullet} \rightarrow C'_{\bullet}$ and $h:C'_{\bullet} \rightarrow C_{\bullet}$ be two weak chain maps between length-2 chain complexes $C_{\bullet}$ and $C'_{\bullet}$ such that $[f_1(h_1(x))]=[x]$ for all $x \in C'_1$, and $[h_1(f_1(x))]=[x]$ for all $x \in C_1$. Then $f$ and $h$ are both weak quasi-isomorphisms, and we have $f^\star \circ h^\star = \mathrm{id}_{H_1(C'_\bullet)}$ and $h^\star \circ f^\star = \mathrm{id}_{H_1(C_\bullet)}$.
\end{lemma}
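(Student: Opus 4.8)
The plan is to leverage \cref{theorem:weak-chain-map-induces-homology-map}, which already guarantees that the two weak chain maps $f$ and $h$ induce well-defined maps $f^\star : H_1(C_\bullet) \to H_1(C'_\bullet)$ and $h^\star : H_1(C'_\bullet) \to H_1(C_\bullet)$ on first homology, given on representatives by $f^\star([x]) = [f_1(x)]$ and $h^\star([y]) = [h_1(y)]$. The bulk of the argument is then simply to check that these two induced maps are mutually inverse, and to observe that a group homomorphism admitting a two-sided inverse homomorphism is an isomorphism.

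First I would record that $f^\star$ and $h^\star$ are group homomorphisms: since $f_1$ is $\mathbb{F}_2$-linear and passing to cosets respects addition, $f^\star([x_1] + [x_2]) = [f_1(x_1) + f_1(x_2)] = f^\star([x_1]) + f^\star([x_2])$, and likewise for $h^\star$. Next, for any $[x] \in H_1(C_\bullet)$, pick a representative $x \in \ker(\partial_1)$; by \cref{theorem:weak-chain-map-induces-homology-map} we have $f_1(x) \in \ker(\partial'_1)$, so $f^\star([x]) = [f_1(x)]$ is represented by a genuine cycle and we may evaluate $h^\star$ on it, obtaining $h^\star(f^\star([x])) = [h_1(f_1(x))]$. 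The hypothesis $[h_1(f_1(x))] = [x]$ (which holds for all $x \in C_1$, in particular for all $x \in \ker(\partial_1)$) then gives $h^\star \circ f^\star = \mathrm{id}_{H_1(C_\bullet)}$. The symmetric computation, using the hypothesis $[f_1(h_1(y))] = [y]$, gives $f^\star \circ h^\star = \mathrm{id}_{H_1(C'_\bullet)}$.

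Finally I would conclude: a homomorphism that admits a two-sided inverse is bijective, hence an isomorphism, so $f^\star$ and $h^\star$ are group isomorphisms; by the definition of a weak quasi-isomorphism, both $f$ and $h$ are weak quasi-isomorphisms, and the inverse identities are precisely the displayed equations $f^\star \circ h^\star = \mathrm{id}_{H_1(C'_\bullet)}$ and $h^\star \circ f^\star = \mathrm{id}_{H_1(C_\bullet)}$.

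I do not expect a real obstacle here; the only point requiring a moment of care is that when forming the composite $h^\star \circ f^\star$ one must use that $f_1$ sends the chosen cycle representative to an actual cycle, so that $h^\star$ is being evaluated on a legitimate homology class and genuinely equals $[h_1(f_1(x))]$ — and this is exactly the content already supplied by \cref{theorem:weak-chain-map-induces-homology-map}. Everything else is routine coset bookkeeping over $\mathbb{F}_2$.
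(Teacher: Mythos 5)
Your proposal is correct and follows essentially the same route as the paper's proof: both invoke \cref{theorem:weak-chain-map-induces-homology-map} to get the induced maps, compute the two composites on representatives using the hypotheses, and conclude that $f^\star$ and $h^\star$ are mutually inverse isomorphisms. Your extra remark about $f_1$ sending a cycle representative to a genuine cycle is a fair point of care but adds nothing beyond what the cited theorem already supplies.
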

\begin{proof}
    Let $x \in C'_1$. We have $f^\star(h^\star([x]))=[f_1(h_1(x))]=[x]$. Therefore, $f^\star \circ h^\star$ is the identity on $H_1(C'_\bullet)$. Similarly, for $x \in C_1$, we have $h^\star(f^\star([x]))=[h_1(f_1(x))]=[x]$, which shows that $h^\star \circ f^\star$ is the identity on $H_1(C_\bullet)$. Thus, $f^\star$ and $h^\star$ are both isomorphisms, and therefore $f$ and $h$ are weak quasi-isomorphisms.
\end{proof}
\begin{lemma}
    \label{lemma:distance-preserving-and-decoding-preserving}
    Let $f:C_{\bullet} \rightarrow C'_{\bullet}$ and $h:C'_{\bullet} \rightarrow C_{\bullet}$ be two weak chain maps between length-2 chain complexes $C_{\bullet}$ and $C'_{\bullet}$ such that
    \begin{itemize}
        \item$f_0 \circ h_0 = \mathrm{id}_{C'_1}$ and $h_0 \circ f_0 = \mathrm{id}_{C_1}$,
        \item $f_1$ preserves the weight of any minimal-weight logical error,
        \item $f_1$ and $h_1$ are weight-nonincreasing, i.e. $|f_1(x)| \leq |x|$ for every $x \in C_1$ and $|h_1(x)| \leq |x|$ for every $x \in C'_1$.
    \end{itemize}
    Then $f$ and $h$ are both distance-preserving and decoding-preserving.
\end{lemma}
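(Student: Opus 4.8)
I would split the statement into two parts: that $f$ and $h$ are \emph{decoding-preserving}, and that they are \emph{distance-preserving}. Distance-preservation is the symmetric condition $d(C_\bullet)=d(C'_\bullet)$, so a single argument handles both maps; and the decoding-preservation of $h$ follows from that of $f$ by interchanging the roles of $(f,C_\bullet)$ and $(h,C'_\bullet)$, since every hypothesis used is symmetric under this interchange. So the work reduces to (i) proving $f_1\circ x^\star = x'^\star\circ f_0$, and (ii) proving $d(C_\bullet)=d(C'_\bullet)$. I expect (i) to be a short combinatorial ``round-trip'' argument using the degree-$0$ identities together with the two weight-nonincreasing bounds, and (ii) to be where the minimal-weight-logical-error hypothesis is essential.

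\emph{Decoding-preservation.} Fix $s\in C_0$ and set $x=x^\star(s)$, the minimum-weight preimage of $s$ under $\partial_1$. Since $f$ is a weak chain map, $\partial'_1(f_1(x))=f_0(\partial_1(x))=f_0(s)$, so $f_1(x)$ is a valid correction for the syndrome $f_0(s)$ in $C'_\bullet$. To see it has minimum weight, I would assume for contradiction that some $x'\in C'_1$ satisfies $\partial'_1(x')=f_0(s)$ with $|x'|<|f_1(x)|$, and pull it back: $\partial_1(h_1(x'))=h_0(\partial'_1(x'))=h_0(f_0(s))=s$ (using that $h$ is a weak chain map and $h_0\circ f_0=\mathrm{id}_{C_0}$), while $|h_1(x')|\le|x'|<|f_1(x)|\le|x|$ by the two weight-nonincreasing hypotheses; this contradicts minimality of $x^\star(s)$. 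Hence $f_1(x^\star(s))$ realizes the minimum defining $x'^\star(f_0(s))$, i.e.\ $f_1\circ x^\star = x'^\star\circ f_0$. If the $\argmin$ is not single-valued I would fix a weight-compatible tie-breaking rule, or phrase the conclusion as ``$f_1(x^\star(s))$ attains the minimum''. The identity $h_1\circ x'^\star = x^\star\circ h_0$ is the same statement with $f\leftrightarrow h$ and $C_\bullet\leftrightarrow C'_\bullet$.

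\emph{Distance-preservation.} Let $e$ be a non-trivial logical error of $C_\bullet$ of minimal weight $d(C_\bullet)$. By \cref{theorem:weak-chain-map-induces-homology-map}, $f_1(e)\in\ker(\partial'_1)$ and represents $f^\star([e])$. The minimal-weight hypothesis gives $|f_1(e)|=|e|=d(C_\bullet)$; reading it as the statement that $f_1$ carries a minimal-weight logical error to a minimal-weight logical error, $f_1(e)$ is a non-trivial logical error of $C'_\bullet$ of minimum weight, so $d(C'_\bullet)=|f_1(e)|=d(C_\bullet)$---which is exactly the distance-preservation of both $f$ and $h$. With only the weaker reading (that $f_1$ preserves the numerical weight), the same equality follows from two inequalities: $d(C'_\bullet)\le|f_1(e)|=d(C_\bullet)$ via $f$, and $d(C_\bullet)\le|h_1(e')|\le d(C'_\bullet)$ via the weight-nonincreasing $h_1$ applied to a minimal-weight $e'$ of $C'_\bullet$, provided one also knows that neither $f_1$ nor $h_1$ trivializes a non-trivial class---which, in the settings where this lemma is applied, is supplied by \cref{lemma:condition-quasi-isomorphism}.

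\emph{Main obstacle.} The one delicate point is in the distance argument: one must rule out a minimal-weight logical error of $C_\bullet$ being mapped to a \emph{trivial} element of $H_1(C'_\bullet)$, equivalently that $f^\star$ and $h^\star$ remain isomorphisms on first homology. Decoding-preservation never touches homology---it is only the round trip through $h$ (resp.\ $f$) and the two weight bounds---so I would write that part out in full first. What remains is bookkeeping: well-definedness of the $\argmin$ defining $x^\star$, and the fixed bases of $C_1$ and $C'_1$ used to compare weights.
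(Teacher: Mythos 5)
Your proposal is correct and follows essentially the same route as the paper's proof: decoding-preservation via the round trip $y\mapsto h_1(y)$ using the commuting right squares, $h_0\circ f_0=\mathrm{id}$, and the two weight-nonincreasing bounds to contradict minimality of $x^\star(s)$; then distance-preservation by applying the weight-preservation hypothesis to a minimal-weight non-trivial logical error. The one subtlety you flag---that one must separately know $f_1$ does not send a non-trivial class to a trivial one---is real, and your handling of it (deferring to \cref{lemma:condition-quasi-isomorphism} in the settings where the lemma is applied) is if anything more careful than the paper's, which attributes this fact to \cref{theorem:weak-chain-map-induces-homology-map} even though that theorem only gives well-definedness, not injectivity, of the induced map.
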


\begin{proof}
    We start by showing that $f$ is decoding-preserving.
    Let $s \in \Im(\partial_1)$.
    We need to show that $f_1(x^\star(s))=x'^\star(f_0(s))$, or in other words, that $f_1(x^\star(s))$ is the minimum-weight element $x' \in C'_1$ such that $\partial'_1(x')=f_0(s)$. Let us start by showing that the constraint $\partial'_1(x')=f_0(s)$ is fulfilled:
    \begin{align}
        \partial'_1(f_1(x^\star(s))) &= f_0(\partial_1(x^\star(s))) = f_0(s).
    \end{align}
    We then need to show that $f_1(x^\star(s))$ is the minimum-weight element of $C'_1$ satisfying this constraint. We prove this by contradiction. Let $y \in C'_1$ satisfying $\partial'_1(y)=f_0(s)$ and $|y|<|f_1(x^\star(s))|$. Using the weight-nonincreasing properties of $f_1$ and $h_1$, we get the series of inequalities:
    \begin{align}
        |h_1(y)| \leq |y| < |f_1(x^\star(s))| \leq |x^\star(s)|.
    \end{align}
    Moreover, $\partial_1(h_1(y))=h_0(\partial'_1(y))=h_0(f_0(s))=s$.
    Therefore, $h_1(y)$ is an element of $C_1$ satisfying the syndrome constraint with a lower weight than $|x^\star(s)|$, which contradicts the definition of $x^\star(s)$.
    Thus, $f_1(x^\star(s))$ is the minimum-weight element of $C'_1$ satisfying the constraint, and $f$ is decoding-preserving.
    The same proof with $f$ and $h$ switched allows us to show that $h$ is decoding-preserving as well.

    We now show that $f$ and $h$ are distance-preserving, meaning that $C_\bullet$ and $C_\bullet'$ have the same distance.
    Let $x \in C_1$ be a minimal-weight non-trivial logical error.
    By applying the proof above to $s=0$, we can see that $f_1(x)$ is a minimum-weight logical error in $C'_\bullet$.
    Moreover, since $f$ is a weak chain map, \cref{theorem:weak-chain-map-induces-homology-map} shows that it maps non-trivial logical errors to non-trivial logicals errors. Therefore, $f_1(x)$ is a minimum-weight non-trivial logical error of $C'_\bullet$. Since by assumption, $|f_1(x)|=|x|$, we have $d(C'_\bullet) = d(C_\bullet)$.
    Therefore, both $f$ and $h$ are distance-preserving.
\end{proof}

\begin{theorem} \label{theorem:rule-a}
    $f_A$ and $h_A$ are fault-tolerant maps
\end{theorem}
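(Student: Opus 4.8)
The plan is to verify the hypotheses of \cref{lemma:condition-quasi-isomorphism} and \cref{lemma:distance-preserving-and-decoding-preserving} for the pair $(f_A, h_A)$, after first checking that both are genuine weak chain maps. Combining the two lemmas then yields that $f_A$ and $h_A$ are distance-preserving, decoding-preserving weak quasi-isomorphisms, i.e.\ fault-tolerant maps.

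First I would check that $f_A = (f_0, f_1, f_2)$ and $h_A = (h_0, h_1, h_2)$ are weak chain maps. For $f_A$ this means verifying $f_0 \circ \partial_1 = \partial'_1 \circ f_1$ and $\Im(f_1 \circ \partial_2) \subseteq \Im(\partial'_2)$. The square identity is a direct block-matrix computation using the explicit forms of $\partial_1, \partial'_1, f_0, f_1$ given above; the key point is the observation already noted in the text, that any $s_i$ connected to $e_{n_1-1}$ is also connected to $e_{n_1}$ (forced by the chain complex condition with $g_{n_2}$), so the last two columns of $\partial_1$ are equal and collapse correctly to the single column $c$ of $\partial'_1$ after applying $f_1$. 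For the image condition, $f_1 \circ \partial_2$ sends $g_{n_2} \mapsto f_1(e_{n_1-1} + e_{n_1}) = e'_{n_1-1} + e'_{n_1-1} = 0$ and $g_i \mapsto$ the $i$-th column of $\partial'_2$ for $i \le n_2-1$, so $\Im(f_1\circ\partial_2) = \Im(\partial'_2)$ exactly. The analogous check for $h_A$ is similar (and slightly easier, since $h_1, h_2$ are just inclusions of the first $n-1$ coordinates).

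Next I would verify the hypotheses of \cref{lemma:condition-quasi-isomorphism}: that $[f_1(h_1(x))] = [x]$ for all $x \in C'_1$ and $[h_1(f_1(x))] = [x]$ for all $x \in C_1$. In fact $f_1 \circ h_1 = \mathrm{id}_{C'_1}$ on the nose (composing the two explicit matrices), so the first identity is immediate. For the second, $h_1 \circ f_1$ is the identity on $e_1, \dots, e_{n_1-2}$, fixes nothing new, but sends $e_{n_1} \mapsto e_{n_1-1}$ and $e_{n_1-1} \mapsto e_{n_1-1}$; hence $(h_1 \circ f_1)(x) - x$ is supported on $\{e_{n_1-1}, e_{n_1}\}$ and in fact equals $(x_{n_1} + x_{n_1-1} + x_{n_1})$-type combination — more carefully, $(h_1\circ f_1)(x) + x = (x_{n_1-1}+x_{n_1})(e_{n_1-1}+e_{n_1}) = (x_{n_1-1}+x_{n_1})\,\partial_2(g_{n_2}) \in \Im(\partial_2)$, so $[h_1(f_1(x))] = [x]$. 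This gives $f_A, h_A$ are weak quasi-isomorphisms with mutually inverse induced maps.

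Finally I would check the three hypotheses of \cref{lemma:distance-preserving-and-decoding-preserving}. The identity $f_0 \circ h_0 = \mathrm{id}$ and $h_0 \circ f_0 = \mathrm{id}$ is trivial since $f_0, h_0$ are mutually inverse relabelings of $C_0 = C'_0$. Weight-nonincreasingness: $h_1$ is a coordinate inclusion so $|h_1(x)| = |x|$; $f_1$ merges $e_{n_1-1}, e_{n_1}$ into $e'_{n_1-1}$, so $|f_1(x)| = |x|$ unless both $e_{n_1-1}$ and $e_{n_1}$ appear in $x$, in which case $|f_1(x)| = |x| - 2 < |x|$; either way $|f_1(x)| \le |x|$. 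The remaining hypothesis — that $f_1$ preserves the weight of any minimal-weight logical error — is the one real obstacle. The point is that a minimal-weight representative $x$ of a nontrivial class cannot contain both $e_{n_1-1}$ and $e_{n_1}$: if it did, then $x + \partial_2(g_{n_2}) = x + e_{n_1-1} + e_{n_1}$ is a representative of the same class with weight $|x| - 2$, contradicting minimality. Hence on such $x$ we have $|f_1(x)| = |x|$. (One should also note $x \in \ker\partial_1$ implies $f_1(x) \in \ker\partial'_1$ and $f_1(x)\notin\Im\partial'_2$ by the weak-quasi-isomorphism property, so $f_1(x)$ is genuinely a nontrivial logical error, as already used in the proof of \cref{lemma:distance-preserving-and-decoding-preserving}.) With all hypotheses of both lemmas verified, $f_A$ and $h_A$ are fault-tolerant maps. $\qed$
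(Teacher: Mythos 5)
Your proposal is correct and follows essentially the same route as the paper: verify the weak-chain-map conditions for $f_A$ and $h_A$, then invoke \cref{lemma:condition-quasi-isomorphism} and \cref{lemma:distance-preserving-and-decoding-preserving} with the same computations (including the key observation that a minimum-weight logical error cannot contain both $e_{n_1-1}$ and $e_{n_1}$). Two small points: the coefficient in $(h_1\circ f_1)(x)+x$ should be $x_{n_1}$, not $x_{n_1-1}+x_{n_1}$ --- harmless, since either way the difference is a scalar multiple of $\partial_2(g_{n_2})\in\Im(\partial_2)$; and the weak-chain-map check for $h_A$ is actually the subtler of the two, not the easier one, since the left square fails to commute on the nose whenever $b\neq 0$ and one must add $(b^Tx)\,\partial_2(g_{n_2})$ to see that $\Im(h_1\circ\partial'_2)\subseteq\Im(\partial_2)$.
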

\begin{proof}
    We start by proving that $f_A$ and $h_A$ are weak chain maps. We first show that the diagram for $f_A$,
    \[\begin{tikzcd}[column sep=20pt, row sep=20pt, every cell/.append style={inner sep=4pt}]
        {C_2} & {C_1} & {C_0} \\
        {C'_2} & {C'_1} & {C'_0}
        \arrow["{\partial_2}", from=1-1, to=1-2]
        \arrow["{f_2}"', from=1-1, to=2-1]
        \arrow["{\partial_1}", from=1-2, to=1-3]
        \arrow["{f_1}"', from=1-2, to=2-2]
        \arrow["{f_0}"', from=1-3, to=2-3]
        \arrow["{\partial'_2}", from=2-1, to=2-2]
        \arrow["{\partial'_1}", from=2-2, to=2-3]
    \end{tikzcd}\]
    commutes, making $f_A$ a chain map, and therefore a weak chain map. We can prove this by multiplying the matrices of the different operators involved:
    \begin{align}
        f_1 \circ \partial_2 &= \partial'_2 \circ f_2 =
        \begin{pNiceArray}{c|c}
            \Block{2-1}{\tilde{\partial}_2}  & 0 \\
            & \vdots \\
            a+b &  0
        \end{pNiceArray}, \\
        f_0 \circ \partial_1 &= \partial'_1 \circ f_1 =
        \begin{pNiceArray}{ccc}
            \tilde{\partial}_1 & c & c
        \end{pNiceArray}.
    \end{align}
    Similarly, we show that the right square of the diagram for $h_A$,
    \[\begin{tikzcd}[column sep=20pt, row sep=20pt, every cell/.append style={inner sep=4pt}]
        {C'_2} & {C'_1} & {C'_0} \\
        {C_2} & {C_1} & {C_0}
        \arrow["{\partial'_2}", from=1-1, to=1-2]
        \arrow["{h_2}"', from=1-1, to=2-1]
        \arrow["{\partial'_1}", from=1-2, to=1-3]
        \arrow["{h_1}"', from=1-2, to=2-2]
        \arrow["{h_0}"', from=1-3, to=2-3]
        \arrow["{\partial_2}", from=2-1, to=2-2]
        \arrow["{\partial_1}", from=2-2, to=2-3]
    \end{tikzcd}\]
    commutes, using matrix multiplication:
    \begin{align}
        h_0 \circ \partial_1' = \partial_1 \circ h_1 =
        \begin{pNiceArray}{cc}
            \tilde{\partial}_1 & c
        \end{pNiceArray}.
    \end{align}
    Moreover, for the left square, we have
    \begin{align}
        h_1 \circ \partial_2' =
        \begin{pNiceArray}{c}
            \tilde{\partial}_2 \\
            a + b \\
            0
        \end{pNiceArray}.
    \end{align}
    Applying it to any $x \in C_2'$, we obtain
    \begin{align}
        h_1(\partial_2'(x)) =
        \begin{pNiceArray}{c}
            \tilde{\partial}_2  x\\
            (a + b)^T x \\
            0
        \end{pNiceArray}.
    \end{align}
    Every element in the image of $\partial_2$ can be written as
    \begin{align}
        \partial_2
        \begin{pNiceArray}{c}
            x\\
            y
        \end{pNiceArray}
        =
        \begin{pNiceArray}{c}
            \tilde{\partial}_2  x\\
            a^T x + y \\
            b^T x + y
        \end{pNiceArray},
    \end{align}
    where $x \in \mathbb{Z}_2^{n_2-1}$ and $y \in \mathbb{Z}_2$.
    Choosing $y=b^T x$, we obtain
    \begin{align}
        \partial_2
        \begin{pNiceArray}{c}
            x\\
            y
        \end{pNiceArray}
        =
        \begin{pNiceArray}{c}
            \tilde{\partial}_2  x\\
            a^T x + b^T x \\
            b^T x + b^T x
        \end{pNiceArray}
        =
        \begin{pNiceArray}{c}
            \tilde{\partial}_2  x\\
            (a+b)^T x \\
            0
        \end{pNiceArray}.
    \end{align}
    Therefore, $h_1(\partial_2'(x)) \in \Im(\partial_2)$ for all $x \in C_2'$, and $h$ is a weak chain map.

    We now show using \cref{lemma:condition-quasi-isomorphism} that $f$ and $h$ are both weak quasi-isomorphisms. Let us calculate $f_1 \circ h_1$ and $h_1 \circ f_1$ in matrix form:
    \begin{align}
        f_1 \circ h_1 &= I_{n_1 - 1}, \\
        h_1 \circ f_1 &=
        \begin{pNiceArray}{c|c}
            \Block{4-1}{I_{n_1 - 1}}  & 0 \\
                                      & \vdots \\
                                      & 0 \\
                                      & 1 \\
            \cline{1-2}
            \Block{1-2}{0} &
        \end{pNiceArray}
        =
        I_{n_1} +
        \begin{pNiceArray}{cccc|c}
            \Block{5-4}{0}  & & & & 0 \\
                            & & & & \vdots \\
                            & & & & 0 \\
                            & & & & 1 \\
                            & & & & 1
        \end{pNiceArray}.
    \end{align}
    Applying $h_1 \circ f_1$ to any $x \in C_1$, we obtain
    \begin{align}
        h_1(f_1(x)) &= x +
        \begin{pmatrix}
            0 \\
            \vdots \\
            0 \\
            x_{n_1} \\
            x_{n_1}
        \end{pmatrix}
        = x + x_{n_1} (e_{n_1-1} + e_{n_1})
        = x + {\color{teal}x_{n_1}}\partial_2(g_{n_2}),
    \end{align}
    where $x_{n_1} = x \cdot e_{n_1}$.
    Therefore, $h_1(f_1(x)) \in x + \Im(\partial_2)$, and $[h_1(f_1(x))]=[x]$.
    Moreover, since $f_1(h_1(x')) = x'$ for all $x' \in C_1'$, we also have $[f_1(h_1(x'))]=[x']$.
    By \cref{lemma:condition-quasi-isomorphism}, we can conclude that $f_A$ and $h_A$ are both weak quasi-isomorphisms.

    It remains to show that $f_A$ and $h_A$ are distance- and decoding-preserving. For this, we show that the three conditions of \cref{lemma:distance-preserving-and-decoding-preserving} are satisfied.
    Let us start with the first one.
    For every basis element $s_i$ of $C_0$ and $s'_j$ of $C'_0$, we have
    \begin{align}
        h_0(f_0(s_i)) = h_0(s'_i)=s_i, \\
        f_0(h_0(s'_j)) = f_0(s_j)=s_j.
    \end{align}
    Therefore, $h_0$ and $f_0$ are inverses of each other.

    To see that $f_1(x)$ has the same weight as $x$ if $x \in C_1$ is a minimum-weight logical error, let us consider four cases: $x$ is supported on (1) both $e_{n_1}$ and $e_{n_1-1}$, (2) $e_{n_1}$ but not $e_{n_1-1}$ (3) $e_{n_1-1}$ but not $e_{n_1}$, (4) neither $e_{n_1}$ nor $e_{n_1-1}$.
    We can see that the first case is not possible since $x+e_{n_1}+e_{n_1-1}=x+\partial_2(g_{n_2})$ would then be another non-trivial logical error with a weight reduced by two. In the second case, since $f_1(e_{n_1})=e_{n_1-1}$ while all the other nodes remain unchanged, $f_1$ preserves the weight of $x$.
    In the third and fourth case, $f_1$ acts as an identity on all the nodes in the support of $x$, and it therefore preserves its weight as well.

    Finally, we show that $f_1$ and $h_1$ are weight-nonincreasing.
    Let $x' \in C'_1$ such that $x'=\sum_i x'_i e'_i$.
    We have
    \begin{align}
        h_1(x')=\sum_{i=1}^{n_1-1} x'_i h_1(e'_i) = \sum_{i=1}^{n_1-1} x'_i e_i.
    \end{align}
    Therefore,
    \begin{align}
        |h_1(x')|=\sum_{i=1}^{n_1-1} x'_i = |x'|,
    \end{align}
    where in this equation the $x'_i$ are interpreted as integers.
    Thus, $h_1$ is weight-preserving, and therefore weight-nonincreasing.

    Let $x \in C_1$ such that $x=\sum_i x_i e_i$.
    We have:
    \begin{align}
        f_1(x)=\sum_{i=1}^{n_1} x_i f_1(e_i) = (x_{n_1 - 1} + x_{n_1}) e'_{n_1} + \sum_{i=1}^{n_1-2} x_i e'_i,
    \end{align}
    Therefore,
    \begin{align}
        |f_1(x)|=\sum_{i=1}^{n_1-2} x'_i + (x_{n_1-1} + x_{n_1} \mod 2).
    \end{align}
    Since
    \begin{align}
        (x_{n_1-1} + x_{n_1} \mod 2) \leq x_{n_1-1} + x_{n_1},
    \end{align}
    we have that $|f_1(x)| \leq |x|$ and $f_1$ is weight-nonincreasing.

    We finally apply \cref{lemma:distance-preserving-and-decoding-preserving} to conclude that $f_A$ and $h_A$ are distance- and decoding-preserving.
\end{proof}

As an application of rule A, we show in \cref{fig:bacon-shor-equivalence-c} that the gauge complex of the Bacon-Shor code can be reduced to two connected components, where each component can be seen as the Tanner graph of a classical repetition code (one for $X$ errors and one for $Z$ errors).
This is a new proof of the well-known fact that decoding the Bacon-Shor code is equivalent to decoding two repetition codes \cite{aliferis2007subsystem}.

\subsection{Rule B} \label{sec:rule-b}

We now move on to the description of rule B.
We assume the existence of a node of type $C_2$ connected to only one node of type $C_1$.
Without loss of generality, we assume that the node of type $C_2$ is $g_{n_2}$ and the node of type $C_1$ is $e_{n_1}$, that is, $\partial_2(g_{n_2})=e_{n_1}$.
Rule B then consists of removing $g_{n_2}$ and $e_{n_1}$.
More precisely, it is described by a pair of linear maps $(f_B,h_B)$ between the original chain complex $C_{\bullet}$ and a new chain complex $C'_{\bullet}$ defined in the following way:
\begin{itemize}
    \item $C'_2$ has dimension $n_2-1$ and is generated by a basis $g'_1,\ldots,g'_{n_2-1}$
    \item $C'_1$ has dimension $n_1-1$ and is generated by a basis $e'_1,\ldots,e'_{n_1-1}$
    \item $C'_0=C_0$
    \item The boundary operators $\partial'_1$ and $\partial'_2$ are defined by removing all the edges connected to $g_{n_2}$ and $e_{n_1}$ in the original boundary operators $\partial_1$ and $\partial_2$. More precisely, writing down the transformation in matrix form, we have:
    \begin{align}
        \partial_2 &=
        \begin{pNiceArray}{ccc|c}[first-row,first-col]
            & g_1 & \cdots & g_{n_2-1} & g_{n_2} \\
            e_1 & \Block{3-3}{\tilde{\partial}_2}  &   &   & 0 \\
            \vdots & & &   & \vdots \\
            e_{n_1-1} &  &   &   & 0 \\
            \cline{1-3}
            e_{n_1} & \Block{1-3}{a} &   &   & 1
        \end{pNiceArray}
        & \mapsto\;\;\;
        \partial'_2=\tilde{\partial}_2,
        \\[2em]
        \partial_1 & =
        \begin{pNiceArray}{ccc|c}[first-row, first-col]
            & e_1 & \cdots & e_{n_1-1} & e_{n_1} \\
            s_1 & \Block{3-3}{\tilde{\partial}_1} & & & 0 \\
            \vdots & & & & \vdots \\
            s_{n_0} & & & & 0
        \end{pNiceArray}
        & \mapsto\;\;\;
        \partial'_1=\tilde{\partial}_1.
    \end{align}
    Note that $\partial_1$ must have this form for the chain complex condition to be valid.
    Indeed, suppose that there existed a node $s_i \in C_0$ such that $\partial_1^T s_i =e_{n_1}+\ldots$.
    Then, the overlap with the neighbourhood of $g_{n_2}$, $\partial_2(g_{n_2})=e_{n_1}$, would be exactly $e_{n_1}$ and hence $\partial_2^T \partial_1^T s_i \neq 0$. This is why no triangle is connected to the square in \cref{fig:reduction-rule-b}.
\end{itemize}
The map $f_B=(f_0,f_1,f_2)$ is defined as follows:
\begin{itemize}
    \item $f_2(g_{n_2})=0$ and $f_2(g_i)=g'_i$ for all $i \leq n_2-1$. Or in matrix form:
    \begin{align}
        f_2=
        \begin{pNiceArray}{ccc|c}[first-row, first-col]
            & g_1 & \cdots & g_{n_2-1} & g_{n_2} \\
            g'_1 & \Block{3-3}{I_{n_2-1}}  &   &   & 0 \\
            \vdots & & &   & \vdots \\
            g'_{n_2-1} &   &   &   & 0
        \end{pNiceArray},
    \end{align}
    \item $f_1(e_{n_1})=0$ and $f_1(e_i)=e'_i$ for all $i \leq n_1-1$. Or in matrix form:
    \begin{align}
        f_1=
        \begin{pNiceArray}{ccc|c}[first-row, first-col]
            & e_1 & \cdots & e_{n_1-1} & e_{n_1} \\
            e'_1 & \Block{3-3}{I_{n_1-1}}  &   &   & 0 \\
            \vdots & & &   & \vdots \\
            e'_{n_1-1} &   &   &   & 0
        \end{pNiceArray},
    \end{align}
    \item $f_0$ is the identity map on $C_0$
\end{itemize}
and the map $h_B=(h_0,h_1,h_2)$ is defined as follows:
\begin{itemize}
    \item $h_2(g'_i)=g_i$ for all $i \leq n_2-1$. In matrix form:
    \begin{align}
        h_2=
        \begin{pNiceArray}{ccc}[first-row,first-col]
            & g'_1 & \cdots & g'_{n_2-1} \\
            g_1 & \Block{3-3}{I_{n_2-1}}  &   & \\
            \vdots & & & \\
            g_{n_2-1} &   &   & \\
            \cline{1-3}
            g_{n_2} & \Block{1-3}{0} & &
        \end{pNiceArray},
    \end{align}
    \item $h_1(e'_i)=e_i$ for all $i \leq n_1-1$. In matrix form:
    \begin{align}
        h_1=
        \begin{pNiceArray}{ccc}[first-row,first-col]
            & e'_1 & \cdots & e'_{n_1-1} \\
            e_1 & \Block{3-3}{I_{n_1-1}}  &   & \\
            \vdots & & & \\
            e_{n_1-1} &   &   & \\
            \cline{1-3}
            e_{n_1} & \Block{1-3}{0} & &
        \end{pNiceArray},
    \end{align}
    \item $h_0(s'_i)=s_i$ for all $i \leq n_0$.
\end{itemize}
Let us now show that $f_B$ and $h_B$ are fault-tolerant maps.

\begin{theorem} \label{theorem:rule-b}
    $f_B$ and $h_B$ are fault-tolerant maps
\end{theorem}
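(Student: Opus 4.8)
The plan is to mirror the structure of the proof of \cref{theorem:rule-a}: establish in turn that $f_B$ and $h_B$ are weak chain maps, that they are weak quasi-isomorphisms (via \cref{lemma:condition-quasi-isomorphism}), and that they are distance- and decoding-preserving (via \cref{lemma:distance-preserving-and-decoding-preserving}). Since rule B is strictly simpler than rule A---we delete a leaf square $e_{n_1}$ and its unique parent circle $g_{n_2}$ rather than merging two squares---each verification should be lighter.

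First I would check the weak chain map conditions for $f_B$. Using the block forms given for $\partial_1, \partial_2$ and the matrices for $f_0,f_1,f_2$, one computes $f_1\circ\partial_2 = \tilde\partial_2 = \partial'_2\circ f_2$ (the last column of $\partial_2$ is killed by $f_1$) and $f_0\circ\partial_1 = \tilde\partial_1 = \partial'_1\circ f_1$, so $f_B$ is in fact an honest chain map, hence a weak one. For $h_B$, the right square $h_0\circ\partial'_1 = \partial_1\circ h_1 = \tilde\partial_1$ (padded with a zero row for $e_{n_1}$) commutes on the nose; for the left square we need only $\Im(h_1\circ\partial'_2)\subset\Im(\partial_2)$, and indeed $h_1\circ\partial'_2$ has matrix $\tilde\partial_2$ padded by a zero row in the $e_{n_1}$ slot, which is exactly the restriction of $\partial_2$ to $C'_2$ regarded inside $C_2$ via $h_2$ --- so its image lies in $\Im(\partial_2)$. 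Thus $h_B$ is a weak chain map.

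Next, the quasi-isomorphism step: $f_1\circ h_1 = I_{n_1-1}$ and $h_1\circ f_1 = \operatorname{diag}(I_{n_1-1},0)$, which equals the projection killing the $e_{n_1}$ component. Hence for any $x\in C_1$, $h_1(f_1(x)) = x + x_{n_1} e_{n_1} = x + x_{n_1}\partial_2(g_{n_2}) \in x + \Im(\partial_2)$, so $[h_1(f_1(x))]=[x]$, and $f_1(h_1(x'))=x'$ gives $[f_1(h_1(x'))]=[x']$ trivially; \cref{lemma:condition-quasi-isomorphism} then yields that both maps are weak quasi-isomorphisms. For \cref{lemma:distance-preserving-and-decoding-preserving}: $f_0=h_0=\mathrm{id}_{C_0}$ so they are mutually inverse; $h_1$ is a coordinate inclusion, hence weight-preserving and in particular weight-nonincreasing; $f_1$ deletes the $e_{n_1}$ coordinate, hence $|f_1(x)|\le|x|$. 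The one point requiring a short argument is that $f_1$ preserves the weight of a minimal-weight logical error $x$: since $e_{n_1}=\partial_2(g_{n_2})$ is itself a boundary (a trivial logical error), if $x$ were supported on $e_{n_1}$ then $x+e_{n_1}$ would be a logical error in the same coset of strictly smaller weight, contradicting minimality; so $x$ has no $e_{n_1}$ component and $f_1$ acts as the identity on $\supp(x)$. Applying \cref{lemma:distance-preserving-and-decoding-preserving} finishes both directions.

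The main obstacle, such as it is, is purely bookkeeping: making sure the matrix block decompositions of $\partial_1$ and $\partial_2$ are used consistently (in particular that the chain complex condition forces the last column of $\partial_1$ to vanish, which is noted in the construction and is what makes $\partial'_1=\tilde\partial_1$ a legitimate boundary operator), and that $h_1\circ\partial'_2$ genuinely lands in $\Im(\partial_2)$ rather than merely in a padded copy of it --- but this is immediate since $\partial_2 = h_1\circ\partial'_2$ on the nose after identifying $C'_2$ with its image under $h_2$. I do not anticipate any genuine difficulty beyond translating the rule-A argument to this slightly degenerate setting.
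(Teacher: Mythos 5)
Your proposal follows the paper's argument step for step---the same decomposition into a weak-chain-map check, a quasi-isomorphism check via \cref{lemma:condition-quasi-isomorphism}, and a distance/decoding check via \cref{lemma:distance-preserving-and-decoding-preserving}, with the same matrix computations---but it contains one step that is wrong as written. You justify the containment $\Im(h_1\circ\partial'_2)\subset\Im(\partial_2)$ by asserting that $h_1\circ\partial'_2$ is ``exactly the restriction of $\partial_2$ to $C'_2$ regarded inside $C_2$ via $h_2$,'' i.e.\ that $h_1\circ\partial'_2=\partial_2\circ h_2$ on the nose. This rests on your description of rule B as deleting ``a leaf square $e_{n_1}$ and its unique parent circle $g_{n_2}$,'' which reverses the hypothesis: rule B assumes the \emph{circle} $g_{n_2}$ has a unique neighbour $e_{n_1}$, not that the square $e_{n_1}$ has a unique neighbour. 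In general $e_{n_1}$ is connected to further gauge nodes (the row block $a$ in the paper's form of $\partial_2$, and the additional circles drawn in \cref{fig:reduction-rule-b}), so
\begin{align*}
    h_1\circ\partial'_2=\begin{pmatrix}\tilde{\partial}_2\\ 0\end{pmatrix}
    \neq
    \begin{pmatrix}\tilde{\partial}_2\\ a\end{pmatrix}=\partial_2\circ h_2
    \quad\text{whenever } a\neq 0.
\end{align*}
The containment you need is still true, but for a different reason: for $x\in C'_2$ one has $h_1(\partial'_2(x))=\partial_2\bigl(h_2(x)+(a^Tx)\,g_{n_2}\bigr)$, because the extra term contributes $(a^Tx)\,e_{n_1}$ and cancels the $a$-row. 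This is precisely the ``choose $y=a^Tx$'' step in the paper, and it is the one place in the whole proof where the weak chain map condition does genuine work that an honest commuting square would not.

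The rest of your argument---$f_B$ being an honest chain map, $h_1\circ f_1$ acting as $x\mapsto x+x_{n_1}\partial_2(g_{n_2})$ to give the quasi-isomorphism via \cref{lemma:condition-quasi-isomorphism}, and the observation that a minimum-weight non-trivial logical error cannot be supported on $e_{n_1}=\partial_2(g_{n_2})$---coincides with the paper's proof of \cref{theorem:rule-a} transported to this setting and is correct.
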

\begin{proof}
    We start by proving that $f_B$ and $h_B$ are weak chain maps. We first show that the diagram for $f_B$,
    \[\begin{tikzcd}[column sep=20pt, row sep=20pt, every cell/.append style={inner sep=4pt}]
        {C_2} & {C_1} & {C_0} \\
        {C'_2} & {C'_1} & {C'_0}
        \arrow["{\partial_2}", from=1-1, to=1-2]
        \arrow["{f_2}"', from=1-1, to=2-1]
        \arrow["{\partial_1}", from=1-2, to=1-3]
        \arrow["{f_1}"', from=1-2, to=2-2]
        \arrow["{f_0}"', from=1-3, to=2-3]
        \arrow["{\partial'_2}", from=2-1, to=2-2]
        \arrow["{\partial'_1}", from=2-2, to=2-3]
    \end{tikzcd}\]
    commutes.
    We can prove this by multiplying the matrices of the different operators involved:
    \begin{align}
        f_1 \circ \partial_2 &= \partial'_2 \circ f_2 =
        \begin{pNiceArray}{c|c}
            \tilde{\partial}_2  & 0
        \end{pNiceArray}, \\
        f_0 \circ \partial_1 &= \partial'_1 \circ f_1 = \partial_1.
    \end{align}
    Similarly, we show that right square of the diagram for $h_B$,
    \[\begin{tikzcd}[column sep=20pt, row sep=20pt, every cell/.append style={inner sep=4pt}]
        {C'_2} & {C'_1} & {C'_0} \\
        {C_2} & {C_1} & {C_0}
        \arrow["{\partial'_2}", from=1-1, to=1-2]
        \arrow["{h_2}"', from=1-1, to=2-1]
        \arrow["{\partial'_1}", from=1-2, to=1-3]
        \arrow["{h_1}"', from=1-2, to=2-2]
        \arrow["{h_0}"', from=1-3, to=2-3]
        \arrow["{\partial_2}", from=2-1, to=2-2]
        \arrow["{\partial_1}", from=2-2, to=2-3]
    \end{tikzcd}\]
    commutes, using matrix multiplication:
    \begin{align}
        h_0 \circ \partial_1' &= \partial_1 \circ h_1 = \tilde{\partial}_1.
    \end{align}
    Moreover, we have on the left square:
    \begin{align}
        h_1 \circ \partial_2' =
        \begin{pNiceArray}{c}
            \tilde{\partial}_2 \\
            0
        \end{pNiceArray}.
    \end{align}
    Applying it to any $x \in C_2'$, we obtain:
    \begin{align}
        h_1(\partial_2'(x)) =
        \begin{pNiceArray}{c}
            \tilde{\partial}_2  x\\
            0
        \end{pNiceArray}.
    \end{align}
    Every element in the image of $\partial_2$ can be written as:
    \begin{align}
        \partial_2
        \begin{pNiceArray}{c}
            x\\
            y
        \end{pNiceArray}
        =
        \begin{pNiceArray}{c}
            \tilde{\partial}_2  x\\
            a^T x + y \\
        \end{pNiceArray}.
    \end{align}
    where $x \in \mathbb{Z}_2^{n_2-1}$ and $y \in \mathbb{Z}_2$.
    Choosing $y=a^T x$, we obtain:
    \begin{align}
        \partial_2
        \begin{pNiceArray}{c}
            x\\
            y
        \end{pNiceArray}
        =
        \begin{pNiceArray}{c}
            \tilde{\partial}_2  x\\
            a^T x + a^T x \\
        \end{pNiceArray}
        =
        \begin{pNiceArray}{c}
            \tilde{\partial}_2  x\\
            0
        \end{pNiceArray}.
    \end{align}
    Therefore, $h_1(\partial_2'(x)) \in \Im(\partial_2)$ for all $x \in C_2'$, and $h$ is a weak chain map.

    We now show using \cref{lemma:condition-quasi-isomorphism} that $f$ and $h$ are both weak quasi-isomorphisms. We first calculate $f_1 \circ h_1$ and $h_1 \circ f_1$ in matrix form:
    \begin{align}
        f_1 \circ h_1 &= I_{n_1 - 1}, \\
        h_1 \circ f_1 &=
        \begin{pNiceArray}{c|c}
            \Block{4-1}{I_{n_1 - 1}}  & 0 \\
                                      & \vdots \\
                                      & 0 \\
            \cline{1-2}
            \Block{1-2}{0} &
        \end{pNiceArray}
        =
        I_{n_1} +
        \begin{pNiceArray}{cccc|c}
            \Block{4-4}{0}  & & & & 0 \\
                            & & & & \vdots \\
                            & & & & 0 \\
            \cline{1-4}
            \Block{1-4}{0}  & & & & 1
        \end{pNiceArray}.
    \end{align}
    Applying $h_1 \circ f_1$ to any $x \in C_1$, we obtain:
    \begin{align}
        h_1(f_1(x)) &= x +
        \begin{pmatrix}
            0 \\
            \vdots \\
            0 \\
            x_{n_1}
        \end{pmatrix}
        = x + x_{n_1} e_{n_1}
        = x + {\color{teal}x_{n_1}}\partial_2(g_{n_2})
    \end{align}
    where $x_{n_1} = x \cdot e_{n_1}$.
    Therefore, $h_1(f_1(x)) \in x + \Im(\partial_2)$, and $[h_1(f_1(x))]=[x]$.
    Moreover, since $f_1(h_1(x')) = x'$ for all $x' \in C_1'$, we also have $[f_1(h_1(x'))]=[x']$.
    By \cref{lemma:condition-quasi-isomorphism}, we can conclude that $f_B$ and $h_B$ are both weak quasi-isomorphisms.

    It remains to show that $f_B$ and $h_B$ are distance- and decoding-preserving. For this, we show that the three conditions of \cref{lemma:distance-preserving-and-decoding-preserving} are satisfied.
    Let us start with the first one.
    For every basis element $s_i$ of $C_0$ and $s'_j$ of $C'_0$, we have:
    \begin{align}
        h_0(f_0(s_i)) = h_0(s'_i)=s_i, \\
        f_0(h_0(s'_j)) = f_0(s_j)=s_j.
    \end{align}
    Therefore, $h_0$ and $f_0$ are inverses of each other.

    To see that $f_1(x)$ has the same weight as $x$ if $x$ is a minimum-weight logical error, we notice that $x$ cannot be supported on $e_{n_1}$, since $x+e_{n_1}=x+\partial_2(g_{n_2})$ would then be another non-trivial logical error with a weight reduced by one.
    Since all the other nodes remain unchanged through the action of $f_1$, $x$ and $f_1(x)$ must have the same weight.

    It remains to show that $f_1$ and $h_1$ are weight-nonincreasing.
    Let $x' \in C'_1$ such that $x'=\sum_i x'_i e'_i$.
    We have:
    \begin{align}
        h_1(x')=\sum_{i=1}^{n_1-1} x'_i h_1(e'_i) = \sum_{i=1}^{n_1-1} x'_i e_i.
    \end{align}
    Therefore,
    \begin{align}
        |h_1(x')|=\sum_{i=1}^{n_1-1} x'_i = |x'|,
    \end{align}
    where in this equation the $x'_i$ are interpreted as integers.
    Thus, $h_1$ is weight-preserving, and therefore weight-nonincreasing.

    Let $x \in C_1$ such that $x=\sum_i x_i e_i$.
    We have:
    \begin{align}
        f_1(x)=\sum_{i=1}^{n_1} x_i f_1(e_i) = \sum_{i=1}^{n_1-1} x_i e'_i.
    \end{align}
    Therefore,
    \begin{align}
        |f_1(x)|=\sum_{i=1}^{n_1-1} x_i \leq \sum_{i=1}^{n_1} x_i = |x|.
    \end{align}
    We finally apply \cref{lemma:distance-preserving-and-decoding-preserving} to conclude that $f_A$ and $h_A$ are distance- and decoding-preserving.
\end{proof}

\section{Spacetime codes}
\label{sec:spacetime-codes}
In this section, we review the spacetime codes formalism.
Following Refs~\cite{bacon2015sparse,gottesman2022opportunities} we define the spacetime code associated to a Clifford circuit as a subsystem code, with gauge operators for each gate, measurement, and input state stabilizer.
The stabilizers of this subsystem code then either correspond to redundancies between measurements---often referred to as detectors~\cite{gidney2021stim,higgott2025sparse,derks2024designing}---or measurements that could be made redundant by the addition of measurements later in the circuit (which we call incomplete detectors).
Our approach is slightly different to that of Delfosse and Paetznick~\cite{delfosse2023spacetime}, where spacetime codes are stabilizer codes, with only complete detectors considered as stabilizers.
We do however make use of results and definitions from their work.

\subsection{Outcome code}

Let us consider a circuit consisting Clifford gates and (multi-qubit) Pauli measurements.
The input to the circuit is defined by a stabilizer group $\mathcal S$, such that the input state is in the codespace of the corresponding stabilizer code.
We note that this, for example, allows us to consider auxiliary qubits that are prepared in Pauli eigenstates.

The core idea behind spacetime codes is the observation that, in a circuit designed to do error-correction, there are redundancies amongst measurement outcomes\footnote{Such redundancies between measurement outcomes can also appear in circuits that are not associated to any particular code; see \cref{fig:outcome-code-example} for an example.}.
For instance, in the circuit consisting of multiple rounds of stabilizer measurements of a stabilizer code, each pair $(m_i,m_{i+1})$ of successive measurement outcomes of a given stabilizer should give the same result, that is $m_i+m_{i+1}=0$, where $m_i$ and $m_{i+1}$ are binary variables encoding the measurement outcomes.

Suppose that the input stabilizer $\mathcal S$ contains the stabilizers being measured in the circuit, then the first set of stabilizer measurement outcomes will be deterministic, i.e., $m_1=0$.
While the constraint $m_i+m_{i+1}=0$ only depends on the circuit itself, the constraint $m_1=0$ depends on the circuit and the input stabilizer group.
To put these two types of constraints on equal footing, we formally consider input stabilizers as measurements at the beginning of the circuit, whose output is post-selected to be 0.
Denoting the input measurements by $s_1,\ldots,s_k$ (one for each generator of the input stabilizer group), we always have $s_1=\ldots=s_k=0$ after post-selection.
However, when enumerating the redundancies between the measurements in a circuit, we will always suppose that those initial stabilizer measurements can take any value, as a trick to retain the stabilizers involved in a given redundancy.
For instance, if a measurement $m_1$ somewhere in the circuit measures a stabilizer coming from the input space (that we replace by an initial measurement $s_1$), we will have $m_1+s_1=0 $ instead of $m_1=0$.
The fact that $s_1=0$ is only used when decoding a particular instance of the circuit. The example in \cref{fig:outcome-code-example} illustrates this idea.

It was shown in Ref.~\cite{delfosse2023spacetime} that, for any Clifford circuit with multi-qubit measurements, the set of measurement outcomes $\{m_1,\ldots,m_k\}$ forms an affine space. By considering input stabilizers as measurements themselves, as discussed above, this observation can be generalized to include stabilizers as well, that is, $\{m_1,\ldots,m_k,s_1,\ldots,s_{k}\}$ forms an affine space. From now on, unless stated otherwise, we will stop making the distinction between measurements arising from input stabilizers or from actual measurements.
By changing the sign of some measurement outcomes in post-processing, one can transform this affine space into a linear space. In other words, the set of measurement outcomes forms a (classical) linear code. Following the terminology of Ref.~\cite{delfosse2023spacetime}, we call this classical code the outcome code. The outcome code can be obtained efficiently from a given circuit, by following the procedure described in Ref.~\cite[Algorithm 1]{delfosse2023spacetime}.

If errors occur during the circuit, some checks of the outcome code might be unsatisfied. The objective of spacetime code constructions is to map unsatisfied outcome code checks back to errors happening at different locations of the circuit---in other words, to rigorously formulate the decoding problem in the context of a circuit-level error model.

\begin{figure}
    \centering
    \tikzsetnextfilename{sec3-outcome-code-example}
    \begin{quantikz}
                                                              & \labeledwire{niceblue}{(1\char044 1)} &                                     & \labeledwire{niceblue}{(1\char044 2)} & \targ{}   & \labeledwire{niceblue}{(1\char044 3)} & \gate[2]{\labeledmeas{m_2}{M_{XX}}} & \labeledwire{niceblue}{(1\char044 4)} & & \labeledwire{niceblue}{(1\char044 5)} \\
                                                              & \labeledwire{niceblue}{(2\char044 1)} & \gate[2]{\labeledmeas{m_1}{M_{XZ}}} & \labeledwire{niceblue}{(2\char044 2)} & \ctrl{-1} & \labeledwire{niceblue}{(2\char044 3)} &                                     & \labeledwire{niceblue}{(2\char044 4)} & & \labeledwire{niceblue}{(2\char044 5)} \\
                                                              & \labeledwire{niceblue}{(3\char044 1)} &                                     & \labeledwire{niceblue}{(3\char044 2)} & \ctrl{1}  & \labeledwire{niceblue}{(3\char044 3)} & \gate{\labeledmeas{m_3}{M_Z}}       & \labeledwire{niceblue}{(3\char044 4)} & & \labeledwire{niceblue}{(3\char044 5)} \\
        \lstick{$\overset{{\color{niceblue} s_1}}{\ket{0}}$} & \labeledwire{niceblue}{(4\char044 1)} &                                     & \labeledwire{niceblue}{(4\char044 2)} & \targ{}   & \labeledwire{niceblue}{(4\char044 3)} & \gate{\labeledmeas{m_4}{M_Z}}       & \labeledwire{niceblue}{(4\char044 4)} & & \labeledwire{purple}{(4\char044 5)}
    \end{quantikz}
    \caption{ \label{fig:outcome-code-example}
        Example of a circuit with redundant measurements. Red labels in measurement boxes represent the measurement outcomes, and blue labels on top of input states represent input stabilizers. The outcome code of the circuit has two checks, which we can write as $\textcolor{nicered}{m_1}+\textcolor{nicered}{m_2}+\textcolor{nicered}{m_3}=0$ and $\textcolor{niceblue}{s_1}+\textcolor{nicered}{m_3}+\textcolor{nicered}{m_4}=0$.
        We also indicate in blue the coordinates $(i,t)$ of each spacetime location of the circuit. The last column of the circuit is added to fulfill the requirement that the number of timesteps must be odd.
    }
\end{figure}

\subsection{Detectors of the spacetime code}
\label{sec:detectors-spacetime-code}

Once we know the outcome code of a circuit, the next step is to associate outcome code checks to fault locations and error types ($X$, $Y$ or $Z$) in the circuit.
For this, we first need to introduce a coordinate system representing possible fault locations in the circuit.
Let us assume that every qubit is associated to a horizontal wire encompassing the whole circuit, and divide the circuit into $T$ time steps.
We call $(i,t)$ the coordinates corresponding to row (or qubit) $i$ and column (or time step) $t$ of the circuit, where $i \in \{1,\ldots,n\}$ and $t \in \{1,\ldots,T\}$, with $t=1$ representing the column of the circuit located before all the gates, and $t=T$ representing the column located after all the gates.
The formalism requires that $T$ is odd, but this is not really a restriction as any circuit with even $T$ can be extended by one time step without changing the properties of the spacetime code.
An example of circuit with a coordinate system as above is shown in \cref{fig:outcome-code-example}

We define a \textit{spacetime operator} as an assignment of a Pauli operator $P \in \{I,X,Y,Z\}$ to each coordinate of the circuit.
We denote $P_{i,t}$ the spacetime operator comprising $P$ at coordinates $(i,t)$ and identity everywhere else.
More generally, for an $n$-qubit Pauli operator $A$, we denote $A_t$ the corresponding spacetime operator located at time $t$.

We now consider an error model where Pauli errors can occur at every location of the circuit with a certain probability, and call the corresponding spacetime operator a \textit{fault}.
The goal is to associate to each check of the outcome code a spacetime operator, called a \textit{detector}, such that a fault anticommutes with the detector if and only if it is detectable by the check.

We construct detectors in following way.
Let us define the propagation of a Pauli operator $P$ from time $t$ to time $t' \geq t$ as $\Pi_{t \rightarrow t'}(P)=UPU^\dagger$ where $U$ is the unitary defined by all the unitary gates between $t$ and $t'$ ($U=I$ if $t=t'$).
Any measurement between $t$ and $t'$ is considered to act as the identity, independent of whether $P$ commutes or anticommutes with it.
Following the terminology of Refs. \cite{bacon2015sparse, gottesman2022opportunities}, we then define the spackle of $P$ at $t$ as
\begin{align}
    \spackle_t(P)=\prod_{s=t}^T \left[ \Pi_{t\rightarrow s}(P) \right]_s.
\end{align}
By extension, the spackle of a measurement $M$ occurring between steps $t$ and $t+1$ is the spackle at $t+1$ of the Pauli operator $P$ being measured, i.e. $\spackle(M)=\spackle_{t+1}(P)$. Finally, the spackle of multiple measurements is the product of the individual spackles, i.e. $\spackle(M_1,\ldots,M_k)=\spackle(M_1) \ldots \spackle(M_k)$. Intuitively, the spackle of $P$ is its propagation to all the following time steps, leaving a trace of the propagation at every time step. Examples of spackles are shown in \cref{fig:spackles}.

\begin{figure}[t]
    \centering
    \subfloat[Spackle of $m_1$]{ \label{fig:spackle-m1}
        \tikzsetnextfilename{sec3-spackle-m1}
        \begin{quantikz}
                                                                  &                                     &                         & \targ{}   & \labeledwire{purple}{X} & \gate[2]{\labeledmeas{m_2}{M_{XX}}} & \labeledwire{purple}{X} & \labeledwire{purple}{X}   \\
                                                                  & \gate[2]{\labeledmeas{m_1}{M_{XZ}}} & \labeledwire{purple}{X} & \ctrl{-1} & \labeledwire{purple}{X} & \ghost{M_{XX}}                      & \labeledwire{purple}{X} & \labeledwire{purple}{X}   \\
                                                                  & \ghost{M_{XZ}}                      & \labeledwire{purple}{Z} & \ctrl{1}  & \labeledwire{purple}{Z} & \gate{\labeledmeas{m_3}{M_Z}}       & \labeledwire{purple}{Z} & \labeledwire{purple}{Z}   \\
            \lstick{$\overset{{\color{niceblue} s_1}} {\ket{0}}$} &                                     &                         & \targ{}   &                         & \gate{\labeledmeas{m_4}{M_Z}}       &                         &
        \end{quantikz}
     }
    \hspace{20pt}
    \subfloat[Spackle of $m_2$]{ \label{fig:spackle-m2}
        \tikzsetnextfilename{sec3-spackle-m2}
        \begin{quantikz}
                                                                  &                                     & \targ{}   & \gate[2]{\labeledmeas{m_2}{M_{XX}}} & \labeledwire{purple}{X} & \labeledwire{purple}{X}  \\
                                                                  & \gate[2]{\labeledmeas{m_1}{M_{XZ}}} & \ctrl{-1} & \ghost{M_{XX}}                      & \labeledwire{purple}{X} & \labeledwire{purple}{X}  \\
                                                                  & \ghost{M_{XZ}}                      & \ctrl{1}  & \gate{\labeledmeas{m_3}{M_Z}}       &                         &                          \\
            \lstick{$\overset{{\color{niceblue} s_1}} {\ket{0}}$} &                                     & \targ{}   & \gate{\labeledmeas{m_4}{M_Z}}       &                         &
        \end{quantikz}
    }
    \\
    \subfloat[Spackle of $m_3$]{ \label{fig:spackle-m3}
        \tikzsetnextfilename{sec3-spackle-m3}
        \begin{quantikz}
                                                                  &                                     & \targ{}   & \gate[2]{\labeledmeas{m_2}{M_{XX}}} &                         &                          \\
                                                                  & \gate[2]{\labeledmeas{m_1}{M_{XZ}}} & \ctrl{-1} & \ghost{M_{XX}}                      &                         &                          \\
                                                                  & \ghost{M_{XZ}}                      & \ctrl{1}  & \gate{\labeledmeas{m_3}{M_Z}}       & \labeledwire{purple}{Z} & \labeledwire{purple}{Z}  \\
            \lstick{$\overset{{\color{niceblue} s_1}} {\ket{0}}$} &                                     & \targ{}   & \gate{\labeledmeas{m_4}{M_Z}}       &                         &
        \end{quantikz}
    }
    \hspace{20pt}
    \subfloat[Spackle of $s_1$]{ \label{fig:spackle-s1}
        \tikzsetnextfilename{sec3-spackle-s1}
        \begin{quantikz}
                                                                  &                         &                                     &                         & \targ{}   &                         & \gate[2]{\labeledmeas{m_2}{M_{XX}}} &                         &                          \\
                                                                  &                         & \gate[2]{\labeledmeas{m_1}{M_{XZ}}} &                         & \ctrl{-1} &                         & \ghost{M_{XX}}                      &                         &                          \\
                                                                  &                         & \ghost{M_{XZ}}                      &                         & \ctrl{1}  & \labeledwire{purple}{Z} & \gate{\labeledmeas{m_3}{M_Z}}       & \labeledwire{purple}{Z} & \labeledwire{purple}{Z}  \\
            \lstick{$\overset{{\color{niceblue} s_1}} {\ket{0}}$} & \labeledwire{purple}{Z} &                                     & \labeledwire{purple}{Z} & \targ{}   & \labeledwire{purple}{Z} & \gate{\labeledmeas{m_4}{M_Z}}       & \labeledwire{purple}{Z} & \labeledwire{purple}{Z}
        \end{quantikz}
    }
    \caption{Spackles of measurements corresponding to \textbf{(a)} $m_1$, \textbf{(b)} $m_2$, \textbf{(c)} $m_3$ and \textbf{(d)} $s_1$.}
    \label{fig:spackles}
\end{figure}
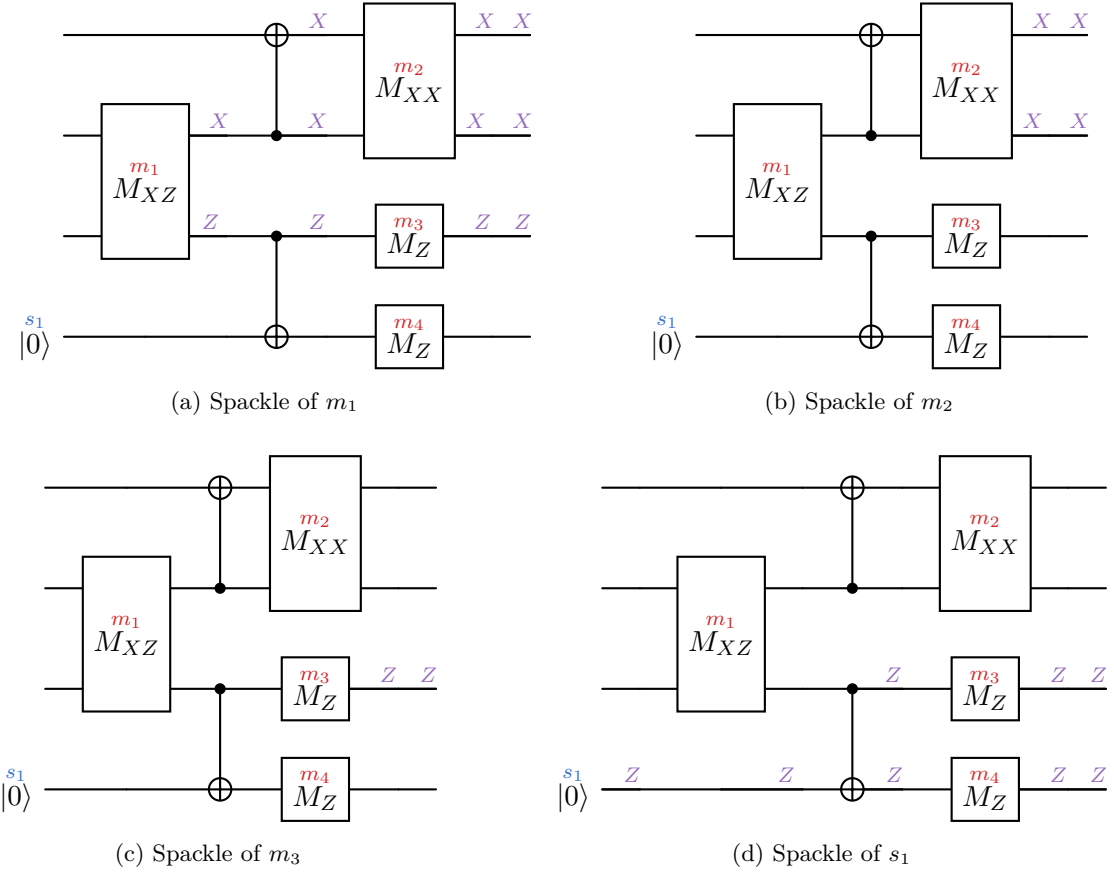

We define the back-propagation of a Pauli operator from $t'$ to $t \leq t'$ as $\overleftarrow \Pi_{t \leftarrow t'}(P) = U^\dagger P U$, where $U$ is the unitary defined by all the unitary gates between $t$ and $t'$. Then we define the backwards spackle (or \emph{backle}) of a Pauli operator $P$ to be
\begin{align}
    \backle_t(P) = \prod_{s=t}^1 [\overleftarrow \Pi_{s \leftarrow t}(P)]_s.
\end{align}
And we define $\backle(M)=\backle_{t}(P)$ where $M$ is a measurement of the Pauli operator $P$ between times $t$ and $t+1$.

We are now ready to define the notion of detector. The \textit{detector associated to a check of the outcome code} involving the measurements $M_1,\ldots,M_k$, is $\spackle(M_1,\ldots,M_k)$, i.e., the spackle of all the measurements involved in the check.
Examples of detectors are shown in \cref{fig:spacetime-stab}. It was shown in Ref. \cite{delfosse2023spacetime} that a fault anticommutes with a detector if and only if it anticommutes with an odd number of measurements of the check. Detectors are therefore in one-to-one correspondence with the checks of the outcome code, but with additional information on the possible locations and error types that can violate those checks.
We note that detectors are sometimes defined to directly be the checks of the outcome code, e.g., in the \textsc{Stim} software package~\cite{gidney2021stim}.

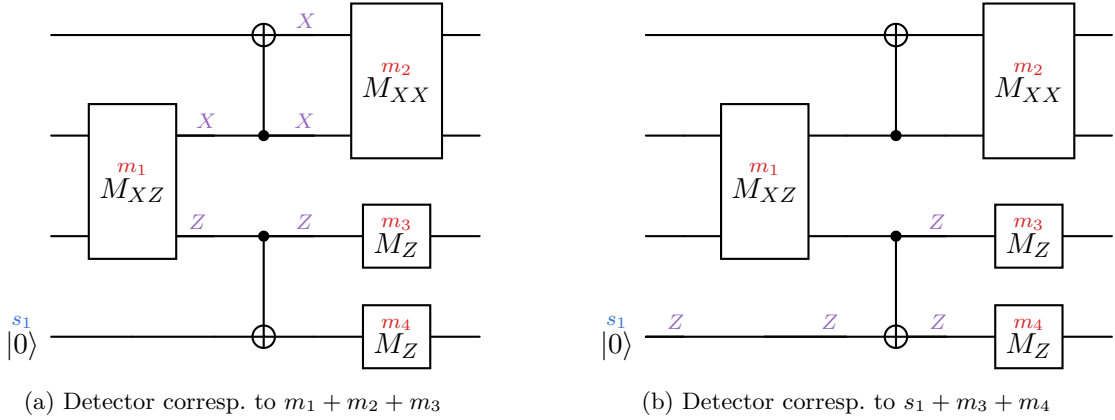
\begin{figure}[t]
    \centering
    \subfloat[Detector corresp.\ to $m_1+m_2+m_3$]{ \label{fig:spacetime-stab-a}
        \tikzsetnextfilename{sec3-spacetime-stab-a}
        \begin{quantikz}
                                                                  &                                     &                         & \targ{}   & \labeledwire{purple}{X} & \gate[2]{\labeledmeas{m_2}{M_{XX}}} &  \\
                                                                  & \gate[2]{\labeledmeas{m_1}{M_{XZ}}} & \labeledwire{purple}{X} & \ctrl{-1} & \labeledwire{purple}{X} & \ghost{M_{XX}}                      &  \\
                                                                  & \ghost{M_{XZ}}                      & \labeledwire{purple}{Z} & \ctrl{1}  & \labeledwire{purple}{Z} & \gate{\labeledmeas{m_3}{M_Z}}       &  \\
            \lstick{$\overset{{\color{niceblue} s_1}} {\ket{0}}$} &                                     &                         & \targ{}   &                         & \gate{\labeledmeas{m_4}{M_Z}}       &
        \end{quantikz}
    }
    \hspace{20pt}
    \subfloat[Detector corresp.\ to $s_1+m_3+m_4$]{ \label{fig:spacetime-stab-b}
        \tikzsetnextfilename{sec3-spacetime-stab-b}
        \begin{quantikz}
                                                                  &                         &                                     &                         & \targ{}   &                         & \gate[2]{\labeledmeas{m_2}{M_{XX}}} &  \\
                                                                  &                         & \gate[2]{\labeledmeas{m_1}{M_{XZ}}} &                         & \ctrl{-1} &                         & \ghost{M_{XX}}                      &  \\
                                                                  &                         & \ghost{M_{XZ}}                      &                         & \ctrl{1}  & \labeledwire{purple}{Z} & \gate{\labeledmeas{m_3}{M_Z}}       &  \\
            \lstick{$\overset{{\color{niceblue} s_1}} {\ket{0}}$} & \labeledwire{purple}{Z} &                                     & \labeledwire{purple}{Z} & \targ{}   & \labeledwire{purple}{Z} & \gate{\labeledmeas{m_4}{M_Z}}       &
        \end{quantikz}
    }
    \caption{Detectors corresponding to the two outcome code checks of the circuit in \cref{fig:outcome-code-example}.}
    \label{fig:spacetime-stab}
\end{figure}

\subsection{Spacetime subsystem code}
\label{sec:spacetime-subsystem-code}

To define a spacetime code with the right decoding properties, we need to include the detectors defined above in the stabilizer group.
Notice however that there are many faults that are not detectors but which commute with the detectors and have a trivial effect on the output of the circuit.
This can happen when a fault consists of a certain Pauli operator before a gate together with its propagation after the gate (e.g. $X$ before a Hadamard gate and $Z$ after the gate).
Another example is a fault consisting of a Pauli operator right after the measurement of this operator.
Intuitively, these faults correspond to some degrees of freedom we have when decoding a spacetime code. Indeed, a decoding solution containing $X_i$ ($Z_i$) before a gate is equivalent to the same solution with $UX_iU^\dag$ ($UZ_iU^\dag$) after the gate.
We do not want these faults to be logical operators of the spacetime code, so we consider them as gauge operators in a spacetime subsystem code.

We now specify the generators of the gauge group of the spacetime subsystem code, which we refer to as \textit{elementary propagation operators}.
For each gate of the circuit $U$, we have generators $P_{i,t}(UP_iU^\dag)_{t+1}$ for all $i \in \supp U$ and for $P \in \{X, Z\}$.
We call these generators \textit{gate propagation operators}; see \cref{fig:propagation-operations} for some examples.
For each input (stabilizer) measurement $S$ we have the generator $S_{1}$.
For each measurement of an $m$-qubit Pauli operator $M$ at time $t$, we have the generator $Q_{t+1}$.
We also have the generators $R_t R_{t+1}$ for $R$ in the centralizer of $M$ in the $m$-qubit Pauli group\footnote{We note that one can derive these generators by considering the elementary propagation operators of the standard ``Hadamard test" circuit for measuring a Pauli operator.}.
For example, if $M = X_i X_j$, then the corresponding generators are $X_{j,t}X_{j,t+1}$, $X_{i,{t+1}}X_{j,{t+1}}$, $X_{i,t}X_{i,t+1}$, and $Z_{i,t}Z_{j,t}Z_{i,t+1}Z_{j,t+1}$.
We use the term \textit{measurement propagation operators} to refer to generators associated with measurements.

Since the spacetime code is now seen as a subsystem code, we can associate a gauge complex to it, which we call the \textit{spacetime complex} of the circuit.
In order to define the distance of a circuit, we need to equip the space of spacetime operators (middle space $C_1$ of the spacetime complex) with a basis. In the rest of this paper, we choose single-qubit $X$ and $Z$ operators at every spacetime location of the circuit as our basis. Writing a vector in this basis is equivalent to writing it in the binary symplectic format. This choice is well-suited for noise models in which $X$ and $Z$ errors are considered to be independent, since single-qubit $Y$ operators are for instance weight-two in this basis.
We believe that our formalism could be extended to the case of correlated errors by defining a notion of distance/decoding function for an overcomplete generating set of $C_1$, but we leave this extension to future work.

Equipped with our basis for $C_1$, we can now define the \textit{distance of a circuit} as the distance of the corresponding spacetime complex. We also say that two circuits are \textit{equivalent} if their spacetime complexes are equivalent. An example spacetime complex, and its reduction using rule A, is shown in \cref{fig:example-spacetime-code-gauge-complex}

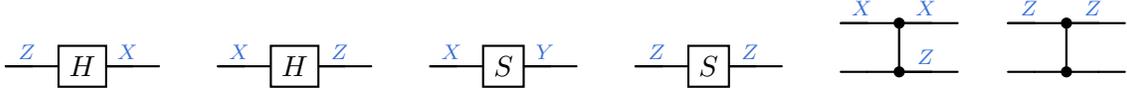
\begin{figure}[t]
    \centering
    \begin{align*}
        \tikzsetnextfilename{sec3-propagation-operators-1}
        \begin{quantikz}[column sep=10pt]
            & \labeledwire{niceblue}{Z}  & \gate{H} & \labeledwire{niceblue}{X}  &
        \end{quantikz}
        \hspace{15pt}
        \tikzsetnextfilename{sec3-propagation-operators-2}
        \begin{quantikz}[column sep=10pt]
            & \labeledwire{niceblue}{X}  & \gate{H} & \labeledwire{niceblue}{Z}  &
        \end{quantikz}
        \hspace{15pt}
        \tikzsetnextfilename{sec3-propagation-operators-3}
        \begin{quantikz}[column sep=10pt]
            & \labeledwire{niceblue}{X}  & \gate{S} & \labeledwire{niceblue}{Y}  &
        \end{quantikz}
        \hspace{15pt}
        \tikzsetnextfilename{sec3-propagation-operators-4}
        \begin{quantikz}[column sep=10pt]
            & \labeledwire{niceblue}{Z}  & \gate{S} & \labeledwire{niceblue}{Z}  &
        \end{quantikz}
        \hspace{15pt}
        \tikzsetnextfilename{sec3-propagation-operators-5}
        \begin{quantikz}[column sep=10pt]
            & \labeledwire{niceblue}{X}  & \ctrl{1} & \labeledwire{niceblue}{X}  &  \\
            &  & \ctrl{-1} & \labeledwire{niceblue}{Z}  &  \\
        \end{quantikz}
        \hspace{15pt}
        \tikzsetnextfilename{sec3-propagation-operators-6}
        \begin{quantikz}[column sep=10pt]
            & \labeledwire{niceblue}{Z}  & \ctrl{1} & \labeledwire{niceblue}{Z}  &  \\
            &  & \ctrl{-1} &  &  \\
        \end{quantikz}
    \end{align*}
    \caption{Examples of elementary propagation operators (gauge operators in a spacetime subsystem code).}
    \label{fig:propagation-operations}
\end{figure}

\subsection{Stabilizers of the spacetime subsystem code}
\label{sec:stabilizers-spacetime-code}

Here we characterize the stabilizers of the spacetime code, i.e. the gauge operators that commute with all gauge operators, in terms of spackles and backles. We show that detectors, in particular, are spacetime code stabilizers.
We begin with two lemmas: one about the properties of elementary propagation operators, and one about spackles and backles.
\begin{lemma} \label{lemma:elementary-prop-completeness}
    Every $m$-qubit gate or measurement has $2m$ associated independent and commuting elementary propagation operators. Furthermore, any Pauli operator commuting with all of these elementary propagation operators must be in their multiplicative span (up to a phase).
\end{lemma}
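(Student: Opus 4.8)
The plan is to pass to the binary symplectic picture and identify the $\mathbb{F}_2$-span of the $2m$ elementary propagation operators of a gate or measurement as a \emph{Lagrangian} (maximal isotropic) subspace of a $4m$-dimensional symplectic space; the lemma then reduces to the standard fact that a Lagrangian subspace of a nondegenerate symplectic $\mathbb{F}_2$-space equals its own symplectic complement. For an $m$-qubit gate $U$ acting between time steps $t$ and $t+1$, every elementary propagation operator $P_{i,t}(UP_iU^\dagger)_{t+1}$ is supported on the $2m$ spacetime locations $\supp U \times \{t,t+1\}$; writing a Pauli there in binary symplectic format as a pair $(v,w)\in\mathbb{F}_2^{2m}\oplus\mathbb{F}_2^{2m}$ (time-$t$ block, then time-$(t+1)$ block), two such operators $(v,w)$ and $(v',w')$ commute iff $\langle v,v'\rangle+\langle w,w'\rangle=0$, where $\langle a,b\rangle=a^T\Omega b$; so the ambient symplectic form is the orthogonal sum of two copies of $\langle\cdot,\cdot\rangle$, which is nondegenerate. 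The measurement of an $m$-qubit Pauli $P$ between steps $t$ and $t+1$ is handled identically, on $\supp P\times\{t,t+1\}$. We work modulo phases, which is harmless since commutation of Paulis and their multiplicative independence up to phase depend only on symplectic vectors, and ``multiplicative span up to a phase'' is exactly the $\mathbb{F}_2$-span of symplectic vectors; accordingly ``any Pauli operator commuting with all of these'' is to be read as any Pauli \emph{supported on those $2m$ spacetime locations}, which is the form in which the lemma is applied.

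For the gate $U$, the $2m$ operators $P_{i,t}(UP_iU^\dagger)_{t+1}$ with $i\in\supp U$ and $P\in\{X,Z\}$ are the images of the basis $\{v(X_i),v(Z_i)\}_{i\in\supp U}$ of $\mathbb{F}_2^{2m}$ under $v\mapsto(v,\sigma_U v)$, where $\sigma_U$ is the symplectic automorphism of $\mathbb{F}_2^{2m}$ induced by conjugation by $U$ (conjugation by a Clifford preserves commutation relations, hence is symplectic, and $U$ being supported on $\supp U$ keeps it within that block). Their span $L$ is therefore the graph of $\sigma_U$: it has dimension $2m$, so the generators are independent, and it is isotropic because
\begin{align*}
    \langle (v,\sigma_U v),(w,\sigma_U w)\rangle_{\oplus} &= \langle v,w\rangle + \langle \sigma_U v,\sigma_U w\rangle = \langle v,w\rangle + \langle v,w\rangle = 0
\end{align*}
over $\mathbb{F}_2$, so the generators pairwise commute. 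Thus $L$ is a $2m$-dimensional isotropic, hence Lagrangian, subspace of the $4m$-dimensional space $\mathbb{F}_2^{2m}\oplus\mathbb{F}_2^{2m}$.

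For the measurement of $P$, let $\mathcal{C}=\{v : \langle v,v(P)\rangle = 0\}$ be the centralizer of $v(P)$; since $v(P)\neq 0$ this is a hyperplane of dimension $2m-1$, and it contains $v(P)$ because $\langle\cdot,\cdot\rangle$ is alternating. Fix a basis $v(R^{(1)}),\dots,v(R^{(2m-1)})$ of $\mathcal{C}$. The $2m$ elementary propagation operators are $P_{t+1}=(0,v(P))$ together with the $R^{(k)}_tR^{(k)}_{t+1}=(v(R^{(k)}),v(R^{(k)}))$, so their span is $L=\{(v,\,v+\epsilon\,v(P)) : v\in\mathcal{C},\ \epsilon\in\mathbb{F}_2\}$. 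Projecting onto the time-$t$ block sends the $R$-generators to a basis of $\mathcal{C}$ and $P_{t+1}$ to $0$, so any vanishing linear combination has trivial $R$-part and then trivial $P_{t+1}$-part; hence the $2m$ generators are independent and $\dim L=2m$. Isotropy is the short computation, for $v,v'\in\mathcal{C}$ and $\epsilon,\epsilon'\in\mathbb{F}_2$,
\begin{align*}
    &\langle (v,v+\epsilon v(P)),(v',v'+\epsilon' v(P))\rangle_{\oplus} \\
    &\qquad = \langle v,v'\rangle + \langle v,v'\rangle + \epsilon'\langle v,v(P)\rangle + \epsilon\langle v(P),v'\rangle + \epsilon\epsilon'\langle v(P),v(P)\rangle = 0,
\end{align*}
where the first two terms cancel and the rest vanish by $v,v'\in\mathcal{C}$ and $\langle v(P),v(P)\rangle=0$. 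So again $L$ is a $2m$-dimensional isotropic, hence Lagrangian, subspace of $\mathbb{F}_2^{2m}\oplus\mathbb{F}_2^{2m}$.

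Finally, a Pauli supported on those $2m$ spacetime locations is a vector $u$ of the ambient space, and it commutes with every elementary propagation operator exactly when $u\in L^\perp$. Since $L$ is isotropic, $L\subseteq L^\perp$, and nondegeneracy gives $\dim L^\perp = 4m-\dim L = 2m = \dim L$, so $L^\perp=L$; hence $u\in L$ and the Pauli lies in the multiplicative span of the elementary propagation operators up to a phase. I expect the main points to get right to be the dimension bookkeeping in the measurement case---checking that $\mathcal{C}$ has dimension $2m-1$ and contains $v(P)$, so the listed operators genuinely form $2m$ independent generators---and the cancellation of cross terms in the isotropy computation, which uses precisely that centralizer elements commute with $P$ and that $P$ commutes with itself; the rest is routine symplectic bookkeeping.
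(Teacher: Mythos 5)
Your proof is correct and follows essentially the same route as the paper: establish $2m$ independent, pairwise-commuting generators and then conclude from maximality, the paper phrasing the last step as ``$2m$ independent commuting Paulis on $2m$ qubits define a stabilizer state whose centralizer is its own stabilizer group,'' which is precisely your statement that a Lagrangian subspace of the $4m$-dimensional symplectic space equals its own symplectic complement. Your explicit computation of independence and isotropy in the measurement case is in fact slightly more careful than the paper's (which loosely refers to ``$2m-1$ commuting generators'' of the centralizer, whereas it is really the doubling $R_tR_{t+1}$ that forces commutation, as your calculation shows).
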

\begin{proof}
    First consider an $m$-qubit unitary gate $U$, which has $2m$ associated elementary propagation operators $P_{i,t}(UP_iU^\dag)_{t+1}$ for $i \in \supp U$ and $P \in \{X,Z\}$.
    Consider two generators $P_{i,t}(UP_iU^\dag)_{t+1}$ and $Q_{j,t}(UQ_jU^\dag)_{t+1}$.
    Clifford unitaries preserve commutation relations so $[P,Q] = [UPU^\dag,UQU^\dag]$ and therefore $P_{i,t}(UP_iU^\dag)_{t+1}$ and $Q_{j,t}(UQ_jU^\dag)_{t+1}$ commute for all $P$ and for all $i$.
    Independence follows from the independence of the operators $\{ P_{i,t} : P\in\{X,Z\}, i\in\supp U \}$.
    Now consider an $m$-qubit measurement $M$, with associated elementary propagation operators $M_{t+1}$ and $R_t R_{t+1}$ for $R$ in the centralizer of $M$.
    The centralizer has $2m-1$ commuting generators so in total we have $2m$ independent and commuting elementary propagation operators.

    Let $S$ denote the set of $2m$ independent and commuting elementary propagation operators associated with a gate or measurement.
    The support of $S$ spans $2m$ qubits, so $S$ defines a stabilizer state (a stabilizer code with no encoded qubits), whose centralizer within the Pauli group is exactly $S$.
    Therefore, any operator commuting with all elements of $S$ must be equal (up to a phase) to a product of operators in $S$.
\end{proof}

\begin{lemma} \label{lemma:spackles-backles-gauge-ops}
Spackles and backles are gauge operators of the spacetime code
\end{lemma}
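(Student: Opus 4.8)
The plan is to write every spackle and backle of a measurement explicitly as a product of elementary propagation operators, which is precisely what it means to be a gauge operator of the spacetime code; the argument is a direct construction rather than an application of the indirect characterization in \cref{lemma:elementary-prop-completeness}. Throughout I would treat a qubit that idles during a time step as acted on by an identity gate, so that the cells $(i,s)$ and $(i,s+1)$ of an idling wire carry the propagation operators $X_{i,s}X_{i,s+1}$ and $Z_{i,s}Z_{i,s+1}$; these are what let a spackle ``close up'' on wires that are neither measured nor touched by a gate at a given step.

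Fix a measurement $M$ of the Pauli operator $P$ occurring between steps $t$ and $t+1$, so that $\spackle(M)=\spackle_{t+1}(P)=\prod_{s=t+1}^{T}[P^{(s)}]_s$ with $P^{(t+1)}=P$ and $P^{(s+1)}=U_sP^{(s)}U_s^{\dagger}$, where $U_s$ is the unitary of step $s\to s+1$. For $t+1\le s\le T-1$ put $G_s:=[P^{(s)}]_s\,[P^{(s+1)}]_{s+1}$. The first step is to check that each $G_s$ is itself a product of elementary propagation operators. At step $s\to s+1$ the cells at times $s$ and $s+1$ partition, over the qubits, into three kinds: those in the support of a unitary gate $g$, on which $G_s$ restricts to $A_s(gAg^{\dagger})_{s+1}$ with $A=P^{(s)}|_{\supp g}$ --- expanding $A$ into single-qubit $X$'s and $Z$'s rewrites this, up to a phase, as a product of the gate propagation operators $X_{i,s}(gX_ig^{\dagger})_{s+1}$ and $Z_{i,s}(gZ_ig^{\dagger})_{s+1}$; those in the support of a later measurement $M'$, on which (since measurements act as the identity under $\Pi$) $G_s$ restricts to $B_sB_{s+1}$ with $B=P^{(s)}|_{\supp M'}$, equal to the measurement propagation operator $R_sR_{s+1}$ for $R=B$ \emph{provided} $B$ centralizes $M'$; and idling wires, covered by the identity-gate operators above.

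The second step assembles $\spackle(M)$ from the $G_s$. Since the spacetime slabs appearing in $G_sG_{s+2}G_{s+4}\cdots$ are pairwise disjoint, that product concatenates without cancellation to $[P^{(s)}]_s[P^{(s+1)}]_{s+1}[P^{(s+2)}]_{s+2}\cdots$; consequently $\spackle(M)=G_{t+1}G_{t+3}\cdots G_{T-1}$ when $T-t$ is even, and $\spackle(M)=P_{t+1}\cdot G_{t+2}G_{t+4}\cdots G_{T-1}$ when $T-t$ is odd, where $P_{t+1}$ is the measurement propagation operator of $M$ itself (available precisely because $M$ sits between steps $t$ and $t+1$). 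In either case $\spackle(M)$ is a product of elementary propagation operators. The backle is the mirror image: $\backle(M)=\backle_t(P)=\prod_{s=t}^{1}[P^{[s]}]_s$ with $P^{[t]}=P$ and $P^{[s-1]}=U_{s-1}^{\dagger}P^{[s]}U_{s-1}$, the backward edge operators $[P^{[s]}]_s[P^{[s-1]}]_{s-1}$ are built from the same three ingredients, and one starts the concatenation from $[P]_t=(P_tP_{t+1})\cdot P_{t+1}$, which is a product of the measurement propagation operators of $M$ ($R_tR_{t+1}$ with $R=P$, together with $P_{t+1}$); the parity mismatch is absorbed as before.

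I expect the main obstacle to be the measurement case of the first step: one must show that the forward propagation $P^{(s)}$ of the measured operator restricts, on the support of every subsequent measurement $M'$ it overlaps, to an element of the centralizer of $M'$ --- equivalently, that $P$ propagated forward commutes with every later measurement --- and symmetrically for the back-propagation in the backle. Without this the concatenation identities of the second step still hold formally, but an individual $G_s$ at such a measurement need not be a gauge operator, so this is where the structure of the circuit has to be invoked. The remaining points --- bookkeeping the parity of $T-t$ through the reserved generator $P_{t+1}$ (resp.\ $[P]_t$), discarding the $\pm i$ phases that appear when a single-qubit $Y$ occurs in a decomposition (harmless in the binary symplectic setting), and the identity-gate convention for idling wires --- are routine.
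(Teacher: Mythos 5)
Your construction is essentially the paper's proof of this lemma: both arguments telescope the spackle into pieces supported on two consecutive time slices, identify each piece as a product of elementary propagation operators, and seed the product with the measurement's own propagation operator $M_{t+1}$ (for the spackle) and with $M_t=(M_tM_{t+1})\cdot M_{t+1}$ (for the backle). The only cosmetic differences are that the paper pairs each slice $[\Pi_{t\to s}(M)]_s$ with $M_t$ rather than chaining adjacent slices, and that it also treats spackles of input stabilizers (seeded by the gauge generator $S_1$), which your argument covers verbatim.

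The ``main obstacle'' you flag, however, is not a loose end you happened to leave open --- it is a genuine gap, and the paper's proof has exactly the same one. The paper simply asserts that $P_t[\Pi_{t\to t'}(P)]_{t'}$ is a product of gate propagation operators; but across a time step occupied by a later measurement $M'$, the required factor is $B_sB_{s+1}$ with $B=P^{(s)}|_{\supp M'}$, and by \cref{lemma:elementary-prop-completeness} the measurement propagation operators of $M'$ restricted to time $s$ generate only the centralizer of $M'$, so this factor is available only when $B$ centralizes $M'$. When it does not, the spackle can fail to be a gauge operator altogether: for a single qubit with a $Z$ measurement between $t=1,2$ and an $X$ measurement between $t=2,3$, one has $\spackle(M_Z)=Z_2Z_3$, while the gauge group $\langle Z_2,\,Z_1Z_2,\,X_3,\,X_2X_3\rangle$ contains nothing with $Z$-support at time $3$ (consistently, $Z_2Z_3$ flips the $X$ outcome, so it is a detectable fault, not a trivial one). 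The lemma therefore implicitly requires that the propagated operator centralize every intervening measurement --- a condition satisfied by the spackles and backles actually used downstream (products corresponding to outcome-code checks and to stabilizers, as in \cref{corr:detectors_eq_stabilizers}), but not by the spackle of an arbitrary single measurement, e.g.\ one check of a Floquet code. So your diagnosis of where the circuit's structure must be invoked is exactly right; neither your write-up nor the paper's supplies that step, and for the lemma as literally stated it cannot be supplied.
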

\begin{proof}
    For any $n$-qubit Pauli operator $P$ and time steps $t,t' \in \{1,\ldots,T\}$, $t \leq t'$, the spacetime operator $P_t [\Pi_{t\rightarrow {t'}}(P)]_{t'}$ is a product of gate propagation operators, and is therefore a gauge operator.
    Furthermore, for any measurement of a Pauli $M$ between times $t-1$ and $t$, $[\Pi_{t\rightarrow {t'}}(M)]_{t'}=M_t M_t [\Pi_{t\rightarrow {t'}}(M)]_{t'}$ is a gauge operator as a product of gate propagation operators and a measurement propagation operator.
    Consequently, $\spackle_t(M)=\prod_{s=t}^T \left[ \Pi_{t\rightarrow s}(M) \right]_s$ is a gauge operator itself, and so is any spacetime operator of the form $\spackle(M_1,\ldots,M_\ell)$.
    Similarly, for any input stabilizer $S$, $[\Pi_{1\rightarrow {t'}}(S)]_{t'}=S_1 S_1 [\Pi_{1\rightarrow {t'}}(S)]_{t'}$ is a gauge operator as a product of gate propagation operators and input stabilizer (gauge) operators.
    Finally, to show that backles are gauge operators, we repeat the argument above, replacing $\Pi_{s \rightarrow t}(M)$ by $\overleftarrow \Pi_{s \leftarrow t}(M)$.
\end{proof}

\begin{proposition} \label{prop:stabilizers_backle_spackle}
    The stabilizers of the spacetime subsystem code are either the spackle of a set of measurements and input stabilizers or the backle of a set of measurements.
\end{proposition}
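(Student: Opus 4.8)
The plan is to translate the condition ``$g$ commutes with all gauge operators'' into local constraints between consecutive time slices, using \cref{lemma:elementary-prop-completeness}, and then to reconstruct $g$ explicitly as a spackle or a backle by a single sweep through the circuit. At the cost of a harmless refinement — inserting time steps so that every gap between consecutive time steps carries a single (possibly multi-qubit) gate or a single measurement, with identity gates and their propagation operators $X_{i,t}X_{i,t+1},Z_{i,t}Z_{i,t+1}$ on all other qubits, as in the padding argument that makes $T$ odd — I may assume the circuit is in this form; this does not change which spacetime operators are stabilizers. Write $g_t$ for the restriction of a stabilizer $g$ to time slice $t$. At a gap whose operation is supported on a qubit set $A$ (between times $t$ and $t+1$), the restriction of $g$ to $A$ at times $t$ and $t+1$ commutes with all $2|A|$ elementary propagation operators of that operation, which by \cref{lemma:elementary-prop-completeness} stabilize a stabilizer state on those $2|A|$ qubits; hence it lies in their span. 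Unpacking this: across a gate $U$ one gets $g_{t+1}|_A = U\,(g_t|_A)\,U^\dagger$; across a measurement of a Pauli $Q$ one gets that $g_t|_A$ and $g_{t+1}|_A$ commute with $Q$ and that $g_{t+1}|_A\in\{g_t|_A,\;g_t|_A Q\}$; on idle qubits $g_{t+1}=g_t$; and commuting with the input operators $S_1$ forces $g_1$ into the commutant of the input stabilizer group.

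Given this local description, I would split on whether $g_T = I$. If $g_T\neq I$, reconstruct $g$ forwards: set $h:=\spackle_1(g_1)$ and sweep $t$ upward, multiplying $h$ by $\spackle(M)=\spackle_{t+1}(Q)$ whenever the sweep crosses a measurement of $Q$ at which $g$ ``jumps up'', i.e.\ $g_{t+1}|_A = g_t|_A Q$. An induction on $t$ shows that, after processing the gap between $t$ and $t+1$, the operator $h$ agrees with $g$ on every slice $\le t+1$: gates and idle qubits are matched automatically because spackles propagate through them exactly as $g$ does, and the multiplication by $\spackle(M)$ exactly reproduces the jump while being trivial on all earlier slices. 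Hence $h=g$ after the full sweep; writing $g_1$ as a product $\prod_{S\in\mathcal{B}}S$ of input stabilizers and using $\spackle_1\bigl(\prod_{S\in\mathcal{B}}S\bigr)=\prod_{S\in\mathcal{B}}\spackle(S)$, this gives $g=\spackle\bigl(\mathcal{B}\cup\{\text{measurements at which }g\text{ jumps up}\}\bigr)$. If instead $g_T = I$, run the mirror-image sweep downward from $h:=I$, multiplying by $\backle(M)=\backle_t(Q)$ at each measurement where $g$ jumps down as one moves backward; the same bookkeeping yields $g=\backle\bigl(\{\text{such measurements}\}\bigr)$, with no input stabilizers needed — which is why the two alternatives in the statement are asymmetric.

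The routine parts are the explicit matrix identities hidden in \cref{lemma:elementary-prop-completeness} and the check that spackles and backles propagate through gates in lockstep with $g$. The step I expect to be the main obstacle is the bookkeeping in the sweep: one must invoke the single-operation-per-gap refinement and the idle-qubit propagation operators to be sure that at a measurement gap $g$ changes only on the support of the measured operator, and one must verify that the spackle (resp.\ backle) multiplied in is trivial on all earlier (resp.\ later) slices, so that the already-matched portion of $h$ is left undisturbed. A second point requiring care is the passage from ``$g_1$ is in the commutant of the input stabilizer group'' to ``$g_1$ is a product of input stabilizers'', needed in the forward case: this is immediate when the input is a pure stabilizer state (the input stabilizer group is then maximal isotropic and equals its own commutant), which I would take as the working setting. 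Finally, the reverse inclusion — that the spackles and backles produced are genuine stabilizers — follows from \cref{lemma:spackles-backles-gauge-ops} together with the same local computation, and recovers in particular the known fact \cite{delfosse2023spacetime} that detectors are stabilizers of the spacetime code.
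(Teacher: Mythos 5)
Your overall strategy is the same as the paper's: both proofs use \cref{lemma:elementary-prop-completeness} to convert the global condition ``commutes with all gauge operators'' into a constraint on each pair of consecutive time slices, and then read off the spackle/backle structure by following the support of the operator through the circuit. Your inductive sweep is a cleaner and more explicit version of the paper's ``preceded by / followed by'' bookkeeping, the local constraints you extract (conjugation across gates, multiplication by the measured Pauli across measurements, identity on idle wires) are exactly what the lemma gives, and your verification that the spackle or backle multiplied in at each jump is trivial on the already-matched slices is correct.

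There is, however, one genuine gap, which you flag but do not close. A stabilizer of a subsystem code is by definition a gauge operator that centralizes the gauge group; your argument uses only the second property. The local constraints you derive therefore characterize the full centralizer of the gauge group, which also contains the bare logical operators of the spacetime code: for an input code with $k>0$ encoded qubits, $\spackle_1(L)$ for a logical operator $L$ of the input code satisfies every one of your slice-to-slice conditions, has $g_T\neq I$, and has $g_1=L\in\mathcal N(\mathcal S)\setminus\mathcal S$, so your forward sweep terminates with an expression $\spackle_1(g_1)\prod_j\spackle(M_j)$ in which $g_1$ cannot be rewritten as a product of input stabilizers. Restricting to a maximal isotropic input group ($k=0$) makes the problem vanish, but the proposition is invoked in the paper precisely for inputs carrying logical qubits (e.g.\ in \cref{prop:cs-spacetime-stabilizers} and in the memory experiments of the later sections). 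The paper closes exactly this case by using the hypothesis you never touch: since the stabilizer must itself be a product of elementary propagation operators and $T$ is odd, an operator whose support reaches slice $1$ and slice $T$ must draw its slice-$1$ content from the input-stabilizer generators $S_1$. You need to import that argument, or some other use of gauge-group membership, to cover the general case; as written, what you have proved is a characterization of the centralizer of the gauge group rather than of its center.
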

\begin{proof}
    Since spackles and backles are gauge operators by \cref{lemma:spackles-backles-gauge-ops}, those that commute with all gauge operators are stabilizers of the spacetime code.

    In the other direction, let $S$ be a stabilizer of the spacetime code, and consider the support of $S$ at times $t$ and $t+1$, which we write as $P_t$ and $Q_{t+1}$, respectively.
    Since $S$ is a stabilizer, $P_t Q_{t+1}$ must commute with all the elementary propagation operators associated with the circuit elements acting between times $t$ and $t+1$.
    Consider some $m$-qubit circuit element $U$ whose elementary propagation operators share support with $P_t$ and $Q_{t+1}$.
    The operator $P_t Q_{t+1}$, restricted to the support of $U$, must commute with all of its elementary propagation operators, and therefore, by \cref{lemma:elementary-prop-completeness}, must be in their multiplicative span\footnote{We do not need to worry about the phases because the stabilizer group of a subsystem code is defined as the centre of the gauge group modulo phases.}.
    This is true for all circuit elements whose elementary propagation operators share support with $P_t Q_{t+1}$, and therefore $P_t Q_{t+1}$ must be equal to a product of elementary propagation operators.

    Let $t_s > 1$ be the time step such that $G$ has no support on qubit $i$ at earlier time steps.
    And let $P_{i,t_s}$ denote the support of $G$ on qubit $i$ at time $t_s$.
    Suppose that the circuit element acting on qubit $i$ between time $t_{s}-1$ and $t_s$ is a unitary gate.
    By \cref{lemma:elementary-prop-completeness}, there are $2m$ independent gate propagation operators associated with an $m$-qubit gate, and so there will always exist a gate propagation operator that anticommutes with $P_{i,t_s}$.
    Now suppose that the circuit element acting on qubit $i$ between times $t_s-1$ and $t_s$ is an $m$-qubit measurement $M$.
    By \cref{lemma:elementary-prop-completeness}, there are $2m$ independent measurement propagation operators associated with $M$, and for $S$ to commute with all of these operators it must be contained in their multiplicative span.
    Furthermore, the only measurement propagation operator wholly supported on qubits at time $t_s$ is $M_{t_s}$ itself, and therefore $P_{i,t_s} = M_{i,t_s}$.
    Now consider the case for $t_s=1$.
    Observe that $P_{i,1}$ will commute with all the gauge operators associated with the input stabilizers if the support of $S$ at the first time step is equal to an operator that commutes with the input stabilizers.

    To summarize, since $S$ is a stabilizer it must be either preceded by the beginning of the circuit (and the $t=1$ support of $S$ must commute with the input stabilizers) or by a set of measurements such that the support of $S$ immediately following each measurement is equal to the operator being measured.

    By an analogous argument, $S$ must either be followed by the end of the circuit or by a set of measurements such that the support of $S$ immediately preceding each measurement is equal to the operator being measured.
    Recall that $S$ must also be equal to a product of elementary propagation operators for every two consecutive timesteps.
    When $S$ is not preceded by the beginning of the circuit or followed by the end of the circuit, these conditions imply that $S$ is a spackle or backle of a set of measurements.
    Now suppose that $S$ is instead preceded by the beginning of the circuit and followed by the end of the circuit.
    Because $T$ is odd, the support of $S$ at the beginning of the circuit must be equal to an input stabilizer, as otherwise it would not be a product of elementary propagation operators.
    Therefore, $S$ is equal to the spackle of an input stabilizer.
\end{proof}

\begin{corollary} \label{corr:detectors_eq_stabilizers}
    Detectors are stabilizers of the spacetime subsystem code.
\end{corollary}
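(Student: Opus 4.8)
The plan is to combine \cref{lemma:spackles-backles-gauge-ops} with the description of detectors given in \cref{sec:detectors-spacetime-code}. A detector is by definition the spackle $\spackle(M_1,\ldots,M_k)$ of the measurements appearing in an outcome code check, so \cref{lemma:spackles-backles-gauge-ops} immediately shows it is a gauge operator. It then only remains to show that every detector lies in the centre of the gauge group, and since the elementary propagation operators generate the gauge group, it suffices to prove that each detector commutes with every elementary propagation operator.

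I would do this by invoking the result of Ref.~\cite{delfosse2023spacetime} recalled above: a fault anticommutes with a detector if and only if it anticommutes with an odd number of the measurements of the corresponding check. Reading an elementary propagation operator as a fault, the task reduces to checking that each such operator flips the outcome of an even number---in fact of none---of the measurements of any check. This is a short case analysis over the generator types of \cref{sec:spacetime-subsystem-code}: a gate propagation operator $P_{i,t}(UP_iU^\dagger)_{t+1}$ pairs a Pauli error with its own forward propagation through $U$, so the two pieces cancel at every later measurement; the operator $M_{t+1}$ applies the just-measured Pauli to its post-measurement eigenstate, which only rescales the state by a phase and hence changes no subsequent outcome; an operator $R_tR_{t+1}$ with $R$ in the centralizer of $M$ is again a Pauli together with its propagation past the (propagation-wise trivial) measurement, and $R$ commutes with the measured operator; and an input stabilizer operator $S_1$ acts as a phase on the input state and commutes with every input-stabilizer measurement. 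In all cases the operator flips no measurement outcome, hence commutes with all detectors, and the corollary follows.

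The only delicate point I anticipate is making ``flips no measurement outcome'' precise as ``commutes, as a spacetime operator, with the detector,'' and handling the edge cases where the fault location is adjacent in time to a measurement of the check or to the ends of the circuit; these are all controlled by the conventions of \cref{sec:detectors-spacetime-code,sec:spacetime-subsystem-code}, in particular the requirement that $T$ be odd and the treatment of input stabilizers as time-$1$ measurements. One could instead verify commutation of $\spackle(M_1,\ldots,M_k)$ with each generator type directly---commutation with gate propagation operators and with the $R_tR_{t+1}$ operators is automatic because propagation preserves the symplectic form, and the remaining conditions are precisely the validity of the outcome code check---but this essentially reproves the quoted lemma, so I would present the shorter argument.
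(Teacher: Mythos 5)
Your proof is correct, and it reaches the conclusion by a genuinely different (though closely related) route. The paper fixes the detector $\spackle(M_1,\ldots,M_k)$ and checks commutation against each generator type directly: commutation with gate propagation operators follows because the spackle is itself a product of such operators and those of a single gate mutually commute (\cref{lemma:elementary-prop-completeness}), while for the propagation operators of a measurement $N$ the paper splits the check into the measurements before and after $N$ and argues that anticommutation with the spackle of the earlier part would contradict the validity of the check, deferring the formal justification to \cite[Algorithm 1]{delfosse2023spacetime}. You instead fix the gauge generator, read it as a fault, and invoke the detectability criterion: every elementary propagation operator is a trivial fault (it leaves the circuit's action, hence the parity of every outcome-code check, unchanged), so it anticommutes with an even number of the check's measurements and therefore commutes with the detector. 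This buys a uniform treatment of all four generator types where the paper needs a separate argument for the measurement case; the price is that you lean on the equivalence between the symplectic count and physical detectability, which is exactly the content of the same cited result, so both proofs bottom out in the same external fact. Two small cautions. First, the parenthetical ``in fact of none'' is an over-claim: the operator $M_{t+1}$ can formally anticommute with individual later measurements of a valid check (e.g.\ for the check $\{X_1, X_2, Z_1Z_2, Y_1Y_2\}$ measured in that order, the generator $X_{1,t+1}$ anticommutes with both $Z_1Z_2$ and $Y_1Y_2$); only the even count is guaranteed, and that is all you need. Second, for a measurement whose outcome is random, ``flips no outcome'' is not well defined --- the clean statement is that a trivial fault cannot change the deterministic parity of a check, which is precisely the ``even number'' form of the criterion.
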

\begin{proof}
    Let $M_1,\ldots,M_k$ be a set of measurements that form a check of the outcome code.
    We show that $\spackle(M_1,\ldots,M_k)$ commutes with all the gauge operators, which implies (by \cref{prop:stabilizers_backle_spackle}) that it is a stabilizer of the spacetime subsystem code.
    First note that all the elementary propagation operators for a single gate commute with each other, by \cref{lemma:elementary-prop-completeness}.
    For every $M_i$, $\spackle_t(M_i)$ is constructed from elementary gate propagation operators, and therefore commutes with all elementary gate propagation operators occurring between times $t$ and $T$.
    And, in fact, this operator will commute with all elementary gate propagation operators, as any such operator occurring before time $t$ does not share support with $\spackle_t(M_i)$.
    The remaining elementary propagation operators that we have not considered thus far are those that are associated with measurements.
    These operators may anticommute with spackles of measurements in general, but they necessarily commute with those operators when the measurements constitute a check of the outcome code.
    Without loss of generality, suppose that the circuit contains a measurement $N$ that occurs after measurements $M_1,\ldots,M_l$ but before measurements $M_{l+1},\ldots,M_k$.
    The elementary propagation operator associated with $N$ commutes with $\spackle(M_{l+1},\ldots,M_k)$ by definition.
    Now suppose that this elementary propagation operator anticommutes with $\spackle(M_1,\ldots,M_l)$.
    Then $M_1,\ldots,M_k$ cannot possibly define a check of the outcome code as (informally) the measurement $N$ will destroy the correlation between $M_1,\ldots,M_l$ and $M_{l+1},\ldots,M_k$\footnote{We do not present a formal argument here as it would necessitate the introduction of substantial additional notation and the result can be straightforwardly derived from \cite[Algorithm 1]{delfosse2023spacetime}.}.
\end{proof}

While we have shown that detectors are stabilizers of the spacetime code, they are not the only ones.
As a simplified example, consider a measurement-only circuit where the operator $M$ is measured at time $t$.
Now suppose that all subsequent measurements commute with $M$ and that $M$ cannot be decomposed into a product of any subset of the subsequent measurements.
Observe that $\spackle(M)$ is a stabilizer of the spacetime code, but it does not have an associated check of the outcome code.
We call such operators \emph{incomplete detectors}, as they could be made into detectors by appending further measurements to the circuit.

\subsection{Logical operators of the spacetime subsystem code}
\label{sec:logical-ops-spacetime-code}

We will characterize the logical operators of a spacetime subsystem code step by step, starting from a trivial example.
Consider the trivial circuit on $n$ wires with $T$ time steps, i.e., the circuit consisting exclusively of identity gates.
Without loss of generality, suppose that the input state to this circuit is in the codespace of some $[\![n,k,d]\!]$ stabilizer code defined by the stabilizer group $\mathcal S$.
Let $\mathcal G$, $\mathcal D$, and $\mathcal L$ denote the gauge, stabilizer, and logical groups of the spacetime subsystem code associated with the circuit.

We have $2n(T-1)$ gauge generators coming from the identity gates and $(n-k)$ gauge generators coming from the input stabilizers, all of which are independent.
For each input stabilizer generator $S$, we have a stabilizer generator of the spacetime code $\spackle_1(S)$, which---in the language of the previous section---is an incomplete detector.
Now consider the spacetime operator $\spackle_1(L)$, where $L$ is a logical operator of the input stabilizer code.
This operator commutes with all the gauge operators (the argument is identical as for $\spackle_1(S)$) but is not itself contained in the gauge group and is therefore a bare logical operator of the spacetime code.
To see this, note that $\spackle_1(L) = L_1 L_2 \ldots L_T$, where $T$ is odd.
The gauge group contains $S_1$ for all input stabilizers $S$, as well as $X_t X_{t+1}$ and $Z_t Z_{t+1}$ for all time steps $t$.
Any odd-weight operator constructed from these components must act as a stabilizer at time step $t=1$ and therefore cannot be equal to $\spackle_1(L)$.
There are $2k$ independent logical operators of the input stabilizer code, and therefore we have $2k$ generators of $\mathcal L$.
Let $\mathsf{Mat}(\mathcal G)$ denote the symplectic representation of the generators of $\mathcal G$, and similarly for $\mathsf{Mat}(\mathcal D)$ and $\mathsf{Mat}(\mathcal L)$.
We have
\begin{equation} \label{eq:GSL_counting}
\begin{aligned}
    \RANK \mathsf{Mat}(\mathcal G) + \RANK (\ker \mathsf{Mat}(\mathcal G)) &= \RANK \mathsf{Mat}(\mathcal G) + \RANK \mathsf{Mat}(\mathcal D) + \RANK \mathsf{Mat}(\mathcal L), \\
    &= [2n(T-1) + (n-k)] + (n-k) + 2k = 2nT,
\end{aligned}
\end{equation}
as expected from the rank-nullity theorem (applied to $\mathsf{Mat}(\mathcal G)$).

Now suppose that we replace the identity gates in our circuit with unitary gates.
Consider, for example, replacing an identity gate with the Hadamard gate.
This transforms the gate propagation operators as follows:
\begin{equation}
    X_{i,t} X_{i,t+1} \mapsto X_{i,t} Z_{i,t+1}, \quad
    Z_{i,t} Z_{i,t+1} \mapsto Z_{i,t} X_{i,t+1},
\end{equation}
i.e., we apply the Hadamard to the qubit with index $(i,t+1)$.
The transformation of the stabilizers and logical operators is analogous.
Therefore, adding this gate to the circuit is equivalent to applying a Clifford unitary to a qubit of the spacetime code, which will not change the number of stabilizers or encoded qubits.
Furthermore, the stabilizer generators will still be of the form $\spackle_1(S)$ and the logical operators will still be of the form $\spackle_1(L)$.
Iterating this argument allows us to conclude that replacing identity gates with unitary gates does not change the number of stabilizers and encoded qubits of the spacetime code, or the form of the stabilizers and logical operators.

We now consider replacing identity gates with measurements.
Recall that the input to the circuit is defined by the stabilizer group $\mathcal S$; we call the corresponding stabilizer code the input code.
Any $n$-qubit Pauli operator can be written in the form $L S D$, where $L$ is a logical operator of the input code, $S$ is a stabilizer of the input code, and $D$ is a destabilizer (sometimes called pure error) of the input code (see, e.g.,~\cite{iyer2015hardnessDecoding}).
We now consider the effect of measuring each of these types of operators in turn.
If we measure a stabilizer $S$ between times $t$ and $t+1$ we gain an additional stabilizer generator $\spackle_1(S) \spackle_t(S)$.
But we also gain a relation between gauge operators as $\spackle_1(S) \spackle_t(S)$ is equivalent to $\backle_{t-1}(S)$.
The operators $\spackle_1(L)$ are still logical operators, as $S$ and $L$ commute for all $L$.
Now suppose that we measure a logical operator of the input code between times $t$ and $t+1$.
We denote the measured logical operator as $L$ and its conjugate partner as $M$.
In this case, $\spackle_1(L)$ becomes a stabilizer, we gain an additional stabilizer $\spackle_1(L) \spackle_t(L)$, and the operator $\spackle_1(M)$ is no longer a logical operator of the spacetime code as it anticommutes with $L_t$.
Finally, suppose that we measure a destabilizer $D$ between times $t$ and $t+1$, and that $D$ anticommutes exclusively with the stabilizer generator $S$.
The operator $\spackle_1(S)$ is no longer a stabilizer, but is replaced by the operator $\spackle_t(D)$.

Now we are ready to consider a generic circuit containing unitary gates and measurements.
We start from a circuit consisting of unitary gates only, and consider replacing some unitary gates with measurements.
For example, suppose we replace a gate between times $t$ and $t+1$ with a measurement of a Pauli operator $P$.
To understand the effect of this operator we need to consider the propagations of the input stabilizers and logical operators, i.e., $\Pi_{1\rightarrow t}(S)$ and $\Pi_{1\rightarrow t}(L)$ for all input stabilizer generators $S$ and input logical operators $L$.
If the measurement is equal to $\Pi_{1\rightarrow t}(S)$\footnote{The measurement can also be in the span of $\Pi_{1\rightarrow t}(S)$ and the effect would be analogous.}, then we gain a stabilizer $\spackle_1(S)\spackle_t(P)$ and one relation between gauge generators.
Similarly, if the measurement is equal to $\Pi_{1\rightarrow t}(L)$ for some logical operator then we lose two logical operators ($\spackle_1(L)$ and its conjugate partner) and gain two stabilizers ($\spackle_1(L)$ and $\spackle_1(L) \spackle_t(P)$).
And finally, if the operator anticommutes with exclusively $\Pi_{1\rightarrow t}(S)$, then we lose $\spackle_1(S)$ from the stabilizer group but gain $\spackle_t(P)$.

The above argument gives us a method for generating the stabilizers and logical operators of a spacetime subsystem code.
Concretely, we start from a spacetime code and replace all the measurements with identity gates.
The stabilizers and logical operators of this spacetime code are simply the spackles of the input stabilizers and logical operators.
We then add the original measurements back into the circuit one by one (in the order in which they appear in the circuit), and update the gauge, stabilizer, and logical operator groups accordingly.
Examples of logical operators in the general case are shown in \cref{fig:example-spacetime-code-logicals-1,fig:example-spacetime-code-logicals-2}.

\begin{figure}
    \centering
    \subfloat[Stabilizers]{ \label{fig:example-spacetime-code-stabilizers}
        \tikzsetnextfilename{sec3-example-spacetime-code-stabilizers}
        \begin{quantikz}
            & \gate[2]{S_{XX}} & \labeledwire{purple}{X}  & \gate{H} & \labeledwire{purple}{Z}  & \gate{H} & \labeledwire{purple}{X}  & \gate[2]{M_{XX}} & \labeledwire{orange}{X}  & \gate{H} & \labeledwire{orange}{Z}  & \gate{H} & \labeledwire{orange}{X}  & \gate[2]{M_{XX}} & \labeledwire{nicegreen}{X}  &  \\
            &                  & \labeledwire{purple}{X}  & \gate{H} & \labeledwire{purple}{Z}  & \gate{H} & \labeledwire{purple}{X}  & & \labeledwire{orange}{X}  & \gate{H} & \labeledwire{orange}{Z}  & \gate{H} & \labeledwire{orange}{X}  & & \labeledwire{nicegreen}{X}  &  \\
        \end{quantikz}
    }
    \\
    \subfloat[First pair of logical operators]{ \label{fig:example-spacetime-code-logicals-1}
        \tikzsetnextfilename{sec3-example-spacetime-code-logicals-1}
        \begin{quantikz}
            & \gate[2]{S_{XX}} & \labeledwire{orange}{Z}  & \labeledwire{purple}{X}  & \gate{H} & \labeledwire{purple}{Z}  & \gate{H} & \labeledwire{purple}{X}  & \gate[2]{M_{XX}} & \labeledwire{purple}{X}  & \gate{H} & \labeledwire{purple}{Z}  & \gate{H} & \labeledwire{purple}{X}  & \gate[2]{M_{XX}} & \labeledwire{purple}{X}  &  \\
            &                  & \labeledwire{orange}{Z}  &                          & \gate{H} &                          & \gate{H} &                          &                  &                          & \gate{H} &                          & \gate{H} &                          &                  &                          &  \\
        \end{quantikz}
    }
    \\
    \subfloat[Second pair of logical operators]{ \label{fig:example-spacetime-code-logicals-2}
    \tikzsetnextfilename{sec3-example-spacetime-code-logicals-2}
        \begin{quantikz}
            & \gate[2]{S_{XX}} & \labeledwire{orange}{X} & \labeledwire{purple}{Z} & \gate{H}                & \labeledwire{purple}{X}            & \gate{H} & \labeledwire{purple}{Z}  & \gate[2]{M_{XX}} & \labeledwire{purple}{Z}  & \gate{H} & \labeledwire{purple}{X}  & \gate{H} & \labeledwire{purple}{Z}  & \gate[2]{M_{XX}} & \labeledwire{purple}{Z} &  \\
            &                                            &                         & \labeledwire{purple}{Z} & \gate{H} & \labeledwire{purple}{X} & \gate{H} & \labeledwire{purple}{Z}  &                  & \labeledwire{purple}{Z}  & \gate{H} & \labeledwire{purple}{X}  & \gate{H} & \labeledwire{purple}{Z}  &                  & \labeledwire{purple}{Z} &  \\
        \end{quantikz}
    }
    \\
    \subfloat[Spacetime complex]{ \label{fig:example-spacetime-code-gauge-complex}
        \includegraphics[width=0.66\textwidth]{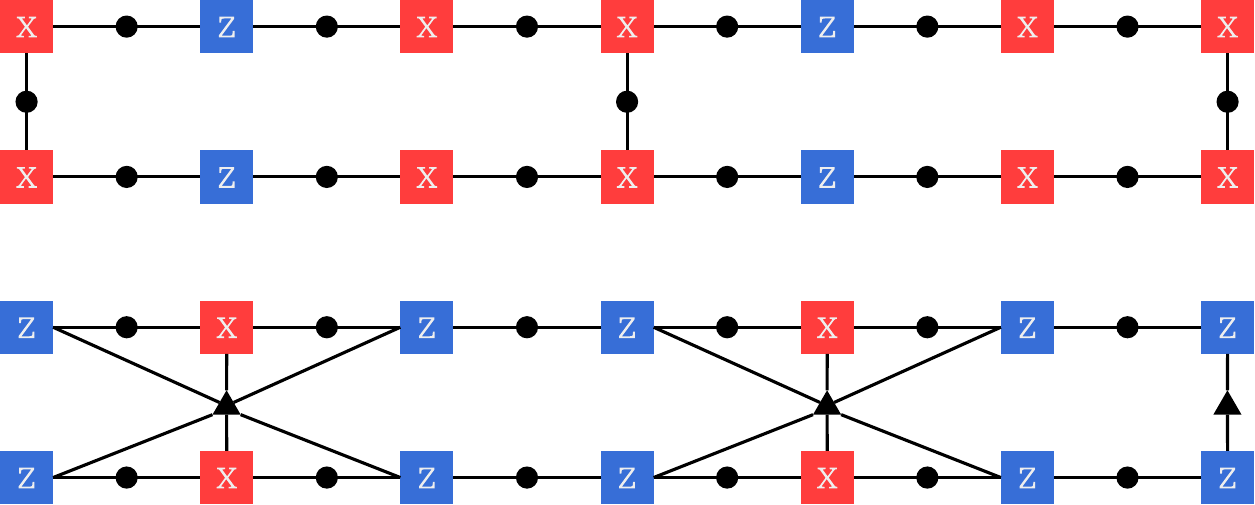}
    }
    \hspace{15pt}
    \subfloat[Reduced spacetime complex]{ \label{fig:example-spacetime-code-gauge-complex-reduced}
        \includegraphics[width=0.24\textwidth]{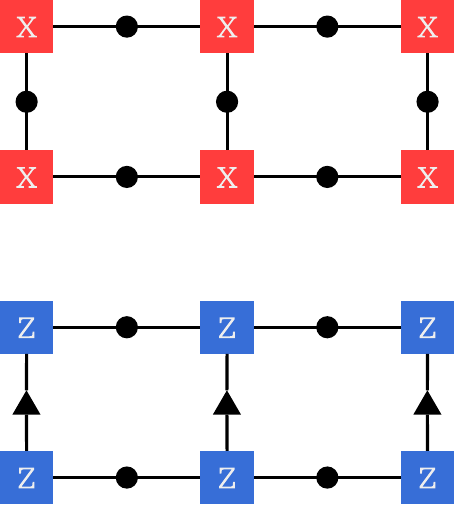}
    }
    \caption{Example of a spacetime code, with its stabilizers, logical operators, and gauge complexes. \textbf{(a)}
    A Clifford circuit annotated with three generators of the stabilizer group of its associated spacetime subsystem code.
    The notation $S_{XX}$ means that the input state stabilized by $XX$. The purple and orange stabilizers are detectors, while the green operator is an incomplete detector. \textbf{(b-c)} First and second pairs of logical operators. In orange are the logical errors (dressed logical operators of the subsystem code) and in purple are the logical correlations (bare logical operators). \textbf{(d)} Spacetime complex of the circuit. Circles are gauge nodes, squares are qubit nodes, and triangles are detector nodes. Note that due to the symplectic matrix $\Omega$, an $X \leftrightarrow Z$ permutation is applied to the stabilizer generators in order to obtain the detector nodes. \textbf{(e)} Reduced gauge complex after applying Rule A on each gauge node associated to a Hadamard gate.
    \label{fig:example-spacetime-code}
    }
\end{figure}

\subsection{Relationship to other frameworks}

To end this section, we would like to discuss the relationship between the spacetime code formalism and two other popular frameworks used to analyze error-correcting circuits: \textsc{Stim}'s detector error model \cite{gidney2021stim,higgott2025sparse, derks2024designing} and ZX-calculus with Pauli webs \cite{mcewen2023relaxing,bombin2024unifying,rodatz2024floquetifying,delafuente2024dynamical,rodatz2025fault}.

In \textsc{Stim}'s detector error model, three types of objects are constructed from a given circuit: the detectors, the logical observables and the error mechanisms.
A \textit{detector} is a set of measurements that have a deterministic parity in the absence of errors, i.e. a check of the outcome code.
A \textit{logical observable} is a set of measurements whose value correspond to measuring a logical operator.
An \textit{error mechanism} is a set of errors that flip the same detectors and logical observables.
From those three types of objects, one can draw a tripartite graph with error mechanism nodes connected to a detector or logical observable node when the error mechanism flips that detector or logical observable in a circuit execution.

We see three main differences between the spacetime code and the detector error model formalisms:
\begin{enumerate}
    \item In spacetime codes, the groups of spacetime stabilizers, gauge operators, and logical correlations are derived directly from the circuit and the input stabilizer space. This stands in contrast to the detector error model, where detectors and logical observables are specified manually. The distinction, however, is largely practical rather than fundamental: defining detectors and observables by hand makes it possible to choose a low-weight basis that simplifies decoding, whereas generating them automatically facilitates the proof of fault-tolerant properties of circuits.
    \item Spacetime codes are constructed with gauge operators first, while detector error models are constructed with logical observables first. In a spacetime code, the counterparts of the logical observables---the logical correlations---can be derived from the chain complex (by computing the cohomology). In a detector error model, the counterparts of the gauge operators can be derived by looking at all the combinations of error mechanisms that commute with all detectors and logical observables.
    \item In our formalism, we take as our basis of errors all the single-qubit $X$ and $Z$ errors occurring at spacetime locations immediately before and after each gate. As noted at the end of \cref{sec:spacetime-subsystem-code}, we expect that the framework can be extended to incorporate correlated errors, although this is not yet achieved in its present form. On the other hand, the detector error model has a flexible definition of error mechanisms, which can include correlated multi-qubit errors.
\end{enumerate}

In the ZX-calculus formalism, a given circuit is turned into a ZX diagram, and a set of \textit{Pauli webs}---colored edges representing both detectors and logical correlations---is drawn following some specific rules \cite{bombin2024unifying}. If the circuit has some input and output qubits, the corresponding edges are called \textit{outer ports} of the ZX diagram. We can then partition the Pauli webs as follows:
\begin{itemize}
    \item Pauli webs with no support in an outer port: those are equivalent to our full detectors.
    \item Pauli webs with support in outer ports that are either input or output, but not both: those are equivalent to our incomplete detectors.
    \item Pauli webs with support on both input and output outer ports: those are the logical correlations.
\end{itemize}
The set of trivial errors (the counterparts of our gauge operators) can also be read off from the ZX diagrams, as the set of constraints at each node that allow one to define the Pauli webs.
Moreover, the recent development of distance-preserving \cite{rodatz2024floquetifying,delafuente2024dynamical} and fault-equivalent \cite{rodatz2025fault} rewrite rules for ZX-diagrams makes the two formalisms comparable.
The precise relationship between ZX diagrams and spacetime complexes---and therefore fault-tolerant maps and distance-preserving rewrite rules---is left to future work.

Finally, the spacetime code formalism can be compared with the dynamical code formalism of Fu and Gottesman~\cite{fu2024error}, where subsystem codes are likewise constructed from circuits and a corresponding notion of distance is defined.
A more detailed comparison will be provided in upcoming work by the authors~\cite{fu2025subsystem}.

\section{Fault-tolerant measurement-based quantum computing}
\label{sec:fault-tolerant-cluster-states}
Quantum computing architectures based on the photonic platform are distinguished by an inherent scalability, allowing them to clear a major hurdle en route to the construction of devices capable of solving classically intractable problems~\cite{bourassa2021blueprintscalable}. However, photonic approaches are only viable with low-depth optical circuits to preclude noise---especially loss---from accruing to levels incompatible with fault tolerance. This restriction invites reformulating the standard circuit model of quantum computing into the measurement-based paradigm~\cite{jozsa2005introduction,gottesman1999demonstrating,nielsen2003quantum,leung2004quantum,raussendorf2001oneway}, where flying qubits need only traverse a handful of physical gates before being measured out. In measurement-based quantum computing (MBQC), the entanglement is generated offline in a resource state called a \emph{graph state} or \emph{cluster state}, while the computation is embedded into a sequence of adaptive local measurements and classical feedforward. The ability to construct graph states and perform measurements in arbitrary bases---either directly or through auxiliary magic states---unlocks universality.

The construction of fault-tolerant graph states from quantum error correcting (QEC) codes dates back to early surface code work~\cite{raussendorf2003measurement,nielsen2005fault,raussendorf2005longrange,raussendorf2006fault,raussendorf2007topological,browne2016one}.
Since then, this work has been generalized to CSS codes in a process called~\emph{foliation}~\cite{bolt2016foliated,bolt2018decoding,BrownUniversalFTMBQC2020,hillmann2024single}.
In this procedure, the two disjoint Tanner graphs of the code (containing $X$-type and $Z$-type stabilizers, respectively) are interpreted as alternating layers of the foliated graph state. Data qubits between adjacent layers are entangled with each other, resulting in teleportation chains propagating quantum information from layer to layer with byproduct Hadamard gates.
The $X$ and $Z$ checks are replaced by \emph{detectors}, in this case combinations of $X$ measurement outcomes in consecutive layers that add to zero in the absence of errors.
Foliation has also been generalized to non-CSS codes and subsystem codes~\cite{BrownUniversalFTMBQC2020} and to certain Floquet codes~\cite{paesani2023highthreshold}.

In this section, we start by introducing the reader to measurement-based quantum computing, defining important terminology for the rest of the paper.
We then study the fault-tolerant properties of MBQC circuits through the lens of the associated spacetime code.
Finally, we show how the associated spacetime complex can be simplified to another chain complex, the cluster state complex, generalizing the chain complex formalism of Ref.~\cite{newman2020generating}.

\subsection{Measurement-based quantum computing}
\label{sec:mbqc}

The fundamental observation underlying the MBQC formalism is that any single-qubit diagonal unitary gate
\begin{align}
    U_z(\phi)=e^{-\frac{\phi}{2} Z}
\end{align}
followed by a Hadamard can be implemented through the following teleportation circuit \cite{browne2016one}:

\begin{align} \label{eq:teleportation}
    \tikzsetnextfilename{sec4-teleportation}
    \begin{quantikz}
        \lstick{$\ket{\psi}$} &  & \ctrl{1}  & & \meterD{X_{\phi}} \\
        \lstick{$\ket{+}$}    &  & \ctrl{-1} & & \gate{X} \wire[u][1]{c} & \rstick{$HU_z(\phi) \ket{\psi}$}
    \end{quantikz}
\end{align}
where $X_{\phi}$ is the measurement basis consisting of the two projectors $\ket{0}\bra{0} \pm e^{i\phi} \ket{1}\bra{1}$. In particular, the Hadamard gate is implemented by taking $\phi=0$. Since any single-qubit gate can be decomposed into Hadamard and diagonal unitaries, we can implement any arbitrary single-qubit operation via successive teleportation circuits of the form of \cref{eq:teleportation}. Moreover, adding simple $\texttt{CZ}$ gates (with no measurement) to our set of operations, we can see that any multi-qubit unitary operation can be implemented with only $\texttt{CZ}$ gates, $\ket{+}$ ancilla state preparation, single-qubit measurements, and Pauli corrections.

In circuits of the form above, the Pauli correction can always be moved to the very end, as long as we perform the measurements sequentially and adapt the measurement bases depending on the previous measurements outcomes. For instance, two successive teleportation circuits with measurement bases $X_{\phi_1}$ and $X_{\phi_2}$, and without any Pauli correction, implements the following unitary:
\begin{align}
    X^{m_2} H U_z(\phi_2) X^{m_1} H U_z(\phi_1) = X^{m_2} Z^{m_1} H U_z(-\phi_2) H U_z(\phi_1)
\end{align}
where $m_1$ and $m_2$ are the outcomes of the first and second measurements, respectively. The Pauli correction has been propagated to the end, with the effect of transforming the second unitary from $U(\phi_2)$ to $U(-\phi_2)$. By changing the second measurement basis from $X_{\phi_2}$ to $X_{-\phi_2}$, we therefore recover the correct unitary. The same reasoning applies when adding more teleportation circuits and $\texttt{CZ}$ gates.

In other words, we can implement any unitary operation with a circuit of $\texttt{CZ}$ gates (between the input state and some ancilla qubits initialized in the $\ket{+}$ states), followed by a sequence of measurements (whose bases depend on the operation we are implementing and the outcome of the previous measurements), and finally some Clifford corrections (which can be done as a classical post-processing step).
Any such circuit can therefore be constructed from the following data, often referred to as an \textit{MBQC pattern} \cite{broadbent2009parallelizing}:
\begin{enumerate}
    \item A graph $G$ with adjacency matrix $A$, where each node represents a qubit and each edge a $\texttt{CZ}$ gate between the corresponding qubits.
    \item Two sets of nodes, the \textit{input nodes} $\mathcal{I}$ and \textit{output nodes} $\mathcal{O}$, such that input nodes represent the input state (as opposed to the ancilla qubits initialized in the $\ket{+}$ state), and the output nodes represent the qubits that are not measured at the end of the circuit.
    \item A list $\mathcal{B}$ of angles assigned to each non-output node, representing the basis in which it is measured (not taking into account the change of basis due to Pauli propagations).
    \item A list of integers $\mathcal{T}$ assigned to each non-output node, representing the time step at which it is measured.
\end{enumerate}
We write such an MBQC pattern as $\mbqc(A,\mathcal{I},\mathcal{O},\mathcal{B},\mathcal{T})$.
The corresponding circuit made of \texttt{CZ} gates and measurements is called the \textit{circuit realization of the MBQC pattern}, or \textit{MBQC circuit} for short.
We refer to the state obtained after applying all the \texttt{CZ} gates to an input state as a \textit{cluster state} or \textit{graph state}.
We refer to the graph $G$ as the \textit{cluster state graph} of the MBQC pattern.

As an example, the following circuit
\begin{align}
    \tikzsetnextfilename{sec4-example-circuit}
    \begin{quantikz}
        &  & \gate{H} & \ctrl{1} & \ctrl{1} & \gate{T} & \gate{H} &  \\
        &  &          & \targ{} & \ctrl{-1} &          &          &
    \end{quantikz}
\end{align}
can be represented by the following MBQC pattern:
\begin{center}
\includegraphics[width=0.22\linewidth]{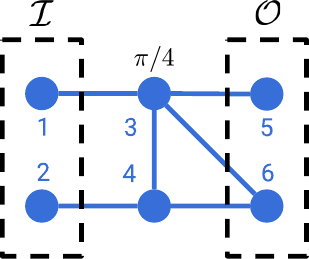}
\end{center}
where the blue nodes (apart from the output modes) are measured in the order of the blue indices, i.e.\ $\mathcal{T}=[1,2,3,4]$. We only label the measurement basis of non-output nodes for non-zero angles, that is, $\mathcal{B}=[0,0,\pi/4,0]$. This means, that to run the circuit in a measurement-based way, we first need to initialize the non-input nodes $\{3,4,5,6\}$ in the $\ket{+}$ state, perform a $\texttt{CZ}$ for each edge, and measure the nodes $\{1,2,3,4\}$ in increasing order in their respective basis, up to a sign which can be computed through the propagation of Pauli correction operators. We call this circuit the \textit{circuit associated to the MBQC pattern}.

It should be noted that the mapping from a circuit to an MBQC pattern is far from unique in general: both the order of measurements and the graph itself can be modified without changing the corresponding channel, and a great deal of research has been done on how to optimize MBQC patterns; see for example \cite{danos2005parsimonious, broadbent2009parallelizing, backens2021there, mcelvanney2023flow}.

In the context of quantum error correction and spacetime codes, we are mainly interested in mapping \textit{Clifford} circuits into MBQC patterns. In this case, it becomes unnecessary to update the measurement bases dynamically and the order of measurements becomes arbitrary. Indeed, if $U_z(\phi)$ is a Clifford gate, any Pauli correction $P$ propagates into a new Pauli correction $P'$, i.e.\ $U_z(\phi)P=P'U_z(\phi)$, by the definition of a Clifford gate, without a need to update the angle $\phi$.
The measured angle on each qubit is therefore exactly the basis specified in $\mathcal{B}$. Moreover, since each component of $\mathcal{B}$ is either $0$ or $\pi/2$ in the Clifford case, we can replace it by a vector $b$ such that $b_i=0$ if $\mathcal{B}_i=0$, and $b_i=1$ otherwise.
We call any such MBQC pattern built from a Clifford circuit a \textit{Clifford MBQC pattern} and denote it $\mbqc(A,\mathcal{I},\mathcal{O},b)$.

\subsection{Spacetime code of an MBQC circuit} \label{sec:spacetime-code-cluster-state}

As discussed above, the circuit realization of the pattern $\mbqc(A,\mathcal{I},\mathcal{O},b)$ can be written as a sequence of \texttt{CZ} gates, acting on some qubits divided into two categories: the input qubits $\mathcal{I}$, characterized by a stabilizer group $S_{\mathcal{I}}$, and the non-input qubits initialized in the $\ket{+}$ state. A subset of the qubits (the non-output qubits) is then measured in the $X$ or $Y$ basis.
We associate to this circuit a noise model where errors can either happen before or after all the \texttt{CZ} gates.
When the circuit is built by mapping a stabilizer or Floquet code experiment to MBQC, this noise model maps directly to the phenomenological noise model for the code (as we will prove in \cref{sec:from-stabilizer-codes-to-cluster-states} and \cref{sec:from-floquet-codes-to-cluster-states}).
To define the spacetime code corresponding to this noise model, we consider that all the \texttt{CZ} gates are applied at the same time step of the circuit and merged into a single unitary, that we call a \textit{\texttt{CZ} network}.
Since all the \texttt{CZ} gates commute, their order does not matter and this unitary is entirely characterized by the adjacency matrix $A$ of the cluster state graph, defined by $A_{i,j}=1$ if and only if there is a \texttt{CZ} gate between qubits $i$ and $j$.
To draw \texttt{CZ} networks within a circuit, we introduce a new circuit notation: a blue box surrounding a set of gates is treated as a single gate when constructing the spacetime code. In other words, no error can happen within such a box.
Here is an example of MBQC circuit using this notation:

\begin{align}
    \tikzsetnextfilename{sec4-example-mbqc-circuit}
    \begin{quantikz}
                           & \gatebox{3}{2} \qw & \ctrl{2}  & \meterD{X} \\
                           & \ctrl{1}           &           &            \\
        \lstick{$\ket{+}$} & \ctrl{-1}          & \ctrl{-2} & \meterD{Y}
    \end{quantikz}
\end{align}

We therefore currently have a circuit with two timesteps: $t=1$ before the \texttt{CZ} network and $t=2$ after.
However, as discussed in \cref{sec:spacetime-subsystem-code}, we need an odd number of steps to define a spacetime code.
We therefore add a layer of identity gates after the \texttt{CZ} network, adding a third timestep $t=3$ in the circuit.
For instance, the circuit above becomes:

\begin{align}
    \tikzsetnextfilename{sec4-adding-identities}
    \begin{quantikz}
                           & \gatebox{3}{2} \qw & \ctrl{2}  & \gate{I} &\meterD{X} \\
                           & \ctrl{1}           &           & \gate{I} &           \\
        \lstick{$\ket{+}$} & \ctrl{-1}          & \ctrl{-2} & \gate{I} &\meterD{Y}
    \end{quantikz}
\end{align}

We are now ready to describe the spacetime code. Every qubit $i$ initialized in the $\ket{+}$ state is associated with a gauge operator $X_{i,1}$, every input stabilizer $S$ to a gauge operator $S_1$ (i.e.\ the spacetime operator corresponding to $S$ at $t=1$), and every measurement on qubit $j$ to a gauge operator $X_{j,3}$ or $Y_{j,3}$ depending on the measurement basis.

As discussed in \cref{sec:stabilizers-spacetime-code}, every spacetime stabilizer is either the spackle of some input gauge operators (coming from the input stabilizers) or the backle of some output gauge operators (coming from the final measurements).
We give the following characterization of those spacetime stabilizers:

\begin{proposition} \label{prop:cs-spacetime-stabilizers}
    Let $\mbqc(A,\mathcal{I},\mathcal{O},b)$ be a Clifford MBQC pattern and $\mathcal{S}$ its input stabilizer group.
    The stabilizers of the corresponding spacetime code are the spacetime operators $S$ that commute with all the gauge operators and have either:
    \begin{itemize}
        \item The restriction of $S$ to the input qubits at $t=1$ is equal to an input stabilizer:
        \begin{align} \label{eq:mbqc-spacetime-stabilizer-cond-1}
            S_{\mathcal{I},1} \in \mathcal{S}.
        \end{align}
        \item The restriction of $S$ to the output qubits at $t=2$ and $t=3$ is the identity:
        \begin{align} \label{eq:mbqc-spacetime-stabilizer-cond-2}
            S_{\mathcal{O},2}=S_{\mathcal{O},3}=I.
        \end{align}
    \end{itemize}
\end{proposition}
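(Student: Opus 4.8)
The plan is to deduce the statement from \cref{prop:stabilizers_backle_spackle}, which already identifies the stabilizers of any spacetime subsystem code as exactly the spackles of a set of input stabilizers and measurements together with the backles of a set of measurements---equivalently (by \cref{lemma:spackles-backles-gauge-ops}, which shows spackles and backles are gauge operators) the spackles and backles that commute with all gauge operators. So it is enough to show that, for the spacetime code of \cref{sec:spacetime-code-cluster-state}, among the spacetime operators commuting with all gauge operators the spackles of input stabilizers and measurements are precisely those satisfying \cref{eq:mbqc-spacetime-stabilizer-cond-1}, and the backles of measurements are precisely those satisfying \cref{eq:mbqc-spacetime-stabilizer-cond-2}.

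First I would check the two ``forward'' implications by direct computation, exploiting the rigid shape of the circuit: three time steps, a single \texttt{CZ} network $U$ between $t=1$ and $t=2$, and a layer of identity gates and single-qubit $X$- or $Y$-measurements between $t=2$ and $t=3$. Since every measurement is single-qubit and sits in the last transition, $\spackle(M_j)$ is just $X_{j,3}$ or $Y_{j,3}$, trivial at $t=1$, while for $S'\in\mathcal S$ one has $\spackle_1(S')=S'_1\,(US'U^\dagger)_2\,(US'U^\dagger)_3$, whose restriction to the input qubits at $t=1$ equals $S'$. Hence the spackle of $S'^{(1)},\dots,S'^{(r)}\in\mathcal S$ and any measurements has $\mathcal I$-restriction $\prod_a S'^{(a)}\in\mathcal S$ at $t=1$, which is \cref{eq:mbqc-spacetime-stabilizer-cond-1}. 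Dually, $\backle(M_j)$ is supported only at $t=1$ and $t=2$, with its $t=2$ support living on the single non-output qubit $j$, so a product of such backles is trivial on $\mathcal O$ at both $t=2$ and $t=3$, which is \cref{eq:mbqc-spacetime-stabilizer-cond-2}.

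The converse implications are the crux, and the step I expect to be the main obstacle. Here I would take a spacetime operator $S$ commuting with all gauge operators---so $S$ lies in the normalizer $\mathcal N(\mathcal G)$---and write it, using the analysis of logical operators in \cref{sec:logical-ops-spacetime-code}, as a stabilizer times (a representative of) a logical correlation. The key lemma to establish is that every \emph{nontrivial} logical correlation of this spacetime code has both a nontrivial input footprint---its $\mathcal I$-restriction at $t=1$ is a nontrivial logical operator of the input code, in particular not in $\mathcal S$---and a nontrivial output footprint on $\mathcal O$ at $t=2$ (equivalently at $t=3$, since the identity layer acts trivially): these correlations are the propagations through $U$ of input logical operators, possibly dressed by measurements as in \cref{sec:logical-ops-spacetime-code}, and such an operator genuinely connects the two outer ports. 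Since a spackle of input stabilizers has input footprint in $\mathcal S$ and a backle of measurements has empty output footprint, one then argues: if $S$ satisfies \cref{eq:mbqc-spacetime-stabilizer-cond-1}, the logical-correlation factor would have to contribute an $\mathcal I$-footprint in $\mathcal S$, forcing it to be trivial; if $S$ satisfies \cref{eq:mbqc-spacetime-stabilizer-cond-2}, its output footprint would have to be cancelled against that of a stabilizer factor, impossible because a nontrivial output logical cannot equal a stabilizer restricted to $\mathcal O$. Either way the logical-correlation factor is trivial, so $S$ is a stabilizer.

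The points that will require the most care are (i) pinning down the ``input/output footprint'' of a logical correlation once measurements are present, in particular tracking which logical operators of the input code survive the measurements and verifying that the survivors still have nontrivial support on both outer ports, and (ii) the no-cancellation claim on $\mathcal O$, which I would reduce to a statement entirely about the input code and its image under $U$ by using that $\spackle_1(\cdot)$ and the propagated input logicals have identical $t=2$ and $t=3$ restrictions in this circuit.
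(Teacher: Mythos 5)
Your forward direction matches the paper's and is fine: by \cref{prop:stabilizers_backle_spackle} every stabilizer is a spackle of input stabilizers (whose $\mathcal I$-restriction at $t=1$ lies in $\mathcal S$) or a backle of the final single-qubit measurements (which is supported only on non-output qubits at $t=2$ and $t=3$). The problem is the converse. Your plan is to write $S=DK$ with $D$ a spacetime stabilizer and $K$ a representative of a logical correlation, and to conclude from $S_{\mathcal I,1}\in\mathcal S$ that $K_{\mathcal I,1}\in\mathcal S$. That inference needs $D_{\mathcal I,1}\in\mathcal S$, and this is false in general: a backle-type stabilizer can have $\mathcal I$-footprint equal to a \emph{nontrivial logical operator} of the input code. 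Concretely, if the circuit measures (the propagation of) an input-code logical $L$, then $\backle(M)$ is a stabilizer whose restriction to $\mathcal I$ at $t=1$ is $L\notin\mathcal S$ (e.g.\ a single unencoded input qubit with $\mathcal S=\{I\}$ measured in $X$: the stabilizer $X_1X_2X_3$ has input footprint $X$). Consequently ``$\mathcal I$-footprint modulo $\mathcal S$'' is not invariant under multiplication by stabilizers, so your key lemma --- every representative of every nontrivial logical correlation has $\mathcal I$-footprint outside $\mathcal S$ and nontrivial $\mathcal O$-footprint --- cannot be deduced from exhibiting one well-behaved representative per class (``the propagations of input logicals through $U$''); proving it for \emph{all} representatives requires characterizing the footprints of all stabilizers, which is exactly the proposition you are trying to prove. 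The same circularity affects the output side: $\spackle_1(S')$ for $S'\in\mathcal S$ generally has nontrivial support on $\mathcal O$ at $t=2,3$, so ruling out cancellation there amounts to showing that the restriction-to-$\mathcal O$ map separates logicals from stabilizers, again essentially the statement at hand. The correct resolution is that the measured logicals become trivial classes, so the footprint map is well defined on $\mathcal N(\mathcal G)/\mathcal S_{\mathrm{spacetime}}$ only after quotienting by the measured logicals --- a bookkeeping step your sketch does not identify.

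The paper's converse avoids this entirely by arguing locally in time rather than through the logical-operator structure. Commutation of $S$ with the single-qubit gauge operators $X_{j,1}$ (from the $\ket{+}$ preparations) and $X_{j,3}$ or $Y_{j,3}$ (from the final measurements) pins down $S$ on the non-input qubits at $t=1$ and the non-output qubits at $t=3$; then whichever of \cref{eq:mbqc-spacetime-stabilizer-cond-1} or \cref{eq:mbqc-spacetime-stabilizer-cond-2} holds makes the restriction of $S$ to the relevant boundary time step a product of measurement propagation operators, and \cref{lemma:elementary-prop-completeness} forces $S$ between consecutive time steps to be a product of elementary propagation operators of the \texttt{CZ} network and identity layer. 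Hence $S$ is a spackle or a backle, therefore a gauge operator by \cref{lemma:spackles-backles-gauge-ops}, and a gauge operator commuting with all gauge operators is a stabilizer. I would recommend replacing your logical-correlation argument with this direct one.
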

\begin{proof}
    Let $S$ be a stabilizer of the spacetime code. By \cref{prop:stabilizers_backle_spackle}, $S$ is either the spackle of some input stabilizers of the MBQC circuits---including both the single-qubit $X$ stabilizers characterizing the $\ket{+}$ states, and elements of $\mathcal{S}$ on the input qubits---or the backle of some $X/Y$ measurements at the end of the circuit.
    In the first case, the restriction of $S$ to the input qubits at $t=1$ must be an element of $\mathcal{S}$, which includes the identity in the case that $S$ is the spackle of single-qubit $X$ stabilizers at non-input qubits only.
    In the second case, the restriction of $S$ to the input qubits at $t=1$ can be any operator that commutes with the input stabilizers.
    However, as the backle of some final measurements, it cannot have any support on the output qubits. Therefore, $S_{\mathcal{O},2}=S_{\mathcal{O},3}=I$.

    In the other direction, let $S$ be a spacetime operator that commutes with all gauge operators of the spacetime code and obeys either \cref{eq:mbqc-spacetime-stabilizer-cond-1} or \cref{eq:mbqc-spacetime-stabilizer-cond-2}.
    Since $S$ commutes with all gauge operators, it commutes in particular with the single-qubit $X$ operators at time $t=1$ associated to the $\ket{+}$ states on non-input qubits, and with the single-qubit $X/Y$ measurement propagation operators at $t=3$ on non-output qubits.
    This implies that $S$ must have either $X$ or $I$ on its non-input qubits at $t=1$, either $X$ or $I$ on its non-output qubits measured in the $X$ basis at $t=3$, and either $Y$ or $I$ on its non-output qubits measured in the $Y$ basis at $t=3$.

    If $S$ obeys \cref{eq:mbqc-spacetime-stabilizer-cond-1}, we have $S_{\mathcal{I},1} \in \mathcal{S}$ and $S$ restricted to $t=1$ can be written as product of measurement propagation operators.
    Moreover, since $S$ commutes with all gauge operators, it must be a gate propagation operator of the \texttt{CZ} network when restricted to $t=1$ and $t=2$, by \cref{lemma:elementary-prop-completeness}.
    By the same argument, its restrictions to $t=2$ and $t=3$ must be identical, as there are only identity gates in-between those two timesteps. Therefore, $S$ must be the spackle of some initial stabilizers (including the single-qubit $X$ stabilizers on non-input qubits).

    If $S$ obeys \cref{eq:mbqc-spacetime-stabilizer-cond-2}, we have $S_{\mathcal{O},3}=I$ and $S$ restricted to $t=3$ can be written as a product of measurement propagation operators. By a similar argument as above, it must be identical at $t=2$ and $t=3$, and a \texttt{CZ} network gate propagation operator between $t=1$ and $t=2$, meaning that $S$ is the backle of some final measurements.

    In all cases, $S$ is either a spackle or a backle and must therefore be a gauge operator by \cref{lemma:spackles-backles-gauge-ops}. Since $S$ commutes with all gauge operators and is a gauge operator, it is a stabilizer of the spacetime code.
\end{proof}

\subsection{Cluster state complex} \label{sec:cluster-state-complex}

It has recently been shown that many types of foliated and non-foliated cluster states can be understood using the language of chain complexes
\cite{nickerson2018measurement,newman2020generating, hillmann2024single}. In particular, those works associate to a cluster state graph a chain complex of the form
\begin{align}
    C_3 \xrightarrow{\partial_3} C_2 \xrightarrow{\partial_2} C_1 \xrightarrow{\partial_1} C_0
\end{align}
where it is assumed that the nodes of the cluster state graph can be bipartitioned into primal and dual nodes, such that no edge connects a primal node to a dual node.
In this case, detectors can also be bipartitioned into primal and dual detectors depending on the type of qubits in their support.
The spaces $C_2$ and $C_1$ then represent, respectively, the primal and dual nodes, while $C_3$ and $C_0$ represent, respectively, the primal and dual detectors.
This representation works for the foliation of any CSS code and also allows for the discovery of many non-foliated cluster states~\cite{newman2020generating}.
However, this representation breaks down when considering the foliation of non-CSS codes, due to the potential presence of $Y$ measurements in the MBQC pattern and length-3 cycles in the cluster state graph, preventing the existence of a bipartition of the nodes \cite{BrownUniversalFTMBQC2020}.
Moreover, the exact relationship between this chain complex and the spacetime complex of the corresponding spacetime code has not been previously elucidated.

In this section, we generalize this chain complex formalism to any generic (Clifford) MBQC pattern.
Using the reduction rules of \cref{sec:fault-tolerant-maps}, we then show that this chain complex is equivalent to the spacetime complex of the associated circuit.
\begin{definition}[Cluster state complex]
\label{def:cluster-state-complex}
Consider a pattern $\mbqc(A,\mathcal{I},\mathcal{O},b)$ and an input stabilizer group $\mathcal{S}$.
We denote by $n_Q$ the dimension of $A$, $n_I$ the number of elements of $\mathcal I$, $n_O$ the number of elements of $\mathcal O$, $n_S$ the rank of $\mathcal S$, and $B=\text{diag}(b)$ the matrix with diagonal $b$ and zero elsewhere.
The corresponding $\emph{cluster state complex}$ is a length-2 chain complex
\begin{align} \label{eq:cluster-state-chain-complex}
    C_0 \xrightarrow{G^T} C_1 \xrightarrow{H} C_2
\end{align}
where $C_0$, $C_1$ and $C_2$ are finite-dimensional vector spaces over $\mathbb{F}_2$, and:
\begin{itemize}
    \item $C_0$ has dimension $n_Q+n_S$, and we call the basis elements of this space \emph{gauge nodes}.
    We have one gauge node per vertex of the cluster state graph, and one gauge node per input stabilizer.
    We partition those nodes into five sets:
    \begin{itemize}
    \item $\mathcal{G}_S$ (nodes corresponding to input stabilizers),
    \item $\mathcal{G}_{I\overline{O}}$ (nodes corresponding to input non-output vertices),
    \item $\mathcal{G}_{\overline{I}O}$ (nodes corresponding to output non-input vertices),
    \item $\mathcal{G}_{IO}$ (nodes corresponding to input output vertices),
    \item $\mathcal{G}_{\overline{I}\overline{O}}$ (nodes corresponding to non-input non-output vertices).
    \end{itemize}
    \item $C_1$ has dimension $n_Q+n_I+n_O$, and we call the basis elements of this space \emph{error nodes}.
    We have one error node per vertex of the cluster state graph, called a \emph{$Z$-type node}, interpreted as a $Z$ error applied to the corresponding qubit (at any timestep, as all the $Z$ errors on a given wire are equivalent).
    We also have one additional error node per input and output vertex of the cluster state graph, called an \emph{$X$-type node}, interpreted as an $X$ error on this qubit applied either before the \texttt{CZ} network (for input vertices) or after (for output vertices).
    Since these nodes do not correspond to any vertex of the original cluster state graph, we call them \emph{virtual nodes}, and we call the other basis elements \emph{physical nodes}.
    Given a vector $x \in C_1$, we denote $x_{\mathcal{I}}$ ($x_{\mathcal{O}}$) its restriction to the physical input (output) nodes, and  $x_{\tilde{\mathcal{I}}}$ ($x_{\tilde{\mathcal{O}}}$) its restriction onto the virtual ones.
    We emphasize that if a vertex is both input and output, it is associated to two distinct virtual nodes in $C_1$.
    We partition the nodes into six sets:
    \begin{itemize}
    \item $\mathcal{X}_{I\overline{O}}$ (virtual nodes corresponding to input non-output vertices),
    \item $\mathcal{X}_{\overline{I}O}$ (virtual nodes corresponding to output non-input vertices),
    \item $\mathcal{X}_{\overline{I}\overline{O}}$ (physical nodes corresponding to non-input non-output vertices),
    \item $\mathcal{X}_{IO}$ (first set of virtual nodes corresponding to input output vertices),
    \item $\mathcal{X}'_{IO}$ (second set of virtual nodes corresponding to input output vertices),
    \item $\mathcal{Z}_{I}$ (physical nodes corresponding to input vertices) and
    $\mathcal{Z}_{\overline{I}}$ (physical nodes corresponding to non-input vertices).
    \end{itemize}
    \item $G$ is the $(n_Q+n_S) \times (n_Q+n_I+n_O)$ matrix:
    \begin{equation} \label{eq:G-matrix}
        \def\arraystretch{1.4}
        G = \begin{blockarray}{ccccccc}
                                                     & \mathcal{X}_{I\overline{O}}       & \mathcal{X}_{IO}       & \mathcal{X}_{\overline{I}O} & \mathcal{X}'_{IO} & \mathcal{Z}_{I}  & \mathcal{Z}_{\overline{I}} \\
            \begin{block}{c(cccc|cc)}
              \mathcal{G}_S                          & S^X_{\mathcal{X}_{I\overline{O}}} & S^X_{\mathcal{X}_{IO}} & 0                           & 0                 & S^Z              & 0 \\[5pt] \cline{2-7}
              \mathcal{G}_{I\overline{O}}            & I                                 & 0                      & 0                           & 0                 & \multicolumn{2}{c}{\multirow{4}{*}{$A+B$}} \\
              \mathcal{G}_{\overline{I}O}            & 0                                 & 0                      & I                           & 0                 & \\
              \mathcal{G}_{IO}                       & 0                                 & I                      & 0                           & I                 & \\
              \mathcal{G}_{\overline{I}\overline{O}} & 0                                 & 0                      & 0                           & 0                 & \\
            \end{block}
        \end{blockarray}
    \end{equation}
    where $S = (S^X | S^Z)$ is the parity-check matrix of the input stabilizer group, and $S^X_{E}$ for a set $E$ denotes $S^X$ restricted to the column indices in $E$.
    Moreover, the rows and columns of $A$ and $B$ are reordered to correspond to the indicated row and column labels.
    We note that the matrix $G$, interpreted as the biadjacency matrix between basis elements of $C_0$ and $C_1$, has a simple graphical construction:
    \begin{enumerate}
        \item Insert a $Z$-type node for every vertex of the cluster state graph.
        \item Insert an $X$-type node for every input and output vertex of the cluster state graph.
        \item Insert a gauge node for every $Z$-type vertex $z$ of the graph, and connect it to all the $Z$-type nodes whose corresponding vertices in the cluster state graph were connected to the vertex corresponding to $z$. Moreover, if $z$ corresponds to an input or output vertex of the cluster state graph, connect $z$ to all the $X$-type nodes corresponding to this vertex. Finally, if $z$ corresponds to a vertex of the cluster state graph measured in the $Y$ basis, connect the gauge node to $z$.
        \item For every input stabilizer, insert a gauge node connected to all the input $X$ and $Z$-type nodes that make up the stabilizer.
    \end{enumerate}
    \item $C_2$ has dimension $m$, and represents the space of cluster state detectors.
    A detector is a vector $x \in \ker G$ such that either $(x_{\mathcal{I}} | x_{\tilde{\mathcal{I}}}) \in \mathcal{S}$ or $(x_{\mathcal{O}} | x_{\tilde{\mathcal{O}}})=0$.
    $H$ is an $m \times (n_Q+n_I+n_O)$ matrix whose rowspace is the space of detectors.
\end{itemize}
The chain complex condition $HG^T=0$ (or equivalently $GH^T=0$) is fulfilled, since every row of $H$ belongs to $\ker{G}$ by construction.
\end{definition}

Examples of cluster state complexes are shown in \cref{fig:zz-meas-cs-complex}, \cref{fig:i-and-o-cs-complex} and \cref{fig:y-meas-cs-complex}. Before proving that this chain complex is equivalent to the gauge complex of the corresponding spacetime code, let us start by building some intuition.
We begin by providing a graphical representation of the cluster state complex. As described in \cref{sec:graphical-rep-chain-complex}, the graphical representation of the chain complex in \cref{eq:cluster-state-chain-complex}, contains three types of nodes: gauge nodes ($C_0$), error nodes ($C_1$) and detector nodes ($C_2$). However, it is possible to compress this graph and have a single node representing both a gauge operator and an error.

\begin{figure}
    \centering
    \begin{minipage}{.4\linewidth}%
    \subfloat[]{
        \tikzsetnextfilename{sec4-zz-meas-cluster-state}
        \begin{quantikz}
            \lstick{$\ket{+}$} & \gatebox{7}{6} &           & \ctrl{2}  &           &           & \ctrl{6}  &            \\
            \lstick{$\ket{+}$} &                &           &           & \ctrl{2}  & \ctrl{5}  &           &            \\
            \lstick{$\ket{+}$} & \ctrl{2}       &           & \ctrl{-2} &           &           &           & \meterD{X} \\
            \lstick{$\ket{+}$} &                & \ctrl{2}  &           & \ctrl{-2} &           &           & \meterD{X} \\
                               & \ctrl{-2}      &           &           &           &           &           & \meterD{X} \\
                               &                & \ctrl{-2} &           &           &           &           & \meterD{X} \\
            \lstick{$\ket{+}$} &                &           &           &           & \ctrl{-5} & \ctrl{-6} & \meterD{X}
        \end{quantikz}
    }
    \\
    \makebox[\linewidth]{\subfloat[]{
        \includegraphics[width=0.5\textwidth]{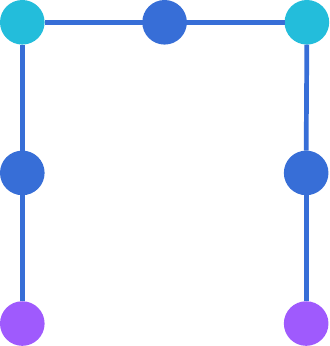}
    }}
    \end{minipage}
    \begin{minipage}{.3\linewidth}%
    \subfloat[]{ \label{fig:zz-meas-cs-complex}
        \includegraphics[width=0.86\textwidth]{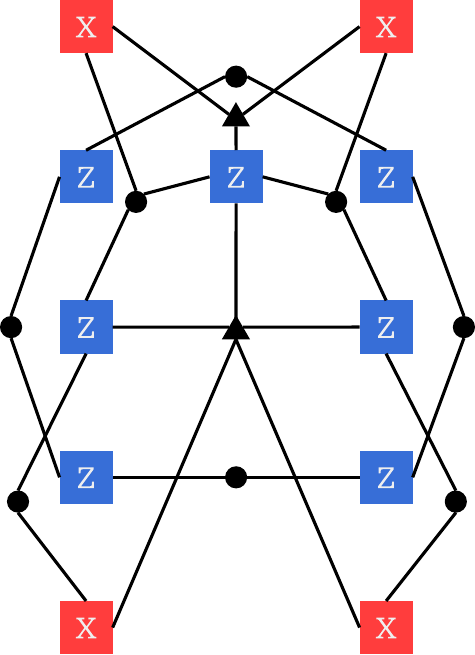}
    }
    \\
    \subfloat[]{ \label{fig:zz-meas-augmented-cs-complex}
        \includegraphics[width=0.86\textwidth]{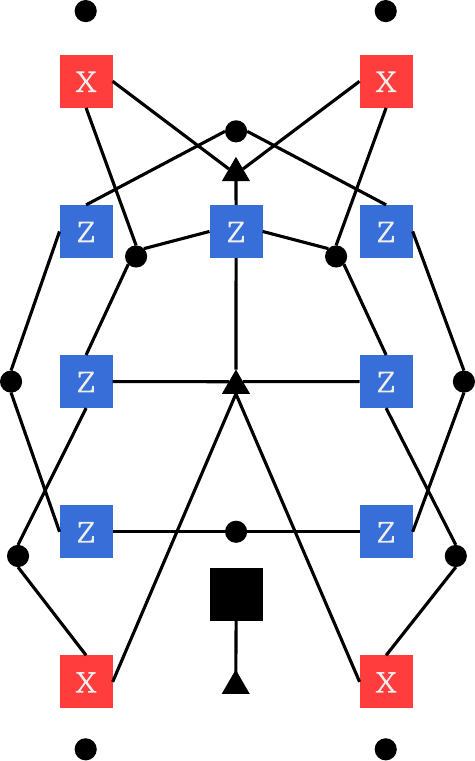}
    }
    \end{minipage}
    \begin{minipage}{.2\linewidth}
    \subfloat[]{ \label{fig:zz-meas-ccs}
        \includegraphics[width=1\textwidth]{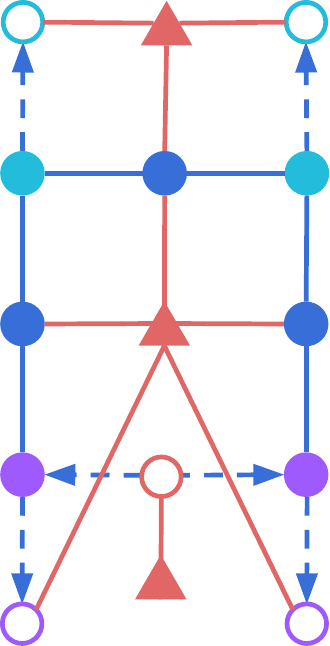}
    }
    \\
    \subfloat[]{ \label{fig:zz-meas-co-ccs}
        \includegraphics[width=1\textwidth]{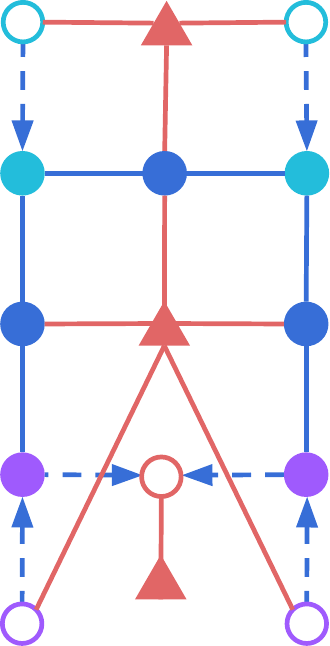}
    }
    \end{minipage}
    \caption{ \label{fig:zz-meas-cluster-state}
        The different representations of a cluster state. \textbf{(a)} Circuit consisting of two successive teleportations of a two-qubit state, followed by a $ZZ$ measurement. We assume that the input state is stabilized by $\mathcal{S}=\{I, ZZ\}$. \textbf{(b)} Cluster state graph of the corresponding MBQC pattern. Purple and light blue nodes indicate input and output qubits, respectively. \textbf{(c)} The corresponding cluster state complex, where circles indicate gauge nodes, squares error nodes ($X$-type and $Z$-type), and triangles detector nodes. \textbf{(d)} Augmented cluster state complex, where we have added a new gauge node for each input and output $X$-type node, and one new error node and detector node (black square and triangle connected to it) for the gauge node associated to the input $ZZ$ stabilizer. In this complex we have the same number of error and gauge nodes, allowing to interpret the biadjacency matrix between circles and squares as an adjacency matrix. \textbf{(e)} Compressed representation of the cluster state complex. Each circle now represents both an error operator ($Z$ for filled circles and $X$ for empty circles) and a gauge operator (defined by its neighborhood). Red triangles represent the detectors. \textbf{(f)} Compressed co-representation of the cluster state complex, where the arrows of the previous graph have been inverted. Error nodes now have the opposite interpretation: the filled circles are $X$ and the empty ones are $Z$.
        This graph is useful to derive the detectors, as the set of nodes with an empty boundary whose restriction to the input qubits is an input stabilizer or whose restriction on the output qubits is empty.
    }
\end{figure}

\begin{figure}
    \centering
    \begin{minipage}{.4\linewidth}%
    \subfloat[]{
        \tikzsetnextfilename{sec4-mbqc-circuit-both-input-and-output}
        \begin{quantikz}
            \lstick{$\ket{+}$} & \gatebox{5}{3} & \ctrl{1}  & \ctrl{4}  &            \\
            \lstick{$\ket{+}$} & \ctrl{2}       & \ctrl{-1} &           & \meterD{X} \\
                               &                & \ctrl{2}  &           &            \\
                               & \ctrl{-2}      &           &           & \meterD{X} \\
            \lstick{$\ket{+}$} &                & \ctrl{-2} & \ctrl{-4} & \meterD{X}
        \end{quantikz}
    }
    \\
    \makebox[\linewidth]{\subfloat[]{
        \includegraphics[width=0.5\textwidth]{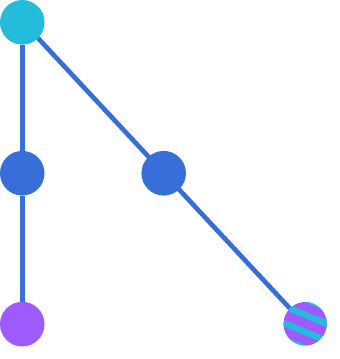}
    }}
    \end{minipage}
    \begin{minipage}{.3\linewidth}%
    \subfloat[]{ \label{fig:i-and-o-cs-complex}
        \includegraphics[width=0.8\textwidth]{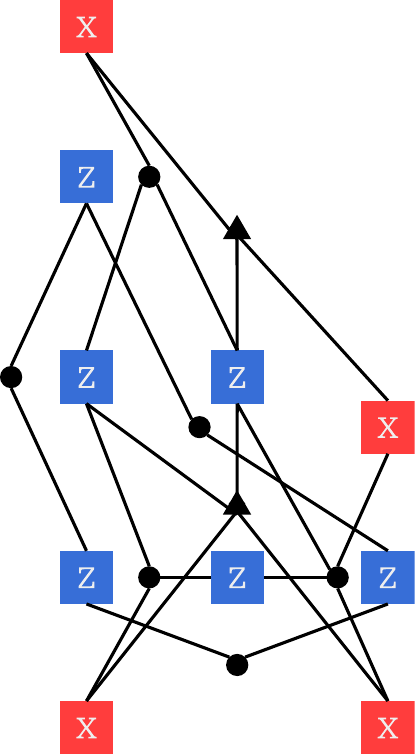}
    }
    \\
    \subfloat[]{ \label{fig:i-and-o-augmented-cs-complex}
        \includegraphics[width=0.8\textwidth]{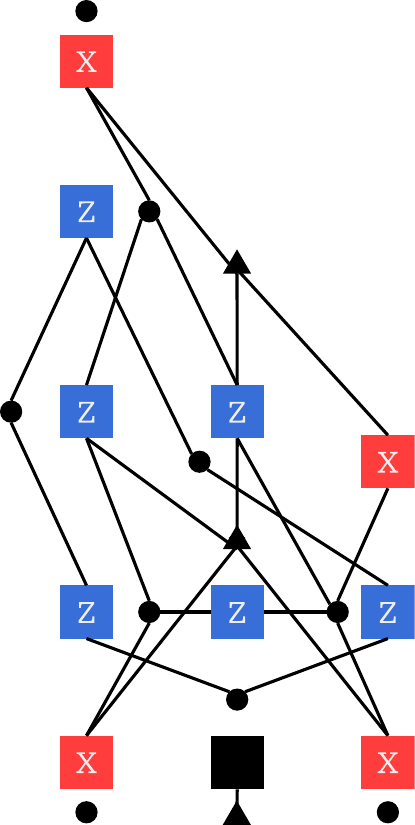}
    }
    \end{minipage}
    \begin{minipage}{.2\linewidth}
    \subfloat[]{ \label{fig:i-and-o-ccs}
        \includegraphics[width=1\textwidth]{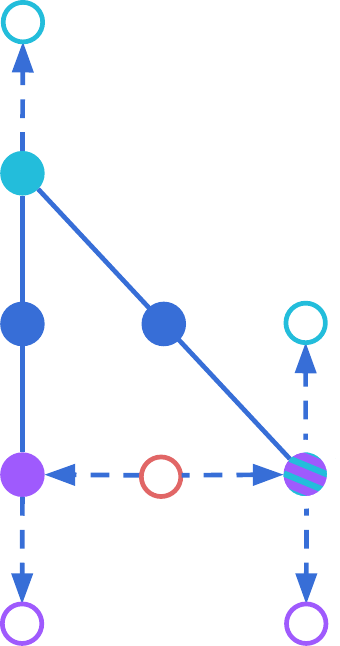}
    }
    \\
    \subfloat[]{ \label{fig:i-and-o-co-ccs}
        \includegraphics[width=1\textwidth]{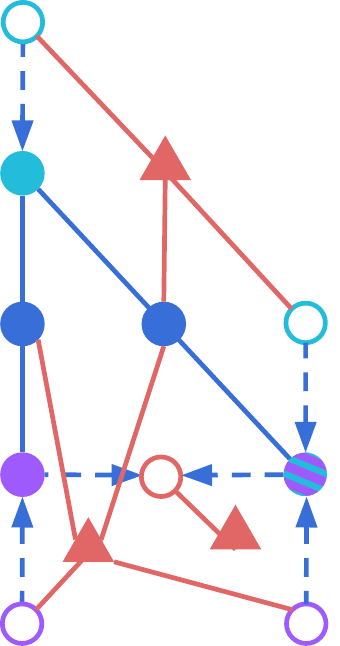}
    }
    \end{minipage}
    \caption{The different representations of a cluster state, where one node is both an input and output. See \cref{fig:zz-meas-cluster-state} for a more detailed description of the drawing conventions. \textbf{(a)} Circuit consisting of two successive teleportations of one of the two input qubits, followed by a $ZZ$ measurement. We assume that the input state is stabilized by $\mathcal{S}=\{I, ZZ\}$. \textbf{(b)} Cluster state graph of the corresponding MBQC pattern. \textbf{(c)} Associated cluster state complex. The input-output node of the cluster state graph leads to a $Z$-type node in the complex, as well as two $X$-type nodes, representing the two different $X$ errors on that qubit, before and after the \texttt{CZ} network. \textbf{(d)} Augmented cluster state complex. \textbf{(e)} Compressed representation of the cluster state complex, without detector nodes for clarity \textbf{(f)} Compressed co-representation of the cluster state complex, with detector nodes.
    }
\end{figure}

\begin{figure}
    \centering
    \subfloat[]{
            \raisebox{2.2em}{%
            \tikzsetnextfilename{sec4-example-mbqc-y-measurement}
            \begin{quantikz}
                \lstick{$\ket{+}$} & \labeledwire{nicered}{X} & \ctrl{1}  & \labeledwire{nicered}{Y} & \meterD{Y} \\
                \lstick{$\ket{+}$} & \labeledwire{nicered}{X} & \ctrl{-1} & \labeledwire{nicered}{Y} & \meterD{Y}
            \end{quantikz}
        }
    }
    \subfloat[]{ \centering \label{fig:y-meas-cs}
        \includegraphics[width=0.05\textwidth]{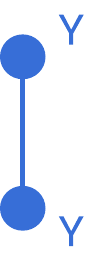}
    }
    \hspace{10pt}
    \subfloat[]{ \centering \label{fig:y-meas-cs-complex}
        \includegraphics[width=0.1\textwidth]{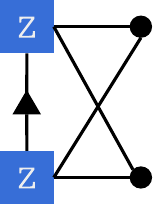}
    }
    \hspace{10pt}
    \subfloat[]{ \centering \label{fig:y-meas-ccs}
        \includegraphics[width=0.08\textwidth]{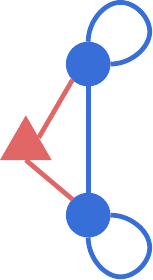}
    }
    \caption{
        Different representation of a cluster state with $Y$ measurements. See \cref{fig:zz-meas-cluster-state} for a more detailed description of the drawing conventions. \textbf{(a)} Two-qubit circuit with one \texttt{CZ} gate, $\ket{+}$ initialization and $Y$ measurement on all qubits. It has one spacetime stabilizer, labeled in red. \textbf{(b)} Associated cluster state graph. \textbf{(c)} Corresponding cluster state complex. The $Y$ measurements are reflected through the gauge nodes connected to their own associated error node. \textbf{(d)} Compressed representation of the cluster state complex. The self-loops are due to the $Y$ measurements of the corresponding qubits.
    }
    \label{fig:y-meas-cluster-state-example}
\end{figure}

\begin{definition}[Compressed representation of a cluster state complex] \label{def:compressed-rep-cluster-state}
    Given a cluster state complex
    \begin{align}
        C_0 \xrightarrow{G^T} C_1 \xrightarrow{H} C_2
    \end{align}
    we can construct an equivalent complex, called the \textit{augmented cluster state complex}
    \begin{align}
        C'_0 \xrightarrow{\tilde{G}^T} C'_1 \xrightarrow{\tilde{H}} C'_2
    \end{align}
    such that $C'_0$ and $C'_1$ have the same dimension. For this, we add $n_I+n_O$ basis elements to $C_0$, i.e.\ $C'_0=C_0 \oplus \mathbb{Z}_2^{n_I+n_O}$, and $n_S$ basis elements to $C_1$, i.e.\ $C'_1=C_1 \oplus \mathbb{Z}_2^{n_S}$.
    We build the square matrix $\tilde{G}$, of size $n_Q+n_I+n_O+n_S$, by adding $n_I+n_O$ zero rows and $n_S$ zero columns to $G$. Finally, we add $n_S$ basis elements to $C_2$ and define $\tilde{H}$ as
    \begin{align}
        \tilde{H} = \begin{pmatrix}
            H & 0 \\
            0 & I_{n_S}
        \end{pmatrix}
    \end{align}
    The directed graph defined by the adjacency matrix
    \begin{align}
        B=\begin{pmatrix}
            \tilde{G}^T & \tilde{H} \\
            \tilde{H} & 0
        \end{pmatrix}
    \end{align}
    is called the \textit{compressed representation of the cluster state complex}. The first $n_Q+n_I+n_O+n_S$ rows and columns of $B$ correspond to the error/gauge nodes, while the last $m$ rows and columns correspond to the detector nodes.

    This graph can be constructed directly by starting from the cluster state graph and adding one new vertex per input qubit, output qubit and input stabilizer.
    Since we now have two vertices for each input and output qubit, we label the one present in the original graph as a $Z$-type vertex and the new one as an $X$-type vertex. We then add one directed edge from every $Z$-type vertex to its corresponding $X$-type vertex.
    For every input stabilizer $(h_X|h_Z)$, we add a directed edge between its corresponding vertex in the graph, and every $X$-type ($Z$-type) input vertex in the support of $h_X$ ($h_Z$).
    We also add a self-loop to every vertex $i$ whose corresponding qubit is measured in the $Y$ basis.

    Finally, we add detectors to the graph.
    They can be found directly from our current graph, by first inverting all the arrows, i.e.\ drawing the graph with adjacency matrix $\tilde{G}$ instead of $\tilde{G}^T$.
    In this new graph, the $X$-type vertices are now interpreted as $Z$ errors and the $Z$-type vertices as $X$ errors (due to the symplectic matrix $\Omega$ present in the second boundary map of the chain complex).
    The detectors can be identified as independent sets of vertices with an empty boundary (this corresponds to finding a basis of $\ker(\tilde{G})$) whose restriction to the input qubits is an input stabilizer or whose restriction to the output qubits is empty. Note that if a vertex $v$ is connected to another vertex $u$ by a directed edge, then $u$ is in the boundary of $v$ only if the directed edge goes from $v$ to $u$. The graph with arrows inverted and adjacency matrix
    \begin{align}
        B^T=\begin{pmatrix}
            \tilde{G} & \tilde{H} \\
            \tilde{H} & 0
        \end{pmatrix}
    \end{align}
    is called the \textit{compressed co-representation of the cluster state complex}.
\end{definition}

An example of the representation and co-representation of a cluster state complex is shown in \cref{fig:zz-meas-ccs} and \cref{fig:zz-meas-co-ccs}.
We are now ready to prove the main theorem of this section:

\begin{figure}
    \centering
    \subfloat[Non-input, $X$ measurement]{ \centering
        \includegraphics[width=0.45\textwidth]{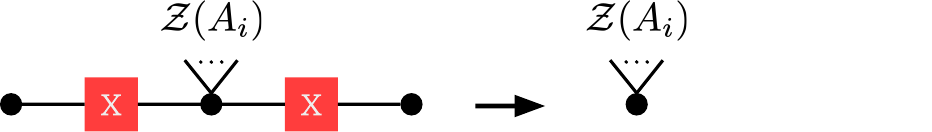}
    }
    \hspace{10pt}
    \subfloat[Input, $X$ measurement]{ \centering
        \includegraphics[width=0.45\textwidth]{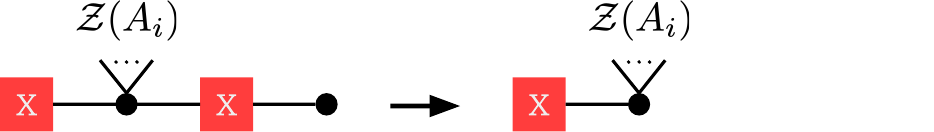}
    }
    \\
    \subfloat[Non-input, $Y$ measurement]{ \centering
        \includegraphics[width=0.45\textwidth]{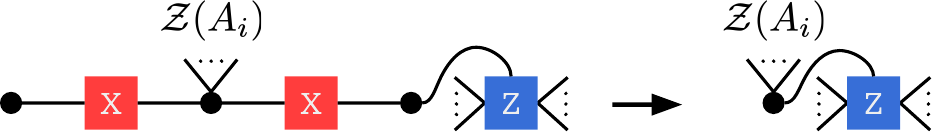}
    }
    \hspace{10pt}
    \subfloat[Input, $Y$ measurement]{ \centering
        \includegraphics[width=0.45\textwidth]{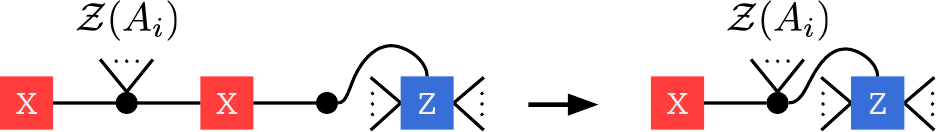}
    }
    \\
    \subfloat[Non-input, output]{ \centering
        \includegraphics[width=0.45\textwidth]{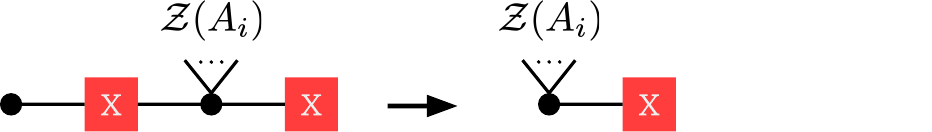}
    }
    \hspace{10pt}
    \subfloat[Input, output]{ \centering
        \includegraphics[width=0.45\textwidth]{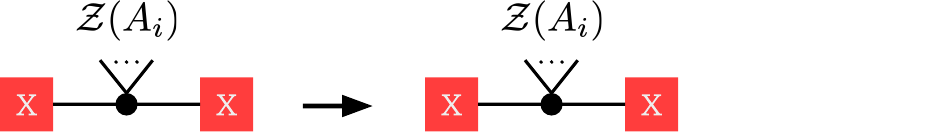}
    }
    \caption{ \label{fig:six-cases}
        From the spacetime complex of the MBQC circuit to the cluster state complex.
    }
\end{figure}

\begin{theorem} \label{theorem:mbqc-circuit-equivalent-to-cluster-state-complex}
    Given a pattern $\mbqc(A,\mathcal{I},\mathcal{O},b)$ and an input stabilizer group $\mathcal{S}$, the spacetime complex of its circuit implementation and its cluster state complex are equivalent.
\end{theorem}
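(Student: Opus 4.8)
The plan is to realize the cluster state complex as the output of a finite sequence of reduction rules A and B applied to the spacetime complex of the MBQC circuit, and then to conclude using \cref{theorem:rule-a} and \cref{theorem:rule-b} (each rule supplies a fault-tolerant map together with a fault-tolerant partner in the opposite direction) and the easily verified fact that fault-tolerant maps compose — a weak quasi-isomorphism composed with a weak quasi-isomorphism is one, and the distance- and decoding-preservation conditions chain together. The composite of all the reductions is then a fault-tolerant map from the spacetime complex to the cluster state complex, and composing the partner maps $h_A,h_B$ in the opposite order gives a fault-tolerant map back, which is exactly the definition of equivalence.

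First I would write out the spacetime complex explicitly. After padding the MBQC circuit with the layer of identity gates at $t=3$, the error space $C_1$ has basis $\{X_{i,t},Z_{i,t}\}$ over qubits $i\in[n_Q]$ and times $t\in\{1,2,3\}$, and the gauge generators are: the \texttt{CZ}-network propagation operators $Z_{i,1}Z_{i,2}$ and $X_{i,1}\,(X_i\prod_{j\sim i}Z_j)_2$; the identity-gate operators $X_{i,2}X_{i,3}$ and $Z_{i,2}Z_{i,3}$; the state-preparation operators $X_{i,1}$ for non-input $i$; the input-stabilizer operators $S_1$ for a generating set of $\mathcal S$; and the measurement operators $X_{i,3}$ or $Y_{i,3}$ for non-output $i$ according to $b_i$.

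Then I would reduce in three stages, following \cref{fig:six-cases}. (i) Apply rule A to each identity-gate operator $X_{i,2}X_{i,3}$ and $Z_{i,2}Z_{i,3}$: each such gauge node has exactly two error neighbours and, since no other gauge node is incident to both of the merged nodes, the merge produces no spurious cancellations; this collapses $t=3$ onto $t=2$, so the measurement operators become $X_{i,2}$ or $X_{i,2}Z_{i,2}$. (ii) Apply rule A to each $Z_{i,1}Z_{i,2}$, merging the two $Z$-error nodes on wire $i$ into one node $Z_i$. (iii) Finish locally, wire by wire, according to the six types of qubit: for non-input $i$ the single-error gauge node $X_{i,1}$ lets rule B delete $X_{i,1}$; for non-output $i$ measured in $X$ the single-error gauge node $X_{i,2}$ lets rule B delete $X_{i,2}$; for non-output $i$ measured in $Y$ the gauge node $X_{i,2}Z_i$ has two error neighbours and rule A merges $X_{i,2}$ into $Z_i$, creating the self-loop; on input and output wires the corresponding $X$-node survives and becomes a virtual $X$-type node. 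The precondition of rule B is automatic here: a gauge node with a unique error neighbour forces, by the chain-complex condition, that no detector is incident to that error node; one still has to check that the stages are legal in the stated order, i.e.\ that each rule-B gauge node still has a unique error neighbour when it is used.

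Finally I would verify that the resulting complex is the cluster state complex of \cref{def:cluster-state-complex}. The surviving error nodes are one $Z$-type node per qubit plus one $X$-type node per input and per output wire; the surviving \texttt{CZ}-network $X$-propagation operators are exactly the $\mathcal G_{I\bar O},\mathcal G_{\bar I O},\mathcal G_{IO},\mathcal G_{\bar I\bar O}$ rows of $G$ (each incident to $Z_j$ for $j\sim i$, to the appropriate virtual $X$-node(s), and to $Z_i$ itself precisely when $b_i=1$, i.e.\ the $A+B$ block), and the input-stabilizer operators are the $\mathcal G_S$ rows (the $S^X/S^Z$ block). The detector space $C_0$ is untouched by rules A and B, so it is still the stabilizer space of the spacetime code; by \cref{prop:stabilizers_backle_spackle} and \cref{prop:cs-spacetime-stabilizers}, expressed in the reduced error basis this space is exactly the set of $x\in\ker G$ with $(x_{\mathcal I}|x_{\tilde{\mathcal I}})\in\mathcal S$ or $(x_{\mathcal O}|x_{\tilde{\mathcal O}})=0$, i.e.\ the rowspace of $H$. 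I expect the main obstacle to be the bookkeeping in stage (iii) together with this last identification: getting the placement of virtual $X$-nodes for input, output, and (doubly) input-output wires and the self-loop pattern to match the $G$-matrix on the nose, and checking that the detector-space correspondence with \cref{prop:cs-spacetime-stabilizers} — including the $X\!\leftrightarrow\!Z$ swap hidden in $\Omega$ — is an equality and not merely an inclusion. The homological, distance, and decoding content is then immediate from \cref{theorem:rule-a} and \cref{theorem:rule-b}.
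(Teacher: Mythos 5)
Your proposal is correct and takes essentially the same route as the paper's proof: the paper likewise reduces the spacetime complex to the cluster state complex by applying rules A and B wire by wire according to the six input/output/measurement cases of \cref{fig:six-cases}, matches the surviving gauge operators to the matrix $G$ of \cref{eq:G-matrix}, and then verifies both inclusions of the detector-space correspondence via \cref{prop:cs-spacetime-stabilizers}. Your explicit stage of collapsing $t=3$ onto $t=2$ and your remark that fault-tolerant maps compose are minor refinements the paper leaves implicit, not a different argument.
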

\begin{proof}
    Let's begin by setting up some notations. Let $e_i \in \mathbb{Z}_2^n$ the vector containing a $1$ at position $i$ and $0$ everywhere else, for any $i \in \{1,\ldots,n\}$.
    For any matrix $M$, we write $M_i := Me_i$ for the $i$th column.
    Given the spacetime complex of a circuit, a column $j$ of that circuit, and a binary vector $v$, we write $\mathcal{X}_j(v)$ ($\mathcal{Z}_j(v)$) for the set of all $X$ ($Z$) nodes of the chain complex on column $j$ and row $i$ for all $i$ such that $v_i=1$.

    We now consider the chain complex $C_\bullet$ associated to the spacetime code of the MBQC circuit
    \begin{align}
        C_2 \xrightarrow{G^T} C_1 \xrightarrow{H \Omega} C_0
    \end{align}
    We want to show that there exists a fault-tolerant map to the cluster state complex $C'_\bullet$:
    \begin{align}
        C'_0 \xrightarrow{G'^T} C'_1 \xrightarrow{H'} C'_2
    \end{align}
    with spaces and boundary maps as defined in \cref{def:cluster-state-complex}. We will use a combination of rules A and B to construct this fault-tolerant map. We will start by describing the succession of reductions we are applying and how they affect the error and gauge nodes. We will then see how they affect the detector nodes.

    For every row $i$ of the circuit, the \texttt{CZ} network induces the following two gauge operators:
    \begin{align}
        g^X_i &= \mathcal{X}_1(e_i)+\mathcal{X}_2(e_i)+\mathcal{Z}(A_i) \\
        g^Z_i &= \mathcal{Z}_1(e_i)+\mathcal{Z}_2(e_i)
    \end{align}
    Since the second gauge operator has weight 2, we can use rule A to remove it from the chain complex and replace the nodes $\mathcal{Z}_1(e_i)$ and $\mathcal{Z}_2(e_i)$ by a single node, that we denote $\mathcal{Z}(e_i)$.
    We can often simplify the first gauge operator as well, in a way that depends on whether the wire corresponds to an input qubit or not, and whether it is measured in the $X$ basis, $Y$ basis or is an output qubit. Those six cases are illustrated in \cref{fig:six-cases}. Let us study them separately:
    \begin{enumerate}
        \item Non-input, $X$ basis measurement: we have two weight-1 gauge operators connected to $\mathcal{X}_1(e_i)$ and $\mathcal{X}_2(e_i)$ respectively. We use rule B to remove those nodes from the chain complex, giving us the simplified gauge operator $g^X_i = \mathcal{Z}(A_i)$.
        \item Non-input, $Y$ basis measurement: we have one weight-1 gauge operators connected to $\mathcal{X}_1(e_i)$ and one weight-2 gauge operator connecting $\mathcal{X}_2(e_i)$ with $\mathcal{Z}(e_i)$. We use rules A and B to remove $\mathcal{X}_1(e_i)$ from the chain complex and merge $\mathcal{X}_2(e_i)$ with $\mathcal{Z}(e_i)$, giving us the new gauge operator $g^X_i = \mathcal{Z}(A_i+e_i)$.
        \item Non-input, output: we have one weight-1 gauge operators connected to $\mathcal{X}_1(e_i)$, which we remove using rule B, giving us the new gauge operator $g^X_i = \mathcal{Z}(A_i)+\mathcal{X}_2(e_i)$.
        \item Input, $X$ basis measurement: we have one weight-1 gauge operators connected to $\mathcal{X}_2(e_i)$, which we remove using rule B, giving us the new gauge operator $g^X_i = \mathcal{Z}(A_i)+\mathcal{X}_1(e_i)$.
        \item Input, $Y$ basis measurement: we have one weight-2 gauge operator connecting $\mathcal{X}_2(e_i)$ with $\mathcal{Z}(e_i)$. We use rules A to merge $\mathcal{X}_2(e_i)$ with $\mathcal{Z}(e_i)$, giving us the new gauge operator $g^X_i = \mathcal{Z}(A_i+e_i)+\mathcal{X}_1(e_i)$.
        \item Input, output: we cannot simplify such a row of the chain complex, and we therefore keep the gauge operator $g^X_i = \mathcal{Z}(A_i)+\mathcal{X}_1(e_i)$ + $\mathcal{X}_2(e_i)$.
    \end{enumerate}
    The reduced chain complex now has exactly $n_Q+n_I+n_O$ nodes: one $\mathcal{Z}(e_i)$ for every wire $i$, plus $\mathcal{X}_1(e_i)$ if it is an input wire and $\mathcal{X}_2(e_i)$ if it is an output wire.
    Moreover, it has one gauge operator per wire, connecting to $\mathcal{X}_1(e_i)$ ($\mathcal{X}_2(e_i)$) for input (output) qubits, as well as all the nodes in $\mathcal{Z}(A_i+b_ie_i)=\mathcal{Z}\left((A+B)_i\right)$.
    Finally, every input stabilizer $j$ corresponds to a gauge operator $g^S_j=\mathcal{X}_1(S^X_j)+\mathcal{Z}(S^Z_j)$. Putting those gauge operators in a matrix form gives exactly the $G$ matrix of \cref{eq:G-matrix}.

    So far, we have shown the existence of a fault-tolerant map
    \[\begin{tikzcd}[column sep=20pt, row sep=20pt, every cell/.append style={inner sep=4pt}]
        {C_2} & {C_1} & {C_0} \\
        {C'_2} & {C'_1} & {C'_0}
        \arrow["{G^T}", from=1-1, to=1-2]
        \arrow["{f_2}"', from=1-1, to=2-1]
        \arrow["{H\Omega}", from=1-2, to=1-3]
        \arrow["{f_1}"', from=1-2, to=2-2]
        \arrow["{f_0}"', from=1-3, to=2-3]
        \arrow["{G'^T}", from=2-1, to=2-2]
        \arrow["{\partial'_1}", from=2-2, to=2-3]
    \end{tikzcd}\]
    It remains to show that $H'=\partial'_1$, i.e., the detectors of the spacetime code transform into the cluster state detectors, which we defined as the elements $x \in \ker(G)$ such that either $(x_{\mathcal{I}} | x_{\tilde{\mathcal{I}}}) \in \mathcal{S}$ or $(x_{\mathcal{O}} | x_{\tilde{\mathcal{O}}})=0$. We will first show that every spacetime code stabilizer maps into a cluster state detector under our fault-tolerant map. We will then show that every cluster state detector comes from a spacetime code stabilizer.

    Let us start by showing that every spacetime code stabilizer maps into a cluster state detector.
    Let $s \in C_0$ be a spacetime code stabilizer and $S=H^T s$ its binary symplectic representation. Let $x=(H\Omega)^T s$ be its representation in $C_1$ (i.e. $S$ with $X$ and $Z$ exchanged), let $s'=f_0(s) \in C'_0$, and let $x'={\partial'}^T_1 s' \in C'_1$.
    We know that $x \in \ker\left(G'^T\right)$ by the chain complex condition of $C'_\bullet$.
    We therefore just need to show that either $(x'_{\mathcal{I}} | x'_{\tilde{\mathcal{I}}}) \in \mathcal{S}$ or $(x'_{\mathcal{O}} | x'_{\tilde{\mathcal{O}}})=0$.
    We first observe that $x$ and $x'$ are supported on the same rows of the chain complex, since we only applied rule $A$ on gauge operators connecting nodes of the same row.
    In particular, $x$ and $x'$ have support on the same input and output rows.
    By \cref{prop:cs-spacetime-stabilizers}, we know that either $S_{\mathcal{I},1} \in \mathcal{S}$ or $S_{\mathcal{O},2}=S_{\mathcal{O},3}=0$. In the first case, since for every input row $i$, $\mathcal{Z}_1(e_i)$ maps to $\mathcal{Z}(e_i)$ (physical input node) and $\mathcal{X}_1(e_i)$ maps to $\mathcal{X}_1(e_i)$ (virtual input node), we have $(x'_{\mathcal{I}} | x'_{\tilde{\mathcal{I}}}) \in \mathcal{S}$ (since $S$ and $x$ have $X$ and $Z$ exchanged).
    In the second case, since $f$ cannot map a stabilizer not supported on output nodes to a stabilizer supported on them by the previous observation, we must have $x_{\tilde{\mathcal{O}}}=x_{\mathcal{O}}=0$.

    We now show that every cluster state detector comes from a spacetime code stabilizer.
    Let $x' \in \ker(G')$ such that either $(x'_{\mathcal{I}} | x'_{\tilde{\mathcal{I}}} ) \in \mathcal{S}$ or $(x'_{\mathcal{O}} | x'_{\tilde{\mathcal{O}}})=0$. We want to show that there exists a spacetime stabilizer $s \in C_0$ such that $x'=f_1 \left((H\Omega)^T s \right)$.
    Let us map back $x'$ to a vector $x \in C_1$ the following way: every $Z$-type node in the support of $x'$ is mapped to the corresponding three nodes $Z_{i,1}$, $Z_{i,2}$ and $Z_{i,3}$ of $C_\bullet$. Every $X$-type input node in the support of $x'$ is mapped to the input node $X_{i,1}$.
    Every $X$-type output node is mapped to the two output nodes $X_{i,2}$ and $X_{i,3}$. It is easy to see that applying $f_1$ to this new vector $x$ gives us back $x'$.
    Moreover, since input nodes map to input nodes and output nodes to output nodes, it is clear that $x$ is a spacetime stabilizer by \cref{prop:cs-spacetime-stabilizers}.
    Therefore, there exists $s$ such that $x'=f_1(x)=\left((H\Omega)^T s \right)$, showing that any cluster state detector is in the image of a spacetime stabilizer.
\end{proof}

\section{Cluster state complex from a Clifford circuit}
\label{sec:from-spacetime-codes-to-cluster-states}
\begin{figure}
    \begin{center}
        \begin{tabular}{>{\centering\arraybackslash}m{3cm} >{\centering\arraybackslash}m{7cm}}
            \tikzsetnextfilename{sec5-swap}
            \begin{quantikz}
                & \gate[2]{\text{\texttt{SWAP}}} &  \\
                & \ghost{\text{\texttt{SWAP}}} &
             \end{quantikz}
             &
             \includegraphics[width=0.5\linewidth]{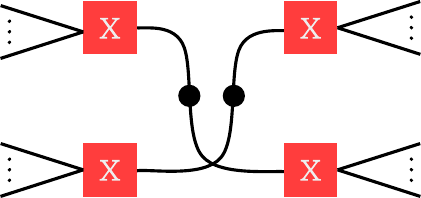}
            \\
            \hline
            \tikzsetnextfilename{sec5-swap-compilation}
            \begin{quantikz}
                & \ctrl{1} & \targ{}& \ctrl{1} &  \\
                & \targ{}& \ctrl{-1} & \targ{}&
            \end{quantikz}
            &
            \addstackgap[15pt]{
                \includegraphics[width=\linewidth]{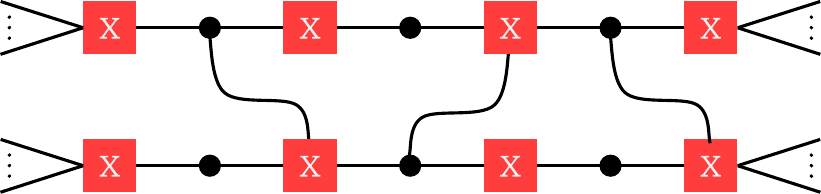}
            }
            \\
        \end{tabular}
    \end{center}
    \caption{ \label{fig:swap-example}
        \texttt{SWAP} gate vs \texttt{CNOT} compilation. We draw the circuits on the left, and the $X$ connected component of the gauge part of the chain complexes on the right. While the two circuits represent the same channel, they do not have equivalent chain complexes: the weight-3 gauge nodes cannot be reduced to weight-2 gauge nodes through fault-tolerant maps. This formalizes the observation that weight-1 $X$ errors happening on the top wire before the last \texttt{CNOT} propagate into weight-2 errors, thereby potentially reducing the distance of a spacetime code containing this component.
    }
\end{figure}

The goal of this section is to give a constructive proof of the following theorem:
\begin{theorem}
    Let $\mathcal{C}$ a Clifford circuit made of single-qubit Clifford gates, control-Pauli gates (where the Pauli operator can be multi-qubit) and single-qubit Pauli measurements at the end of the circuit. Then there exists an MBQC circuit equivalent to $\mathcal{C}$.
\end{theorem}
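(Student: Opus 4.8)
The plan is to construct, directly from $\mathcal{C}$, a Clifford MBQC pattern $\mbqc(A,\mathcal{I},\mathcal{O},b)$ together with an input stabilizer group, and to prove that its cluster state complex (\cref{def:cluster-state-complex}) is equivalent to the spacetime complex of $\mathcal{C}$; the theorem then follows from \cref{theorem:mbqc-circuit-equivalent-to-cluster-state-complex}, which identifies the cluster state complex with the spacetime complex of the circuit realization of the pattern, provided one also knows that equivalence of chain complexes is transitive. So I would first record two small facts. (i) The composite of two fault-tolerant maps is a fault-tolerant map: being a weak chain map, inducing an isomorphism on $H_1$, preserving the distance (an equality), and intertwining the minimum-weight-decoding functions are all stable under composition; hence equivalence of complexes is an equivalence relation. (ii) By \cref{theorem:rule-a} and \cref{theorem:rule-b} the maps $h_A$ and $h_B$ are themselves fault-tolerant, so rules A and B may be applied in either direction, and -- since every gauge, stabilizer, and detector generator of a spacetime complex is localized to a bounded time window -- a reduction performed on one ``block'' of the circuit can be composed with reductions on the other blocks.

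\textbf{Step 1: the construction.} I would build the pattern by the following block replacements, each of which Step 2 then shows to be equivalence-preserving.
\begin{itemize}
  \item Each single-qubit Clifford gate is realized by a bounded-length chain of teleportation gadgets of the form of \cref{eq:teleportation} with $\phi\in\{0,\pi/2\}$ (a $\texttt{CZ}$, a fresh $\ket{+}$ ancilla, and an $X$- or $Y$-basis measurement of the incoming qubit): this suffices because the gadgets implement $H$ (for $\phi=0$) and $HS$ (for $\phi=\pi/2$), and $\langle H,HS\rangle$ is the single-qubit Clifford group modulo Pauli corrections, which can be pushed to the end of the circuit for free.
  \item Each controlled-$P$ gate with $P=\bigotimes_j P_j$ on targets $q_j$ and control $c$ is written as $W\,\bigl(\prod_j\texttt{CZ}_{c,q_j}\bigr)\,W^\dagger$, where $W=\bigotimes_j C_j$ acts on the target wires only and $C_j\in\{I,H,SH\}$ satisfies $C_jZC_j^\dagger=P_j$; the $C_j$ and $C_j^\dagger$ are gadgetized as above, while the $\prod_j\texttt{CZ}_{c,q_j}$ are kept grouped into a single $\texttt{CZ}$ network. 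Keeping the $\texttt{CZ}$'s grouped -- rather than splitting into $\texttt{CNOT}$s as in \cref{fig:swap-example} -- is the key point: the control wire then carries a single gate-propagation operator $X_{c,t}\,(X_c Z_{q_1}\cdots Z_{q_m})_{t+1}$ of weight $2+m$, which matches that of controlled-$P$ once the local Cliffords on the targets are absorbed.
  \item Each final single-qubit Pauli measurement is turned (by a single-qubit Clifford conjugation) into a $Z$-measurement, and each $Z$-measurement into a Hadamard teleportation gadget followed by an $X$-basis measurement, so that all final measurements are in the $X$ or $Y$ basis.
\end{itemize}
Reading this off as a graph gives $A$; the original $n$ wires are the input vertices $\mathcal{I}$, the wires carrying no final measurement are the output vertices $\mathcal{O}$, $b$ records the $X$/$Y$ measurement bases, and the input stabilizer group is that of $\mathcal{C}$.

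\textbf{Step 2: the equivalence.} I would reduce the spacetime complex of $\mathcal{C}$ to the cluster state complex of the pattern, block by block, using rules A and B. On a wire, gate-splitting into the teleportation primitives and then reducing the gadgets proceeds exactly as in the ``six cases'' analysis in the proof of \cref{theorem:mbqc-circuit-equivalent-to-cluster-state-complex}: the weight-two $\texttt{CZ}$-network gauge operators are merged by rule A, the weight-one gauge operators coming from $\ket{+}$ preparations and single-qubit measurements are removed by rule B, and the $Y$-measurement blocks produce the self-loops of \cref{def:cluster-state-complex}. For the controlled-Pauli blocks, the local Clifford layers $W,W^\dagger$ reduce to the $X\!\leftrightarrow\!Z$ relabelling of the target error nodes, leaving the grouped $\texttt{CZ}$ network as the single gauge operator $G$ of \cref{eq:G-matrix}. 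Finally, the detectors are matched using \cref{prop:stabilizers_backle_spackle} and \cref{prop:cs-spacetime-stabilizers}: a spacetime stabilizer of $\mathcal{C}$ (a spackle of input stabilizers or a backle of final measurements) reduces to a kernel element of $G$ that restricts to an input stabilizer on the inputs or to zero on the outputs, and conversely; composing the local fault-tolerant maps yields the global one.

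\textbf{Main obstacle.} The hard part is the per-block verification, and above all the controlled-multi-qubit-Pauli block: one must check that after the rule-A/rule-B reductions its spacetime complex, including the weight-$(2+m)$ control propagation operator and every detector threading the block, coincides with that of the single gate controlled-$P$, which is precisely what forces one to keep the $\texttt{CZ}$'s grouped. A second delicate point is the $S$-type (and hence $Y$-measurement) blocks, whose propagation operator $X\mapsto Y$ is already weight two and is not directly acted on by rules A or B, so one has to argue via the adjacent identity layers that the split does not change the complex. Throughout, one must carry the boundary bookkeeping required by \cref{def:cluster-state-complex} -- the virtual ($X$-type) error nodes at input and output vertices, the self-loops from $Y$-measurements, and the input/output partition of the gauge nodes -- and this is the content that does not reduce to the generic lemmas above.
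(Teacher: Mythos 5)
Your overall strategy is the same as the paper's (compile everything into \texttt{CZ} networks plus $H$/$HS$ teleportation gadgets, reduce block by block with rules A and B, match detectors via \cref{prop:stabilizers_backle_spackle} and \cref{prop:cs-spacetime-stabilizers}), and your observations about composability of fault-tolerant maps and about keeping the \texttt{CZ}'s of a controlled-Pauli grouped are both correct and both used, at least implicitly, in the paper. But there is a concrete gap: you never justify that the \texttt{CZ} gates produced by \emph{different} blocks can be collapsed into the single \texttt{CZ} network $A$ that your pattern requires. An MBQC circuit, by the paper's definition, has all its \texttt{CZ}'s inside one blue box (one time step of the spacetime code), so ``reading this off as a graph gives $A$'' presupposes that merging adjacent \texttt{CZ} networks is a fault-tolerant map. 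It is not in general: the paper's own counterexample at the start of \cref{sec:merge-cz-networks} shows that two back-to-back \texttt{CZ}'s on the same pair of wires are \emph{not} equivalent to the merged (empty) network, because a single-qubit $X$ error between them propagates to weight two. Your construction produces exactly this situation whenever two controlled-Pauli gates act consecutively on a shared qubit with no single-qubit gate in between --- in particular on the shared control wire, and on any shared target where your local Clifford $C_j$ is the identity (i.e.\ $P_j=Z$), since in those cases no teleportation gadget is inserted to buffer the two networks.

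The missing ingredient is the paper's Hadamard-separation step: before merging, one inserts a pair of Hadamards (\cref{lemma:hadamard-square}) on every wire shared by two unseparated \texttt{CZ} networks, compiles those Hadamards into teleportation gadgets, and only then merges via \cref{lemma:h-cz-sandwich}, \cref{lemma:h-and-hs-merge-right}, \cref{lemma:cz-cz-merge} and \cref{lemma:h-and-hs-merge-left}. These merge lemmas are not generic consequences of rules A and B applied within a block; they are statements about how a gadget straddling two networks lets the two networks be fused, and they are where most of the work in the paper's proof lives. Without this step your Step 2 cannot ``compose the local fault-tolerant maps into a global one,'' because the target object (a single network $A$) is not equivalent to the concatenation of the local targets. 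Your flagged worry about the $S$-type blocks is real but is handled the same way in the paper (merge $S$ with an adjacent $H$ into $HS$ via rule A, inserting $HH$ if needed); the \texttt{CZ}-merging issue is the one your proposal actually fails to address.
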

Note that different compilations of a given unitary with different sets of gates can give rise to spacetime codes with different distances. For instance, while a \texttt{SWAP} gate preserves the weight of any input Pauli error, its compilation into three \texttt{CNOT}s leads to some error-spreading that can decrease the distance of a code, as shown in \cref{fig:swap-example}. However, we will show that for the specific case of the compilation of our initial Clifford circuit into its MBQC version, the chain complexes are equivalent.

The proof works by progressively transforming circuit elements of the original Clifford circuit until a fully measurement-based circuit is obtained, that is, a circuit made only of ancilla qubits initialized in the $\ket{+}$ state, a $\texttt{CZ}$ network, and final Pauli measurements. It is made of two main steps:
\begin{enumerate}
    \item Compiling the single-qubit gates into \texttt{H} and \texttt{HS} gates, and the controlled-Pauli gates into \texttt{CZ} networks. This involves compiling some \texttt{H} gates into their MBQC version when $Y$ measurements are present. This is the content of \cref{sec:pre-compilation}.
    \item Compiling all the \texttt{H} and \texttt{HS} gates into their MBQC version and merging all the \texttt{CZ} networks. This can involve adding additional pairs of \texttt{H} gates in-between certain \texttt{CZ} networks. This is the content of \cref{sec:merge-cz-networks}.
\end{enumerate}
\cref{sec:general-lemmas} and \cref{sec:mbqc-compilation-h-hs} provide lemmas and propositions that are useful for the rest of the proof.

We often consider partial chain complexes associated to particular circuit elements, and we define input (output) nodes of such a partial chain complex to be those nodes corresponding to the $X$ and $Z$ spacetime operators at the beginning (end) of the circuit element on the wires that are not initialized (measured) on that circuit element.
We will always represent them by wires with ellipses, on the left for input nodes and on the right for output nodes.
Furthermore, we require that input and output nodes are preserved throughout any transformation.
To transform a circuit element while preserving the fault-tolerant properties of the full circuit that contains it, we use rules A and B, introduced in \cref{sec:fault-tolerant-maps}, on the partial chain complex corresponding to that circuit element.

An important observation that makes the diagrammatic proofs below work, is that applying rules A and B to a given partial chain complex does not change the support of the detectors on the input and output nodes. Therefore, any element of the kernel of the gauge group that came from a spackle or a backle in the original circuit, will still come from the same spackle or backle after the transformation.
Consequently, while the detectors of a circuit element are not fully defined without considering the full circuit, showing that the gauge part of the partial chain complex transforms correctly is enough to show that the detectors transform into the detectors as well.
Thus, we will not need to draw detector nodes when deriving the equivalence of two partial chain complexes coming from circuit elements.

\subsection{General lemmas}
\label{sec:general-lemmas}

We start by proving a few preliminary lemmas that will be useful throughout this section and the next.

\begin{lemma} \label{lemma:hadamard-square}
    Inserting two successive Hadamard gates in a circuit preserves its equivalence class.
\end{lemma}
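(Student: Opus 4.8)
The plan is to show that inserting two consecutive Hadamard gates $H\cdot H = I$ on a wire, at a time step between two existing circuit elements, produces a new circuit whose spacetime complex is equivalent (in the sense of \cref{sec:fault-tolerant-maps}) to the original one. Since equivalence of circuits is defined as equivalence of their spacetime complexes, and equivalence of chain complexes is defined via the existence of fault-tolerant maps in both directions, it suffices to exhibit such maps. I would work on the partial chain complex associated with the newly inserted fragment, which consists of just the two Hadamard gates sitting on one wire, with input nodes $X_{\mathrm{in}}, Z_{\mathrm{in}}$ (the $X$ and $Z$ spacetime operators entering the fragment) and output nodes $X_{\mathrm{out}}, Z_{\mathrm{out}}$ (those exiting it). Using the observation made just before \cref{sec:general-lemmas}, transforming this partial complex correctly — in particular, preserving the support of detectors on input/output nodes — is enough to conclude equivalence of the full circuits.

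First I would write down the partial chain complex of the two-Hadamard fragment explicitly. Inserting two Hadamards adds one intermediate time step, hence one new pair of error nodes $X_{\mathrm{mid}}, Z_{\mathrm{mid}}$, plus the four gate propagation (gauge) operators: the first Hadamard contributes $X_{\mathrm{in}} Z_{\mathrm{mid}}$ and $Z_{\mathrm{in}} X_{\mathrm{mid}}$, and the second contributes $X_{\mathrm{mid}} Z_{\mathrm{out}}$ and $Z_{\mathrm{mid}} X_{\mathrm{out}}$ (where I am using the binary-symplectic/chain-complex conventions of \cref{example:subsystem-code}, so that after the $\Omega$ twist these gauge operators connect an $X$-node at one time to a $Z$-node at the adjacent time). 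Meanwhile, in the original circuit, at that location there were instead the trivial identity-gate gauge operators $X_{\mathrm{in}} X_{\mathrm{out}}$ and $Z_{\mathrm{in}} Z_{\mathrm{out}}$ — but since no intermediate time step existed, these were in fact just the single pair of nodes $X, Z$ with no gauge operator of this kind at all once we quotient. More precisely: in the original circuit the wire simply carried $X_{\mathrm{in}}=X_{\mathrm{out}}$ and $Z_{\mathrm{in}}=Z_{\mathrm{out}}$ as the same nodes. So the target complex $C'_\bullet$ is the original one, and $C_\bullet$ (with the two Hadamards) has two extra error nodes and two extra gauge nodes.

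Next I would apply rule A twice to collapse the fragment. The gauge operator $X_{\mathrm{in}} Z_{\mathrm{mid}}$, after the $\Omega$-twist, is represented as a weight-two $C_2$-node connecting $X$-node at input to $X$-node at mid (recall detector/gauge nodes connect to the opposite-Pauli error nodes); similarly $Z_{\mathrm{in}} X_{\mathrm{mid}}$ connects $Z_{\mathrm{in}}$ to $Z_{\mathrm{mid}}$ — wait, I must be careful about the $\Omega$ twist, but in any event each of these four gauge operators is weight-two in the error-node basis, connecting a node at one time step to the \emph{same-Pauli-type} node at the adjacent time step after the twist: $X_{\mathrm{in}}\!-\!X_{\mathrm{mid}}$, $Z_{\mathrm{in}}\!-\!Z_{\mathrm{mid}}$, $X_{\mathrm{mid}}\!-\!X_{\mathrm{out}}$, $Z_{\mathrm{mid}}\!-\!Z_{\mathrm{out}}$. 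Each is a valid rule-A configuration (a $C_2$ node of weight exactly two). Applying rule A to the first merges $X_{\mathrm{in}}$ with $X_{\mathrm{mid}}$; applying it again (to the image of the third) merges that with $X_{\mathrm{out}}$, yielding a single $X$-node identified with $X_{\mathrm{in}}=X_{\mathrm{out}}$; and symmetrically for the $Z$-nodes. By \cref{theorem:rule-a}, each application is a fault-tolerant map, and compositions of fault-tolerant maps are fault-tolerant maps (distance-preservation, decoding-preservation, and weak-quasi-isomorphism all compose — I would state this composition fact, which follows immediately from the definitions and \cref{lemma:condition-quasi-isomorphism}-style arguments, or just note that the composite $f_B\circ\cdots$ satisfies the hypotheses of \cref{lemma:distance-preserving-and-decoding-preserving}). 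The net result identifies the partial complex of the two-Hadamard fragment with the trivial single-wire partial complex, and the same in the reverse direction via the $h$ maps, so the two full circuits are equivalent.

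The main obstacle I anticipate is purely bookkeeping: getting the $\Omega$-twist and the $H$-propagation signs right so that each of the four gauge operators really is a clean weight-two node on a matched pair of error nodes of the \emph{same} type (so that rule A, which merges two $C_1$ nodes connected to a common $C_2$ node, genuinely applies) rather than a crossing $X\!-\!Z$ edge, and checking that the sequence of two rule-A applications leaves no dangling $C_2$ node of weight other than two partway through. A secondary point to handle carefully is the claim that the input/output nodes are untouched and that detector supports on them are unchanged — but this is exactly the general observation recorded before \cref{sec:general-lemmas}, applied to rule A (which only ever merges two nodes lying in the same circuit row and adjacent columns, hence never an input-with-output identification that would alter a spackle/backle's boundary), so invoking that observation discharges it. Once the combinatorics are set up correctly, the proof is a two-line invocation of \cref{theorem:rule-a}.
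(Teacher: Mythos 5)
Your proposal is correct and follows essentially the same route as the paper, whose entire proof is a single application of rule A collapsing the two-Hadamard fragment onto a bare wire. One small slip worth noting: there is no $\Omega$ twist on the gauge side of the complex (the twist appears only in $\partial_1$, affecting detector nodes), so the four Hadamard gauge operators are the crossing edges $X_{\mathrm{in}}\!-\!Z_{\mathrm{mid}}$, $Z_{\mathrm{in}}\!-\!X_{\mathrm{mid}}$, $X_{\mathrm{mid}}\!-\!Z_{\mathrm{out}}$, $Z_{\mathrm{mid}}\!-\!X_{\mathrm{out}}$ rather than same-type edges; this is harmless because rule A merges any two error nodes sharing a weight-two gauge node regardless of Pauli type, and the two successive applications per sector still identify $X_{\mathrm{in}}$ with $X_{\mathrm{out}}$ (via $Z_{\mathrm{mid}}$) and $Z_{\mathrm{in}}$ with $Z_{\mathrm{out}}$ (via $X_{\mathrm{mid}}$) exactly as needed.
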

\begin{proof}
Using the reduction rule A shows that the chain complexes of the two circuits, drawn below, are equivalent:
\begin{center}
    \begin{tabular}{>{\centering\arraybackslash}m{3cm} >{\centering\arraybackslash}m{6cm}}
        \tikzsetnextfilename{sec5-double-h}
        \begin{quantikz}
            & \gate{H} & \gate{H} &
         \end{quantikz}
         &
         \includegraphics[width=\linewidth]{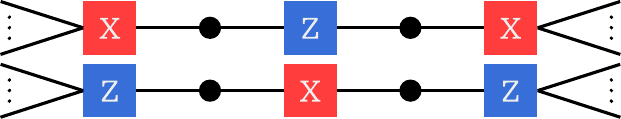}
        \\
        \hline
        \\
        \tikzsetnextfilename{sec5-identity-wire}
        \begin{quantikz}
            &  &  &
        \end{quantikz}
        &
        \addstackgap[15pt]{
            \includegraphics[width=0.35\linewidth]{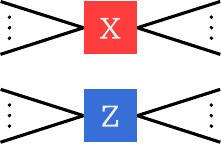}
        }
        \\
    \end{tabular}
\end{center}
\end{proof}

\begin{lemma} \label{lemma:pushing-h-away}
    Let $U$ be a multi-qubit unitary gate that can be written as $U=U'H_i$, where $H_i$ is a Hadamard gate applied on qubit $i$ and $U'$ another unitary gate.
    Then any circuit with the gates $H_i$ and $U'$ applied at successive time steps is equivalent to the circuit with the gate $U$.
    This can be represented by the following circuit equivalence:
    \begin{align}
        \tikzsetnextfilename{sec5-putting-h-away-1}
        \begin{quantikz}
            & \mygate{1}{H} \gatebox[U]{4}{2} & \mygate{4}{U'}      &          \\
            &                                               &                                      &          \\
            & \setwiretype{n} \ \vdots \                    &                                      & \vdots \ \\
            &                                               &                                      &
        \end{quantikz}
        \sim
        \tikzsetnextfilename{sec5-putting-h-away-2}
        \begin{quantikz}
            & \gate{H}                   & \mygate{4}{U'} \gatebox{4}{1} &          \\
            &                            &                                                  &          \\
            & \setwiretype{n} \ \vdots \ &                                                  & \vdots \ \\
            &                            &                                                  &
        \end{quantikz}
    \end{align}
    The same result applies if $H_i$ is applied after $U'$.
\end{lemma}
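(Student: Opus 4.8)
The plan is to prove this as a direct application of the reduction rules A and B on the partial chain complex associated to the two adjacent circuit elements $H_i$ and $U'$, showing that it reduces to the same complex as the partial chain complex associated to the single gate $U = U'H_i$. The key observation is that the partial chain complex of the two-timestep circuit ($H_i$ then $U'$) has an intermediate layer of error nodes---the $X$ and $Z$ spacetime operators living between $H_i$ and $U'$---together with the gauge operators generated by each individual gate, whereas the partial chain complex of $U$ alone has no such intermediate layer. So the task is to eliminate those intermediate nodes via fault-tolerant maps.

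First I would set up notation: let $t$ index the column before $H_i$, $t+1$ the column between $H_i$ and $U'$, and $t+2$ the column after $U'$. The gauge operators from $H_i$ are $X_{i,t}Z_{i,t+1}$ and $Z_{i,t}X_{i,t+1}$ (a Hadamard swaps $X\leftrightarrow Z$), while the gauge operators from $U'$ are the gate propagation operators $P_{j,t+1}(U'PU'^\dagger)_{j,t+2}$ for $P\in\{X,Z\}$ and $j\in\supp U'$. The gauge operators of the single gate $U$ are $P_{j,t}(UPU^\dagger)_{j,t+2}$. The Hadamard gauge operators each have weight two and connect an intermediate node ($X_{i,t+1}$ or $Z_{i,t+1}$) to a node on wire $i$ at time $t$. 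I would apply rule A to each of these two weight-two gauge operators in turn: rule A merges $Z_{i,t+1}$ with $X_{i,t}$, and merges $X_{i,t+1}$ with $Z_{i,t}$, while deleting the two Hadamard gauge nodes. After these merges, the surviving node playing the role of ``$X$ at time $t$ on wire $i$'' is actually the old $Z_{i,t+1}$ node, and symmetrically for $Z$; tracking this relabeling carefully, one checks that the composition of the gate propagation operators of $U'$ with this relabeling of wire $i$ is exactly $P_{j,t}(UPU^\dagger)_{j,t+2}$, i.e., the gate propagation operators of $U$. Since rules A and B are fault-tolerant maps by \cref{theorem:rule-a} and \cref{theorem:rule-b}, and equivalence of chain complexes is what defines equivalence of circuits (as set up in \cref{sec:spacetime-subsystem-code}), this establishes the equivalence. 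The case where $H_i$ is applied \emph{after} $U'$ is symmetric---one uses the same rule-A merges on the other side, or equivalently applies the argument to the reversed circuit.

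As flagged in the paragraph preceding \cref{sec:general-lemmas} in the excerpt, I would emphasize that it suffices to show the gauge part of the partial chain complex transforms correctly: applying rules A and B never changes the support of detectors on the input and output nodes, so any detector that arose as a spackle/backle in the full circuit still arises from the same spackle/backle afterwards, and hence the detector nodes transform correctly ``for free.'' This is why the figure accompanying the proof (analogous to the ones in \cref{lemma:hadamard-square}) only needs to depict the $X$- and $Z$-connected components of the gauge part.

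The main obstacle I expect is purely bookkeeping rather than conceptual: correctly tracking which physical node survives each rule-A merge and verifying that, after both merges, the boundary map restricted to wire $i$ at time $t$ matches the conjugation action $P \mapsto U'(HPH)U'^\dagger = UPU^\dagger$ rather than picking up a spurious $H$ or dropping one. A secondary subtlety is handling wires $j \neq i$ that are in $\supp U'$ but not in $\supp U$-wait, $\supp U = \supp U'$ since $U = U'H_i$ only adds a single-qubit gate on a wire already in $\supp U'$ (or, if $i \notin \supp U'$, then wire $i$ has only the trivial intermediate nodes and the two Hadamard gauge operators, which rule A still cleanly removes). In all sub-cases the reduction goes through; the cleanest presentation is probably to just draw the before/after gauge complexes as in \cref{lemma:hadamard-square} and note that the two weight-two Hadamard gauge nodes are removed by rule A, leaving the gauge complex of $U$.

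\begin{proof}
This follows by applying reduction rule A to the partial chain complex of the two-timestep circuit containing $H_i$ followed by $U'$. Label the columns of this circuit element $t$ (before $H_i$), $t+1$ (between $H_i$ and $U'$), and $t+2$ (after $U'$). The gauge operators generated by $H_i$ are the weight-two gate propagation operators $X_{i,t}Z_{i,t+1}$ and $Z_{i,t}X_{i,t+1}$, and those generated by $U'$ are $P_{j,t+1}(U'PU'^\dagger)_{j,t+2}$ for $P \in \{X,Z\}$ and $j \in \supp U'$. Applying rule A (which is a fault-tolerant map by \cref{theorem:rule-a}) to the gauge node $X_{i,t}Z_{i,t+1}$ deletes it and merges $X_{i,t}$ with $Z_{i,t+1}$; applying rule A again to $Z_{i,t}X_{i,t+1}$ deletes it and merges $Z_{i,t}$ with $X_{i,t+1}$. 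After these two merges the column-$(t+1)$ nodes on wire $i$ have been identified with the column-$t$ nodes on wire $i$ with the roles of $X$ and $Z$ exchanged, so that composing the $U'$ gate propagation operators with this identification yields exactly $P_{j,t}(U'(HPH)U'^\dagger)_{j,t+2} = P_{j,t}(UPU^\dagger)_{j,t+2}$, which are precisely the gate propagation operators of the single gate $U$. Hence the reduced gauge complex of the $H_i;U'$ circuit coincides with the gauge complex of $U$.

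It remains to check that the detectors also transform correctly. As noted above, applying rules A and B to a partial chain complex never changes the support of any detector on the input and output nodes, so any kernel element of the gauge group arising from a spackle or backle of the full circuit still arises from the same spackle or backle after the transformation. Therefore showing that the gauge part transforms correctly is sufficient, and the spacetime complexes of the two circuits are equivalent; by the definition of circuit equivalence in \cref{sec:spacetime-subsystem-code}, the two circuits are equivalent. The case where $H_i$ is applied after $U'$ is handled by the same argument applied to the reversed circuit, using that backles play the role of spackles under time reversal.
\end{proof}
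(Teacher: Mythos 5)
Your approach is the same as the paper's: starting from the right-hand circuit, contract the intermediate error layer between $H_i$ and $U'$ using rule A. The one omission is that your written proof only applies rule A to the two Hadamard propagation operators on wire $i$. Every other wire $j \neq i$ in $\supp U'$ also carries a column of error nodes at the intermediate timestep $t+1$, connected to timestep $t$ by the weight-two identity propagation operators $X_{j,t}X_{j,t+1}$ and $Z_{j,t}Z_{j,t+1}$; these must likewise be contracted by rule A (the paper's proof applies rule A ``between the first and second timesteps of every wire''). Without that step the reduced complex still contains error locations between the two gates on wires $j \neq i$, so it is not yet the complex of the single boxed gate $U$. The fix is immediate and uses exactly the same mechanism, so the argument is otherwise sound; your observations that the detectors transform correctly because rules A and B preserve their support on input/output nodes, and that the $H_i$-after-$U'$ case follows by time reversal, are consistent with the paper's framework.
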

\begin{proof}
    Starting from the circuit element on the right-hand side, we apply rule A between the first and second timesteps of every wire. This gives us the circuit element of the left-hand side.
\end{proof}

\begin{lemma} \label{lemma:h-cz-sandwich}
    Any gate made of Hadamard gates sandwiched between two networks of \texttt{CZ}s can be compiled into a single \texttt{CZ} network as follows:
    \begin{align}
        \tikzsetnextfilename{sec5-h-cz-sandwich-1}
        \begin{quantikz}[row sep=6pt]
            & \gatebox{6}{5} \qw     & \mygate{6}{CZ_A} & \mygate{1}{H} & \mygate{6}{CZ_B} &        & \\
            & \setwiretype{n} \vdots &                  & \vdots        &                  & \vdots & \\
            &                        &                  & \mygate{1}{H} &                  &        & \\
            &                        &                  &               &                  &        & \\
            & \setwiretype{n} \vdots &                  & \vdots        &                  & \vdots & \\
            &                        &                  &               &                  &        &
        \end{quantikz}
        \sim
        \tikzsetnextfilename{sec5-h-cz-sandwich-2}
        \begin{quantikz}[row sep=5pt]
            \lstick{$\ket{+}$} & \gatebox{9}{6} \qw     &                  & \ctrl{3}  & \ \ldots \ &           & \mygate{3}{CZ_B} &            \\
                               & \setwiretype{n} \vdots &                  &           & \ \ldots \ &           &                  & \vdots     \\
            \lstick{$\ket{+}$} &                        &                  &           & \ \ldots \ & \ctrl{3}  &                  &            \\
                               &                        & \mygate{6}{CZ_A} & \ctrl{-3} & \ \ldots \ &           &                  & \meterD{X} \\
                               & \setwiretype{n} \vdots &                  & \vdots    & \ \ldots \ &           &                  & \vdots     \\
                               &                        &                  &           & \ \ldots \ & \ctrl{-3} &                  & \meterD{X} \\
                               &                        &                  &           & \ \ldots \ &           & \mygate{3}{CZ_B} &            \\
                               & \setwiretype{n} \vdots &                  & \vdots    & \ \ldots \ &           &                  & \vdots     \\
                               &                        &                  &           & \ \ldots \ &           &                  &
        \end{quantikz}
    \end{align}
\end{lemma}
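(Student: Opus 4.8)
The plan is to realize the right-hand circuit as the image of the left-hand one under a composition of the reduction rules A and B, applied to the partial chain complexes of the circuit elements, so that \cref{theorem:rule-a} and \cref{theorem:rule-b} furnish the fault-tolerant equivalence. The conceptual content is the teleportation identity \cref{eq:teleportation} specialized to $\phi=0$: each Hadamard on a top wire is implemented by adjoining a fresh $\ket{+}$ ancilla, a \texttt{CZ} between that wire and the ancilla, and an $X$-basis measurement of the wire, with the Pauli byproduct pushed to the end (harmless in the Clifford setting). Once every Hadamard has been replaced this way, the gate $CZ_A$, the teleportation \texttt{CZ}s, and the gate $CZ_B$ all mutually commute, so they can be gathered into a single \texttt{CZ} network, and the $X$-measurements---which act only on the now-terminal original top wires---can be moved past $CZ_B$ to the end. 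What remains is precisely the right-hand side, whose \texttt{CZ} network carries $CZ_A$ on the original wires, the teleportation links, and $CZ_B$ on the ancilla/bottom block. Throughout, I would invoke the observation from the start of \cref{sec:from-spacetime-codes-to-cluster-states} that rules A and B never change the support of detectors on input and output nodes, so it suffices to check that the \emph{gauge} parts of the partial complexes transform correctly, and detector nodes need not be drawn.

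The first and central step is to reduce to a single Hadamard and show that the partial chain complex of a lone Hadamard on a wire $i$ is fault-tolerantly equivalent to that of its teleportation gadget (fresh $\ket{+}$ ancilla $a$, a \texttt{CZ} between $i$ and $a$, $X$-measurement of $i$), where the gadget's output nodes sit on $a$ and are relabeled to $i$ afterwards. I would enumerate the gauge nodes of the gadget: the weight-one node $X_{a,1}$ from the $\ket{+}$ preparation; the four \texttt{CZ} gate-propagation nodes $X_{i,1}X_{i,2}Z_{a,2}$, $Z_{i,1}Z_{i,2}$, $X_{a,1}Z_{i,2}X_{a,2}$, $Z_{a,1}Z_{a,2}$; the two measurement-propagation nodes $X_{i,3}$ and $X_{i,2}X_{i,3}$; and the identity-padding nodes $X_{a,2}X_{a,3}$, $Z_{a,2}Z_{a,3}$ on $a$. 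I would then apply rule B to the weight-one nodes as they arise (first $X_{a,1}$ and $X_{i,3}$; the node $X_{i,2}X_{i,3}$ then becomes the weight-one $X_{i,2}$, which rule B also removes) and rule A to the remaining weight-two nodes in order to identify the $Z$-type nodes of $i$ and of $a$ through the timesteps and collapse the padding; the surviving gauge operators are the two crossed weight-two operators $X_{i,\mathrm{in}}Z_{i,\mathrm{out}}$ and $Z_{i,\mathrm{in}}X_{i,\mathrm{out}}$, which are exactly the gate-propagation operators of a bare Hadamard. I would present this computation as a figure in the style of those accompanying \cref{lemma:hadamard-square} and \cref{lemma:pushing-h-away}. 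Applying this equivalence to every Hadamard in the top layer transforms the left-hand circuit into the circuit obtained by teleportation substitution.

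The remaining step is the merging. After substitution we have $CZ_A$ on the original wires at one timestep, the teleportation links at the next, the $X$-measurements of the top wires, and $CZ_B$ on the ancilla/bottom block; I move the measurements past $CZ_B$ and then merge the three \texttt{CZ} layers into one network. This is again carried out by rule A: between two consecutive \texttt{CZ} layers every $Z$-type gate-propagation operator has weight two, because a \texttt{CZ} network fixes all $Z$ operators, so rule A identifies every $Z$-type error node across the intervening timesteps, and---together with the rule-B cleanup of the $X$-type nodes already done in the single-Hadamard step---this reproduces the collapse used in the proof of \cref{lemma:pushing-h-away}. The outcome is the right-hand MBQC circuit, establishing the claimed equivalence.

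The hard part will be the single-Hadamard step: the partial-complex bookkeeping for the gadget. The delicacy is that the data wire changes identity across the gadget---from $i$ to the ancilla $a$---so one must be meticulous that the input nodes (on $i$) and the output nodes (on $a$, relabeled to $i$) are matched correctly, that the $X$-type output node of $a$ survives while the $X$-type post-measurement node of $i$ is the one deleted, and that the $Z$ error node on the measured wire (which carries no gauge operator in the partial complex but becomes attached to detectors once the gadget is embedded) is handled consistently and plays no role in the gauge-part comparison. One must also check that each rule application is admissible in the sense of \cref{theorem:rule-a} and \cref{theorem:rule-b}: every rule-B step must target a genuinely internal error node (no attached detector---automatic from the chain-complex condition---and not itself an input or output node), and every rule-A merge must join two error nodes of the same external status, with the merged node correctly inheriting any output status. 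Once this single-wire equivalence is pinned down, the substitution across all top wires and the merging of \texttt{CZ} layers are routine.
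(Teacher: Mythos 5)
Your single-Hadamard step is sound---it is essentially the Proposition of \cref{sec:mbqc-compilation-h-hs}---but the overall strategy has a genuine gap, in two places. First, the Hadamards in this lemma are not standalone circuit elements: they sit inside one blue box together with $CZ_A$ and $CZ_B$, so there are no fault locations immediately before or after them. To substitute each Hadamard by its teleportation gadget you would first have to split the monolithic gate into three separately-boxed layers $CZ_A$ / Hadamards / $CZ_B$ with error nodes in between, and that split circuit is \emph{not} equivalent to the left-hand side: an $X$ fault on a wire between $CZ_A$ and $CZ_B$ gives rise to a new weight-one error class whose effect requires weight at least two in the monolithic gate (this is exactly the \texttt{SWAP}-versus-\texttt{CNOT} phenomenon of \cref{fig:swap-example}). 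Second, your final merge of the three \texttt{CZ} layers into one network is precisely the operation shown to fail at the start of \cref{sec:merge-cz-networks}. Unitary commutation of the \texttt{CZ}s is irrelevant to spacetime-code equivalence; what matters is whether the intermediate error nodes can be removed by rules A and B. Rule A does identify the $Z$-type nodes across layers (weight-two propagators), but the intermediate $X$-type node on any wire that passes through two consecutive \texttt{CZ} layers without being initialized or measured in between sits on gauge operators of the form $X_{i,t}X_{i,t+1}Z(\cdot)$ of weight greater than two, so neither rule applies and the node survives---and such wires exist here, namely every non-Hadamard wire in the support of both $CZ_A$ and $CZ_B$, as well as each teleportation link. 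Your appeal to ``the rule-B cleanup already done in the single-Hadamard step'' does not reach these nodes, since rule B only removes nodes attached to weight-one (preparation/measurement) gauge operators.

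The reason the paper's proof is a single global computation rather than a chain of local substitutions is exactly that the intermediate circuits you would pass through are not in the equivalence class. The paper writes down the end-to-end gauge operators of both monolithic gates---note the cross terms $B(A_i \odot h)$ in $g^X_i$, which record that an $X$ on a Hadamard wire deposits $Z$s on \emph{other} Hadamard wires via $CZ_A$, which become $X$s and then spread through $CZ_B$---and then connects the two complexes using rules A and B together with a change of basis on the gauge generators ($\tilde g^X_i \leftarrow \tilde g^X_i + \sum_j \tilde g^X_j$) and a relabeling isomorphism. That change-of-basis step, which reproduces the cross terms, is the essential content your decomposition misses; to repair your argument you would have to perform the substitution and merge simultaneously at the level of the two monolithic complexes, which collapses back to the paper's proof.
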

\begin{proof}
    Let us begin by setting up some notation. Let $n$ be the number of wires of the circuit on the left, and $a$ the number of wires containing a Hadamard.
    Without loss of generality, we assume that the Hadamard gates lie on the first $a$ rows of the circuit, as in the circuit above.
    Let $A$ ($B$) the adjacency matrix associated to the first (second) \texttt{CZ} network.
    Let $e_i \in \mathbb{Z}_2^n$ be the vector containing a $1$ at position $i$ and $0$ everywhere else, for any $i \in \{1,\ldots,n\}$.
    For any matrix $M$, we write $M_i := Me_i$. We use $\odot$ for the element-wise multiplication of vectors.
    Given the chain complex of a circuit, a column $j$ of that circuit, and a binary vector $v$, we write $\mathcal{X}_j(v)$ ($\mathcal{Z}_j(v)$) for the set of all $X$ ($Z$) nodes of the chain complex on column $j$ and row $i$ for all $i$ such that $v_i=1$.

    We can now characterize the chain complex associated to the circuit on the left.
    Let $\mathcal{H}=\{1,\ldots,a\}$ the set of rows containing a Hadamard, and $h$ the indicator vector of the set, i.e.\ $h_i=1$ if and only if $i \in \mathcal{H}$. We use the notation $\overline{\mathcal{H}}=\{a,\ldots,n\}$ and $\overline{h}$ for the rows that do not contain a Hadamard.

    For every row $i \in \{1,\ldots,n\}$, we have two gauge operators, one for the propagation of $X$ and one for the propagation of $Z$, that we denote $g^X_i$ and $g^Z_i$ respectively. If $i \in \mathcal{H}$, the $X$ operator on row $i$ propagates into $Z$'s after the gate through different pathways: through the Hadamard on that wire, through the \texttt{CZ} gates in $CZ_A$ connected to the wire $i$ and ending on a wire $j \in \overline{\mathcal{H}}$, and through those ending on a wire $j \in \mathcal{H}$, propagating to an $X$ with the Hadamard, and onto $Z$s through the gates in $CZ_B$.
    Through this last process, the $X$ operator on row $i$ also propagates into $X$'s after the gate on the wires $j \in \mathcal H$ connected to $i$ through $CZ_A$.
    Using vector notation, the resulting gauge operator can therefore be written as
    \begin{align}
        g^X_i = \mathcal{X}_1(e_i) + \mathcal{X}_2(A_i \odot h) + \mathcal{Z}_2 \left(e_i + A_i \odot \overline{h} + B(A_i \odot h)\right).
    \end{align}
    Similarly, the $Z$ operator on row $i$ propagates onto an $X$ on that wire, and onto $Z$s through the second \texttt{CZ} network, resulting in the following gauge operator:
    \begin{align}
        g^Z_i = \mathcal{Z}_1(e_i) + \mathcal{X}_2(e_i) + \mathcal{Z}_2(B_i).
    \end{align}
    A similar analysis for the wires $i \in \overline{\mathcal{H}}$ gives us the following gauge operators:
    \begin{align}
        g^X_i &= \mathcal{X}_1(e_i) + \mathcal{X}_2(e_i + A_i \odot h) + \mathcal{Z}_2 \left(A_i \odot \overline{h} + B_i + B(A_i \odot h) \right), \\
        g^Z_i &= \mathcal{Z}_1(e_i) + \mathcal{Z}_2(e_i).
    \end{align}

    We now move to the circuit on the right.
    We write $\widetilde{\mathcal{H}}=\{1,\ldots,a\}$ for the set of rows that start with a $\ket{+}$, $\widetilde{\mathcal{H}}'=\{a,\ldots,2a\}$ for the set of rows with a final measurement, and $\widetilde{\overline{\mathcal{H}}}=\{2a,a+n\}$ for the rest of the rows.
    As before, we denote by $\widetilde{h}$, $\widetilde{h}'$ and $\widetilde{\overline{h}}$ their corresponding indicator vectors.
    We denote by $\tilde{g}^X_i$ and $\tilde{g}^Z_i$ the new gauge operators corresponding to the propagation of $X$ and $Z$ on every wire $i$.
    We use $\widetilde{A}$ and $\widetilde{B}$ to denote the adjacency matrices of the \texttt{CZ} networks now extended to all the new wires, and $\{\tilde{e}_i\}_i$ the canonical basis of $\mathbb{Z}_2^{n+a}$.
    We also define a new adjacency matrix $C$ for the new \texttt{CZ} network in the middle.
    To construct the chain complex, we first note that each single-qubit $Z$ operator before the gate propagates into a single $Z$ operator on that same wire, forming a weight-2 gauge operator, i.e.\ for every $i \in \{1,\ldots,n+a\}$ we have
    \begin{align}
        \tilde{g}^Z_i = \mathcal{Z}_1(\tilde{e}_i) + \mathcal{Z}_2(\tilde{e}_i).
    \end{align}
    We also have single-qubit gauge operators on each wire $i \in \widetilde{\mathcal{H}} \cup \widetilde{\overline{\mathcal{H}}}$, corresponding either to the initial state or to the final measurement, and connected to an $X$-type node.

    The rest of the gauge operators correspond to the propagation of $X$ on each wire.
    For every wire $i$, we have
    \begin{align} \label{eq:g-tilde-x-1}
        \tilde{g}^X_i &=
        \left\{
            \begin {aligned}
                 & \mathcal{X}_1(\tilde{e}_i) + \mathcal{X}_2(\tilde{e}_i) + \mathcal{Z}_2 \left(C\tilde{e}_i + \widetilde{B}_i)\right) \quad & i \in \widetilde{\mathcal{H}}, \\
                 & \mathcal{X}_1(\tilde{e}_i) + \mathcal{X}_2(\tilde{e}_i) + \mathcal{Z}_2 \left(C\tilde{e}_i + \widetilde{A}_i\right) & i \in \widetilde{\mathcal{H}}', \\
                 & \mathcal{X}_1(\tilde{e}_i) + X_2(\tilde{e}_i) + \mathcal{Z}_2\left(\widetilde{A}_i + \widetilde{B}_i\right) & i \in \widetilde{\overline{\mathcal{H}}}.
            \end{aligned}
        \right.
    \end{align}
    We now transform this chain complex into the chain complex of the original circuit. We start by contracting the weight-1 gauge operators using rule B, eliminating the nodes $X_1(\tilde{e}_i)$ for $i \in \mathcal{H}$ and $X_2(\tilde{e}_i)$ for $i \in \mathcal{H}'$. We also contract the weight-2 gauge operators $\tilde{g}^Z_i$ for all $i$, thereby replacing each $\mathcal{Z}_2(\tilde{e}_i)$ by $\mathcal{Z}_1(\tilde{e}_i)$. The gauge operators now become
    \begin{align} \label{eq:g-tilde-x-2}
        \tilde{g}^X_i &=
        \left\{
            \begin {aligned}
                 & \mathcal{X}_2(\tilde{e}_i) + \mathcal{Z}_1 \left(C\tilde{e}_i + \widetilde{B}_i \right) \quad & i \in \widetilde{\mathcal{H}}, \\
                 & \mathcal{X}_1(\tilde{e}_i) + \mathcal{Z}_1 \left(C\tilde{e}_i + \widetilde{A}_i \right) & i \in \widetilde{\mathcal{H}}', \\
                 & \mathcal{X}_1(\tilde{e}_i) + X_2(\tilde{e}_i) + \mathcal{Z}_1\left(\widetilde{A}_i + \widetilde{B}_i\right) & i \in \widetilde{\overline{\mathcal{H}}}.
            \end{aligned}
        \right.
    \end{align}
    We then relabel $Z$ nodes to $\mathcal{Z}_1$ if they are in $\widetilde{\mathcal{H}}'$ and $\mathcal{Z}_2$ if they are in $\widetilde{\mathcal{H}}$ or $\widetilde{\overline{\mathcal{H}}}$.
    With this new labeling, \cref{eq:g-tilde-x-2} becomes
    \begin{align} \label{eq:g-tilde-x-3}
        \tilde{g}^X_i &=
        \left\{
            \begin {aligned}
                 & \mathcal{X}_2(\tilde{e}_i) + \mathcal{Z}_1(C\tilde{e}_i) + \mathcal{Z}_2\left(\widetilde{B}_i \right) \quad & i \in \widetilde{\mathcal{H}}, \\
                 & \mathcal{X}_1(\tilde{e}_i) + \mathcal{Z}_1 \left(\widetilde{A}_i \odot \tilde{h}' \right) + \mathcal{Z}_2\left(C\tilde{e}_i + \widetilde{A}_i \odot \widetilde{\overline{h}}\right) & i \in \widetilde{\mathcal{H}}', \\
                 & \mathcal{X}_1(\tilde{e}_i) + X_2(\tilde{e}_i) + \mathcal{Z}_1\left(\widetilde{A}_i \odot \tilde{h}'\right) + \mathcal{Z}_2\left(\widetilde{A}_i \odot \widetilde{\overline{h}} + \widetilde{B}_i\right) & i \in \widetilde{\overline{\mathcal{H}}}.
            \end{aligned}
        \right.
    \end{align}
    The objective is to now remove the connections between $\mathcal{X}_1$ and $\mathcal{Z}_1$ nodes in last two lines of \cref{eq:g-tilde-x-3}.
    For this, we perform the following change-of-basis operation for all $i \in \widetilde{\mathcal{H}}' \cup \widetilde{\overline{\mathcal{H}}}$:
    \begin{align}
        \tilde{g}_i^X \longleftarrow \tilde{g}_i^X + \sum_{j \in \supp{C \left( \widetilde{A}_i \odot \tilde{h}' \right)}}{\tilde{g}_j^X},
    \end{align}
    which cancels all the $\mathcal{Z}_1$ nodes and gives us the following new gauge operators $\tilde{g}'^X_i$:
    \begin{align}
        \tilde{g}'^X_i &=
        \left\{
            \begin {aligned}
                 & \mathcal{X}_2(\tilde{e}_i) + \mathcal{Z}_1(C\tilde{e}_i) + \mathcal{Z}_2\left(\widetilde{B}_i \right) \quad & i \in \widetilde{\mathcal{H}}, \\
                 & \mathcal{X}_1(\tilde{e}_i) + \mathcal{X}_2\left(C \left( \widetilde{A}_i \odot \tilde{h}' \right) \right) + \mathcal{Z}_2\left(C\tilde{e}_i + \widetilde{A}_i \odot \overline{h} + \widetilde{B} \left(C \left( \widetilde{A}_i \odot \tilde{h}' \right) \right) \right) & i \in \widetilde{\mathcal{H}}', \\
                 & \mathcal{X}_1(\tilde{e}_i) + \mathcal{X}_2(\tilde{e}_i) + \mathcal{X}_2\left(C \left( \widetilde{A}_i \odot \tilde{h}' \right) \right) + \mathcal{Z}_2\left(\widetilde{A}_i \odot \overline{h} + \widetilde{B}_i + \widetilde{B} \left(C \left( \widetilde{A}_i \odot \tilde{h}' \right) \right) \right) & i \in \widetilde{\overline{\mathcal{H}}}.
            \end{aligned}
        \right.
    \end{align}
    We notice that $\mathcal{X}_1(\tilde{e}_i)$ and $\mathcal{Z}_1(\tilde{e}_i)$ are only present in a gauge operator for $i \in \widetilde{\mathcal{H}}' \cup \widetilde{\overline{\mathcal{H}}}$, while $\mathcal{X}_2(\tilde{e}_i)$ and $\mathcal{Z}_2(\tilde{e}_i)$ are only present for $i \in \widetilde{\mathcal{H}}' \cup \widetilde{\overline{\mathcal{H}}}$. We can therefore relabel the error nodes as follows:
    \begin{align}
        \mathcal X_1(\tilde e_i) &\mapsto X_1(e_{i-a}) \quad i \in \widetilde H' \cup \widetilde{\overline{\mathcal H}}, \\
        \mathcal Z_1(\tilde e_i) &\mapsto Z_1(e_{i-a}) \quad i \in \widetilde H' \cup \widetilde{\overline{\mathcal H}}, \\
        \mathcal X_2(\tilde e_i) &\mapsto
        \left\{
            \begin {aligned}
                & \mathcal X_2(e_i) \quad & i \in \widetilde{\mathcal H}, \\
                & \mathcal X_2(e_{i-a}) \quad & i \in \widetilde{\overline{\mathcal H}},
            \end{aligned}
        \right. \\
        \mathcal Z_2(\tilde e_i) &\mapsto
        \left\{
            \begin {aligned}
                & \mathcal Z_2(e_i) \quad & i \in \widetilde{\mathcal H}, \\
                & \mathcal Z_2(e_{i-a}) \quad & i \in \widetilde{\overline{\mathcal H}}.
            \end{aligned}
        \right.
    \end{align}
    This map is an isomorphism in the middle space of the chain complex and therefore trivially induces a fault-tolerant map. Under this transformation, we observe that $\tilde{g}'^X_i$ maps to $g_i^Z$ for $i \in \{1,\ldots,a\}$, and to $g_{i-a}^X$ for $i \in \{a,\ldots,n+a\}$. By introducing new error nodes $\mathcal{Z}_1(e_i)$ for $i \in \{a,\ldots,n+a\}$, and weight-2 gauge operators that map them to $\mathcal{Z}_2(e_i)$, we recover all the gauge operators of the original chain complex.
\end{proof}

\subsection{Compilation of the circuit into \texttt{H}, \texttt{HS} and \texttt{CZ} networks}
\label{sec:pre-compilation}

In this first step of the mapping, we show that all the gates of the circuit can be compiled into \texttt{H}, \texttt{HS} and \texttt{CZ} networks, while remaining in the same equivalence class of chain complexes. We begin by compiling all the single-qubit gates.

\begin{proposition}
    All the single-qubit gates of the circuit can be compiled into $\texttt{H}$ and $\texttt{HS}$ without changing the equivalence class of the circuit
\end{proposition}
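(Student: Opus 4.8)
The plan is to reduce the statement, gate by gate, to a question about a single single-qubit Clifford, and then settle that using \cref{lemma:pushing-h-away} together with the observation that Pauli gates are ``free'' in the spacetime complex. I would begin with a group-theoretic normal form. Modulo Pauli operators and global phase the single-qubit Clifford group is isomorphic to $Sp(2,\mathbb{F}_2)=GL(2,\mathbb{F}_2)\cong S_3$; under this identification $H$ becomes the transposition exchanging the $X$- and $Z$-coordinates and $HS$ becomes an order-three element. A direct computation shows that, writing a sequence of gates for the circuit that applies them from left to right, the six circuits
\begin{align*}
    \mathrm{(identity)},\qquad H,\qquad HS,\qquad H\,HS,\qquad HS\,H,\qquad H\,HS\,H
\end{align*}
realize the six distinct elements of $GL(2,\mathbb{F}_2)$. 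Hence every single-qubit Clifford $U$ has the same symplectic matrix as exactly one of these six circuits $W$, so that $U$ equals, up to a global phase, a Pauli times the unitary implemented by $W$; note in particular that $W$ never contains two consecutive $HS$ gates.

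Next I would record two elementary equivalences. First, a Pauli gate $P$ on wire $i$ between times $t$ and $t+1$ contributes exactly the two weight-two gate propagation operators $X_{i,t}X_{i,t+1}$ and $Z_{i,t}Z_{i,t+1}$ (since $PXP^\dagger=\pm X$, $PZP^\dagger=\pm Z$, and signs do not enter the complex); applying rule A (\cref{theorem:rule-a}) to each merges $X_{i,t}$ with $X_{i,t+1}$ and $Z_{i,t}$ with $Z_{i,t+1}$, so the partial complex of a Pauli gate is equivalent to that of the bare identity wire, and a Pauli gate may be deleted from the circuit without leaving the equivalence class (it amounts to a purely classical Pauli-frame update). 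Second, \cref{lemma:pushing-h-away} is precisely the statement that a Hadamard adjacent to any single-qubit gate $V$ can be absorbed into it: the circuit ``$V$ then $H$'' is equivalent to the single gate $HV$, and ``$H$ then $V$'' to the single gate $VH$ (and conversely such a gate can be split back apart). I would also use the fact, established in the preamble to this section, that equivalences of the partial complex of a circuit element that leave its input and output nodes intact lift to equivalences of the full circuit.

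To assemble the proof I would replace the single gate $U$ by the circuit $W$ immediately followed by the residual Pauli $P$, delete $P$ using the first fact, and then collapse the $W$-sequence back to a single gate: since $W$ has no two consecutive $HS$ gates, every Hadamard appearing in it is adjacent (within $W$) to an $HS$ gate or to an end of the word, so repeated application of \cref{lemma:pushing-h-away} absorbs each Hadamard into its neighbour until one gate remains, whose symplectic matrix equals $M_U$ and whose partial spacetime complex therefore coincides with that of $U$ (the partial complex of a single-qubit gate depends only on its symplectic matrix, signs being irrelevant). Since all these moves leave the input and output nodes of the partial complex untouched, the original circuit and the circuit with $U$ replaced by its $H/HS$ decomposition are equivalent; applying this one gate at a time and using transitivity of chain-complex equivalence (composition of fault-tolerant maps is again fault-tolerant) handles all single-qubit gates at once.

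The main obstacle I anticipate is the normalization step: \cref{lemma:pushing-h-away} only moves Hadamards, not $HS$ gates, so one must make sure the decomposition of $U$ never places two $HS$ gates next to each other. This is exactly what the six-word list above guarantees; equivalently, it amounts to the rewrite $(HS)^2 = P'\cdot H\,HS\,H$, valid up to a global phase because both sides have the same symplectic matrix. A reader who prefers a route that does not pass through the normal form could instead prove directly that any two single-qubit gates applied at consecutive time steps have partial complex equivalent to their product as a single gate, by eliminating the intermediate time slice with two applications of rule A (after a change of basis in the gauge space $C_2$, which is itself a fault-tolerant map); but this requires a short finite case analysis to handle the weight-three gate propagation operators produced by $HS$ (which conjugates $X$ to $Y$), a complication the normal-form argument avoids.
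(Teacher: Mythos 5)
Your proof is correct, but it takes a genuinely different route from the paper's. The paper does not pass through a normal form for the single-qubit Clifford group: it implicitly takes the single-qubit gates to be presented as words in $\{H,S,S^\dagger,\text{Pauli}\}$, deletes the Paulis on the grounds that signs are modded out of the spacetime complex, rewrites $S^\dagger=SZ$, and then handles each remaining $S$ locally---merging it with an immediately following $H$ into an $HS$ via rule A, or, if no such $H$ is present, inserting a pair of Hadamards (\cref{lemma:hadamard-square}) and merging one of them. You instead classify single-qubit Cliffords modulo Paulis by their image in $GL(2,\mathbb{F}_2)\cong S_3$, exhibit six explicit $H$/$HS$ words realizing all six symplectic matrices (your list checks out, and the word for $S$ agrees with the paper's output $HS\,H$), and collapse each word back to a single gate with \cref{lemma:pushing-h-away}, using the fact that the gate propagation operators of a single-qubit gate depend only on its symplectic matrix. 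What your version buys is generality and completeness: it genuinely handles an arbitrary single-qubit Clifford rather than one already decomposed into standard generators, and it makes the "Paulis are free" step explicit inside the chain-complex formalism (via the identical weight-two propagation operators) rather than by appeal to sign conventions. What the paper's version buys is locality: it leaves all pre-existing $H$ gates where they are and only touches $S$ gates, which keeps the bookkeeping for the later Hadamard-separation and \texttt{CZ}-merging steps simpler. Two small points of hygiene in your write-up: "delete the Pauli from the circuit" should be read as "replace it by an identity gate" (outright deletion would remove a time slice and hence nodes of the complex), and the logical order should be stated as "the word $W$ collapses to a gate with the same partial complex as $U$, hence the two circuits are equivalent," rather than "replace $U$ by $W$"---the replacement is the conclusion, not a move you are entitled to a priori.
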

\begin{proof}
    We first note that any Pauli gates can be removed from the circuit without changing the corresponding chain complex.
    Indeed, since the circuit is Clifford, Pauli operators commute through it, resulting in either sign changes in the measurements or output Pauli operators.
    In both cases, the detectors and gauge operators only change up to signs, which are modded out in our construction.

    As a consequence, any $\texttt{S}^\dag$ gate can be turned into an $\texttt{S}$ gate, since $S^\dag=SZ$.
    We therefore just need to show that any $\texttt{S}$ gate can be compiled into $\texttt{HS}$ gates.
    For every $\texttt{S}$ gate of the circuit we distinguish between two cases. In the first case, it is immediately followed by a Hadamard. In this case, we merge $\texttt{S}$ and $\texttt{H}$ into an $\texttt{HS}$ gate using rule A:
    \begin{center}
        \begin{tabular}{>{\centering\arraybackslash}m{3cm} >{\centering\arraybackslash}m{6cm}}
            \tikzsetnextfilename{sec5-s-and-h}
            \begin{quantikz}
                & \gate{S} & \gate{H} &
            \end{quantikz}
             &
             \includegraphics[width=\linewidth]{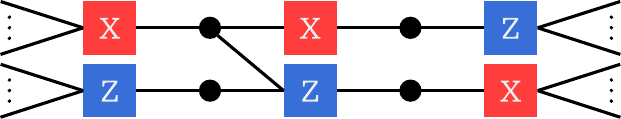}
            \\
            \hline
            \tikzsetnextfilename{sec5-hs}
            \begin{quantikz}
                & \gate{HS} &
            \end{quantikz}
            &
            \addstackgap[15pt]{
                \includegraphics[width=0.7\linewidth]{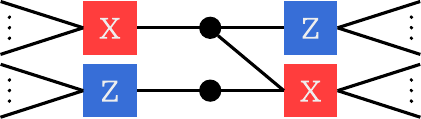}
            }
            \\
        \end{tabular}
    \end{center}
    In the second case, we use \cref{lemma:hadamard-square} to append two Hadamard gates after the \texttt{H} and use the first case to merge one of the Hadamard gates with the \texttt{S}, as illustrated by the following chain of equivalences:
    \begin{align}
        \tikzsetnextfilename{sec5-s-gate}
        \begin{quantikz}
            & \gate{S} &
        \end{quantikz}
        \sim
        \tikzsetnextfilename{sec5-s-with-double-h}
        \begin{quantikz}
            & \gate{S} & \gate{H} & \gate{H} &
        \end{quantikz}
        \sim
        \tikzsetnextfilename{sec5-hs-with-h}
        \begin{quantikz}
            & \gate{HS} & \gate{H} &
        \end{quantikz}
    \end{align}
\end{proof}

The next step is to show that multi-qubit gates can be compiled into \texttt{CZ} networks and Hadamard gates.

\begin{proposition}
    Any controlled Pauli gate can be compiled into a \texttt{CZ} network followed and preceded by some Hadamard gates.
\end{proposition}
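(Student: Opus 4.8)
The plan is to strip the controlled Pauli gate down to a controlled $Z^{\otimes m}$ by conjugating the target qubits with single-qubit Cliffords, to recognise the latter as an honest \texttt{CZ} network, and then to reabsorb the conjugators into the circuit using rule A, in the spirit of \cref{lemma:pushing-h-away}. Write the gate as $C_c(P)$, with a single control qubit $c$ and target Pauli $P = P_1 \otimes \cdots \otimes P_m$, each $P_j \in \{X,Y,Z\}$ (a $Y_j$ or $Z_j$ factor can occur even though the gate is ``controlled-$X$-like'', so all three are allowed). For each target $j$ I would pick a single-qubit Clifford $V_j$ with $V_j Z V_j^{\dagger} = P_j$: take $V_j = I$ when $P_j = Z$, $V_j = \texttt{H}$ when $P_j = X$, and $V_j = \texttt{HSH}$ when $P_j = Y$ (indeed $\texttt{HSH}\,Z\,(\texttt{HSH})^{\dagger} = \texttt{HS}\,X\,\texttt{S}^\dagger\texttt{H} = \texttt{H}\,Y\,\texttt{H} = -Y$, and overall signs are irrelevant since the spacetime code is taken modulo phases). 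Setting $V = \bigotimes_j V_j$, acting on the $m$ targets and as the identity on $c$, one has $C_c(P) = V\, C_c\!\left(Z^{\otimes m}\right) V^{\dagger}$.

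Next I would observe that $C_c\!\left(Z^{\otimes m}\right) = \prod_{j=1}^m \texttt{CZ}_{c,j}$: the gates $\texttt{CZ}_{c,j}$ pairwise commute, and each applies $Z_j$ exactly when $c$ is in state $\lvert 1\rangle$, so their product applies $Z_1\cdots Z_m$ conditioned on $c$; since all of these $\texttt{CZ}$s are incident to the single vertex $c$, they constitute a \texttt{CZ} network. Thus, as a unitary, the controlled Pauli gate factorises as a single-qubit layer $V^{\dagger}$, then a \texttt{CZ} network, then a single-qubit layer $V$; when $P$ has no $Y$ factor, $V$ consists only of Hadamards and identities, which is the statement verbatim, and the preceding proposition then compiles $V$ and $V^{\dagger}$ into \texttt{H} and \texttt{HS} gates in general.

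It remains to check that replacing the single circuit time step carrying $C_c(P)$ by this three-step sequence preserves the equivalence class of the spacetime complex. This is not automatic --- compiling a unitary in terms of a different gate set can strictly decrease the fault distance, as the \texttt{SWAP}-versus-three-\texttt{CNOT}s example of \cref{fig:swap-example} illustrates --- so it has to be verified at the chain-complex level, using the diagrammatic method of this section. Concretely, I would compute the partial chain complex of the one-step gate $C_c(P)$ (its gate-propagation gauge operators are read off from conjugating the single-qubit Paulis $X_i, Z_i$ through $C_c(P)$: e.g. $Z_{c,1}Z_{c,2}$, the operator $X_{c,1}\bigl(X_c P_1\cdots P_m\bigr)_2$, and a $P_j$-dependent operator on each target wire) and the partial chain complex of the three-step configuration, and exhibit the sequence of rule A contractions relating them: the identity layers flanking the two \texttt{CZ} layers on the passive wires, and the Hadamard-induced weight-two gauge operators $X_{i,t}Z_{i,t+1}$ and $Z_{i,t}X_{i,t+1}$ on the dressed wires, are each absorbed by rule A (\cref{theorem:rule-a}), after which the surviving gauge operators coincide with those of $C_c(P)$. \cref{lemma:pushing-h-away} is precisely this argument in the special case of a lone Hadamard, and \cref{lemma:hadamard-square} supplies any extra pairs of Hadamards one wishes to insert along the way.

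The main obstacle is exactly this last step, and it is aggravated by the $Y$ factors: since controlled-$Y$ is not a real unitary, the conjugators $V_j$ for $Y_j$ factors are genuinely not Hadamards but \texttt{HS}-type gates, and such gates send a single Pauli to $Y$ (weight two in our error basis), so the weight-two rule A step of \cref{lemma:pushing-h-away} does not apply to them verbatim --- the gate-propagation gauge operators become weight three. Handling the $Y$ factors therefore requires routing them through the same machinery developed elsewhere for \texttt{HS} gates and $Y$ measurements, or equivalently first rewriting a controlled-$Y_j$ target factor into $X$ and $Z$ factors at the cost of additional single-qubit Clifford gates in the ambient circuit (e.g.\ via $C_c(Y_t) = \texttt{S}_c\, C_c(X_t)\, C_c(Z_t)$ up to a phase, then \cref{lemma:h-cz-sandwich} to merge the resulting \texttt{CZ} layers). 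Checking that the rule A contractions really reproduce the controlled-Pauli gauge complex on the nose --- and not merely a complex with the right homology --- is the bookkeeping-heavy core of the argument, but it is entirely mechanical once the partial complexes are written down.
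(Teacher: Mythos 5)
Your primary route---conjugating the single star-shaped \texttt{CZ} network $\prod_j \texttt{CZ}_{c,j}$ by a layer $V=\bigotimes_j V_j$ with $V_jZV_j^\dagger = P_j$---is a genuinely different decomposition from the paper's. The paper instead splits $C_c(P)$ into sequential two-qubit controlled Paulis, writes $\texttt{CX}=H\,\texttt{CZ}\,H$ and $\texttt{CY}=\texttt{CX}\cdot\texttt{CZ}$, pushes the outer Hadamards out with \cref{lemma:pushing-h-away}, and then uses \cref{lemma:h-cz-sandwich} to absorb the Hadamards that remain sandwiched between consecutive \texttt{CZ} layers (these arise exactly from the $Y$ targets) into a single network. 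For $X$ and $Z$ targets your route is arguably cleaner: there is only one \texttt{CZ} layer, no sandwich, no ancillas, and repeated application of \cref{lemma:pushing-h-away} (equivalently, weight-two rule~A contractions) does establish the chain-complex equivalence you correctly insist must be checked.

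The genuine gap is the $Y$ targets, and you have diagnosed it yourself but not closed it. With $V_j=\texttt{HSH}$ the conjugators are not Hadamards, so the proposition as stated (``preceded and followed by \emph{Hadamard} gates'') is not obtained; worse, the gate-propagation gauge operator $X_t Y_{t+1}$ of an $\texttt{HS}$-type gate has weight three in the $X/Z$ basis, so the rule~A absorption that powers \cref{lemma:pushing-h-away} does not apply, and you leave the required equivalence for these conjugators unproven (``requires routing them through the same machinery developed elsewhere''). Your fallback identity $C_c(Y_t)=\texttt{S}_c\,C_c(X_t)\,C_c(Z_t)$, followed by $\texttt{CX}=H\,\texttt{CZ}\,H$ and \cref{lemma:h-cz-sandwich} to merge the two resulting \texttt{CZ} layers, is precisely the paper's proof (the residual $\texttt{S}_c$ can be dropped since it changes Pauli propagation only by phases). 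So the proposal is salvageable, but only because it retreats to the paper's route exactly where its own route fails; as written, the $Y$ case is asserted rather than proved.
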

\begin{proof}
We first note that any controlled Pauli gate can be written as
\begin{align}
    \tikzsetnextfilename{sec5-ctrl-pauli-decomposition-1}
    \begin{quantikz}[row sep=9pt]
        &  & \gate[9, disable auto height]{
            \begin{array}{ccccccccc}
                Y\\ \vdots \\ Y \\ X \\ \vdots \\ X \\ Z \\ \vdots \\ Z
            \end{array}
        } &  &  \\
        & \setwiretype{n} \vdots     &           & \vdots     &  \\
        &                            &           &            &  \\
        &                            &           &            &  \\
        & \setwiretype{n} \vdots     &           & \vdots     & \\
        &                            &           &            &  \\
        &                            &           &            &  \\
        & \setwiretype{n} \vdots     &           & \vdots     & \\
        &                            &           &            &  \\
        &                            & \ctrl{-1} &            &
    \end{quantikz}
    \sim
    \tikzsetnextfilename{sec5-ctrl-pauli-decomposition-2}
    \begin{quantikz}
        & \gatebox{7}{9} \qw   & \ \ldots \ &           &           & \ \ldots \ &           &                             & \ \ldots \ & \mygate{1}{Y}  &  \\
        &                      & \ \ldots \ &           &           & \ \ldots \ &           & \mygate{1}{Y} & \ \ldots \  &            &                &  \\
        &                      & \ \ldots \ &           &           & \ \ldots \ & \targ{}   &                             & \ \ldots \ &                &  \\
        &                      & \ \ldots \ &           & \targ{}   & \ \ldots \ &           &                             & \ \ldots \ &                &  \\
        &                      & \ \ldots \ & \ctrl{2}  &           & \ \ldots \ &           &                             & \ \ldots \ &                &  \\
        & \ctrl{1}             & \ \ldots \ &           &           & \ \ldots \ &           &                             & \ \ldots \ &                &  \\
        & \ctrl{-1}            & \ \ldots \ & \ctrl{-2} & \ctrl{-3} & \ \ldots \ & \ctrl{-4} & \ctrl{-5}                   & \ \ldots \ & \ctrl{-6}      &
    \end{quantikz}
\end{align}
and both \texttt{CNOT} and \texttt{CY} can be compiled using \texttt{CZ} and \texttt{H} as:
\begin{align}
    \tikzsetnextfilename{sec5-cnot-compilation-1}
    \begin{quantikz}
        & \targ{}   & \\
        & \ctrl{-1} &
    \end{quantikz}
    &=
    \tikzsetnextfilename{sec5-cnot-compilation-2}
    \begin{quantikz}
        & \gate{H} & \ctrl{1}  & \gate{H} & \\
        &          & \ctrl{-1} &          &
    \end{quantikz}
    \\
    \tikzsetnextfilename{sec5-cy-compilation-1}
    \begin{quantikz}
        & \gate{Y}  &  \\
        & \ctrl{-1} &
    \end{quantikz}
    &=
    \tikzsetnextfilename{sec5-cy-compilation-2}
    \begin{quantikz}
        & \targ{}   & \ctrl{1}  & \\
        & \ctrl{-1} & \ctrl{-1} &
    \end{quantikz}
    =
    \tikzsetnextfilename{sec5-cy-compilation-3}
    \begin{quantikz}
        & \gate{H} & \ctrl{1}  & \gate{H} & \ctrl{1}  & \\
        &          & \ctrl{-1} &          & \ctrl{-1} &
    \end{quantikz}
\end{align}
Using the compilations of \texttt{CX} and \texttt{CY} into Hadamard gates and \texttt{CZ}s, as well as  \cref{lemma:pushing-h-away} to push the \texttt{H} gates away from the blue box, we compile the circuit above to the following equivalent circuit:
\begin{center}
    \tikzsetnextfilename{sec5-pushing-h-away-of-ctrl-pauli}
    \begin{quantikz}
        & \gate{H} & \gatebox{7}{13} \qw & \ \ldots \ &           &          & \ \ldots \ &           &                & \ \ldots \ & \ctrl{6}  & \mygate{1}{H} &           & \ \ldots \ & \ctrl{6}   &          & \\
        & \gate{H} &                     & \ \ldots \ &           &           & \ \ldots \ &           & \ctrl{5}       & \ \ldots \ &           & \mygate{1}{H} & \ctrl{5}  & \ \ldots \ &            &          &  \\
        & \gate{H} &                     & \ \ldots \ &           &           & \ \ldots \ & \ctrl{4}  &                & \ \ldots \ &           &               &           & \ \ldots \ &            & \gate{H} &  \\
        & \gate{H} &                     & \ \ldots \ &           & \ctrl{3}  & \ \ldots \ &           &                & \ \ldots \ &           &               &           & \ \ldots \ &            & \gate{H} &  \\
        &          &                     & \ \ldots \ & \ctrl{2}  &           & \ \ldots \ &           &                & \ \ldots \ &           &               &           & \ \ldots \ &            &          &  \\
        &          & \ctrl{1}            & \ \ldots \ &           &           & \ \ldots \ &           &                & \ \ldots \ &           &               &           & \ \ldots \ &            &          &  \\
        &          & \ctrl{-1}           & \ \ldots \ & \ctrl{-2} & \ctrl{-3} & \ \ldots \ & \ctrl{-4} & \ctrl{-5}      & \ \ldots \ & \ctrl{-6} &               & \ctrl{-5} & \ \ldots \ & \ctrl{-6}  &          &
    \end{quantikz}
\end{center}
To get a single \texttt{CZ} network in the middle, we compile the \texttt{H}s into their measurement-based versions, using \cref{lemma:h-cz-sandwich}.
\end{proof}

\subsection{Measurement-based \texttt{H} and \texttt{HS} gates}
\label{sec:mbqc-compilation-h-hs}

The next step is to show that \texttt{H} and $\texttt{HS}$ gates are equivalent to their measurement-based version.

\begin{proposition}
    The two following circuits, corresponding to a Hadamard gate and its measurement-based version, are equivalent:
    \begin{center}
        \tikzsetnextfilename{sec5-h-gate-mbqc-1}
        \begin{quantikz}
            & \gate{H} &
        \end{quantikz}
        $\sim$
        \tikzsetnextfilename{sec5-h-gate-mbqc-2}
        \begin{quantikz}
                               & \ctrl{1}  \gatebox{2}{1}  & \meterD{X} \\
            \lstick{$\ket{+}$} & \ctrl{-1}                 &
        \end{quantikz}
    \end{center}
\end{proposition}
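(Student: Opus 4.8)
The plan is to follow the method of \cref{lemma:hadamard-square,lemma:pushing-h-away,lemma:h-cz-sandwich}: write down the partial chain complexes of the two circuit elements and build a fault-tolerant map between them out of reduction rules A and B. As noted at the start of this section, applying rules A and B to a partial chain complex leaves the support of every (incomplete) detector on the input and output nodes unchanged, so it suffices to track how the error and gauge nodes transform, and we may ignore detector nodes throughout.

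The partial chain complex of the Hadamard circuit element has input error nodes $X_{\mathrm{in}}, Z_{\mathrm{in}}$, output error nodes $X_{\mathrm{out}}, Z_{\mathrm{out}}$ (the information is teleported from the top wire to the bottom wire, so the output nodes lie on the second wire), and two gauge nodes $g^X = X_{\mathrm{in}} + Z_{\mathrm{out}}$ and $g^Z = Z_{\mathrm{in}} + X_{\mathrm{out}}$, coming from the propagations $X \mapsto Z$ and $Z \mapsto X$ through the Hadamard. For the measurement-based circuit I would first append a layer of identity gates after the \texttt{CZ} so that there are three timesteps. Labelling the data wire $1$ and the $\ket{+}$ wire $2$, the gauge nodes are $X_{2,1}$ (from the $\ket{+}$ state), $X_{1,3}$ (from the $X$ measurement), the four \texttt{CZ}-network operators $X_{1,1} + X_{1,2} + Z_{2,2}$, $Z_{1,1} + Z_{1,2}$, $X_{2,1} + X_{2,2} + Z_{1,2}$, $Z_{2,1} + Z_{2,2}$, and the identity-gate operators $X_{i,2} + X_{i,3}$, $Z_{i,2} + Z_{i,3}$ for $i = 1,2$. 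The input nodes are $X_{1,1}, Z_{1,1}$ and the output nodes are $X_{2,3}, Z_{2,3}$.

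The reduction then proceeds in two stages. First, apply rule A to each weight-two gauge node: this merges $Z_{1,1}, Z_{1,2}, Z_{1,3}$ into a single $Z$-node on wire $1$ (keeping the ``input'' label), $Z_{2,1}, Z_{2,2}, Z_{2,3}$ into one $Z$-node on wire $2$ (``output''), $X_{1,2}, X_{1,3}$ into one $X$-node, and $X_{2,2}, X_{2,3}$ into one $X$-node (``output''); after this the surviving \texttt{CZ}-network operators connect $X_{1,1}$ to the merged $Z$-node of wire $2$, and the merged $Z$-node of wire $1$ to the merged $X$-node of wire $2$, while the two weight-one operators are $X_{2,1}$ and the merged $X$-node of wire $1$. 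Second, apply rule B to those two weight-one gauge nodes, which deletes $X_{2,1}$ and the merged $X$-node of wire $1$ together with their gauge nodes. What remains is precisely the Hadamard complex: four error nodes ($X_{1,1}$, the merged $Z$-node of wire $1$, the merged $Z$-node of wire $2$, the merged $X$-node of wire $2$) and two gauge nodes joining $X_{1,1}$ to the merged $Z$-node of wire $2$ and the merged $Z$-node of wire $1$ to the merged $X$-node of wire $2$, with the input/output labels matching $X_{\mathrm{in}}, Z_{\mathrm{in}}, Z_{\mathrm{out}}, X_{\mathrm{out}}$. Composing the $f$-maps of \cref{theorem:rule-a,theorem:rule-b} yields a fault-tolerant map in one direction, and composing the corresponding $h$-maps yields one in the other direction, so the two chain complexes are equivalent.

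The only delicate point, and the main thing to verify carefully, is that each reduction step is legitimate: rule A needs a weight-two $C_2$ node and rule B a weight-one $C_2$ node, and both additionally require (as explained in \cref{sec:rule-a,sec:rule-b}) that no detector touch the nodes being merged or removed, which is automatic from the chain-complex condition. One should also confirm that merging an input (or output) error node with an internal one preserves the input/output bookkeeping, i.e.\ the merged node still represents the unique $X$- or $Z$-error equivalence class carried by that wire, so that the induced map sends input/output nodes to input/output nodes as required for a map between partial chain complexes.
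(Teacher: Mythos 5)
Your proposal is correct and follows essentially the same route as the paper's own proof: write down the two partial chain complexes, apply rule A to the weight-two gauge nodes to merge the $Z$ (and time-$2$/$3$ $X$) error nodes along each wire, then apply rule B to the resulting weight-one gauge nodes, recovering the two-gauge-node complex $\{X_{\mathrm{in}}+Z_{\mathrm{out}},\, Z_{\mathrm{in}}+X_{\mathrm{out}}\}$ of the bare Hadamard. The paper's version is just terser (it works from the drawn diagrams rather than listing the spacetime gauge operators explicitly), and, like you, it also remarks that the statement is a special case of \cref{lemma:h-cz-sandwich} with empty \texttt{CZ} networks.
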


\begin{proof}
    While this theorem could be seen as a special case of \cref{lemma:h-cz-sandwich} where the two \texttt{CZ} networks are empty, we choose to do a more direct proof for clarity here. Here are the circuits and corresponding chain complexes whose equivalence we want to show:
    \begin{center}
        \begin{tabular}{>{\centering\arraybackslash}m{3cm} >{\centering\arraybackslash}m{4cm}}
            \tikzsetnextfilename{sec5-h-gate-mbqc-proof-1}
            \begin{quantikz}
                & \gate{H} &
             \end{quantikz}
             &
             \includegraphics[width=\linewidth]{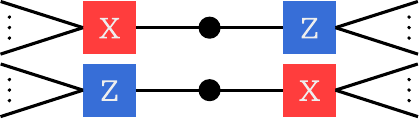}
            \\
            \hline
            \tikzsetnextfilename{sec5-h-gate-mbqc-proof-2}
            \begin{quantikz}
                & \ctrl{1} & \meterD{X} \\
                \lstick{$\ket{+}$} & \ctrl{-1} &
            \end{quantikz}
            &
            \addstackgap[15pt]{
                \includegraphics[width=\linewidth]{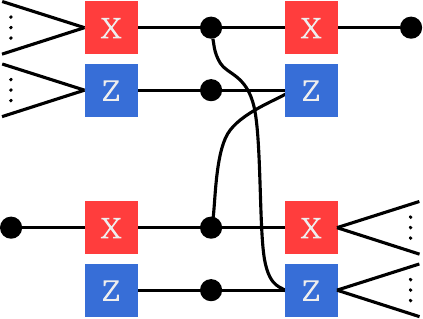}
            }
            \\
        \end{tabular}
    \end{center}
    Starting from the chain complex of the measurement-based Hadamard, we apply rule A on the two weight-2 gauge operators linking the $Z$ nodes, and rule B on the two weight-1 gauge operators. This gives us the following new chain complex:
    \begin{center}
        \includegraphics[width=0.3\linewidth]{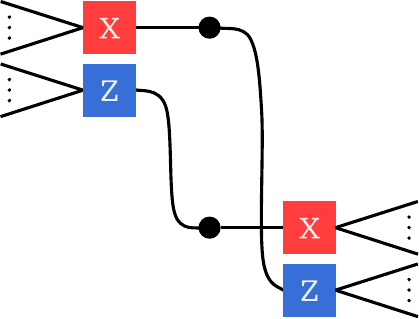}
    \end{center}
    which is the same as the chain complex of the original Hadamard.
\end{proof}

We now move on to the $\texttt{HS}$ gate.

\begin{proposition}
    The two following circuits, corresponding to an \texttt{HS} gate and its measurement-based version, are equivalent:
    \begin{align} \label{eq:hs-mbqc-compilation}
        \tikzsetnextfilename{sec5-hs-gate-mbqc-1}
        \begin{quantikz}
            & \gate{HS} &
        \end{quantikz}
        \sim
        \tikzsetnextfilename{sec5-hs-gate-mbqc-2}
        \begin{quantikz}
                               & \ctrl{1}  & \meterD{Y} \\
            \lstick{$\ket{+}$} & \ctrl{-1} &
        \end{quantikz}
    \end{align}
\end{proposition}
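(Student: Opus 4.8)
The plan is to follow the template of the previous proposition: write the gauge parts of the spacetime chain complexes of the two circuits in \cref{eq:hs-mbqc-compilation} and reduce the measurement-based side to the bare-gate side using rules A and B. Regarding the bare \texttt{HS} gate as a circuit element with input error nodes $X^{\mathrm{in}},Z^{\mathrm{in}}$ and output error nodes $X^{\mathrm{out}},Z^{\mathrm{out}}$, and using that \texttt{HS} sends $X\mapsto Y$ and $Z\mapsto X$ up to phase, its gauge part consists of these four nodes together with the two gate propagation operators $g^X=X^{\mathrm{in}}+X^{\mathrm{out}}+Z^{\mathrm{out}}$ (the pair $X^{\mathrm{out}}+Z^{\mathrm{out}}$ being $Y$ expressed in the $X/Z$ basis) and $g^Z=Z^{\mathrm{in}}+X^{\mathrm{out}}$. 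For the right-hand circuit, the $\ket{+}$ ancilla contributes a weight-one gauge operator on the ancilla's $X$ node before the \texttt{CZ}, the \texttt{CZ} contributes on each wire a weight-two operator identifying the two $Z$ nodes across the \texttt{CZ} together with the usual $X$-propagation operator, and the $Y$ measurement on the input wire contributes the weight-two operator $X'+Z'$, where $X'$ and $Z'$ denote the $X$ and $Z$ nodes of that wire after the \texttt{CZ}.

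The reduction then runs exactly as for the measurement-based Hadamard for most of the operators: rule B removes the weight-one ancilla operator (deleting the ancilla's input $X$ node), and rule A applied to the weight-two $Z$-identifying operators collapses each wire's $Z$ nodes to a single node — which on the input wire still carries $Z^{\mathrm{in}}$ and so remains the input $Z$ node. The one genuinely new move is that the $Y$ measurement leaves a residual weight-two operator $X'+Z^{\mathrm{in}}$ once the $Z$ chain has been merged; applying rule A to it absorbs $X'$ into the merged node $Z^{\mathrm{in}}$. After these moves only the four nodes $X^{\mathrm{in}},Z^{\mathrm{in}},X^{\mathrm{out}},Z^{\mathrm{out}}$ remain, and the two surviving gauge operators are $X^{\mathrm{in}}+Z^{\mathrm{in}}+Z^{\mathrm{out}}$ (the input-wire $X$-propagation, with $X'$ now folded into $Z^{\mathrm{in}}$) and $Z^{\mathrm{in}}+X^{\mathrm{out}}$ (the ancilla-wire $X$-propagation, after rule B removed its term at the first time step). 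Replacing the first of these by the sum of the two — a change of basis of the gauge space of the kind used in the proof of \cref{lemma:h-cz-sandwich} — turns the pair into $\{X^{\mathrm{in}}+X^{\mathrm{out}}+Z^{\mathrm{out}},\,Z^{\mathrm{in}}+X^{\mathrm{out}}\}=\{g^X,g^Z\}$, so the reduced complex coincides, up to a chain isomorphism that is the identity on the error space (hence fault-tolerant), with the gauge part of the bare \texttt{HS} complex. By the observation at the start of \cref{sec:from-spacetime-codes-to-cluster-states}, matching the gauge parts while preserving the input and output nodes forces the detectors to match as well, so the two circuits of \cref{eq:hs-mbqc-compilation} are equivalent. (Alternatively, one may identify the right-hand circuit with the MBQC pattern whose graph is a single edge, with input vertex $1$ measured in the $Y$ basis and output vertex $2$, compute its cluster state complex directly from \cref{def:cluster-state-complex}, check that its $C_0\to C_1$ part is isomorphic to the $\mathcal{G}\to\mathcal{E}$ part of the bare-\texttt{HS} complex, and conclude via \cref{theorem:mbqc-circuit-equivalent-to-cluster-state-complex}.)

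The only step requiring care — and the sole difference from the previous proposition — is the bookkeeping of the $Y$ measurement. Whereas an $X$-basis measurement contributes a weight-one gauge operator that rule B simply deletes, a $Y$-basis measurement contributes the weight-two operator $X'+Z'$, which instead gets consumed by a rule-A merge; this both reshapes the intermediate complex (the merged $Z$ node ends up carrying the extra $X'$ node) and is the reason the two gauge generators emerge in a non-standard basis that must be brought to $\{g^X,g^Z\}$ by the final change of basis. Once this is handled, the rest is the same routine diagram-chasing already carried out for the Hadamard gate.
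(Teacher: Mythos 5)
Your proof is correct and follows essentially the same route as the paper's, which carries out exactly this rule-B/rule-A reduction pictorially (removing the ancilla's initial $X$ node via rule B, collapsing the $Z$ chains via rule A, and absorbing the weight-two $Y$-measurement pair into the merged $Z$ node). The only cosmetic difference is that you make explicit the final change of basis on the gauge space needed to land on the canonical generators $\{g^X, g^Z\}$, a step the paper's diagrams leave implicit.
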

\begin{proof}
    We can compare the chain complexes for the direct and MBQC implementations
    \begin{center}
        \begin{tabular}{>{\centering\arraybackslash}m{3cm} >{\centering\arraybackslash}m{4cm}}
            \tikzsetnextfilename{sec5-hs-gate-mbqc-proof-1}
            \begin{quantikz}
                & \gate{HS} &
            \end{quantikz}
            &
            \includegraphics[width=\linewidth]{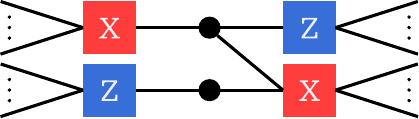}
            \\
            \hline
            \tikzsetnextfilename{sec5-hs-gate-mbqc-proof-2}
            \begin{quantikz}
                & \ctrl{1} & \meterD{Y} \\
                \lstick{$\ket{+}$} & \ctrl{-1} &
            \end{quantikz}
            &
            \addstackgap[15pt]{
                \includegraphics[width=\linewidth]{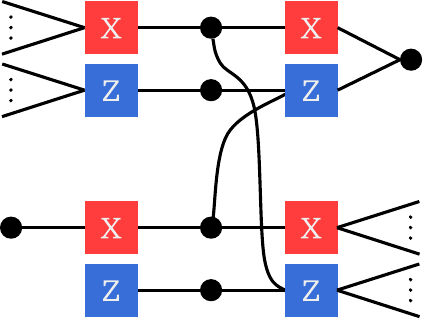}
            }
            \\
        \end{tabular}
    \end{center}
    and use the following sequence of fault-tolerant maps to prove their equivalence:
    \begin{center}
        \begin{tabular}{>{\centering\arraybackslash}m{4cm} >{\centering\arraybackslash}m{0.7cm} >{\centering\arraybackslash}m{4cm} >{\centering\arraybackslash}m{0.7cm} >{\centering\arraybackslash}m{4cm}}
            \includegraphics[width=\linewidth]{figures/chain-complex-mbqc-s-1.pdf}
            &
            $$\xrightarrow{\;\;\;\;}$$
            &
            \includegraphics[width=\linewidth]{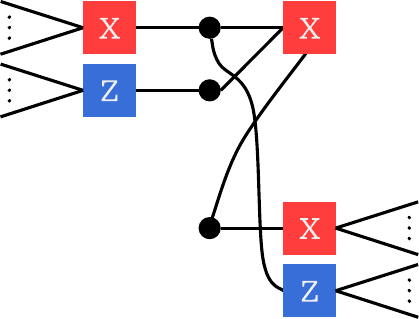}
            &
            $$\xrightarrow{\;\;\;\;}$$
            &
            \includegraphics[width=\linewidth]{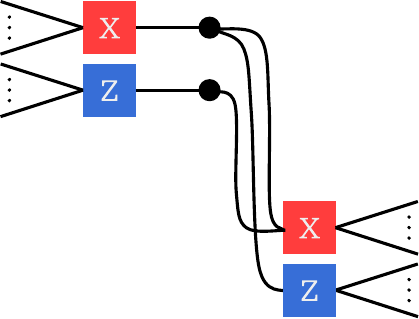}
            \\
        \end{tabular}
    \end{center}
\end{proof}

\subsection{Merge of \texttt{CZ} networks}
\label{sec:merge-cz-networks}

Our circuit is now composed of \texttt{CZ} networks, \texttt{H} gates and \texttt{HS} gates. The objective of this section is to show that we can turn the \texttt{H} and \texttt{HS} gates into their measurement-based versions, yielding a circuit only composed of \texttt{CZ} networks, and that all those \texttt{CZ} networks can be merged.

We begin by noting that, in general, \texttt{CZ} networks cannot be merged. For instance, the following two circuits are not equivalent:
\begin{center}
    \tikzsetnextfilename{sec5-double-cz-no-blue-box}
    \begin{quantikz}
        & \ctrl{1}  & \ctrl{1}  & \\
        & \ctrl{-1} & \ctrl{-1} &
    \end{quantikz}
    $\nsim$
    \tikzsetnextfilename{sec5-double-cz-in-blue-box}
    \begin{quantikz}
        & \ctrl{1} \gatebox{2}{2} & \ctrl{1}  & \\
        & \ctrl{-1}               & \ctrl{-1} &
    \end{quantikz}
\end{center}
since the circuit on the left propagates single-qubit $X$ errors between the two \texttt{CZ} into weight-2 errors, while the circuit on the right is equivalent to the identity gate and does not propagate any weight-1 error into a weight-2 error. However, as we will see, we can merge \texttt{CZ} networks after turning all the single-qubit gates into their measurement-based versions, as long as all the original \texttt{CZ} networks are Hadamard-separated.

\begin{definition}[Hadamard separation]
    We say that a pair of \texttt{CZ} networks is Hadamard-separated if, for each qubit of their intersection, the first \texttt{CZ} network is immediately followed by an \texttt{H} gate on that qubit, before any other potential gates. We say that a circuit, compiled into \texttt{CZ} networks, \texttt{H} and \texttt{HS} gates, is Hadamard-separated if every pair of \texttt{CZ} networks is Hadamard-separated.
\end{definition}

We start by turning our circuit into an equivalent Hadamard-separated circuit.
For this, we identify all the pairs of \texttt{CZ} networks that are not separated by any gate on some of their common wires, and insert a pair of Hadamard gates on those wires in between the two networks. We also identify all the \texttt{HS} gates immediately following a \texttt{CZ} network, and insert a pair of Hadamard gates before each such \texttt{HS} gate.
By \cref{lemma:hadamard-square}, the new circuit is equivalent to the original one. As an example, our method turns two successive \texttt{CZ} gates into a single \texttt{CZ} network the following way:
\begin{center}
    \tikzsetnextfilename{sec5-double-cz}
    \begin{quantikz}
        & \ctrl{1}  & \ctrl{1}  & \\
        & \ctrl{-1} & \ctrl{-1} &
    \end{quantikz}
    $\sim$
    \tikzsetnextfilename{sec5-double-cz-double-h}
    \begin{quantikz}
        & \ctrl{1}  & \gate{H} & \gate{H} & \ctrl{1}  & \\
        & \ctrl{-1} & \gate{H} & \gate{H} & \ctrl{-1} &
    \end{quantikz}
    $\sim$
    \tikzsetnextfilename{sec5-double-cz-double-h-mbqc-compilation}
    \begin{quantikz}
        \lstick{$\ket{+}$} & \gatebox{6}{4} \qw  &           & \ctrl{1}  & \ctrl{5}  & \\
        \lstick{$\ket{+}$} &                     & \ctrl{1}  & \ctrl{-1} &           & \meterD{X} \\
                           & \ctrl{1}            & \ctrl{-1} &           &           & \meterD{X} \\
                           & \ctrl{-1}           & \ctrl{1}  &           &           & \meterD{X} \\
        \lstick{$\ket{+}$} &                     & \ctrl{-1} & \ctrl{1}  &           & \meterD{X} \\
        \lstick{$\ket{+}$} &                     &           & \ctrl{-1} & \ctrl{-5} &
    \end{quantikz}
\end{center}

Our proof that we can merge Hadamard-separated \texttt{CZ} networks splits into the following steps:
\begin{enumerate}
    \item Showing that a \texttt{CZ} network can be merged with all the measurement-based \texttt{H} gates on its right.
    \item Showing that any \texttt{HS} gate can be merged with a network resulting from the merge of a network with an \texttt{H} gate on the left of the \texttt{HS} gate.
    \item Showing that the resulting \texttt{CZ} network can be merged with all the \texttt{CZ} networks that immediately follow it.
    \item Showing that a \texttt{CZ} network can be merged with all the measurement-based \texttt{H} and \texttt{HS} gates on its left. This last part is useful in the case of a circuit starting with some \texttt{H} and \texttt{HS} gates, without any \texttt{CZ} network on their left to be merged with.
\end{enumerate}

\begin{lemma} \label{lemma:h-and-hs-merge-right}
    Any \texttt{H} gate can be merged into a \texttt{CZ} network on its right by compiling it into its measurement-based version. Moreover, any \texttt{HS} gate with an \texttt{H} gate and a \texttt{CZ} network on its right can be merged into a single \texttt{CZ} network.
    \begin{align}
        \label{eq:h-merge-right}
        \tikzsetnextfilename{sec5-h-merge-right-1}
        \begin{quantikz}[row sep=6pt]
            &                        & \mygate{4}{CZ} & \mygate{1}{H} &  \\
            &                        &                &               & \\
            & \setwiretype{n} \vdots &                & \vdots        & \\
            &                        &                &               &
        \end{quantikz}
        & \sim
        \tikzsetnextfilename{sec5-h-merge-right-2}
        \begin{quantikz}[row sep=6pt]
            \lstick{$\ket{+}$} & \gatebox{5}{3} \qw     &                & \ctrl{1}  &            \\
                               &                        & \mygate{4}{CZ} & \ctrl{-1} & \meterD{X} \\
                               &                        &                &           &            \\
                               & \setwiretype{n} \vdots &                & \vdots    &            \\
                               &                        &                &           &
        \end{quantikz}
        \\
        \label{eq:hs-merge-right}
        \tikzsetnextfilename{sec5-hs-merge-right-1}
        \begin{quantikz}[row sep=6pt]
            &                        & \mygate{4}{CZ} & \gate{H} & \gate{HS} & \\
            &                        &                &          &           & \\
            & \setwiretype{n} \vdots &                & \vdots   &           & \\
            &                        &                &          &           &
        \end{quantikz}
        & \sim
        \tikzsetnextfilename{sec5-hs-merge-right-2}
        \begin{quantikz}[row sep=6pt]
            \lstick{$\ket{+}$} & \gatebox{6}{4} \qw     &                &           & \ctrl{1}  &            \\
            \lstick{$\ket{+}$} &                        &                & \ctrl{1}  & \ctrl{-1} & \meterD{Y} \\
                               &                        & \mygate{4}{CZ} & \ctrl{-1} &           & \meterD{X} \\
                               &                        &                &           &           &            \\
                               & \setwiretype{n} \vdots &                & \vdots    &           &            \\
                               &                        &                &           &           &
        \end{quantikz}
    \end{align}
\end{lemma}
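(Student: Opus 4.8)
The plan is to derive both equivalences from \cref{lemma:h-cz-sandwich}, together with the measurement-based compilation of the \texttt{HS} gate recorded in \eqref{eq:hs-mbqc-compilation}.

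The first equivalence \eqref{eq:h-merge-right} is simply the special case of \cref{lemma:h-cz-sandwich} in which the second \texttt{CZ} network is empty and exactly one wire, say $i$, carries a Hadamard. Instantiating that lemma with $CZ_A$ the given network, $CZ_B$ the identity, and a single \texttt{H} on wire $i$, its right-hand side reduces to: one $\ket{+}$ ancilla $b$, the network $CZ_A$, the single matching edge $CZ_{i,b}$, and an $X$ measurement of wire $i$, with all the \texttt{CZ}s collected into one box -- which is precisely the right-hand side of \eqref{eq:h-merge-right}. So nothing new is needed there.

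For \eqref{eq:hs-merge-right} I would argue in two moves. First, compile the trailing \texttt{HS} on wire $i$ via \eqref{eq:hs-mbqc-compilation}, replacing the circuit element ``$CZ$-network, $\texttt{H}_i$, $\texttt{HS}_i$'' by the equivalent element ``$CZ$-network, $\texttt{H}_i$, $CZ_{i,a}$, measure $i$ in $Y$'', with $a$ a fresh $\ket{+}$ ancilla. Second, observe that the sub-element ``$CZ$-network, $\texttt{H}_i$, $CZ_{i,a}$'' is a single Hadamard sandwiched between two \texttt{CZ} networks, so \cref{lemma:h-cz-sandwich} (with $CZ_A$ the given network and $CZ_B$ the single edge $CZ_{i,a}$) rewrites it as: a $\ket{+}$ ancilla $b$, the network $CZ_A$, the matching edge $CZ_{i,b}$, an $X$ measurement of $i$, and $CZ_B$ now acting on the ancilla $b$ that has replaced wire $i$, namely $CZ_{b,a}$ -- all \texttt{CZ}s boxed together. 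The $Y$ measurement that followed wire $i$ is carried through the teleportation, so it ends up on wire $b$; relabeling $(i,b,a)$ to the third, second, and first wires yields exactly the right-hand side of \eqref{eq:hs-merge-right}.

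Because rules A and B act on the partial chain complex of a circuit element without altering the support of detectors on its input and output nodes (as recalled at the start of \cref{sec:from-spacetime-codes-to-cluster-states}), each substitution above preserves the equivalence class of the full circuit, and the two claims follow. The step I expect to be the real obstacle is making the ``carry-along'' of the $Y$ measurement rigorous: one has to check, at the chain-complex level, that the reductions in \cref{lemma:h-cz-sandwich} identify the output node formerly sitting on wire $i$ with the node on wire $b$, so that attaching the $\ket{+}$-prep of $a$ on one side and the $Y$-measurement on the other reproduces exactly the gauge generators of the merged network (the $A{+}B$ block in \cref{def:cluster-state-complex}). Should that bookkeeping prove awkward, a fallback is a direct proof of \eqref{eq:hs-merge-right} in the style of the other propositions of this section: write the gauge operators of both sides in the $\mathcal X/\mathcal Z$ node notation and exhibit the explicit sequence of rule-A and rule-B contractions, together with a change of basis on the gauge generators as in the proof of \cref{lemma:h-cz-sandwich}, relating them.
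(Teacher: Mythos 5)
Your treatment of \eqref{eq:h-merge-right} is essentially the paper's: the paper first invokes \cref{lemma:pushing-h-away} to absorb the \texttt{H} into the same box as the network and then applies \cref{lemma:h-cz-sandwich} with an empty $CZ_B$; you skip the boxing step but otherwise do the same thing. For \eqref{eq:hs-merge-right} your route is genuinely different. The paper first applies \eqref{eq:h-merge-right} to the \texttt{CZ}--\texttt{H} prefix, then compiles the \texttt{HS} via \eqref{eq:hs-mbqc-compilation}, and is left with a residual claim (\eqref{eq:hs-merge-right-mbqc}): that the new $CZ_{b,a}$ gate sitting \emph{outside} the blue box can be moved \emph{inside} it. That residual claim is proved by an explicit rule-A/rule-B diagram chase, and it is where all the work is. You instead compile the \texttt{HS} first and then reuse \cref{lemma:h-cz-sandwich} with $CZ_B$ the single edge $CZ_{i,a}$, which is more modular and produces the correct right-hand side; the worry you flag about transporting the $Y$ measurement from wire $i$ to wire $b$ is in fact unproblematic, because the relabeling at the end of the proof of \cref{lemma:h-cz-sandwich} explicitly identifies the output nodes of the teleported wire with those of the fresh $\ket{+}$ wire, so gauge generators of a subsequent measurement attach to the correct node.

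The genuine gap is elsewhere: \cref{lemma:h-cz-sandwich} is stated for a \emph{single boxed composite} gate $[CZ_A,\texttt{H},CZ_B]$ with no internal fault locations, whereas after compiling the \texttt{HS} your three gates $CZ_A$, $\texttt{H}_i$, $CZ_{i,a}$ are separate circuit elements with fault locations between them (the right-hand side of \eqref{eq:hs-mbqc-compilation} carries no blue box). The fault location between $CZ_A$ and $\texttt{H}_i$ can be removed by \cref{lemma:pushing-h-away}, but the one between $[CZ_A,\texttt{H}_i]$ and the freshly introduced $CZ_{i,a}$ cannot: no stated lemma merges two boxed unitaries neither of which is a lone \texttt{H}, and showing that an $X_i$ error at that location (which propagates to $X_iZ_a$) is gauge-equivalent to an error at an allowed location requires exactly the kind of explicit rule-A/rule-B verification that constitutes the paper's proof of \eqref{eq:hs-merge-right-mbqc}. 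So the fallback you mention---an explicit chain-complex computation relating the gauge generators of the two sides---is not optional; it is the content of the lemma, merely relocated from the paper's ``absorb $CZ_{b,a}$ into the box'' step to your ``box up $CZ_A,\texttt{H}_i,CZ_{i,a}$ before applying the sandwich'' step.
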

\begin{proof}
    \cref{eq:h-merge-right} is a direct consequence of \cref{lemma:pushing-h-away,lemma:h-cz-sandwich}:
    \tikzsetnextfilename{sec5-h-merge-right-proof-1}
    \begin{align}
        \tikzsetnextfilename{sec5-h-merge-right-proof-1}
        \begin{quantikz}[row sep=6pt]
            &                        & \mygate{4}{CZ} & \mygate{1}{H} &  \\
            &                        &                &               & \\
            & \setwiretype{n} \vdots &                & \vdots        & \\
            &                        &                &               &
        \end{quantikz}
        \sim
        \tikzsetnextfilename{sec5-h-merge-right-proof-2}
        \begin{quantikz}[row sep=6pt]
            & \gatebox{4}{3} \qw     & \mygate{4}{CZ} & \mygate{1}{H} &  \\
            &                        &                &               & \\
            & \setwiretype{n} \vdots &                & \vdots        & \\
            &                        &                &               &
        \end{quantikz}
        \sim
        \tikzsetnextfilename{sec5-h-merge-right-proof-3}
        \begin{quantikz}[row sep=6pt]
            \lstick{$\ket{+}$} & \gatebox{5}{3} \qw     &                & \ctrl{1}  &            \\
                               &                        & \mygate{4}{CZ} & \ctrl{-1} & \meterD{X} \\
                               &                        &                &           &            \\
                               & \setwiretype{n} \vdots &                & \vdots    &            \\
                               &                        &                &           &
        \end{quantikz}
    \end{align}
    Using this equivalence, along with \cref{eq:hs-mbqc-compilation} to compile the \texttt{HS} gate, showing \cref{eq:hs-merge-right} comes down to showing
    \begin{align} \label{eq:hs-merge-right-mbqc}
        \tikzsetnextfilename{sec5-hs-merge-right-proof-1}
        \begin{quantikz}[row sep=6pt]
            \lstick{$\ket{+}$} & \gatebox{6}{3} \qw     &                &           & \ctrl{1}  &            \\
            \lstick{$\ket{+}$} &                        &                & \ctrl{1}  & \ctrl{-1} & \meterD{Y} \\
                               &                        & \mygate{4}{CZ} & \ctrl{-1} &           & \meterD{X} \\
                               &                        &                &           &           &            \\
                               & \setwiretype{n} \vdots &                & \vdots    &           &            \\
                               &                        &                &           &           &
        \end{quantikz}
        \sim
        \tikzsetnextfilename{sec5-hs-merge-right-proof-2}
        \begin{quantikz}[row sep=6pt]
            \lstick{$\ket{+}$} & \gatebox{6}{4} \qw     &                &           & \ctrl{1}  &            \\
            \lstick{$\ket{+}$} &                        &                & \ctrl{1}  & \ctrl{-1} & \meterD{Y} \\
                               &                        & \mygate{4}{CZ} & \ctrl{-1} &           & \meterD{X} \\
                               &                        &                &           &           &            \\
                               & \setwiretype{n} \vdots &                & \vdots    &           &            \\
                               &                        &                &           &           &
        \end{quantikz}
    \end{align}
    The left circuit is associated to the following chain complex:
    \begin{center}
        \includegraphics[width=0.45\linewidth]{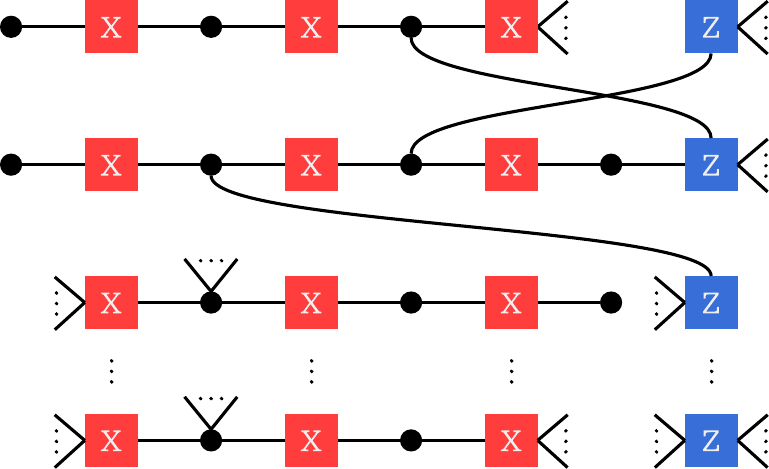}
    \end{center}
    where for every row, we have contracted the three $Z$ nodes of the row using rule A, and have placed the resulting node on the right of all the $X$ nodes. Moreover, we have put ellipses on the left-most gauge operator of each row of the \texttt{CZ} network, indicating that those gauge operators are connected to the $Z$ nodes at rows depending on the adjacency matrix of the \texttt{CZ} network.

    A first application of rules A and B gives the following equivalent diagram:
    \begin{center}
        \includegraphics[width=0.45\linewidth]{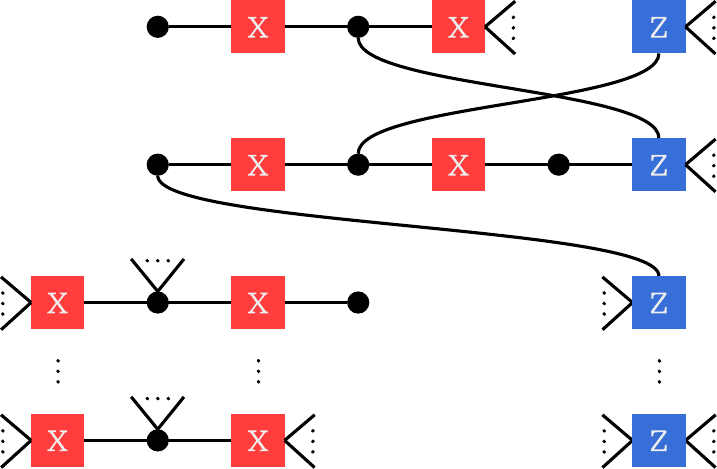}
    \end{center}
    Using rule A on the first gauge operator on the second row then gives:
    \begin{center}
        \includegraphics[width=0.35\linewidth]{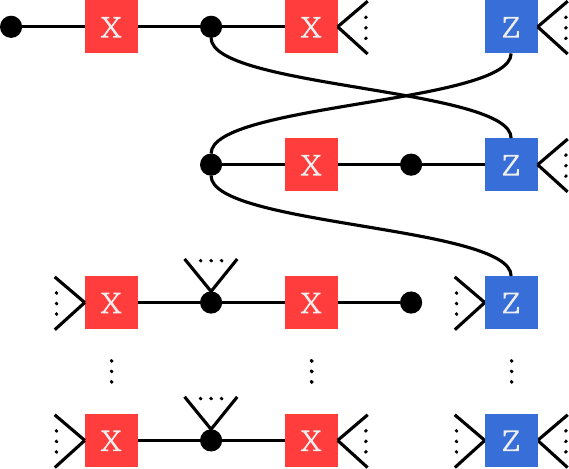}
    \end{center}
    Finally, using rule B to add a new $X$ node, we recover exactly the chain complex of the circuit on the right of \cref{eq:hs-merge-right-mbqc}:
    \begin{center}
        \includegraphics[width=0.35\linewidth]{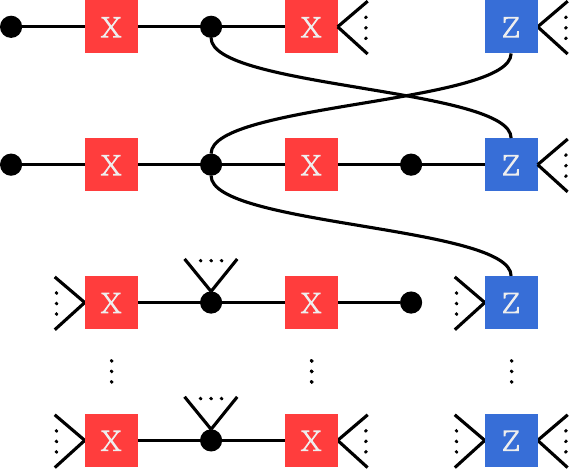}
    \end{center}

\end{proof}

\begin{lemma} \label{lemma:cz-cz-merge}
    We have the equivalence:
    \begin{align} \label{eq:cz-cz-merge}
        \tikzsetnextfilename{sec5-cz-cz-merge-1}
        \begin{quantikz}[row sep=8pt, column sep=10pt]
                               & \gatebox{12}{5} \qw     &                  &           &           &           & \mygate{6}{CZ_B} & \\
                               & \setwiretype{n} \vdots  &                  &           &           &           &                  & \\
                               &                         &                  &           &           &           &                  & \\
            \lstick{$\ket{+}$} &                         &                  & \ctrl{3}  &           &           &                  & \\
                               & \setwiretype{n} \vdots  &                  &           &           &           &                  & \\
            \lstick{$\ket{+}$} &                         &                  &           & \ctrl{3}  &           &                  & \\
                               &                         & \mygate{6}{CZ_A} & \ctrl{-3} &           &           &                  & \\
                               & \setwiretype{n} \vdots  &                  &           &           &           &                  & \\
                               &                         &                  &           & \ctrl{-3} &           &                  & \\
                               &                         &                  &           &           &           &                  & \\
                               & \setwiretype{n} \vdots  &                  &           &           &           &                  & \\
                               &                         &                  &           &           &           &                  &
        \end{quantikz}
        \sim
        \tikzsetnextfilename{sec5-cz-cz-merge-2}
        \begin{quantikz}[row sep=8pt, column sep=10pt]
                               & \gatebox{12}{6} \qw     &                  &           &           &           & \mygate{6}{CZ_B} & \\
                               & \setwiretype{n} \vdots  &                  &           &           &           &                  & \\
                               &                         &                  &           &           &           &                  & \\
            \lstick{$\ket{+}$} &                         &                  & \ctrl{3}  &           &           &                  & \\
                               & \setwiretype{n} \vdots  &                  &           &           &           &                  & \\
            \lstick{$\ket{+}$} &                         &                  &           & \ctrl{3}  &           &                  & \\
                               &                         & \mygate{6}{CZ_A} & \ctrl{-3} &           &           &                  & \\
                               & \setwiretype{n} \vdots  &                  &           &           &           &                  & \\
                               &                         &                  &           & \ctrl{-3} &           &                  & \\
                               &                         &                  &           &           &           &                  & \\
                               & \setwiretype{n} \vdots  &                  &           &           &           &                  & \\
                               &                         &                  &           &           &           &                  &
        \end{quantikz}
    \end{align}
\end{lemma}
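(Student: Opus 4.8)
The plan is to prove the equivalence by exhibiting an explicit sequence of the reduction rules A and B (from \cref{sec:fault-tolerant-maps}) transforming the chain complex of the left-hand circuit into that of the right-hand one, in the same spirit as the proof of \cref{lemma:h-cz-sandwich}. The only structural difference between the two circuits is that on the left $CZ_B$ lies outside the protected box, so there is an extra error location---one $X$-type and one $Z$-type node per wire---sitting between the box (which implements $CZ_A$ followed by the teleportation bridge to the fresh $\ket{+}$ ancillas) and $CZ_B$. Everything therefore reduces to eliminating these intermediate nodes via rules A and B and checking that what remains is the chain complex of the single merged \texttt{CZ} network. Since all \texttt{CZ} gates commute, the box implements the \texttt{CZ} network whose adjacency matrix is the union of $A$ (the adjacency matrix of $CZ_A$) with the bridge matching, while the merged circuit on the right implements the \texttt{CZ} network obtained by further adding $B$ (the adjacency matrix of $CZ_B$); the latter's chain complex is the target. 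As noted at the start of this section, applying rules A and B to a partial chain complex never changes the support of detectors on the input and output nodes, so it suffices to track how the gauge part transforms.

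First I would set up the notation of \cref{lemma:h-cz-sandwich} and read off the gate-propagation gauge operators of the left-hand circuit. Then the reductions proceed as follows. For every wire $j$ in the support of $CZ_B$ the operator $Z_{j,\mathrm{mid}}Z_{j,\mathrm{out}}$ is a gauge operator (because $Z$ commutes through \texttt{CZ}), and for the remaining wires the same holds through the trailing identity layer; rule A applied to all of these removes every $Z$-type node at the intermediate location. For the wires touched by $CZ_B$ but not by the box, the box acts as the identity, so rule A merges each intermediate $X$-node into the corresponding input $X$-node; for the wires outside the support of $CZ_B$, rule A merges the intermediate $X$-node into the output $X$-node. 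The only intermediate nodes left are the $X$-type nodes on the fresh ancilla wires; for each such wire $j$, I would use rule B on the weight-one $\ket{+}$-stabilizer gauge operator $X_{j,1}$ to delete the ancilla's input $X$-node, which turns the (a priori weight-three) bridge gauge operator $X_{j,1}X_{j,\mathrm{mid}}Z_{m(j),\mathrm{mid}}$ into a weight-two operator, and then apply rule A to identify the ancilla's intermediate $X$-node with the (already merged) $Z$-node of its bridged partner wire in $CZ_A$. A final relabelling isomorphism on the middle space---automatically a fault-tolerant map, as in \cref{lemma:h-cz-sandwich}---then matches the surviving gauge operators exactly with those of the merged \texttt{CZ} network.

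The step I expect to be the main obstacle is the bookkeeping for the fresh-ancilla $X$-nodes, which sit at the interface of three objects: the bridge from $CZ_A$, the ancilla's $\ket{+}$ stabilizer, and $CZ_B$. One has to order the applications of rules B and A so that a genuinely weight-two gauge operator is available at each step, and then verify that identifying an ancilla $X$-error with a $Z$-error on its bridged wire is consistent with how those two errors propagate through the merged network. This is essentially the computation already done in \cref{lemma:h-cz-sandwich}, so once the dictionary with that proof's notation is fixed, the remaining work is routine.
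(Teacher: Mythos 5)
Your proposal is correct and follows essentially the same route as the paper's proof: both eliminate the extra error layer between the box and $CZ_B$ by applying rule A to the weight-two $Z$- and trivial $X$-propagation gauge operators, then rule B to the weight-one $\ket{+}$-stabilizer gauge operators on the fresh ancillas followed by rule A on the resulting weight-two bridge operators, and both rely on the fact that rules A and B preserve the detector supports on input/output nodes so only the gauge part needs tracking. The only cosmetic difference is that the paper finishes by using rule B in reverse to re-insert an $X$ node so as to match the unreduced complex of the right-hand circuit, whereas you finish with a relabelling isomorphism; both are valid.
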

\begin{proof}
    The left-hand side of \cref{eq:cz-cz-merge} is associated to the following chain complex:
    \begin{center}
        \includegraphics[width=0.5\textwidth]{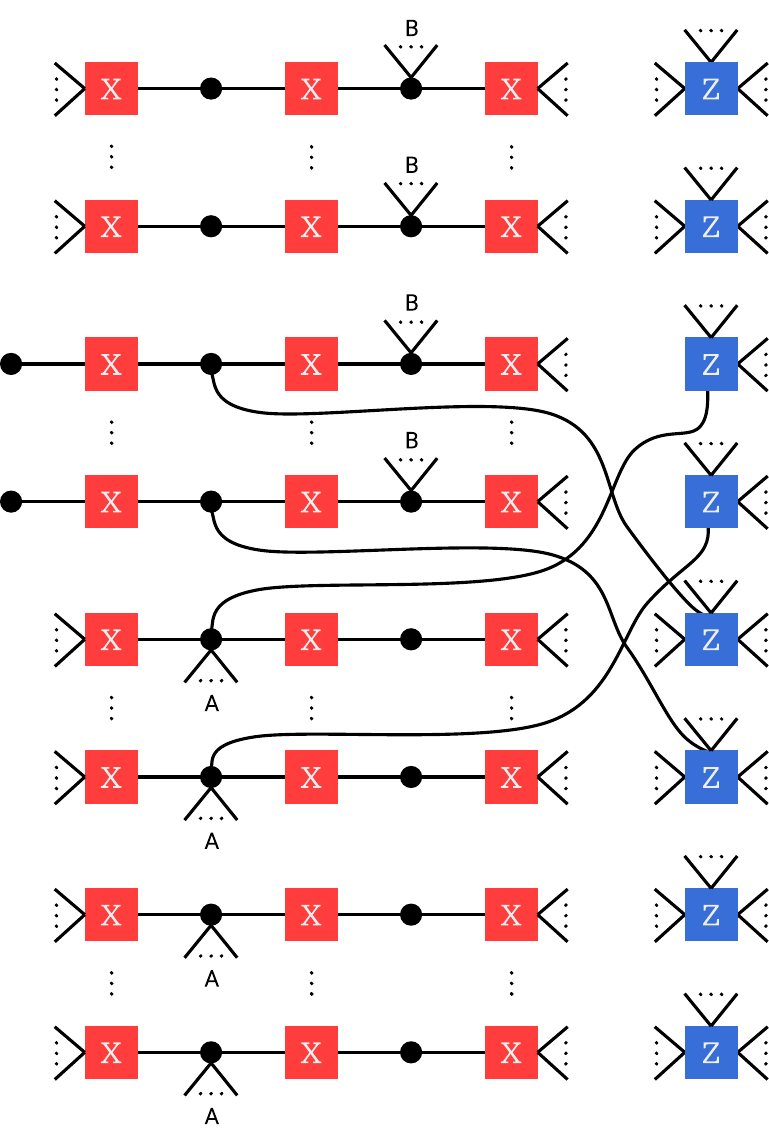}
    \end{center}
    where we have labeled the ellipses A and B depending on whether a gauge operator corresponds to $\texttt{CZ}_A$ or $\texttt{CZ}_B$. A first application of rules A and B gives the following equivalent diagram:
    \begin{center}
        \includegraphics[width=0.5\textwidth]{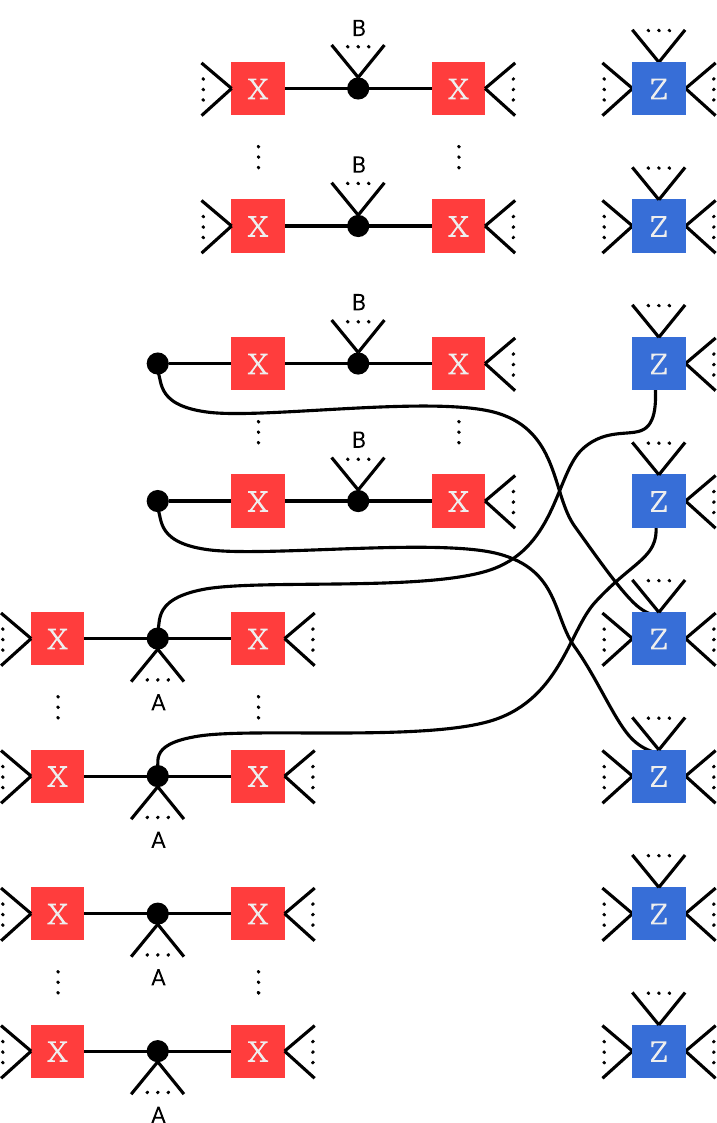}
    \end{center}
    We then apply rule B on the third and fourth rows to eliminate the weight-1 gauge operators, rule A to merge the new weight-2 gauge operators of those rows, and finally rule B to add a new X node:
    \begin{center}
        \begin{tabular}{
            >{\centering\arraybackslash}m{6cm}
            >{\centering\arraybackslash}m{0.7cm}
            >{\centering\arraybackslash}m{6cm}}
            \includegraphics[width=\linewidth]{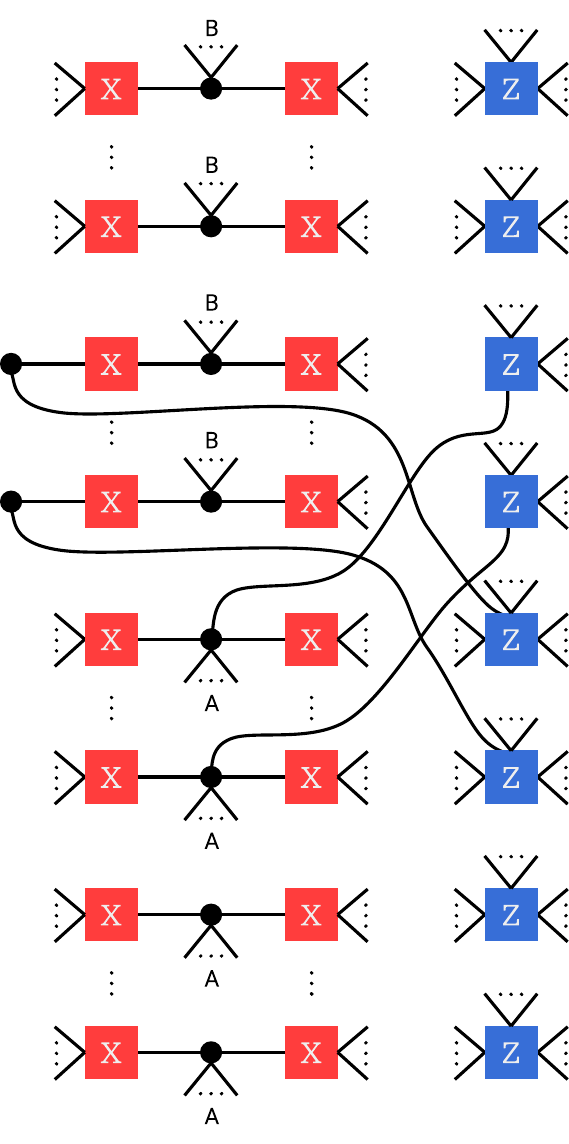}
            &
            $$\xrightarrow{\;\;\;\;}$$
            &
            \includegraphics[width=\linewidth]{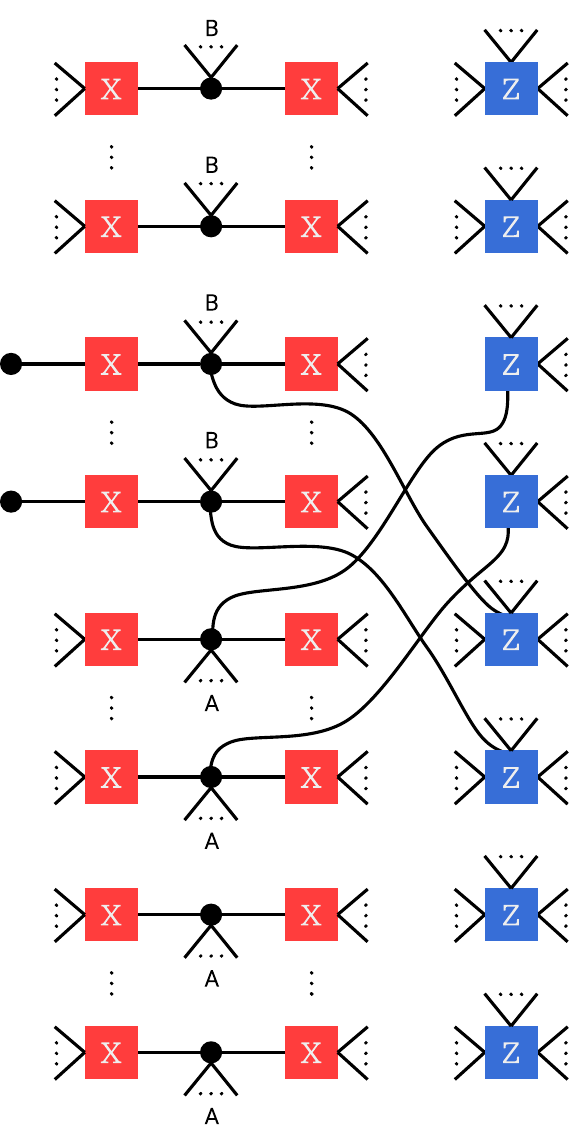}
        \end{tabular}
    \end{center}
    This last diagram represents the chain complex at the right-hand side of \cref{eq:hs-merge-right}, showing the desired circuit equivalence.
\end{proof}

The final step of the MBQC mapping is to merge any \texttt{H} or \texttt{HS} gate at the beginning of the circuit into a CZ network on its right. This can be done using the following lemma:
\begin{lemma} \label{lemma:h-and-hs-merge-left}
    We have the following circuit equivalences:
    \begin{align}
        \label{eq:h-merge-left}
        \tikzsetnextfilename{sec5-h-merge-left-1}
        \begin{quantikz}[row sep=6pt]
            & \mygate{1}{H}          & \mygate{4}{CZ} &  \\
            &                        &                & \\
            & \setwiretype{n} \vdots &                & \\
            &                        &                &
        \end{quantikz}
        & \sim
        \tikzsetnextfilename{sec5-h-merge-left-2}
        \begin{quantikz}[row sep=6pt]
                               & \gatebox{5}{3} \qw     & \ctrl{1}  &                & \meterD{X} \\
            \lstick{$\ket{+}$} &                        & \ctrl{-1} & \mygate{4}{CZ} &            \\
                               &                        &           &                &            \\
                               & \setwiretype{n}        & \vdots    &                &            \\
                               &                        &           &                &
        \end{quantikz}
        \\
        \label{eq:hs-merge-left}
        \tikzsetnextfilename{sec5-h-merge-left-3}
        \begin{quantikz}[row sep=6pt]
            & \mygate{1}{HS}         & \mygate{4}{CZ} &  \\
            &                        &                & \\
            & \setwiretype{n} \vdots &                & \\
            &                        &                &
        \end{quantikz}
        & \sim
        \tikzsetnextfilename{sec5-h-merge-left-4}
        \begin{quantikz}[row sep=6pt]
                               & \gatebox{5}{3} \qw     & \ctrl{1}  &                & \meterD{Y} \\
            \lstick{$\ket{+}$} &                        & \ctrl{-1} & \mygate{4}{CZ} &            \\
                               &                        &           &                &            \\
                               & \setwiretype{n}        & \vdots    &                &            \\
                               &                        &           &                &
        \end{quantikz}
    \end{align}
\end{lemma}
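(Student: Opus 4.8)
The plan is to mirror the proof of \cref{lemma:h-and-hs-merge-right} almost verbatim, treating the \texttt{H} (resp.\ \texttt{HS}) gate at the start of the circuit as a gate preceded by the empty (trivial) \texttt{CZ} network, so that ``merging left'' becomes the same manoeuvre as ``merging right'' performed on the degenerate side.

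For \eqref{eq:h-merge-left} I would first apply \cref{lemma:pushing-h-away} in the ``$H_i$ then $U'$'' direction, with $U'$ the \texttt{CZ} network, to fuse the Hadamard and the network into a single blue-boxed gate $U=\texttt{CZ}\cdot H_i$; diagrammatically this is a single use of rule A on the two weight-two gauge operators produced by the \texttt{H} gate, which identify the error nodes immediately after \texttt{H} with the input nodes of the fragment. Then I would invoke \cref{lemma:h-cz-sandwich} with its \emph{left} \texttt{CZ} network taken to be empty: the Hadamard-in-a-box $U$ is a Hadamard sandwiched between the trivial network and \texttt{CZ}, so the lemma rewrites it into a fresh $\ket{+}$ ancilla, a \texttt{CZ} linking the input wire to the ancilla, the \texttt{CZ} network acting on the ancilla and the remaining wires (all inside one blue box), and an $X$ measurement on the input wire; relabelling the ancilla as the continuing wire gives exactly the right-hand side of \eqref{eq:h-merge-left}. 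The proof of \cref{lemma:h-cz-sandwich} is purely linear-algebraic in the adjacency matrices $A,B$, so the degenerate case $A=0$ is covered — this is precisely the mirror image of the derivation of \eqref{eq:h-merge-right}.

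For \eqref{eq:hs-merge-left} I would follow the template of the \eqref{eq:hs-merge-right} proof: first use the compilation \eqref{eq:hs-mbqc-compilation} to replace the \texttt{HS} gate by its measurement-based gadget (a fresh $\ket{+}$ ancilla $a$, a \texttt{CZ} between the input wire and $a$, and a $Y$ measurement on the input wire, with $a$ carrying the output). Since the $Y$ measurement sits on a wire disjoint from the subsequent \texttt{CZ} network it can be commuted past it, so the claim reduces to a circuit equivalence in which the only difference between the two sides is an extra error location on the ancilla wire between the gadget \texttt{CZ} and the \texttt{CZ} network. I would close this with the same diagrammatic routine as for \eqref{eq:hs-merge-right}: contract the three $Z$-nodes of every wire using rule A, delete the weight-one gauge nodes coming from the $\ket{+}$ state and from the $X$/$Y$ measurements using rule B, use rule A once more on the gauge node of the ancilla row, and finally reintroduce a fresh $X$-node with rule B; the resulting chain complex is that of \texttt{HS} followed by the \texttt{CZ} network. (One could equivalently write $\texttt{HS}\sim\texttt{S}\cdot\texttt{H}$ as in \cref{sec:pre-compilation}, peel off the \texttt{H} using the just-proved \eqref{eq:h-merge-left}, and then absorb the leftover \texttt{S} into the measurement basis of the gadget, turning the $X$ measurement into a $Y$ measurement; but the direct diagrammatic argument is cleaner.)

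The hard part will be justifying the merger of the gadget \texttt{CZ} with the following \texttt{CZ} network into a single blue box. In general \texttt{CZ} networks do \emph{not} merge — witness the $\texttt{CZ}\cdot\texttt{CZ}$ and \texttt{SWAP}/\texttt{CNOT} examples — and what rescues us here is that the shared wire is a fresh $\ket{+}$ ancilla and the other gadget wire is immediately measured. Concretely, an $X$ error on the ancilla placed between the two layers is identified, via the gadget-\texttt{CZ} gauge operator, with a $Z$ error on the measured wire, which the single-qubit $\ket{+}$- and measurement-gauge operators then fold back into the input/output structure, so that no genuinely new error node survives the reduction. Verifying carefully which error and gauge nodes get merged or deleted by rules A and B — and checking that the support of the detectors on the input and output nodes is left untouched, so that every element of $\ker G$ arising from a spackle or backle is preserved — is the one place where care is needed, even though each individual step is routine given the machinery of \cref{sec:fault-tolerant-maps}.
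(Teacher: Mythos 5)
Your proposal is correct and follows essentially the same route as the paper: \cref{eq:h-merge-left} is obtained from \cref{lemma:pushing-h-away} together with \cref{lemma:h-cz-sandwich} applied with the left network trivial, and \cref{eq:hs-merge-left} by first compiling the \texttt{HS} gate via \cref{eq:hs-mbqc-compilation} and then running the same rule-A/rule-B reduction on the partial chain complex (contracting the $Z$ nodes, removing the weight-one gauge nodes, applying rule A on the ancilla row, and reintroducing an $X$ node with rule B). You also correctly isolate the one genuinely non-generic step --- that the gadget \texttt{CZ} can be merged into the blue box with the following network only because of the fresh $\ket{+}$ ancilla and the immediate measurement --- which is exactly what the paper's explicit diagram sequence verifies.
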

\begin{proof}
    The equivalence in \cref{eq:h-merge-left} is direct consequence of \cref{lemma:pushing-h-away} and \cref{lemma:h-cz-sandwich}. To prove \cref{eq:hs-merge-left}, we start by compiling the \texttt{HS} gate using \cref{eq:hs-mbqc-compilation}. It remains to show that:
    \begin{align}
        \tikzsetnextfilename{sec5-h-merge-left-proof-1}
        \begin{quantikz}[row sep=6pt]
                               &                        & \ctrl{1}  &                & \meterD{Y} \\
            \lstick{$\ket{+}$} &                        & \ctrl{-1} & \mygate{4}{CZ} &            \\
                               &                        &           &                &            \\
                               & \setwiretype{n}        & \vdots    &                &            \\
                               &                        &           &                &
        \end{quantikz}
        \sim
        \tikzsetnextfilename{sec5-h-merge-left-proof-2}
        \begin{quantikz}[row sep=6pt]
                               & \gatebox{5}{3} \qw     & \ctrl{1}  &                & \meterD{Y} \\
            \lstick{$\ket{+}$} &                        & \ctrl{-1} & \mygate{4}{CZ} &            \\
                               &                        &           &                &            \\
                               & \setwiretype{n}        & \vdots    &                &            \\
                               &                        &           &                &
        \end{quantikz}
    \end{align}
    The left-hand side of this equation is associated to the following chain complex, where we have used the same contraction of $Z$ nodes as in the proof of \cref{lemma:h-and-hs-merge-right}:
    \begin{center}
        \includegraphics[width=0.55\textwidth]{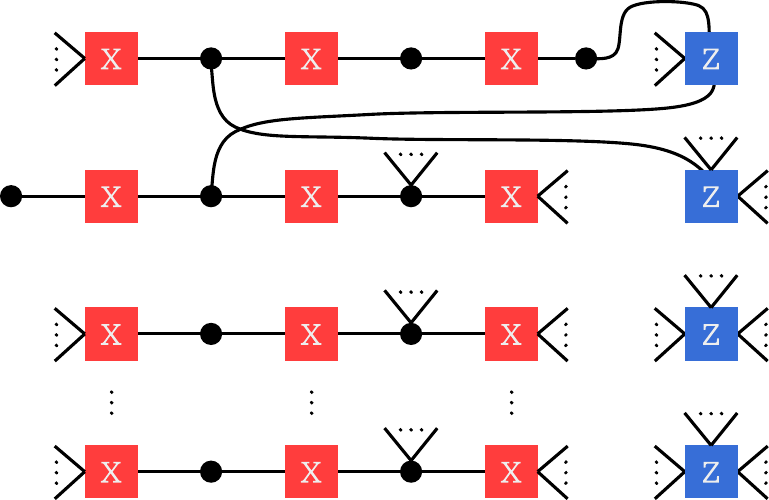}
    \end{center}
    A first application of rules A and B gives the following equivalent diagram:
    \begin{center}
        \includegraphics[width=0.5\textwidth]{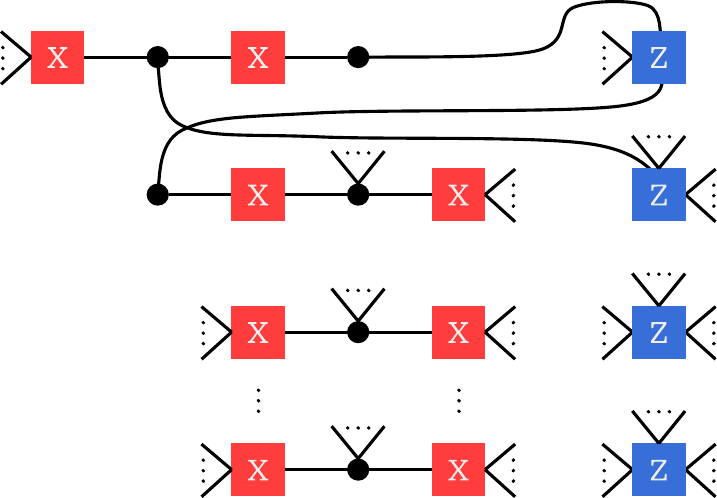}
    \end{center}
    We now use rule A on the second row, followed by rule B to add a new $X$ node:
    \begin{center}
        \begin{tabular}{
            >{\centering\arraybackslash}m{6cm}
            >{\centering\arraybackslash}m{0.7cm}
            >{\centering\arraybackslash}m{6.5cm}}
            \includegraphics[width=\linewidth]{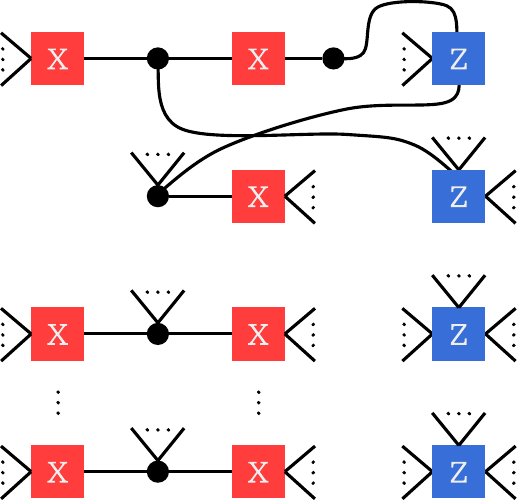}
            &
            $$\xrightarrow{\;\;\;\;}$$
            &
            \includegraphics[width=\linewidth]{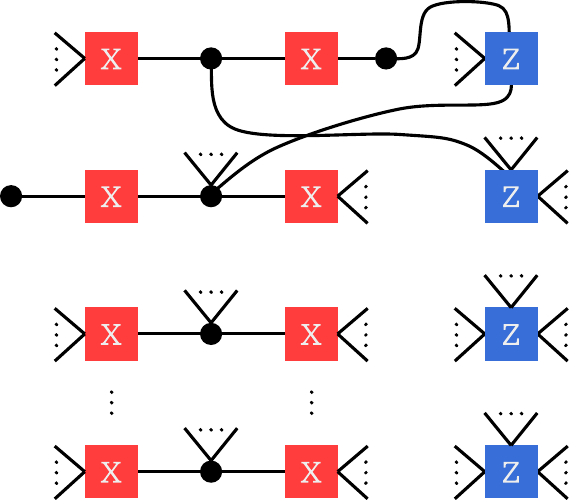}
        \end{tabular}
    \end{center}
    This last diagram corresponds to the chain complex of the circuit at the right-hand side of \cref{eq:hs-merge-left}.
\end{proof}

\subsection{Example: transversal logical gate}

As an application of our circuit transformation technique, we derive here the cluster state complex corresponding to the application of a transversal $S$ gate on the Bell state stabilized by $X^{\otimes 2}$ and $Z^{\otimes 2}$, with measurement of those stabilizers before and after the application of the gate.
This protocol is a simplified version of the transversal $S$ gate applied to the $[[4,2,2]]$ code or more general 2D color codes~\cite{bombin2006topologicalquantum,kubica2015universaltransversal,kubica2015unfolding}.
Since applying $S^{\otimes 2}$ to the Bell state only changes its stabilizers up to a sign, the circuit should have at least two spacetime stabilizers.
Here is the circuit corresponding to the protocol, with the two detectors labeled in blue and red:
\begin{align*}
    \tikzsetnextfilename{sec5-transversal-s-1}
    \begin{quantikz}[row sep=10pt, column sep=8pt]
                           &                           &                           & \gate[2, disable auto height]{ \begin{array}{cc} Z\\Z \end{array} } & \labeledwire{nicered}{Z} & \gate[2, disable auto height]{ \begin{array}{cc} X\\X \end{array} } & \labeledwire{nicered}{Z} & \labeledwire{niceblue}{X} & \gate{S} & \labeledwire{nicered}{Z} & \labeledwire{niceblue}{Y} & \gate[2, disable auto height]{ \begin{array}{cc} X\\X \end{array} } & \labeledwire{nicered}{Z} & \labeledwire{niceblue}{Z} & \gate[2, disable auto height]{ \begin{array}{cc} Z\\Z \end{array} }  &                           &                           &            \\
                           &                           &                           &                                                                     & \labeledwire{nicered}{Z} &                                                                     & \labeledwire{nicered}{Z} & \labeledwire{niceblue}{X} & \gate{S} & \labeledwire{nicered}{Z} & \labeledwire{niceblue}{Y} &                                                                     & \labeledwire{nicered}{Z} & \labeledwire{niceblue}{Z} &                                                                      &                           &                           &            \\
        \lstick{$\ket{+}$} & \labeledwire{nicered}{X}  &                           & \ctrl{-2}                                                           &                          &                                                                     &                          &                           &          &                          &                           &                                                                     &                          &                           &                                                                      & \labeledwire{nicered}{X}  &                           & \meterD{X} \\
        \lstick{$\ket{+}$} & \labeledwire{niceblue}{X} &                           &                                                                     &                          & \ctrl{-3}                                                           &                          &                           &          &                          &                           &                                                                     &                          &                           &                                                                      & \labeledwire{niceblue}{X} &                           & \meterD{X} \\
        \lstick{$\ket{+}$} & \labeledwire{niceblue}{X} &                           &                                                                     &                          &                                                                     &                          &                           &          &                          &                           & \ctrl{-4}                                                           &                          &                           &                                                                      & \labeledwire{niceblue}{X} &                           & \meterD{X} \\
        \lstick{$\ket{+}$} & \labeledwire{nicered}{X}  & \labeledwire{niceblue}{X} &                                                                     &                          &                                                                     &                          &                           &          &                          &                           &                                                                     &                          &                           & \ctrl{-5}                                                            & \labeledwire{nicered}{X}  & \labeledwire{niceblue}{X} & \meterD{X}
    \end{quantikz}
\end{align*}
Since $X^{\otimes 2}$ turns into $Y^{\otimes 2}$ through the application of $S^{\otimes 2}$, the associated detector involves the measurement of both $X^{\otimes 2}$ and $Z^{\otimes 2}$ after the gate.
The reason we change the measurement order before and after the gate is to obtain a more compact MBQC circuit, as we will soon understand.

To turn this circuit into an MBQC circuit, the first step is to rewrite the controlled-Pauli gates using \texttt{CZ} gates, and compile the \texttt{S} gate into \texttt{H} and \texttt{HS} gates:
\begin{align*}
    \tikzsetnextfilename{sec5-transversal-s-2}
    \begin{quantikz}[row sep=10pt, column sep=7pt]
                           &                           &                           & \ctrl{2} \gatebox{3}{2} &           & & \labeledwire{nicered}{Z} & & \mygate{1}{H} \gatebox{4}{4} & \ctrl{3}  &           & \mygate{1}{H} & & \labeledwire{nicered}{Z} & \labeledwire{niceblue}{X} & \gate{HS} & \labeledwire{nicered}{X} &  \labeledwire{niceblue}{Y} & \gate{H} & \labeledwire{nicered}{Z} & \labeledwire{niceblue}{Y} & & \mygate{1}{H} \gatebox{5}{4} & \ctrl{4}  &           & \mygate{1}{H} & & \labeledwire{nicered}{Z} & \labeledwire{niceblue}{Z} & & \ctrl{5} \gatebox{6}{2} &           & &                           &                           &            \\
                           &                           &                           &                         & \ctrl{1}  & & \labeledwire{nicered}{Z} & & \mygate{1}{H}                &           & \ctrl{2}  & \mygate{1}{H} & & \labeledwire{nicered}{Z} & \labeledwire{niceblue}{X} & \gate{HS} & \labeledwire{nicered}{X} &  \labeledwire{niceblue}{Y} & \gate{H} & \labeledwire{nicered}{Z} & \labeledwire{niceblue}{Y} & & \mygate{1}{H}                &           & \ctrl{3}  & \mygate{1}{H} & & \labeledwire{nicered}{Z} & \labeledwire{niceblue}{Z} & &                         & \ctrl{4}  & &                           &                           &            \\
        \lstick{$\ket{+}$} & \labeledwire{nicered}{X}  &                           & \ctrl{-2}               & \ctrl{-1} & &                          & &                              &           &           &               & &                          &                           &           &                          &                            &          &                          &                           & &                              &           &           &               & &                          &                           & &                         &           & & \labeledwire{nicered}{X}  &                           & \meterD{X} \\
        \lstick{$\ket{+}$} & \labeledwire{niceblue}{X} &                           &                         &           & &                          & &                              & \ctrl{-3} & \ctrl{-2} &               & &                          &                           &           &                          &                            &          &                          &                           & &                              &           &           &               & &                          &                           & &                         &           & & \labeledwire{niceblue}{X} &                           & \meterD{X} \\
        \lstick{$\ket{+}$} & \labeledwire{niceblue}{X} &                           &                         &           & &                          & &                              &           &           &               & &                          &                           &           &                          &                            &          &                          &                           & &                              & \ctrl{-4} & \ctrl{-3} &               & &                          &                           & &                         &           & & \labeledwire{niceblue}{X} &                           & \meterD{X} \\
        \lstick{$\ket{+}$} & \labeledwire{nicered}{X}  & \labeledwire{niceblue}{X} &                         &           & &                          & &                              &           &           &               & &                          &                           &           &                          &                            &          &                          &                           & &                              &           &           &               & &                          &                           & & \ctrl{-5}               & \ctrl{-4} & & \labeledwire{nicered}{X}  & \labeledwire{niceblue}{X} & \meterD{X}
    \end{quantikz}
\end{align*}
We then move the Hadamard gates out of the blue boxes using \cref{lemma:pushing-h-away}, and compile the two successive \texttt{H} gates coming after the \texttt{HS} gate into the identity. This is the trick that will give us a more compressed MBQC circuit and the reason we chose to measure the $X^{\otimes 2}$ operator first after the transversal gate.
\begin{align*}
    \tikzsetnextfilename{sec5-transversal-s-3}
    \begin{quantikz}[row sep=10pt, column sep=7pt]
                           &                           &                           & \ctrl{2} \gatebox{3}{2} &           & & \labeledwire{nicered}{Z} & \mygate{1}{H} & \labeledwire{nicered}{X} & & \ctrl{3} \gatebox{4}{2} &           & & \labeledwire{nicered}{X} & \labeledwire{niceblue}{Z} & \mygate{1}{H} & \labeledwire{nicered}{Z} & \labeledwire{niceblue}{X} & \gate{HS} & \labeledwire{nicered}{X} &  \labeledwire{niceblue}{Y} & & \ctrl{4} \gatebox{5}{2} &           & &  \labeledwire{nicered}{X} &  \labeledwire{niceblue}{X} & \mygate{1}{H} & \labeledwire{nicered}{Z} & \labeledwire{niceblue}{Z} & & \ctrl{5} \gatebox{6}{2} &           & &                           &                           &            \\
                           &                           &                           &                         & \ctrl{1}  & & \labeledwire{nicered}{Z} & \mygate{1}{H} & \labeledwire{nicered}{X} & &                         & \ctrl{2}  & & \labeledwire{nicered}{X} & \labeledwire{niceblue}{Z} & \mygate{1}{H} & \labeledwire{nicered}{Z} & \labeledwire{niceblue}{X} & \gate{HS} & \labeledwire{nicered}{X} &  \labeledwire{niceblue}{Y} & &                         & \ctrl{3}  & &  \labeledwire{nicered}{X} &  \labeledwire{niceblue}{X} & \mygate{1}{H} & \labeledwire{nicered}{Z} & \labeledwire{niceblue}{Z} & &                         & \ctrl{4}  & &                           &                           &            \\
        \lstick{$\ket{+}$} & \labeledwire{nicered}{X}  &                           & \ctrl{-2}               & \ctrl{-1} & &                          &               &                          & &                         &           & &                          &                          &               &                          &                           &           &                          &                            & &                         &           & &                           &                            &               &                          &                            & &                         &           & & \labeledwire{nicered}{X}  &                           & \meterD{X} \\
        \lstick{$\ket{+}$} & \labeledwire{niceblue}{X} &                           &                         &           & &                          &               &                          & & \ctrl{-3}               & \ctrl{-2} & &                          &                          &               &                          &                           &           &                          &                            & &                         &           & &                           &                            &               &                          &                            & &                         &           & & \labeledwire{niceblue}{X} &                           & \meterD{X} \\
        \lstick{$\ket{+}$} & \labeledwire{niceblue}{X} &                           &                         &           & &                          &               &                          & &                         &           & &                          &                          &               &                          &                           &           &                          &                            & & \ctrl{-4}               & \ctrl{-3} & &                           &                            &               &                          &                            & &                         &           & & \labeledwire{niceblue}{X} &                           & \meterD{X} \\
        \lstick{$\ket{+}$} & \labeledwire{nicered}{X}  & \labeledwire{niceblue}{X} &                         &           & &                          &               &                          & &                         &           & &                          &                          &               &                          &                           &           &                          &                            & &                         &           & &                           &                            &               &                          &                            & & \ctrl{-5}               & \ctrl{-4} & & \labeledwire{nicered}{X}  & \labeledwire{niceblue}{X} & \meterD{X}
    \end{quantikz}
\end{align*}
We now iteratively apply \cref{lemma:h-and-hs-merge-left} and \cref{lemma:cz-cz-merge} to compile the \texttt{H} and \texttt{HS} gates into their MBQC versions and merge the resulting \texttt{CZ} networks.
Starting with the Hadamard gates on the left, we obtain the following circuit:
\begin{align*}
    \tikzsetnextfilename{sec5-transversal-s-4}
    \begin{quantikz}[row sep=10pt, column sep=7pt]
        \lstick{$\ket{+}$} & \labeledwire{nicered}{X}  &                           & \gatebox{6}{6} &           &           & \ctrl{2}  & \ctrl{5}  &           & & \labeledwire{nicered}{X}  & \labeledwire{niceblue}{Z} & \mygate{1}{H} & \labeledwire{nicered}{Z}  & \labeledwire{niceblue}{X} & \gate{HS} & \labeledwire{nicered}{X}  &  \labeledwire{niceblue}{Y} & & \ctrl{6} \gatebox{7}{2} &           & &  \labeledwire{nicered}{X} &  \labeledwire{niceblue}{X} & \mygate{1}{H} & \labeledwire{nicered}{Z}  & \labeledwire{niceblue}{Z} & & \ctrl{5} \gatebox{8}{2} &           & &                           &                           &            \\
        \lstick{$\ket{+}$} & \labeledwire{nicered}{X}  &                           &                &           & \ctrl{2}  &           &           & \ctrl{4}  & & \labeledwire{nicered}{X}  & \labeledwire{niceblue}{Z} & \mygate{1}{H} & \labeledwire{nicered}{Z}  & \labeledwire{niceblue}{X} & \gate{HS} & \labeledwire{nicered}{X}  &  \labeledwire{niceblue}{Y} & &                         & \ctrl{5}  & &  \labeledwire{nicered}{X} &  \labeledwire{niceblue}{X} & \mygate{1}{H} & \labeledwire{nicered}{Z}  & \labeledwire{niceblue}{Z} & &                         & \ctrl{6}  & &                           &                           &            \\
                           &                           &                           & \ctrl{2}       &           &           & \ctrl{-2} &           &           & &                           &                           &               &                           &                           &           &                           &                            & &                         &           & &                           &                            &               &                           &                           & &                         &           & &                           &                           & \meterD{X} \\
                           &                           &                           &                & \ctrl{1}  & \ctrl{-2} &           &           &           & &                           &                           &               &                           &                           &           &                           &                            & &                         &           & &                           &                            &               &                           &                           & &                         &           & &                           &                           & \meterD{X} \\
        \lstick{$\ket{+}$} & \labeledwire{nicered}{X}  &                           & \ctrl{-2}      & \ctrl{-1} &           &           &           &           & & \labeledwire{nicered}{X}  &                           &               & \labeledwire{nicered}{X}  &                           &           & \labeledwire{nicered}{X}  &                            & &                         &           & & \labeledwire{nicered}{X}  &                            &               & \labeledwire{nicered}{X}  &                           & &                         &           & & \labeledwire{nicered}{X}  &                           & \meterD{X} \\
        \lstick{$\ket{+}$} & \labeledwire{niceblue}{X} &                           &                &           &           &           & \ctrl{-5} & \ctrl{-4} & & \labeledwire{niceblue}{X} &                           &               & \labeledwire{niceblue}{X} &                           &           & \labeledwire{niceblue}{X} &                            & &                         &           & & \labeledwire{niceblue}{X} &                            &               & \labeledwire{niceblue}{X} &                           & &                         &           & & \labeledwire{niceblue}{X} &                           & \meterD{X} \\
        \lstick{$\ket{+}$} & \labeledwire{niceblue}{X} &                           &                &           &           &           &           &           & & \labeledwire{niceblue}{X} &                           &               & \labeledwire{niceblue}{X} &                           &           & \labeledwire{niceblue}{X} &                            & & \ctrl{-6}               & \ctrl{-5} & & \labeledwire{niceblue}{X} &                            &               & \labeledwire{niceblue}{X} &                           & &                         &           & & \labeledwire{niceblue}{X} &                           & \meterD{X} \\
        \lstick{$\ket{+}$} & \labeledwire{nicered}{X}  & \labeledwire{niceblue}{X} &                &           &           &           &           &           & & \labeledwire{nicered}{X}  & \labeledwire{niceblue}{X} &               & \labeledwire{nicered}{X}  & \labeledwire{niceblue}{X} &           & \labeledwire{nicered}{X}  & \labeledwire{niceblue}{X}  & &                         &           & & \labeledwire{nicered}{X}  & \labeledwire{niceblue}{X}  &               & \labeledwire{nicered}{X}  & \labeledwire{niceblue}{X} & & \ctrl{-5}               & \ctrl{-6} & & \labeledwire{nicered}{X}  & \labeledwire{niceblue}{X} & \meterD{X}
    \end{quantikz}
\end{align*}
Next, we compile the \texttt{H} and \texttt{HS} gates, giving the following circuit:
\begin{align*}
    \tikzsetnextfilename{sec5-transversal-s-5}
    \begin{quantikz}[row sep=10pt, column sep=7pt]
        \lstick{$\ket{+}$} & \labeledwire{nicered}{X}  & \labeledwire{niceblue}{X} & & \gatebox{11}{12} &           &           &           &           &           &           &           &           & \ctrl{2}  & \ctrl{10}  &            & & \labeledwire{nicered}{X}  & \labeledwire{niceblue}{X}  & \mygate{1}{H} & \labeledwire{nicered}{Z}  & \labeledwire{niceblue}{Z} & & \ctrl{10} \gatebox{12}{2} &           & &                           &                           &            \\
        \lstick{$\ket{+}$} & \labeledwire{nicered}{X}  & \labeledwire{niceblue}{X} & &                  &           &           &           &           &           &           &           & \ctrl{2}  &           &            & \ctrl{9}   & & \labeledwire{nicered}{X}  & \labeledwire{niceblue}{X}  & \mygate{1}{H} & \labeledwire{nicered}{Z}  & \labeledwire{niceblue}{Z} & &                           & \ctrl{9}  & &                           &                           &            \\
        \lstick{$\ket{+}$} & \labeledwire{niceblue}{X} &                           & &                  &           &           &           &           &           &           & \ctrl{2}  &           & \ctrl{-2} &            &            & & \labeledwire{niceblue}{Y} &                            &               & \labeledwire{niceblue}{Y} &                           & &                           &           & & \labeledwire{niceblue}{Y} &                           & \meterD{Y} \\
        \lstick{$\ket{+}$} & \labeledwire{niceblue}{X} &                           & &                  &           &           &           &           &           & \ctrl{2}  &           & \ctrl{-2} &           &            &            & & \labeledwire{niceblue}{Y} &                            &               & \labeledwire{niceblue}{Y} &                           & &                           &           & & \labeledwire{niceblue}{Y} &                           & \meterD{Y} \\
        \lstick{$\ket{+}$} & \labeledwire{nicered}{X}  &                           & &                  &           &           & \ctrl{2}  & \ctrl{5}  &           &           & \ctrl{-2} &           &           &            &            & & \labeledwire{nicered}{X}  &                            &               & \labeledwire{nicered}{X}  &                           & &                           &           & & \labeledwire{nicered}{X}  &                           & \meterD{X} \\
        \lstick{$\ket{+}$} & \labeledwire{nicered}{X}  &                           & &                  &           & \ctrl{2}  &           &           & \ctrl{4}  & \ctrl{-2} &           &           &           &            &            & & \labeledwire{nicered}{X}  &                            &               & \labeledwire{nicered}{X}  &                           & &                           &           & & \labeledwire{nicered}{X}  &                           & \meterD{X} \\
                           &                           &                           & & \ctrl{2}         &           &           & \ctrl{-2} &           &           &           &           &           &           &            &            & &                           &                            &               &                           &                           & &                           &           & &                           &                           & \meterD{X} \\
                           &                           &                           & &                  & \ctrl{1}  & \ctrl{-2} &           &           &           &           &           &           &           &            &            & &                           &                            &               &                           &                           & &                           &           & &                           &                           & \meterD{X} \\
        \lstick{$\ket{+}$} & \labeledwire{nicered}{X}  &                           & & \ctrl{-2}        & \ctrl{-1} &           &           &           &           &           &           &           &           &            &            & & \labeledwire{nicered}{X}  &                            &               & \labeledwire{nicered}{X}  &                           & &                           &           & & \labeledwire{nicered}{X}  &                           & \meterD{X} \\
        \lstick{$\ket{+}$} & \labeledwire{niceblue}{X} &                           & &                  &           &           &           & \ctrl{-5} & \ctrl{-4} &           &           &           &           &            &            & & \labeledwire{niceblue}{X} &                            &               & \labeledwire{niceblue}{X} &                           & &                           &           & & \labeledwire{niceblue}{X} &                           & \meterD{X} \\
        \lstick{$\ket{+}$} & \labeledwire{niceblue}{X} &                           & &                  &           &           &           &           &           &           &           &           &           & \ctrl{-10} & \ctrl{-9}  & & \labeledwire{niceblue}{X} &                            &               & \labeledwire{niceblue}{X} &                           & &                           &           & & \labeledwire{niceblue}{X} &                           & \meterD{X} \\
        \lstick{$\ket{+}$} & \labeledwire{nicered}{X}  & \labeledwire{niceblue}{X} & &                  &           &           &           &           &           &           &           &           &           &            &            & & \labeledwire{nicered}{X}  & \labeledwire{niceblue}{X}  &               & \labeledwire{nicered}{X}  & \labeledwire{niceblue}{X} & & \ctrl{-10}                & \ctrl{-9} & & \labeledwire{nicered}{X}  & \labeledwire{niceblue}{X} & \meterD{X}
    \end{quantikz}
\end{align*}
Finally, compiling the last \texttt{H} gates gives us:
\begin{align*}
    \tikzsetnextfilename{sec5-transversal-s-6}
    \begin{quantikz}[row sep=10pt, column sep=7pt]
        \lstick{$\ket{+}$} &                           &                           & & \gatebox{14}{16} &           &           &           &           &           &           &           &           &           &            &           &           & \ctrl{2}  & \ctrl{12}  &           & &                           &                           &            \\
        \lstick{$\ket{+}$} &                           &                           & &                  &           &           &           &           &           &           &           &           &           &            &           & \ctrl{2}  &           &            & \ctrl{11} & &                           &                           &            \\
        \lstick{$\ket{+}$} & \labeledwire{nicered}{X}  & \labeledwire{niceblue}{X} & &                  &           &           &           &           &           &           &           &           & \ctrl{2}  & \ctrl{10}  &           &           & \ctrl{-2} &            &           & & \labeledwire{nicered}{X}  & \labeledwire{niceblue}{X} & \meterD{X} \\
        \lstick{$\ket{+}$} & \labeledwire{nicered}{X}  & \labeledwire{niceblue}{X} & &                  &           &           &           &           &           &           &           & \ctrl{2}  &           &            & \ctrl{9}  & \ctrl{-2} &           &            &           & & \labeledwire{nicered}{X}  & \labeledwire{niceblue}{X} & \meterD{X} \\
        \lstick{$\ket{+}$} & \labeledwire{niceblue}{X} &                           & &                  &           &           &           &           &           &           & \ctrl{2}  &           & \ctrl{-2} &            &           &           &           &            &           & & \labeledwire{niceblue}{Y} &                           & \meterD{Y} \\
        \lstick{$\ket{+}$} & \labeledwire{niceblue}{X} &                           & &                  &           &           &           &           &           & \ctrl{2}  &           & \ctrl{-2} &           &            &           &           &           &            &           & & \labeledwire{niceblue}{Y} &                           & \meterD{Y} \\
        \lstick{$\ket{+}$} & \labeledwire{nicered}{X}  &                           & &                  &           &           & \ctrl{2}  & \ctrl{5}  &           &           & \ctrl{-2} &           &           &            &           &           &           &            &           & & \labeledwire{nicered}{X}  &                           & \meterD{X} \\
        \lstick{$\ket{+}$} & \labeledwire{nicered}{X}  &                           & &                  &           & \ctrl{2}  &           &           & \ctrl{4}  & \ctrl{-2} &           &           &           &            &           &           &           &            &           & & \labeledwire{nicered}{X}  &                           & \meterD{X} \\
                           &                           &                           & & \ctrl{2}         &           &           & \ctrl{-2} &           &           &           &           &           &           &            &           &           &           &            &           & &                           &                           & \meterD{X} \\
                           &                           &                           & &                  & \ctrl{1}  & \ctrl{-2} &           &           &           &           &           &           &           &            &           &           &           &            &           & &                           &                           & \meterD{X} \\
        \lstick{$\ket{+}$} & \labeledwire{nicered}{X}  &                           & & \ctrl{-2}        & \ctrl{-1} &           &           &           &           &           &           &           &           &            &           &           &           &            &           & & \labeledwire{nicered}{X}  &                           & \meterD{X} \\
        \lstick{$\ket{+}$} & \labeledwire{niceblue}{X} &                           & &                  &           &           &           & \ctrl{-5} & \ctrl{-4} &           &           &           &           &            &           &           &           &            &           & & \labeledwire{niceblue}{X} &                           & \meterD{X} \\
        \lstick{$\ket{+}$} & \labeledwire{niceblue}{X} &                           & &                  &           &           &           &           &           &           &           &           &           & \ctrl{-10} & \ctrl{-9} &           &           &            &           & & \labeledwire{niceblue}{X} &                           & \meterD{X} \\
        \lstick{$\ket{+}$} & \labeledwire{nicered}{X}  & \labeledwire{niceblue}{X} & &                  &           &           &           &           &           &           &           &           &           &            &           &           &           & \ctrl{-12} & \ctrl{-11} & & \labeledwire{nicered}{X}  & \labeledwire{niceblue}{X} & \meterD{X}
    \end{quantikz}
\end{align*}

\clearpage
We can now draw the corresponding cluster state complex, in its compressed representation.
To make the figure more readable, we omit the detector nodes, but circle in green the support of the two detectors derived above (the red-labeled one on the left figure and the blue-labeled one on the right figure):
\begin{center}
    \includegraphics[width=0.15\textwidth]{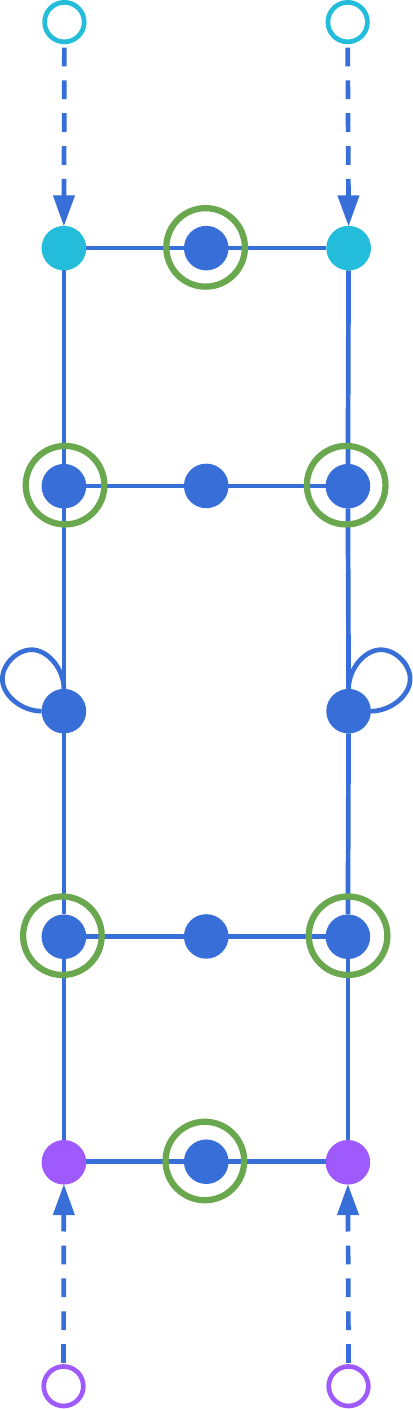}
    \hspace{4em}
    \includegraphics[width=0.15\textwidth]{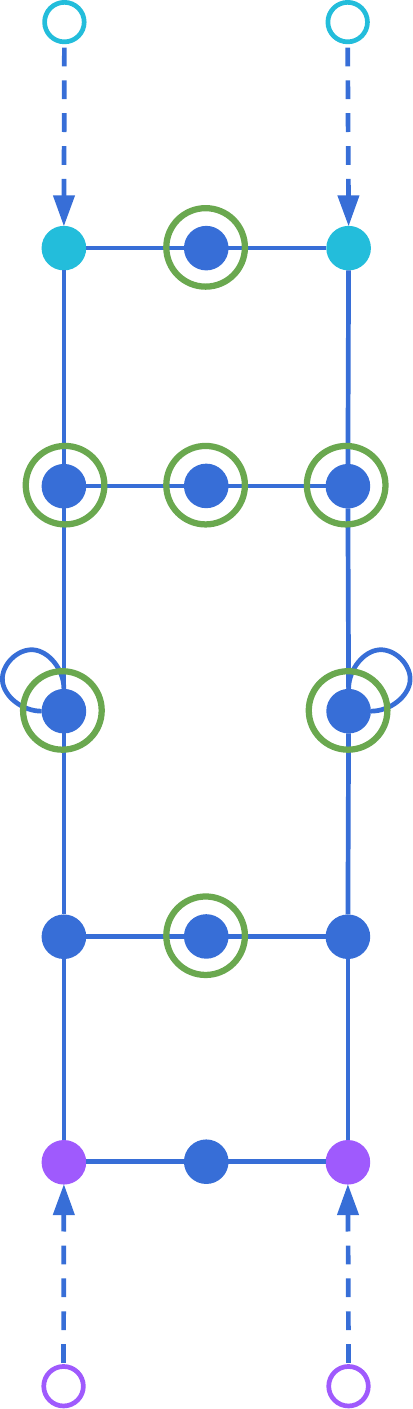}
\end{center}


\section{Cluster state complex from a stabilizer code}
\label{sec:from-stabilizer-codes-to-cluster-states}
\subsection{CSS stabilizer codes}

\begin{figure}[ht]
    \centering
    \subfloat[]{
        \includegraphics[width=0.15\linewidth]{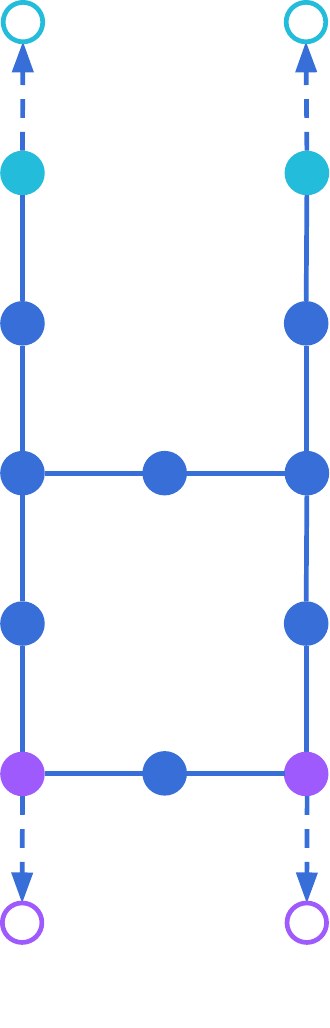}
    }
    \hspace{2em}
    \subfloat[]{
        \includegraphics[width=0.15\linewidth]{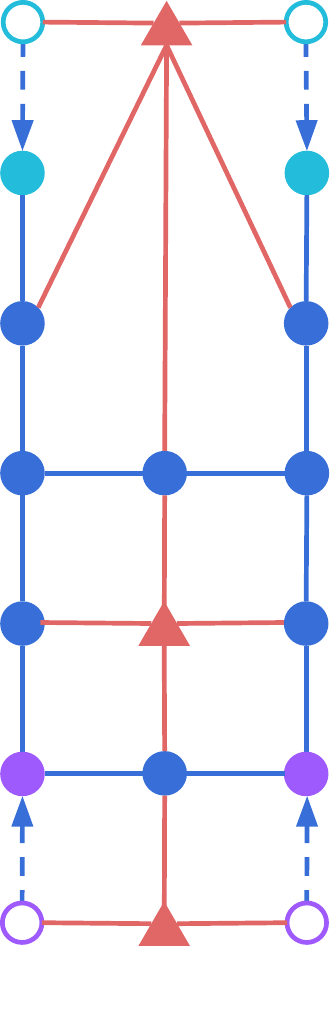}
    }
    \hspace{4em}
    \subfloat[]{
        \includegraphics[width=0.15\linewidth]{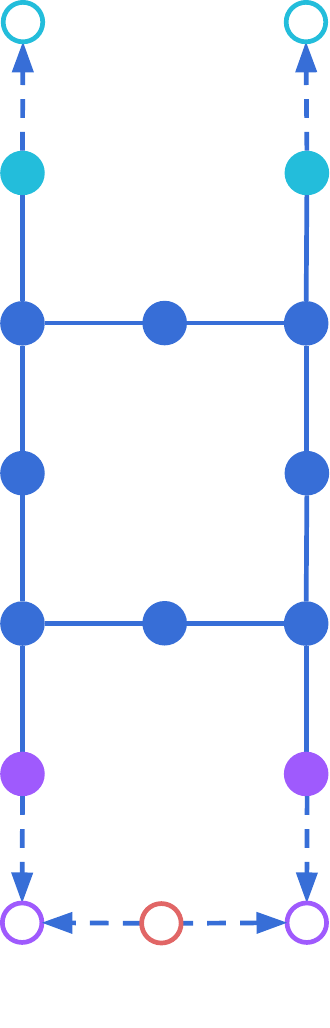}
    }
    \hspace{2em}
    \subfloat[]{
        \includegraphics[width=0.15\linewidth]{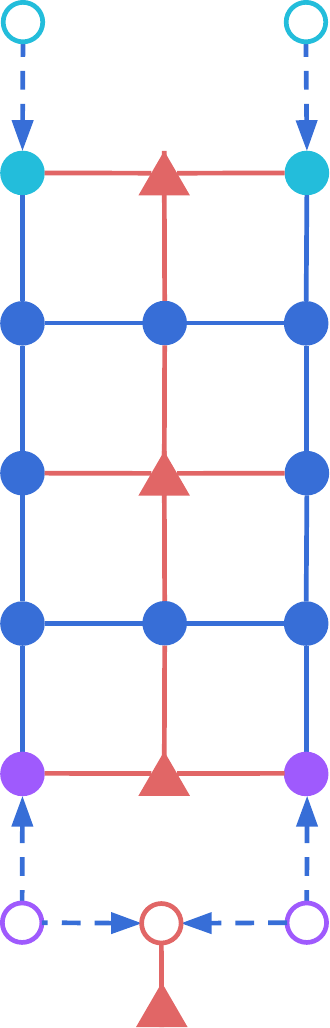}
    }
    \caption{ \label{fig:css-cluster-states}
        Cluster state complexes of two foliated CSS codes, with stabilizer groups $\langle ZZ \rangle$ (a and b) and $\langle XX \rangle$ (c and d). \textbf{(a)} and \textbf{(c)} Compressed representation of the cluster state complex, without detector nodes for clarity. \textbf{(b)} and \textbf{(d)} Compressed co-representation of the cluster state complex, with detector nodes.
    }
\end{figure}

We show here that we recover the usual foliated cluster state when the input circuit corresponds to a phenomenological memory experiment with a CSS code \cite{bolt2016foliated}. Two examples of cluster state complexes coming from a CSS code are shown in \cref{fig:css-cluster-states}.

\begin{theorem}
    Consider the circuit made of $T$ rounds of stabilizer measurements of an $[[n,k,d]]$ CSS stabilizer code, where $m_X$ $X$ stabilizers and $m_Z$ $Z$ stabilizers are measured each round using some ancilla qubits, and errors are assumed to only happen in-between successive rounds. Let $H^X$ and $H^Z$ be the parity-check matrices associated to the $X$ and $Z$ stabilizers, respectively. This circuit is equivalent to a cluster state complex described by the following graph (in the compressed representation):
    \begin{itemize}
        \item It contains $2T+3$ layers $\{0,\ldots,2T+2\}$ of error nodes, where each odd layer $\ell = 2i+1$ (even layer $\ell=2i$) contains $n$ \textnormal{data nodes} $q^{(\ell)}_1,\ldots,q^{(\ell)}_n$ and, if $\ell \leq 2T$, $m_Z$ ($m_X$) \textnormal{ancilla nodes} $a^{(\ell)}_1,\ldots,a^{(\ell)}_{m_Z}$ ($a^{(\ell)}_1,\ldots,a^{(\ell)}_{m_X}$). The nodes on layers $0$ and $2T+2$ are virtual, while all the other nodes are physical.
        \item Within a given layer $\ell \leq 2T$, data nodes and ancilla nodes (both physical and virtual) are connected to each other according to the matrices $H^Z$ (for odd layers) and $H^X$ (for even layers), seen as biadjacency matrices, that is, $a^{(\ell)}_i$ and $q^{(\ell)}_j$ are connected if and only if $H^Z_{ij}=1$ (for odd layers) or $H^X_{ij}=1$ (for even layers).
        \item For $1 \leq \ell \leq 2T$, every data node $q_i^{(\ell)}$ is connected to the data node $q_i^{(\ell+1)}$ in the next layer. For $\ell = 0$, there is directed edge from $q_i^{(\ell+1)}$ to $q_i^{(\ell)}$. For $\ell = 2T+1$, there is directed edge from $q_i^{(\ell)}$ and $q_i^{(\ell+1)}$.
        \item For every ancilla node $a_i^{(\ell)}$ ($1 \leq \ell \leq 2T$) connected to a set of data nodes $q^{(\ell)}_{j_1},\ldots,q^{(\ell)}_{j_k}$, we have a detector node connected to $a_i^{(\ell)}$, $q^{(\ell+1)}_{j_1},\ldots,q^{(\ell+1)}_{j_k}$, $a_i^{(\ell+2)}$ if $\ell \leq 2T-2$, and $q^{(\ell+3)}_{j_1},\ldots,q^{(\ell+3)}_{j_k}$ if $\ell=2T-1$.
        These detectors correspond to spackles in the spacetime code of the circuit.
        Corresponding to the backles, we also have one detector node per ancilla node $a_i^{(1)}$ on layer one, connected to $a_i^{(1)}$ and $q^{(0)}_{j_1},\ldots,q^{(0)}_{j_k}$, and one per ancilla node $a_i^{(2)}$ on layer two, connected to $a_i^{(2)}$ and $q^{(1)}_{j_1},\ldots,q^{(1)}_{j_k}$.
        Finally, we have one detector node per virtual ancilla node, connected only to it.
    \end{itemize}
\end{theorem}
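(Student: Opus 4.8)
The plan is to give a constructive proof. I would start from an explicit syndrome-extraction circuit for the code, turn it into an MBQC pattern using the transformations of \cref{sec:from-spacetime-codes-to-cluster-states}, and then read off the cluster state complex of that pattern directly from \cref{def:cluster-state-complex}; by those transformations together with \cref{theorem:mbqc-circuit-equivalent-to-cluster-state-complex}, the spacetime complex of the original circuit is equivalent to the cluster state complex, which is exactly the claim. For the circuit I take the standard one: the input state lies in the code space, so the input stabilizer group is $\mathcal{S}$ (of rank $m_X+m_Z$), and each round measures every $X$-check with a fresh $\ket{+}$ ancilla coupled to the data qubits of its support by \texttt{CNOT}s with the ancilla as control and read out in the $X$ basis, and symmetrically each $Z$-check with a $\ket{0}$ ancilla coupled by \texttt{CNOT}s with a data qubit as control and read out in the $Z$ basis. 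Because errors occur only between rounds, the whole within-round entangling layer is treated as a box with no error locations, so the only gauge operators come from the between-round data wires, the ancilla preparations and measurements, and the input stabilizers; the data qubits are the input qubits of the pattern at the start and the (unmeasured) output qubits at the end.

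I would then compile the circuit. Using the results of \cref{sec:pre-compilation} to rewrite the controlled-Pauli gates as \texttt{CZ} networks flanked by Hadamards, and then \cref{lemma:pushing-h-away}, \cref{lemma:h-cz-sandwich} and the merging lemmas \cref{lemma:h-and-hs-merge-right,lemma:cz-cz-merge,lemma:h-and-hs-merge-left}, one can fuse all \texttt{CZ} networks of a round --- and then of the whole circuit --- into a single \texttt{CZ} network. The effect on the cluster state graph is to produce, per round, two consecutive sheets of nodes: one in which data nodes are joined to $Z$-check ancilla nodes according to $H^Z$ seen as a biadjacency matrix, and one in which they are joined to $X$-check ancilla nodes according to $H^X$, with the alternation of compiled Hadamards (the byproduct Hadamards of the teleportation chains, harmless by \cref{lemma:hadamard-square}) absorbed into the compilation. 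Concatenating the $T$ rounds then gives the layered graph of the statement: $2T$ physical sheets of data-plus-ancilla nodes alternating between $H^Z$- and $H^X$-connectivity, each data node joined to the corresponding data node of the next sheet (the teleportation edges), one extra bare physical data layer after the last sheet, and the virtual ``layers'' $0$ and $2T+2$ provided by the $X$-type virtual nodes that \cref{def:cluster-state-complex} attaches to the input and output data qubits.

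It then remains to compute $C_2$, the space of detectors, from \cref{def:cluster-state-complex} --- most conveniently via the compressed co-representation of \cref{def:compressed-rep-cluster-state}, in which a detector is a minimal independent set of nodes with empty boundary whose restriction to the input data nodes lies in $\mathcal{S}$ or whose restriction to the output data nodes vanishes. Computing $\ker G$, each $X$- or $Z$-check ancilla node $a_i^{(\ell)}$ with neighbourhood $\{q_{j_1}^{(\ell)},\dots\}$ generates a detector supported on $a_i^{(\ell)}$, on the data nodes one sheet later, and on the next occurrence $a_i^{(\ell+2)}$ of the same check (or, at the final round, on the virtual output data nodes $\{q_{j_1}^{(\ell+3)},\dots\}$ instead): these are precisely the spackles relating consecutive measurements of a stabilizer. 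Because $\mathcal{S}$ was taken to be the full stabilizer group, the backle stabilizers of the spacetime code appear as the ``anchoring'' detectors attached to the layer-$1$ and layer-$2$ ancillas together with the layer-$0$/layer-$1$ data nodes, and each virtual ancilla node carries a trivial single-node detector; reading off the graph reproduces the list in the statement.

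The main obstacle is the bookkeeping in the compilation step: each data qubit sits in several controlled-Pauli gates per round, so after compilation it lies inside a chain of Hadamards interleaved with \texttt{CZ}s, and one must verify carefully --- applying rules A and B as in the proof of \cref{lemma:h-cz-sandwich} --- that fusing all the \texttt{CZ} networks yields exactly the clean $H^X$-/$H^Z$-bipartite connectivity within each sheet and exactly the stated teleportation edges between sheets, with no spurious couplings, and that the $\ket{0}$-ancilla preparations of the $Z$-checks (which carry an extra Hadamard relative to the uniform $\ket{+}$ convention of MBQC patterns) are absorbed consistently. Once the graph is pinned down, identifying the detectors in the last step is a routine kernel computation.
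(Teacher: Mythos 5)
Your proposal follows essentially the same route as the paper: compile the round-by-round syndrome-extraction circuit into a single \texttt{CZ} network using \cref{lemma:pushing-h-away}, \cref{lemma:h-cz-sandwich}, \cref{lemma:h-and-hs-merge-left} and \cref{lemma:cz-cz-merge}, then read off the cluster state complex in its compressed representation and identify the detectors. The two small divergences are harmless: the paper sidesteps your flagged worry about $\ket{0}$-ancilla $Z$-checks by writing every check as a controlled-Pauli from a $\ket{+}$ ancilla read out in $X$ (so the $Z$-check gadgets are already \texttt{CZ} networks and sit in the same layer as the data), and rather than a kernel computation it derives each class of detector by explicitly tracking the spackle or backle of the relevant measurements through the circuit transformations, which is what pins down the precise layer indices in the statement.
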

\begin{proof}
The circuit consists of several rounds of simultaneous measurements of all the stabilizers, where errors are assumed to happen in-between measurement rounds, on both the data qubits and ancilla qubits used to perform the measurements. For instance, two rounds of stabilizer measurements correspond to the following circuit:
\begin{align*}
    \tikzsetnextfilename{sec6-css-proof-1}
    \begin{quantikz}[row sep=10pt, column sep=6pt]
                           & \phantomgate{H}          & \gatebox{15}{7} & \mygate{3}{S_1^Z} &        & \mygate{3}{S_{m_Z}^Z} & \mygate{3}{S_1^X} &        & \mygate{3}{S_{m_X}^X} & & \labeledwire{nicered}{Z} & \gatebox{15}{7}    & \mygate{3}{S_1^Z} &        & \mygate{3}{S_{m_Z}^Z} & \mygate{3}{S_1^X} &        & \mygate{3}{S_{m_X}^X} & &                          &            \\[-1.5em]
                           & \setwiretype{n}          & \vdots          &                   & \cdots &                       &                   & \cdots &                       & &                          &                    &                   & \cdots &                       &                   & \cdots &                       & &                          &            \\[-1.1em]
                           &                          &                 &                   &        &                       &                   &        &                       & & \labeledwire{nicered}{Z} &                    &                   &        &                       &                   &        &                       & &                          &            \\
        \lstick{$\ket{+}$} & \labeledwire{nicered}{X} &                 & \ctrl{-1}         &        &                       &                   &        &                       & & \phantomgate{H}          &                    &                   &        &                       &                   &        &                       & & \labeledwire{nicered}{X} & \meterD{X} \\[-1.2em]
                           & \setwiretype{n}          & \vdots          &                   & \ddots &                       &                   &        &                       & &                          &                    &                   &        &                       &                   &        &                       & &                          &            \\[-0.7em]
        \lstick{$\ket{+}$} &                          &                 &                   &        & \ctrl{-3}             &                   &        &                       & & \phantomgate{H}          &                    &                   &        &                       &                   &        &                       & &                          & \meterD{X} \\
        \lstick{$\ket{+}$} &                          &                 &                   &        &                       & \ctrl{-4}         &        &                       & & \phantomgate{H}          &                    &                   &        &                       &                   &        &                       & &                          & \meterD{X} \\[-1.2em]
                           & \setwiretype{n}          & \vdots          &                   &        &                       &                   & \ddots &                       & &                          &                    &                   &        &                       &                   &        &                       & &                          &            \\[-0.7em]
        \lstick{$\ket{+}$} &                          &                 &                   &        &                       &                   &        & \ctrl{-6}             & & \phantomgate{H}          &                    &                   &        &                       &                   &        &                       & &                          & \meterD{X} \\
        \lstick{$\ket{+}$} & \labeledwire{nicered}{X} &                 &                   &        &                       &                   &        &                       & & \phantomgate{H}          &                    & \ctrl{-7}         &        &                       &                   &        &                       & & \labeledwire{nicered}{X} & \meterD{X} \\[-1.2em]
                           & \setwiretype{n}          & \vdots          &                   &        &                       &                   &        &                       & &                          &                    &                   & \ddots &                       &                   &        &                       & &                          &            \\[-0.7em]
        \lstick{$\ket{+}$} &                          &                 &                   &        &                       &                   &        &                       & & \phantomgate{H}          &                    &                   &        & \ctrl{-9}             &                   &        &                       & &                          & \meterD{X} \\
        \lstick{$\ket{+}$} &                          &                 &                   &        &                       &                   &        &                       & & \phantomgate{H}          &                    &                   &        &                       & \ctrl{-10}        &        &                       & &                          & \meterD{X} \\[-1.2em]
                           & \setwiretype{n}          & \vdots          &                   &        &                       &                   &        &                       & &                          &                    &                   &        &                       &                   & \ddots &                       & &                          &            \\[-0.7em]
        \lstick{$\ket{+}$} &                          &                 &                   &        &                       &                   &        &                       & & \phantomgate{H}          &                    &                   &        &                       &                   &        & \ctrl{-12}            & &                          & \meterD{X}
    \end{quantikz}
\end{align*}
where the sets $\{S_i^X\}$ and $\{S_i^Z\}$ are generators of the $X$ and $Z$ stabilizer groups respectively. We also provide (in red) an example of detector, corresponding the repeated measurement of $S_1^Z$ in the two rounds. The $Z$-part of this detector is in the support of $S_1^Z$.

We can transform every $X$ stabilizer measurement into $Z$ measurements by inserting Hadamard gates before and after the \texttt{CX} gates.
Denoting by $\bar{S}_i^X$ the operator coming from $S_i^X$ where every $X$ has been turned into a $Z$,
and using \cref{lemma:pushing-h-away}, we get the following equivalent circuit:

\begin{align*}
    \tikzsetnextfilename{sec6-css-proof-2}
    \begin{quantikz}[row sep=10pt, column sep=5pt]
                           & \phantomgate{H}         & \gatebox{15}{8}     & \mygate{3}{S_1^Z} &        & \mygate{3}{S_{m_Z}^Z} & \mygate{1}{H} & \mygate{3}{\bar{S}_1^X}  &        & \mygate{3}{\bar{S}_{m_X}^X}  & & \labeledwire{nicered}{X} & \gate{H} & \labeledwire{nicered}{Z} & &  \gatebox{15}{8} \qw & \mygate{3}{S_1^Z} &        & \mygate{3}{S_{m_Z}^Z} & \mygate{1}{H} & \mygate{3}{\bar{S}_1^X}  &        & \mygate{3}{\bar{S}_{m_X}^X}  & & \phantomgate{H}          & \gate{H} &            \\[-1.5em]
                           & \setwiretype{n}         & \vdots              &                   & \cdots &                       &               &                          & \cdots &                              & &                         &          &                           & &                      &                   & \cdots &                       &               &                          & \cdots &                              & &                          &          &            \\[-1.1em]
                           &                         &                     &                   &        &                       & \mygate{1}{H} &                          &        &                              & & \labeledwire{nicered}{X} & \gate{H} & \labeledwire{nicered}{Z} & &                      &                   &        &                       & \mygate{1}{H} &                          &        &                              & &                          & \gate{H} &            \\
        \lstick{$\ket{+}$} & \labeledwire{nicered}{X} &                    & \ctrl{-1}         &        &                       &               &                          &        &                              & &                         &          &                           & &                      &                   &        &                       &               &                          &        &                              & & \labeledwire{nicered}{X} & \meterD{X} \\[-1.2em]
                           & \setwiretype{n}         & \vdots              &                   & \ddots &                       &               &                          &        &                              & &                         &          &                           & &                      &                   &        &                       &               &                          &        &                              & &                          &            \\[-0.7em]
        \lstick{$\ket{+}$} &                         &                     &                   &        & \ctrl{-3}             &               &                          &        &                              & &                         &          &                           & &                      &                   &        &                       &               &                          &        &                              & &                          & \meterD{X} \\
        \lstick{$\ket{+}$} &                         &                     &                   &        &                       &               & \ctrl{-4}                &        &                              & &                         &          &                           & &                      &                   &        &                       &               &                          &        &                              & &                          & \meterD{X} \\[-1.2em]
                           & \setwiretype{n}         & \vdots              &                   &        &                       &               &                          & \ddots &                              & &                         &          &                           & &                      &                   &        &                       &               &                          &        &                              & &                          &            \\[-0.7em]
        \lstick{$\ket{+}$} &                         &                     &                   &        &                       &               &                          &        & \ctrl{-6}                    & &                         &          &                           & &                      &                   &        &                       &               &                          &        &                              & &                          & \meterD{X} \\
        \lstick{$\ket{+}$} & \labeledwire{nicered}{X} &                    &                   &        &                       &               &                          &        &                              & &                         &          &                           & &                      & \ctrl{-7}         &        &                       &               &                          &        &                              & & \labeledwire{nicered}{X} & \meterD{X} \\[-1.2em]
                           & \setwiretype{n}         & \vdots              &                   &        &                       &               &                          &        &                              & &                         &          &                           & &                      &                   & \ddots &                       &               &                          &        &                              & &                          &            \\[-0.7em]
        \lstick{$\ket{+}$} &                         &                     &                   &        &                       &               &                          &        &                              & &                         &          &                           & &                      &                   &        & \ctrl{-9}             &               &                          &        &                              & &                          & \meterD{X} \\
        \lstick{$\ket{+}$} &                         &                     &                   &        &                       &               &                          &        &                              & &                         &          &                           & &                      &                   &        &                       &               & \ctrl{-10}               &        &                              & &                          & \meterD{X} \\[-1.2em]
                           & \setwiretype{n}         & \vdots              &                   &        &                       &               &                          &        &                              & &                         &          &                           & &                      &                   &        &                       &               &                          & \ddots &                              & &                          &            \\[-0.7em]
        \lstick{$\ket{+}$} &                         &                     &                   &        &                       &               &                          &        &                              & &                         &          &                           & &                      &                   &        &                       &               &                          &        & \ctrl{-12}                   & &                          & \meterD{X}
    \end{quantikz}
\end{align*}

We finally use \cref{lemma:h-cz-sandwich} to compile the Hadamard gates into their measurement-based versions within each blue box, and \cref{lemma:h-and-hs-merge-left,lemma:cz-cz-merge} to merge the \texttt{CZ} networks at different rounds, giving us the following circuit for our two rounds of measurements:

\begin{align*}
    \tikzsetnextfilename{sec6-css-proof-3}
    \begin{quantikz}[row sep=10pt, column sep=2pt]
        \lstick{$\ket{+}$} & \phantomgate{H}           & & \gatebox{27}{28} &                   &        &                       &           &                       &           &                         &        &                             & &           &                       &           & &                     &                   &        &                       &           &                       &           &                         &        &                             & &           &                       & \ctrl{3}  & &                          & &            \\[-1.2em]
                           & \setwiretype{n}           & & \vdots           &                   &        &                       &           &                       &           &                         &        &                             & &           &                       &           & &                     &                   &        &                       &           &                       &           &                         &        &                             & &           & \reflectbox{$\ddots$} &           & &                          & &            \\[-0.7em]
        \lstick{$\ket{+}$} &                           & &                  &                   &        &                       &           &                       &           &                         &        &                             & &           &                       &           & &                     &                   &        &                       &           &                       &           &                         &        &                             & & \ctrl{3}  &                       &           & &                          & &            \\
        \lstick{$\ket{+}$} &                           & &                  &                   &        &                       &           &                       &           &                         &        &                             & &           &                       &           & &                     &                   &        &                       &           &                       & \ctrl{3}  & \mygate{3}{\bar{S}_1^X} &        & \mygate{3}{\bar{S}_{m_X}^X} & &           &                       & \ctrl{-3} & &                          & & \meterD{X} \\[-1.5em]
                           & \setwiretype{n}           & & \vdots           &                   &        &                       &           &                       &           &                         &        &                             & &           &                       &           & &                     &                   &        &                       &           & \reflectbox{$\ddots$} &           &                         & \cdots &                             & &           & \reflectbox{$\ddots$} &           & &                          & &            \\[-1.1em]
        \lstick{$\ket{+}$} &                           & &                  &                   &        &                       &           &                       &           &                         &        &                             & &           &                       &           & &                     &                   &        &                       & \ctrl{3}  &                       &           &                         &        &                             & & \ctrl{-3} &                       &           & &                          & & \meterD{X} \\
        \lstick{$\ket{+}$} &                           & &                  &                   &        &                       &           &                       &           &                         &        &                             & &           &                       & \ctrl{3}  & &                     & \mygate{3}{S_1^Z} &        & \mygate{3}{S_{m_Z}^Z} &           &                       & \ctrl{-3} &                         &        &                             & &           &                       &           & &                          & & \meterD{X} \\[-1.5em]
                           & \setwiretype{n}           & & \vdots           &                   &        &                       &           &                       &           &                         &        &                             & &           & \reflectbox{$\ddots$} &           & &                     &                   & \cdots &                       &           & \reflectbox{$\ddots$} &           &                         &        &                             & &           &                       &           & &                          & &            \\[-1.1em]
        \lstick{$\ket{+}$} &                           & &                  &                   &        &                       &           &                       &           &                         &        &                             & & \ctrl{3}  &                       &           & &                     &                   &        &                       & \ctrl{-3} &                       &           &                         &        &                             & &           &                       &           & &                          & & \meterD{X} \\
        \lstick{$\ket{+}$} & \labeledwire{nicered}{X}  & &                  &                   &        &                       &           &                       & \ctrl{3}  & \mygate{3}{\bar{S}_1^X} &        & \mygate{3}{\bar{S}_{m_X}^X} & &           &                       & \ctrl{-3} & &                     &                   &        &                       &           &                       &           &                         &        &                             & &           &                       &           & & \labeledwire{nicered}{X} & & \meterD{X} \\[-1.5em]
                           & \setwiretype{n}           & & \vdots           &                   &        &                       &           & \reflectbox{$\ddots$} &           &                         & \cdots &                             & &           & \reflectbox{$\ddots$} &           & &                     &                   &        &                       &           &                       &           &                         &        &                             & &           &                       &           & &                          & &            \\[-1.1em]
        \lstick{$\ket{+}$} & \labeledwire{nicered}{X}  & &                  &                   &        &                       & \ctrl{3}  &                       &           &                         &        &                             & & \ctrl{-3} &                       &           & &                     &                   &        &                       &           &                       &           &                         &        &                             & &           &                       &           & & \labeledwire{nicered}{X} & & \meterD{X} \\
                           &                           & &                  & \mygate{3}{S_1^Z} &        & \mygate{3}{S_{m_Z}^Z} &           &                       & \ctrl{-3} &                         &        &                             & &           &                       &           & &                     &                   &        &                       &           &                       &           &                         &        &                             & &           &                       &           & &                          & & \meterD{X} \\[-1.5em]
                           & \setwiretype{n}           & & \vdots           &                   & \cdots &                       &           & \reflectbox{$\ddots$} &           &                         &        &                             & &           &                       &           & &                     &                   &        &                       &           &                       &           &                         &        &                             & &           &                       &           & &                          & &            \\[-1.1em]
                           &                           & &                  &                   &        &                       & \ctrl{-3} &                       &           &                         &        &                             & &           &                       &           & &                     &                   &        &                       &           &                       &           &                         &        &                             & &           &                       &           & &                          & & \meterD{X} \\
        \lstick{$\ket{+}$} & \labeledwire{nicered}{X}  & &                  & \ctrl{-1}         &        &                       &           &                       &           &                         &        &                             & &           &                       &           & &                     &                   &        &                       &           &                       &           &                         &        &                             & &           &                       &           & & \labeledwire{nicered}{X} & & \meterD{X} \\[-1.2em]
                           & \setwiretype{n}           & & \vdots           &                   & \ddots &                       &           &                       &           &                         &        &                             & &           &                       &           & &                     &                   &        &                       &           &                       &           &                         &        &                             & &           &                       &           & &                          & &            \\[-0.7em]
        \lstick{$\ket{+}$} &                           & &                  &                   &        & \ctrl{-3}             &           &                       &           &                         &        &                             & &           &                       &           & &                     &                   &        &                       &           &                       &           &                         &        &                             & &           &                       &           & &                          & & \meterD{X} \\
        \lstick{$\ket{+}$} &                           & &                  &                   &        &                       &           &                       &           & \ctrl{-7}               &        &                             & &           &                       &           & &                     &                   &        &                       &           &                       &           &                         &        &                             & &           &                       &           & &                          & & \meterD{X} \\[-1.2em]
                           & \setwiretype{n}           & & \vdots           &                   &        &                       &           &                       &           &                         & \ddots &                             & &           &                       &           & &                     &                   &        &                       &           &                       &           &                         &        &                             & &           &                       &           & &                          & &            \\[-0.7em]
        \lstick{$\ket{+}$} &                           & &                  &                   &        &                       &           &                       &           &                         &        & \ctrl{-9}                   & &           &                       &           & &                     &                   &        &                       &           &                       &           &                         &        &                             & &           &                       &           & &                          & & \meterD{X} \\
        \lstick{$\ket{+}$} & \labeledwire{nicered}{X}  & &                  &                   &        &                       &           &                       &           &                         &        &                             & &           &                       &           & &                     & \ctrl{-13}        &        &                       &           &                       &           &                         &        &                             & &           &                       &           & & \labeledwire{nicered}{X} & & \meterD{X} \\[-1.2em]
                           & \setwiretype{n}           & & \vdots           &                   &        &                       &           &                       &           &                         &        &                             & &           &                       &           & &                     &                   & \ddots &                       &           &                       &           &                         &        &                             & &           &                       &           & &                          & &            \\[-0.7em]
        \lstick{$\ket{+}$} &                           & &                  &                   &        &                       &           &                       &           &                         &        &                             & &           &                       &           & &                     &                   &        & \ctrl{-15}            &           &                       &           &                         &        &                             & &           &                       &           & &                          & & \meterD{X} \\
        \lstick{$\ket{+}$} &                           & &                  &                   &        &                       &           &                       &           &                         &        &                             & &           &                       &           & &                     &                   &        &                       &           &                       &           & \ctrl{-19}              &        &                             & &           &                       &           & &                          & & \meterD{X} \\[-1.2em]
                           & \setwiretype{n}           & & \vdots           &                   &        &                       &           &                       &           &                         &        &                             & &           &                       &           & &                     &                   &        &                       &           &                       &           &                         & \ddots &                             & &           &                       &           & &                          & &            \\[-0.7em]
        \lstick{$\ket{+}$} &                           & &                  &                   &        &                       &           &                       &           &                         &        &                             & &           &                       &           & &                     &                   &        &                       &           &                       &           &                         &        & \ctrl{-21}                  & &           &                       &           & &                          & & \meterD{X}
    \end{quantikz}
\end{align*}
Since this circuit is made only of a \texttt{CZ} network with qubits either initialized in the $\ket{+}$ state or uninitialized, and measurements in the $X$ basis, it defines an MBQC pattern $\mbqc(A,\mathcal{I},\mathcal{O},b=\bm{0})$. Using \cref{def:compressed-rep-cluster-state}, we get the compressed representation by adding $2n+m_X+m_Z$ virtual nodes to represent $X$ errors at the input and output wires of the circuit, as well as gauge operators corresponding to the $X$ and $Z$ stabilizers of the input qubits. However, each one of the $m_Z$ new gauge operators associated to an input $Z$ stabilizer is the same as the gauge operator associated to the first measurement of this $Z$ stabilizer, and can therefore be removed from the graph. The new graph corresponds exactly to the one described in the theorem statement.

The pure $X$-type detector drawn on the circuit above (corresponding to successive measurements of $S_1^Z$) is supported on the ancilla qubits used to measure $S_1^Z$ in layers one and three, and on the data qubits in round two.
This readily generalizes to the successive measurement of any other stabilizer $S_i^Z$ or $S_i^X$.

It remains to derive the input and output detectors described in the theorem statement.
Since the part of the original circuit representing the first round of $Z$ measurements has not been changed during the transformation, the following detector, corresponding to the backle of the first $S_1^Z$ measurement, remains a detector after the transformation:
\begin{align*}
    \tikzsetnextfilename{sec6-css-proof-4}
    \begin{quantikz}[row sep=10pt, column sep=6pt]
                           & \phantomgate{H} \labeledwire{nicered}{Z} & \gatebox{9}{7} & \mygate{3}{S_1^Z} &        & \mygate{3}{S_{m_Z}^Z} & \mygate{3}{S_1^X} &        & \mygate{3}{S_{m_X}^X} & \phantomgate{H} &                          &             \\[-1.5em]
                           & \setwiretype{n}                          & \vdots         &                   & \cdots &                       &                   & \cdots &                       &                 &                          &             \\[-1.1em]
                           & \phantomgate{H} \labeledwire{nicered}{Z} &                &                   &        &                       &                   &        &                       &                 &                          &             \\
        \lstick{$\ket{+}$} & \labeledwire{nicered}{X}                 &                & \ctrl{-1}         &        &                       &                   &        &                       &                 & \labeledwire{nicered}{X} &  \meterD{X} \\[-1.2em]
                           & \setwiretype{n}                          & \vdots         &                   & \ddots &                       &                   &        &                       &                 &                          &             \\[-0.7em]
        \lstick{$\ket{+}$} &                                          &                &                   &        & \ctrl{-3}             &                   &        &                       &                 &                          &  \meterD{X} \\
        \lstick{$\ket{+}$} &                                          &                &                   &        &                       & \ctrl{-4}         &        &                       &                 &                          &  \meterD{X} \\[-1.2em]
                           & \setwiretype{n}                          & \vdots         &                   &        &                       &                   & \ddots &                       &                 &                          &             \\[-0.7em]
        \lstick{$\ket{+}$} &                                          &                &                   &        &                       &                   &        & \ctrl{-6}             &                 &                          &  \meterD{X}
    \end{quantikz}
\end{align*}
where the $Z$ labels on the data qubits represent the support of $S_1^Z$.
This spacetime stabilizer becomes the detector connecting $a_1^{(1)}$ to $q^{(0)}_{j_1},\ldots,q^{(0)}_{j_k}$ in the compressed representation of the cluster state complex.
We can also derive the detectors corresponding to the backle of the first round of $X$-type stabilizer measurements.
For instance, for $S^1_X$ we draw the backle of the $X$ measurements of both the ancilla qubit used to measure $S^1_X$ and of the data qubits in the support of $S^1_X$ (due to the replacement of the Hadamard gates by their MBQC version):
\begin{align*}
    \tikzsetnextfilename{sec6-css-proof-5}
    \begin{quantikz}[row sep=10pt, column sep=2pt]
        \lstick{$\ket{+}$} & \phantomgate{H}           & & \gatebox{12}{10} &                   &        &                       &           &                       & \ctrl{3}  & \mygate{3}{\bar{S}_1^X} &        & \mygate{3}{\bar{S}_{m_X}^X} & \phantomgate{H} &                          & & \meterD{X} \\[-1.5em]
                           & \setwiretype{n}           & & \vdots           &                   &        &                       &           & \reflectbox{$\ddots$} &           &                         & \cdots &                             &                 &                          & &            \\[-1.1em]
        \lstick{$\ket{+}$} &                           & &                  &                   &        &                       & \ctrl{3}  &                       &           &                         &        &                             &                 &                          & & \meterD{X} \\
                           & \labeledwire{nicered}{X}  & &                  & \mygate{3}{S_1^Z} &        & \mygate{3}{S_{m_Z}^Z} &           &                       & \ctrl{-3} &                         &        &                             &                 & \labeledwire{nicered}{X} & & \meterD{X} \\[-1.5em]
                           & \setwiretype{n}           & & \vdots           &                   & \cdots &                       &           & \reflectbox{$\ddots$} &           &                         &        &                             &                 &                          & &            \\[-1.1em]
                           & \labeledwire{nicered}{X}  & &                  &                   &        &                       & \ctrl{-3} &                       &           &                         &        &                             &                 & \labeledwire{nicered}{X} & & \meterD{X} \\
        \lstick{$\ket{+}$} &                           & &                  & \ctrl{-1}         &        &                       &           &                       &           &                         &        &                             &                 &                          & & \meterD{X} \\[-1.2em]
                           & \setwiretype{n}           & & \vdots           &                   & \ddots &                       &           &                       &           &                         &        &                             &                 &                          & &            \\[-0.7em]
        \lstick{$\ket{+}$} &                           & &                  &                   &        & \ctrl{-3}             &           &                       &           &                         &        &                             &                 &                          & & \meterD{X} \\
        \lstick{$\ket{+}$} & \labeledwire{nicered}{X}  & &                  &                   &        &                       &           &                       &           & \ctrl{-7}               &        &                             &                 & \labeledwire{nicered}{X} & & \meterD{X} \\[-1.2em]
                           & \setwiretype{n}           & & \vdots           &                   &        &                       &           &                       &           &                         & \ddots &                             &                 &                          & &            \\[-0.7em]
        \lstick{$\ket{+}$} &                           & &                  &                   &        &                       &           &                       &           &                         &        & \ctrl{-9}                   &                 &                          & & \meterD{X}
    \end{quantikz}
\end{align*}
where the $X$ labels on the data qubits represent the support of $\overline{S_1^X}$.
They allow the $Z$s to cancel on the top wires.
This gives us the detector connecting $a^{(2)}_1$ to $q^{(1)}_{j_1},\ldots,q^{(1)}_{j_k}$ in the compressed representation of the cluster state complex.

Let us now derive the detectors corresponding to the spackle of $X$ and $Z$ measurements at the last round.
The spacetime stabilizer corresponding to the last measurement of $S_1^Z$ is as follows:
\begin{align*}
    \tikzsetnextfilename{sec6-css-proof-6}
    \begin{quantikz}[row sep=10pt, column sep=2pt]
        \lstick{$\ket{+}$} &                                          & & \gatebox{15}{13} \push{\hspace{0.5em}\cdots\hspace{0.5em}} &                   &        &                       &           &                       &           &                         &        &                             &           &                       & \ctrl{3}  & \phantomgate{H} & \labeledwire{nicered}{Z} & &            \\[-1.2em]
                           & \setwiretype{n}                          & & \vdots                                                     &                   &        &                       &           &                       &           &                         &        &                             &           & \reflectbox{$\ddots$} &           &                 &                          & &            \\[-0.4em]
        \lstick{$\ket{+}$} &                                          & & \push{\hspace{0.5em}\cdots\hspace{0.5em}}                  &                   &        &                       &           &                       &           &                         &        &                             & \ctrl{3}  &                       &           &                 & \labeledwire{nicered}{Z} & &            \\
        \lstick{$\ket{+}$} & \phantomgate{H} \labeledwire{nicered}{X} & & \push{\hspace{0.5em}\cdots\hspace{0.5em}}                  &                   &        &                       &           &                       & \ctrl{3}  & \mygate{3}{\bar{S}_1^X} &        & \mygate{3}{\bar{S}_{m_X}^X} &           &                       & \ctrl{-3} &                 & \labeledwire{nicered}{X} & & \meterD{X} \\[-1.5em]
                           & \setwiretype{n}                          & & \vdots                                                     &                   &        &                       &           & \reflectbox{$\ddots$} &           &                         & \cdots &                             &           & \reflectbox{$\ddots$} &           &                 &                          & &            \\[-1.1em]
        \lstick{$\ket{+}$} & \phantomgate{H} \labeledwire{nicered}{X} & & \push{\hspace{0.5em}\cdots\hspace{0.5em}}                  &                   &        &                       & \ctrl{3}  &                       &           &                         &        &                             & \ctrl{-3} &                       &           &                 & \labeledwire{nicered}{X} & & \meterD{X} \\
        \lstick{$\ket{+}$} &                                          & & \push{\hspace{0.5em}\cdots\hspace{0.5em}}                  & \mygate{3}{S_1^Z} &        & \mygate{3}{S_{m_Z}^Z} &           &                       & \ctrl{-3} &                         &        &                             &           &                       &           &                 &                          & & \meterD{X} \\[-1.5em]
                           & \setwiretype{n}                          & & \vdots                                                     &                   & \cdots &                       &           & \reflectbox{$\ddots$} &           &                         &        &                             &           &                       &           &                 &                          & &            \\[-1.1em]
        \lstick{$\ket{+}$} &                                          & & \push{\hspace{0.5em}\cdots\hspace{0.5em}}                  &                   &        &                       & \ctrl{-3} &                       &           &                         &        &                             &           &                       &           &                 &                          & & \meterD{X} \\
        \lstick{$\ket{+}$} & \phantomgate{H} \labeledwire{nicered}{X} & &                                                            & \ctrl{-1}         &        &                       &           &                       &           &                         &        &                             &           &                       &           &                 & \labeledwire{nicered}{X} & & \meterD{X} \\[-1.2em]
                           & \setwiretype{n}                          & & \vdots                                                     &                   & \ddots &                       &           &                       &           &                         &        &                             &           &                       &           &                 &                          & &            \\[-0.7em]
        \lstick{$\ket{+}$} &                                          & &                                                            &                   &        & \ctrl{-3}             &           &                       &           &                         &        &                             &           &                       &           &                 &                          & & \meterD{X} \\
        \lstick{$\ket{+}$} &                                          & &                                                            &                   &        &                       &           &                       &           & \ctrl{-7}               &        &                             &           &                       &           &                 &                          & & \meterD{X} \\[-1.2em]
                           & \setwiretype{n}                          & & \vdots                                                     &                   &        &                       &           &                       &           &                         & \ddots &                             &           &                       &           &                 &                          & &            \\[-0.7em]
        \lstick{$\ket{+}$} &                                          & &                                                            &                   &        &                       &           &                       &           &                         &        & \ctrl{-9}                   &           &                       &           &                 &                          & & \meterD{X}
    \end{quantikz}
\end{align*}
where the $X$ and $Z$ labels on the data qubits both represent the support of $S_1^Z$.
This spacetime stabilizer becomes a detector connecting $a^{(2T-1)}_1$ with $q^{(2T)}_{j_1},\ldots,q^{(2T)}_{j_k}$ and $q^{(2T+2)}_{j_1},\ldots,q^{(2T+2)}_{j_k}$ (due to the $Z$'s on the last layer of data qubits) in the compressed representation of the cluster state complex.
Similarly, the spacetime stabilizer corresponding to the final measurement of $S_1^X$ is as follows:
\begin{align*}
    \tikzsetnextfilename{sec6-css-proof-7}
    \begin{quantikz}[row sep=10pt, column sep=2pt]
        \lstick{$\ket{+}$} & \phantomgate{H} \labeledwire{nicered}{X} & & \gatebox{15}{13} \push{\hspace{0.5em}\cdots\hspace{0.5em}} &                   &        &                       &           &                       &           &                         &        &                             &           &                       & \ctrl{3}  & \phantomgate{H} & \labeledwire{nicered}{X} & &            \\[-1.2em]
                           & \setwiretype{n}                          & & \vdots                                                     &                   &        &                       &           &                       &           &                         &        &                             &           & \reflectbox{$\ddots$} &           &                 &                          & &            \\[-0.4em]
        \lstick{$\ket{+}$} & \phantomgate{H} \labeledwire{nicered}{X} & & \push{\hspace{0.5em}\cdots\hspace{0.5em}}                  &                   &        &                       &           &                       &           &                         &        &                             & \ctrl{3}  &                       &           &                 & \labeledwire{nicered}{X} & &            \\
        \lstick{$\ket{+}$} &                                          & & \push{\hspace{0.5em}\cdots\hspace{0.5em}}                  &                   &        &                       &           &                       & \ctrl{3}  & \mygate{3}{\bar{S}_1^X} &        & \mygate{3}{\bar{S}_{m_X}^X} &           &                       & \ctrl{-3} &                 &                          & & \meterD{X} \\[-1.5em]
                           & \setwiretype{n}                          & & \vdots                                                     &                   &        &                       &           & \reflectbox{$\ddots$} &           &                         & \cdots &                             &           & \reflectbox{$\ddots$} &           &                 &                          & &            \\[-1.1em]
        \lstick{$\ket{+}$} &                                          & & \push{\hspace{0.5em}\cdots\hspace{0.5em}}                  &                   &        &                       & \ctrl{3}  &                       &           &                         &        &                             & \ctrl{-3} &                       &           &                 &                          & & \meterD{X} \\
        \lstick{$\ket{+}$} &                                          & & \push{\hspace{0.5em}\cdots\hspace{0.5em}}                  & \mygate{3}{S_1^Z} &        & \mygate{3}{S_{m_Z}^Z} &           &                       & \ctrl{-3} &                         &        &                             &           &                       &           &                 &                          & & \meterD{X} \\[-1.5em]
                           & \setwiretype{n}                          & & \vdots                                                     &                   & \cdots &                       &           & \reflectbox{$\ddots$} &           &                         &        &                             &           &                       &           &                 &                          & &            \\[-1.1em]
        \lstick{$\ket{+}$} &                                          & & \push{\hspace{0.5em}\cdots\hspace{0.5em}}                  &                   &        &                       & \ctrl{-3} &                       &           &                         &        &                             &           &                       &           &                 &                          & & \meterD{X} \\
        \lstick{$\ket{+}$} &                                          & &                                                            & \ctrl{-1}         &        &                       &           &                       &           &                         &        &                             &           &                       &           &                 &                          & & \meterD{X} \\[-1.2em]
                           & \setwiretype{n}                          & & \vdots                                                     &                   & \ddots &                       &           &                       &           &                         &        &                             &           &                       &           &                 &                          & &            \\[-0.7em]
        \lstick{$\ket{+}$} &                                          & &                                                            &                   &        & \ctrl{-3}             &           &                       &           &                         &        &                             &           &                       &           &                 &                          & & \meterD{X} \\
        \lstick{$\ket{+}$} & \phantomgate{H} \labeledwire{nicered}{X} & &                                                            &                   &        &                       &           &                       &           & \ctrl{-7}               &        &                             &           &                       &           &                 & \labeledwire{nicered}{X} & & \meterD{X} \\[-1.2em]
                           & \setwiretype{n}                          & & \vdots                                                     &                   &        &                       &           &                       &           &                         & \ddots &                             &           &                       &           &                 &                          & &            \\[-0.7em]
        \lstick{$\ket{+}$} &                                          & &                                                            &                   &        &                       &           &                       &           &                         &        & \ctrl{-9}                   &           &                       &           &                 &                          & & \meterD{X}
    \end{quantikz}
\end{align*}
and turns into a detector connecting $a^{(2T)}_1$ with $q^{(2T+1)}_{j_1},\ldots,q^{(2T+1)}_{j_k}$ in the compressed representation of the cluster state complex.

Finally, the detectors connected to the virtual ancilla nodes come from the construction of the augmented cluster state complex (the identity part of $\tilde{H}$) in \cref{def:compressed-rep-cluster-state}.
\end{proof}

\subsection{General stabilizer codes}

\begin{figure}
    \centering
    \subfloat[]{
        \includegraphics[width=0.17\textwidth]{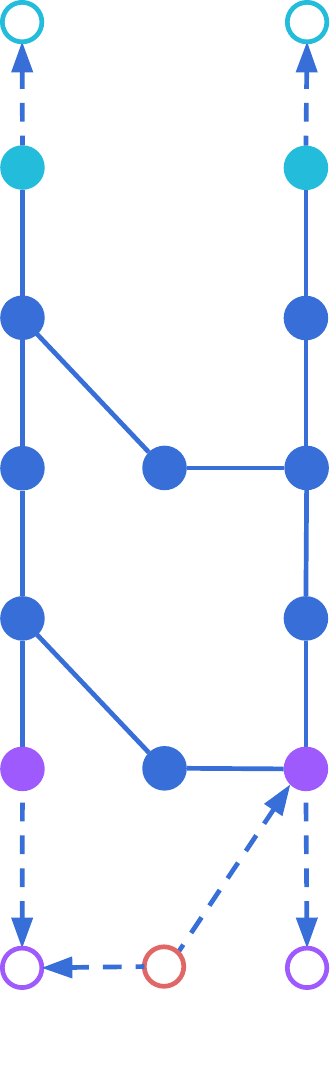}
    }
    \hspace{2em}
    \subfloat[]{
        \includegraphics[width=0.17\textwidth]{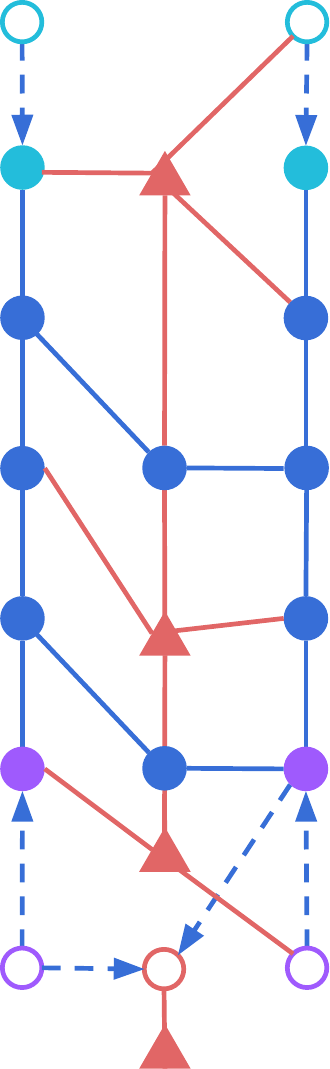}
    }
    \hspace{2em}
    \subfloat[]{
        \includegraphics[width=0.17\textwidth]{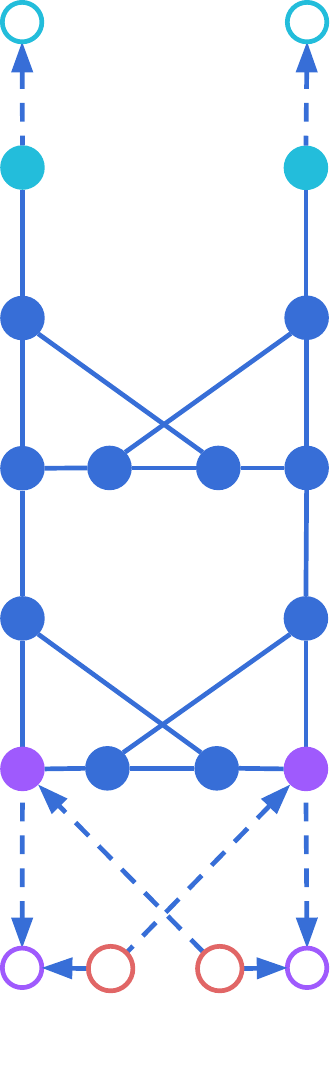}
    }
    \hspace{2em}
    \subfloat[]{
        \includegraphics[width=0.17\textwidth]{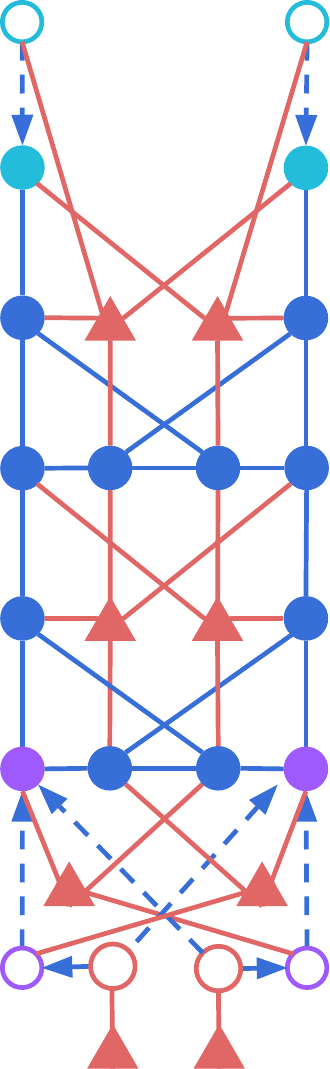}
    }
    \caption{ \label{fig:non-css-cluster-states}
        Cluster state complexes of two foliated non-CSS codes, with stabilizer groups $\langle XZ \rangle$ (a and b) and $\langle XZ, ZX \rangle$ (c and d). \textbf{(a)} and \textbf{(c)} Compressed representation of the cluster state complex, without detector nodes for clarity. \textbf{(b)} and \textbf{(d)} Compressed co-representation of the cluster state complex, with detector nodes. In the second complex, the two ancilla nodes at each layer are connected due to the odd intersection between the $X$-part of $XZ$ (i.e. $X_1$) and the $Z$-part of $ZX$ (i.e. $Z_1$)
    }
\end{figure}

We derive here the cluster state complex for a non-CSS code phenomenological memory experiment. The corresponding MBQC circuits are the same as the one described in \cite{BrownUniversalFTMBQC2020}. Examples of such cluster state complexes are given in \cref{fig:non-css-cluster-states}.

\begin{theorem}
Consider the circuit made of $T$ rounds of stabilizer measurements of a general $[[n,k,d]]$ stabilizer code, where $m$ stabilizer generators $\{S_1,\ldots,S_m\}$ are measured each round using some ancilla qubits, and errors are assumed to only happen in-between successive rounds.
Let $H$ be the parity-check matrices associated to $S_1,\ldots, S_m$, written in the binary symplectic format. We write $H=(H^X|H^Z)$ to distinguish the $X$ part and the $Z$ part of the parity-check matrix, and $S_i=(S_i^X | S_j^Z)$ to distinguish the $X$ and $Z$ parts of each stabilizer $S_i$.
The circuit is equivalent to a cluster state complex described by the following graph (in its compressed representation):
\begin{itemize}
    \item It contains $2T+3$ layers $\{0,\ldots,2T+2 \}$, where each layer $\ell$ contains $n$ data nodes $q_1^{(\ell)},\ldots,q_n^{(\ell)}$, and each odd layer $\ell \leq 2T-1$ contains $m$ ancilla nodes $a_1^{(\ell)},\ldots,a_m^{(\ell)}$. The nodes on layers $0$ and $2T+2$ are virtual, while all the other nodes are physical.
    \item For any odd layer $\ell \leq 2T-1$, the ancilla node $a_i^{(\ell)}$ is connected to the data nodes at the same level according to $H_Z$ and at a level above according to $H_X$, that is, $a_i^{(\ell)}$ is connected to $q_j^{(\ell)}$ if and only if $H^Z_{ij}=1$ and to $q_j^{(\ell+1)}$ if and only if $H^X_{ij}=1$.
    \item The ancilla nodes $a_i^{(\ell)}$ and $a_j^{(\ell)}$ are connected if and only if $S_i^Z$ and $S_j^X$ have an intersection of odd weight, i.e. $|S_i^Z \cdot S_j^X|=1$ modulo two. Note that for $i=j$, it means there is a self-loop on the node $a_i^{(\ell)}$ if and only if $S_i$ has an odd number of $Y$.
    For $1 \leq \ell \leq 2T$, every data node $q_i^{(\ell)}$ is connected to the data node $q_i^{(\ell+1)}$ in the next layer. For $\ell = 0$, there is directed edge from $q_i^{(\ell+1)}$ to $q_i^{(\ell)}$. For $\ell = 2T+1$, there is directed edge from $q_i^{(\ell)}$ and $q_i^{(\ell+1)}$.
    \item For every ancilla node $a_i^{(\ell)}$ ($1 \leq \ell \leq 2T$, $\ell$ odd) connected to a set of data node $q^{(\ell_1)}_{j_1},\ldots,q^{(\ell_k)}_{j_k}$ (with $\ell_1,\ldots,\ell_k \in \{\ell,\ell+1\}$ and $j_1,\ldots,j_k \in \{1,\ldots,n\}$), we have a detector node connected to $a_i^{(\ell)}$, $q^{(\ell_1+1)}_{j_1},\ldots,q^{(\ell_k+1)}_{j_k}$, $a_i^{(\ell+2)}$ if $\ell \leq 2T-2$, and $q^{(2T+2)}_{j_i}$ for all $i$ such that $\ell_i=2T$.
    These detectors correspond to spackles in the spacetime code of the circuit.
    Corresponding to the backles, we also have one detector node per ancilla node $a_i^{(1)}$ on the first layer, connected to $a_i^{(1)}$ and $q^{(\ell_1-1)}_{j_1},\ldots,q^{(\ell_k-1)}_{j_k}$.
    Finally, we have one detector node per virtual ancilla node, connected only to it.
\end{itemize}
\end{theorem}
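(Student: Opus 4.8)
The plan is to mirror the structure of the CSS proof, adding two new ingredients: a measurement gadget for a general Pauli stabilizer, and the bookkeeping that produces the ancilla--ancilla edges and self-loops that are absent in the CSS case. First I would write down the $T$-round memory circuit, in which each generator $S_i = (S_i^X \mid S_i^Z)$ is measured with a fresh $\ket{+}$ ancilla $a_i$ via a controlled-$S_i$ gate followed by an $X$-basis measurement of $a_i$, with Pauli errors inserted only on the wires between rounds. Using the decomposition of controlled-Pauli gates into controlled-$Z$, controlled-$X$ and controlled-$Y$ operations together with $\mathtt{CX} = \mathtt{H}\,\mathtt{CZ}\,\mathtt{H}$ and $\mathtt{CY} = \mathtt{CX}\,\mathtt{CZ}$ (the stray Paulis being irrelevant, by the argument already used in \cref{sec:pre-compilation}), I rewrite the gadget for $S_i$ so that $a_i$ is \texttt{CZ}-connected directly to the data qubits in $\supp S_i^Z$---this becomes the ``same layer, governed by $H^Z$'' connectivity---and \texttt{CZ}-connected to the data qubits in $\supp S_i^X$ through conjugating Hadamards on those data qubits---the ``layer above, governed by $H^X$'' connectivity. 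Within each round I arrange the gadgets so that all $Z$-type \texttt{CZ}s precede a single Hadamard layer on $\bigcup_i \supp S_i^X$, which precedes all $X$-type \texttt{CZ}s and then a second Hadamard layer returning the data qubits to the computational basis.

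Next I would push the Hadamards out of the \texttt{CZ} blue boxes with \cref{lemma:pushing-h-away}, compile the sandwiched Hadamard layers into \texttt{CZ} networks with \cref{lemma:h-cz-sandwich}, and merge all resulting networks across rounds using \cref{lemma:h-and-hs-merge-left} and \cref{lemma:cz-cz-merge}, exactly as in the CSS proof. The output is a single \texttt{CZ} network on $\ket{+}$ and uninitialized qubits followed by single-qubit measurements, so it realizes a Clifford MBQC pattern, and \cref{theorem:mbqc-circuit-equivalent-to-cluster-state-complex} identifies its spacetime complex with the corresponding cluster state complex of \cref{def:cluster-state-complex}. It then remains to check that this complex, in the compressed representation of \cref{def:compressed-rep-cluster-state}, is the graph in the statement: the data--data teleportation edges and the directed boundary edges at layers $0$ and $2T+1$ come out as in the CSS case, the input-stabilizer gauge nodes coincide with the gauge nodes of the first-round measurements and are deleted, and the ancilla connectivity is read off from the merged adjacency matrix. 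The decisive point is that applying \cref{lemma:h-cz-sandwich} to the Hadamard layer separating the $Z$- and $X$-type \texttt{CZ}s produces, through the ``$B(A_i \odot h)$''-type term that appears in that lemma, an edge between $a_i$ and $a_j$ precisely when $|S_i^Z \cdot S_j^X| \equiv 1 \pmod 2$ and a self-loop on $a_i$ precisely when $S_i$ has an odd number of $Y$'s; commutativity of $S_i$ and $S_j$ makes the first condition symmetric in $i$ and $j$.

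Finally I would exhibit the detector nodes as spackles and backles of the original circuit, as in the CSS proof, relying on the observation at the start of \cref{sec:from-spacetime-codes-to-cluster-states} that rules A and B do not alter the support of detectors on input/output nodes, so these spackles and backles survive the compilation. The spackle of the $r$-th measurement of $S_i$ with $r < T$ gives the detector on $a_i^{(\ell)}$, the shifted data nodes at the next layer, and $a_i^{(\ell+2)}$; the final-round spackle additionally picks up data nodes at layer $2T+2$ on the wires in $\supp S_i^X$ (the $Z$'s left behind by the measurement-based Hadamards); the first-round backle gives the layer-$0$ detector; and the detectors attached to virtual ancilla nodes are the identity block of $\tilde H$ from \cref{def:compressed-rep-cluster-state}.

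The \textbf{main obstacle} I expect is the second step: the $\mathbb{F}_2$ bookkeeping through \cref{lemma:h-cz-sandwich} and \cref{lemma:cz-cz-merge} showing that the ancilla--ancilla edges and self-loops emerge exactly as $|S_i^Z \cdot S_j^X| \bmod 2$. This is genuinely new relative to the CSS case, where $S_i^Z \cdot S_j^X = 0$ for all $i,j$ and no such edges occur, and it is where the non-CSS structure (mixed Pauli stabilizers, $Y$ components) interacts nontrivially with the compilation; care is also needed to pick a consistent ordering of the per-round gadgets so that this merge yields all the edges at once without spurious extra terms.
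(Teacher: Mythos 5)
Your overall skeleton (decompose each controlled-$S_i$ into $Z$- and $X$-parts, sandwich a single Hadamard layer, compile via \cref{lemma:h-cz-sandwich}, merge with \cref{lemma:cz-cz-merge}, then track spackles and backles) matches the paper's proof, and your treatment of the layered structure, the deleted input-stabilizer gauge nodes, and the detectors is essentially correct. However, the step you yourself flag as the main obstacle is where the proposal breaks: you attribute the ancilla--ancilla edges and self-loops to the ``$B(A_i \odot h)$''-type term of \cref{lemma:h-cz-sandwich}. That term describes how an $X$ error on a Hadamard wire propagates through the second \texttt{CZ} network --- it is part of a \emph{gauge operator's support}, not an edge of the cluster state graph. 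The graph produced by \cref{lemma:h-cz-sandwich} contains only the $CZ_A$ edges, the teleportation edges, and the $CZ_B$ edges; it never creates an edge between two wires that carry no Hadamard (the ancillas). So no amount of bookkeeping through that lemma will produce the $|S_i^Z \cdot S_j^X| \bmod 2$ connectivity.

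The missing ingredient is that your rearrangement ``all $Z$-type \texttt{CZ}s, then one Hadamard layer, then all $X$-type \texttt{CZ}s'' is not free. The gates $C\text{-}S_i^X$ (controlled on ancilla $i$) and $C\text{-}S_j^Z$ (controlled on ancilla $j$) fail to commute exactly when $S_i^X$ and $S_j^Z$ overlap on an odd number of data qubits, and commuting them requires the identity $\mathtt{CX}_{a,q}\,\mathtt{CZ}_{a',q} = \mathtt{CZ}_{a',q}\,\mathtt{CX}_{a,q}\,\mathtt{CZ}_{a,a'}$ (the paper's \cref{eq:cx-cz-identity}), which deposits an explicit \texttt{CZ} \emph{between the two ancillas}. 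These deposited gates form the $CZ_A$ network on the ancillas and are precisely the source of the off-diagonal ancilla--ancilla edges. The self-loops have yet another origin: when $S_i$ contains an odd number of $Y$'s, the ancilla's initial $X$ stabilizer propagates to a $Y$ on the ancilla, forcing a $Y$-basis final measurement, and it is the resulting $b_i = 1$ entry (the $B = \mathrm{diag}(b)$ block of the $G$ matrix in \cref{def:cluster-state-complex}) that appears as a self-loop in the compressed representation --- your proposal never discusses the $X$-versus-$Y$ measurement basis of the ancillas at all. With these two mechanisms substituted for the one you proposed, the rest of your argument goes through as in the paper.
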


\begin{proof}
One round of stabilizer measurements for a non-CSS code can be represented by the following circuit:
\begin{align*}
    \tikzsetnextfilename{sec6-general-proof-1}
    \begin{quantikz}[row sep=10pt, column sep=6pt]
                           & \phantomgate{H} & \gatebox{6}{4}  & \mygate{3}{S_1}   &        & \mygate{3}{S_{m}} &  &            \\[-1.5em]
                           & \setwiretype{n} & \vdots          &                   & \cdots &                   &  &            \\[-1.1em]
                           &                 &                 &                   &        &                   &  &            \\
        \lstick{$\ket{+}$} &                 &                 & \ctrl{-1}         &        &                   &  & \meterD{M_1} \\[-1.2em]
                           & \setwiretype{n} & \vdots          &                   & \ddots &                   &  &            \\[-0.7em]
        \lstick{$\ket{+}$} &                 &                 &                   &        & \ctrl{-3}         &  & \meterD{M_m}
    \end{quantikz}
\end{align*}
where $M_i=X$ if there is an even number of $Y$ in $S_i$, and $M_i=Y$ otherwise.
This can be understood by noticing that for every \texttt{CY} in the decomposition of control-$S_i$, the initial $X$ stabilizer of the $\ket{+}$ state propagates as follows:
\begin{align}
    \tikzsetnextfilename{sec6-general-proof-2}
    \begin{quantikz}
                           &                           & \targ{}   & \ctrl{1}  & \\
        \lstick{$\ket{+}$} & \labeledwire{nicered}{X}  & \ctrl{-1} & \ctrl{-1} &
    \end{quantikz}
    \;\;\; \longrightarrow \;\;\;
    \tikzsetnextfilename{sec6-general-proof-3}
    \begin{quantikz}
                           & \targ{}   & \ctrl{1}  & \labeledwire{nicered}{Y} \\
        \lstick{$\ket{+}$} & \ctrl{-1} & \ctrl{-1} & \labeledwire{nicered}{Y}
    \end{quantikz}
\end{align}
Therefore, we need to measure the ancilla qubit in the $Y$ basis in order to measure $Y$ on the top qubit.
Since every \texttt{CY} contributes to a $Z$ on the ancilla qubit when writing down the propagation of the initial $X$ operator, we only measure it in the $Y$ basis when the number of \texttt{CY} is odd.

Decomposing every control Pauli gate into its $X$ and $Z$ parts gives us:
\begin{align*}
    \tikzsetnextfilename{sec6-general-proof-4}
    \begin{quantikz}[row sep=10pt, column sep=6pt]
                           & \phantomgate{H} & \gatebox{6}{6}  & \mygate{3}{S^Z_1} & \mygate{3}{S^X_1} &        & \mygate{3}{S^Z_{m}} & \mygate{3}{S^X_{m}} &  &            \\[-1.5em]
                           & \setwiretype{n} & \vdots          &                   &                   & \cdots &                     &                     &  &            \\[-1.1em]
                           &                 &                 &                   &                   &        &                     &                     &  &            \\
        \lstick{$\ket{+}$} &                 &                 & \ctrl{-1}         & \ctrl{-1}         &        &                     &                     &  & \meterD{M_1} \\[-1.2em]
                           & \setwiretype{n} & \vdots          &                   &                   & \ddots &                     &                     &  &            \\[-0.7em]
        \lstick{$\ket{+}$} &                 &                 &                   &                   &        & \ctrl{-3}           & \ctrl{-3}           &  & \meterD{M_m}
    \end{quantikz}
\end{align*}
We might now be tempted to turn every control $S^X_i$ into \texttt{CZ} and \texttt{H}, in order to then compile the \texttt{H} gates into their measurement-based version.
However, the presence of multiple layers of \texttt{H} gates within a blue box would prevent us from using \cref{lemma:h-cz-sandwich} to compile the \texttt{H} gates into \texttt{CZ} gates.
The key is to use the circuit identity
\begin{align} \label{eq:cx-cz-identity}
    \tikzsetnextfilename{sec6-general-proof-5}
    \begin{quantikz}
        & \targ{}   & \ctrl{2}  &  &  \\
        & \ctrl{-1} &           &  & \\
        &           & \ctrl{-2} &  &
    \end{quantikz}
    =
    \tikzsetnextfilename{sec6-general-proof-6}
    \begin{quantikz}
        & \ctrl{2}  & \targ{}   &           &  \\
        &           & \ctrl{-1} & \ctrl{1}  & \\
        & \ctrl{-2} &           & \ctrl{-1} &
    \end{quantikz}
\end{align}
which allows to push all the \texttt{CZ}s to the left and all the \texttt{CNOT}s to the right, at the cost of adding new connections between ancilla qubits.
We can use this identity to commute the control $S^X_i$ and $S^Z_{i+1}$ gates, inserting a \texttt{CZ} between their control qubits if $S^X_i$ and $S^Z_{i+1}$ intersect non-trivially on an odd number of qubits. For instance, if $S^X_1$ and $S^Z_2$ have an odd intersection, we obtain the following circuit:
\begin{align*}
    \tikzsetnextfilename{sec6-general-proof-7}
    \begin{quantikz}[row sep=11pt, column sep=6pt]
                           & \phantomgate{H} & \gatebox{7}{8} & \mygate{3}{S^Z_1} & \mygate{3}{S^Z_2} & \mygate{3}{S^X_1} & \mygate{3}{S^X_2} &        & \mygate{3}{S^Z_{m}} & \mygate{3}{S^X_{m}} &  &              \\[-1.5em]
                           & \setwiretype{n} & \vdots         &                   &                   &                   &                   & \cdots &                     &                     &  &              \\[-1.1em]
                           &                 &                &                   &                   &                   &                   &        &                     &                     &  &              \\
        \lstick{$\ket{+}$} &                 & \ctrl{1}       & \ctrl{-1}         &                   & \ctrl{-1}         &                   &        &                     &                     &  & \meterD{M_1} \\[-0.5em]
        \lstick{$\ket{+}$} &                 & \ctrl{-1}      &                   & \ctrl{-2}         &                   & \ctrl{-2}         &        &                     &                     &  & \meterD{M_2} \\[-1.2em]
                           & \setwiretype{n} & \vdots         &                   &                   &                   &                   & \ddots &                     &                     &  &              \\[-0.7em]
        \lstick{$\ket{+}$} &                 &                &                   &                   &                   &                   &        & \ctrl{-4}           & \ctrl{-4}           &  & \meterD{M_m}
    \end{quantikz}
\end{align*}
Repeating this commuting process for all the stabilizers, our circuit takes the form:
\begin{align*}
    \tikzsetnextfilename{sec6-general-proof-8}
    \begin{quantikz}[row sep=11pt, column sep=6pt]
                           & \phantomgate{H} & \gatebox{6}{8} & \mygate{3}{S^Z_1} &        & \mygate{3}{S^Z_{m}} & \mygate{3}{S^X_{1}} &        & \mygate{3}{S^X_{m}} &                  & &              \\[-1.5em]
                           & \setwiretype{n} & \vdots         &                   & \cdots &                     &                     & \cdots &                     &                  & &              \\[-1.1em]
                           &                 &                &                   &        &                     &                     &        &                     &                  & &              \\
        \lstick{$\ket{+}$} &                 &                & \ctrl{-1}         &        &                     & \ctrl{-1}           &        &                     & \mygate{3}{CZ_A} & & \meterD{M_1} \\[-1.2em]
                           & \setwiretype{n} & \vdots         &                   & \ddots &                     &                     & \ddots &                     &                  & &              \\[-0.7em]
        \lstick{$\ket{+}$} &                 &                &                   &        & \ctrl{-3}           &                     &        & \ctrl{-3}           &                  & & \meterD{M_m}
    \end{quantikz}
\end{align*}
where $CZ_A$ is a network of \texttt{CZ} gates with adjacency matrix $A$, defined by $A_{ij}=1$ if and only if $S^Z_i$ and $S^X_j$ have an odd intersection.
We can now compile the control $S^X_i$ into \texttt{H} and \texttt{CZ}
\begin{align*}
    \tikzsetnextfilename{sec6-general-proof-9}
    \begin{quantikz}[row sep=11pt, column sep=6pt]
                           & \phantomgate{H} & \gatebox{6}{9} & \mygate{3}{S^Z_1} &        & \mygate{3}{S^Z_{m}} & \mygate{1}{H} & \mygate{3}{\bar{S}^X_{1}} &        & \mygate{3}{\bar{S}^X_{m}} & \mygate{1}{H}    &  &              \\[-1.5em]
                           & \setwiretype{n} & \vdots         &                   & \cdots &                     &               &                           & \cdots &                           &                  &  &              \\[-1.1em]
                           &                 &                &                   &        &                     & \mygate{1}{H} &                           &        &                           & \mygate{1}{H}    &  &              \\
        \lstick{$\ket{+}$} &                 &                & \ctrl{-1}         &        &                     &               & \ctrl{-1}                 &        &                           & \mygate{3}{CZ_A} &  & \meterD{M_1} \\[-1.2em]
                           & \setwiretype{n} & \vdots         &                   & \ddots &                     &               &                           & \ddots &                           &                  &  &              \\[-0.7em]
        \lstick{$\ket{+}$} &                 &                &                   &        & \ctrl{-3}           &               &                           &        & \ctrl{-3}                 &                  &  & \meterD{M_m}
    \end{quantikz}
\end{align*}
where $\bar{S}_i^X$ is the operator built from $S_i^X$ by turning every $X$ into a $Z$.
Next we compile the \texttt{H} gates into their MBQC version using \cref{lemma:h-cz-sandwich}.
Over two rounds of measurements, this gives the following circuit, where we also annotate an example detector, associated to successive measurements of $S_1$:
\begin{align*}
    \tikzsetnextfilename{sec6-general-proof-10}
    \begin{quantikz}[row sep=10pt, column sep=2pt]
        \lstick{$\ket{+}$} & \phantomgate{H}           & & \gatebox{21}{29} &                   &        &                     &           &                       &           &                         &        &                             & &           &                       &           & &                     &                   &        &                     &           &                       &           &                         &        &                             & &           &                       & \ctrl{3}  &                 &                            &              \\[-1.2em]
                           & \setwiretype{n}           & & \vdots           &                   &        &                     &           &                       &           &                         &        &                             & &           &                       &           & &                     &                   &        &                     &           &                       &           &                         &        &                             & &           & \reflectbox{$\ddots$} &           &                 &                            &              \\[-0.7em]
        \lstick{$\ket{+}$} &                           & &                  &                   &        &                     &           &                       &           &                         &        &                             & &           &                       &           & &                     &                   &        &                     &           &                       &           &                         &        &                             & & \ctrl{3}  &                       &           &                 &                            &              \\
        \lstick{$\ket{+}$} &                           & &                  &                   &        &                     &           &                       &           &                         &        &                             & &           &                       &           & &                     &                   &        &                     &           &                       & \ctrl{3}  & \mygate{3}{\bar{S}_1^X} &        & \mygate{3}{\bar{S}_{m}^X}   & &           &                       & \ctrl{-3} & \phantomgate{H} & \phantomgate{H}            & \meterD{X}   \\[-1.5em]
                           & \setwiretype{n}           & & \vdots           &                   &        &                     &           &                       &           &                         &        &                             & &           &                       &           & &                     &                   &        &                     &           & \reflectbox{$\ddots$} &           &                         & \cdots &                             & &           & \reflectbox{$\ddots$} &           &                 &                            &              \\[-1.1em]
        \lstick{$\ket{+}$} &                           & &                  &                   &        &                     &           &                       &           &                         &        &                             & &           &                       &           & &                     &                   &        &                     & \ctrl{3}  &                       &           &                         &        &                             & & \ctrl{-3} &                       &           &                 &                            & \meterD{X}   \\
        \lstick{$\ket{+}$} & \labeledwire{gold}{X}     & &                  &                   &        &                     &           &                       &           &                         &        &                             & &           &                       & \ctrl{3}  & &                     & \mygate{3}{S_1^Z} &        & \mygate{3}{S_{m}^Z} &           &                       & \ctrl{-3} &                         &        &                             & &           &                       &           &                 &                            & \meterD{X}   \\[-1.5em]
                           & \setwiretype{n}           & & \vdots           &                   &        &                     &           &                       &           &                         &        &                             & &           & \reflectbox{$\ddots$} &           & &                     &                   & \cdots &                     &           & \reflectbox{$\ddots$} &           &                         &        &                             & &           &                       &           &                 &                            &              \\[-1.1em]
        \lstick{$\ket{+}$} & \labeledwire{gold}{X}     & &                  &                   &        &                     &           &                       &           &                         &        &                             & & \ctrl{3}  &                       &           & &                     &                   &        &                     & \ctrl{-3} &                       &           &                         &        &                             & &           &                       &           &                 &                            & \meterD{X}   \\
        \lstick{$\ket{+}$} & \labeledwire{niceblue}{X} & &                  &                   &        &                     &           &                       & \ctrl{3}  & \mygate{3}{\bar{S}_1^X} &        & \mygate{3}{\bar{S}_{m}^X}   & &           &                       & \ctrl{-3} & &                     &                   &        &                     &           &                       &           &                         &        &                             & &           &                       &           &                 &                            & \meterD{X}   \\[-1.5em]
                           & \setwiretype{n}           & & \vdots           &                   &        &                     &           & \reflectbox{$\ddots$} &           &                         & \cdots &                             & &           & \reflectbox{$\ddots$} &           & &                     &                   &        &                     &           &                       &           &                         &        &                             & &           &                       &           &                 &                            &              \\[-1.1em]
        \lstick{$\ket{+}$} & \labeledwire{niceblue}{X} & &                  &                   &        &                     & \ctrl{3}  &                       &           &                         &        &                             & & \ctrl{-3} &                       &           & &                     &                   &        &                     &           &                       &           &                         &        &                             & &           &                       &           &                 &                            & \meterD{X}   \\
                           &                           & &                  & \mygate{3}{S_1^Z} &        & \mygate{3}{S_{m}^Z} &           &                       & \ctrl{-3} &                         &        &                             & &           &                       &           & &                     &                   &        &                     &           &                       &           &                         &        &                             & &           &                       &           &                 &                            & \meterD{X}   \\[-1.5em]
                           & \setwiretype{n}           & & \vdots           &                   & \cdots &                     &           & \reflectbox{$\ddots$} &           &                         &        &                             & &           &                       &           & &                     &                   &        &                     &           &                       &           &                         &        &                             & &           &                       &           &                 &                            &              \\[-1.1em]
                           &                           & &                  &                   &        &                     & \ctrl{-3} &                       &           &                         &        &                             & &           &                       &           & &                     &                   &        &                     &           &                       &           &                         &        &                             & &           &                       &           &                 &                            & \meterD{X}   \\
        \lstick{$\ket{+}$} & \labeledwire{nicered}{X}  & &                  & \ctrl{-1}         &        &                     &           &                       &           & \ctrl{-4}               &        &                             & &           &                       &           & &                     &                   &        &                     &           &                       &           &                         &        &                             & &           & \mygate{3}{CZ_A}      &           &                 & \labeledwire{nicered}{M_1} & \meterD{M_1} \\[-1.2em]
                           & \setwiretype{n}           & & \vdots           &                   & \ddots &                     &           &                       &           &                         & \ddots &                             & &           &                       &           & &                     &                   &        &                     &           &                       &           &                         &        &                             & &           &                       &           &                 &                            &              \\[-0.7em]
        \lstick{$\ket{+}$} &                           & &                  &                   &        & \ctrl{-3}           &           &                       &           &                         &        & \ctrl{-6}                   & &           &                       &           & &                     &                   &        &                     &           &                       &           &                         &        &                             & &           &                       &           &                 &                            & \meterD{M_m} \\
        \lstick{$\ket{+}$} & \labeledwire{nicered}{X}  & &                  &                   &        &                     &           &                       &           &                         &        &                             & &           &                       &           & &                     & \ctrl{-10}        &        &                     &           &                       &           & \ctrl{-13}              &        &                             & &           & \mygate{3}{CZ_A}      &           &                 & \labeledwire{nicered}{M_1} & \meterD{M_1} \\[-1.2em]
                           & \setwiretype{n}           & & \vdots           &                   &        &                     &           &                       &           &                         &        &                             & &           &                       &           & &                     &                   & \ddots &                     &           &                       &           &                         & \ddots &                             & &           &                       &           &                 &                            &              \\[-0.7em]
        \lstick{$\ket{+}$} &                           & &                  &                   &        &                     &           &                       &           &                         &        &                             & &           &                       &           & &                     &                   &        & \ctrl{-12}          &           &                       &           &                         &        & \ctrl{-15}                  & &           &                       &           &                 &                            & \meterD{M_m}
    \end{quantikz}
\end{align*}
where we label in blue the Pauli operators acting on the support of $S_1^Z$ and in yellow those acting on the support of $S_1^X$.
The initial $X$ operators on the ancilla wires (in red) propagate only to $M_1$ on the same wires due to the offsetting effect of the propagation of the blue $X$ operators (whenever $S_1^Z$ anticommutes with an $S_i^X$) and the $CZ_A$ gate.
Moreover, it propagates to $M_1$ and not $X$ due to the possibility of $S_1^Z$ anticommuting with $S_1^X$, thereby propagating a $Z$ on this wire.

We can see that every round of measurements contributes to two teleportations of all the data qubits, and therefore to $2T+1$ layers in the cluster state complex. Layers $\ell=0$ and $\ell=2T+2$ are virtual layers representing $X$ errors on the input and output qubits.
Moreover, every round of stabilizer measurements gives rise to $m$ ancilla qubits, which we place on the odd layers of the cluster state complex by convention. Finally, the \texttt{CZ} network of this circuit has the same graph as the one described in the theorem statements.

Moving on to the detector nodes, we can see that the detector corresponding to the spacetime stabilizer displayed above (redundancy in two successive measurements of $S_1$), is connected to the ancilla nodes $a^{(\ell)}_1$ and $a^{(\ell+2)}_1$.
Furthermore, this detector node is also connected the data qubits $q^{(\ell+1)}_{j_1},\ldots,q^{(\ell+1)}_{j_\alpha}$ and $q^{(\ell+2)}_{k_1},\ldots,q^{(\ell+2)}_{k_\alpha}$, where $q^{(t)}_{j_i}$ ($q^{(t)}_{k_i}$) are again the data qubits in the support of $S^Z_1$ ($S^X_i$) at the layer $t$.
In other words, the detector node is connected to the data qubits one layer above those connecting the ancilla node, as stated in the theorem.

It remains to study the detector nodes corresponding to the input and output spacetime stabilizers.
The detector corresponding to the last $S_1$ measurement has the following form in the circuit picture:
\begin{align*}
    \tikzsetnextfilename{sec6-general-proof-11}
    \begin{quantikz}[row sep=10pt, column sep=2pt]
        \lstick{$\ket{+}$} & \phantomgate{H} \labeledwire{gold}{X}     & & \gatebox{12}{14} \push{\hspace{0.5em}\cdots\hspace{0.5em}} &                   &        &                     &           &                       &           &                         &        &                             & &           &                       & \ctrl{3}  & \phantomgate{H} & \labeledwire{gold}{X}       & \phantomgate{H} & \labeledwire{niceblue}{Z} & & \\[-1.5em]
                           & \setwiretype{n}                           & & \vdots                                                     &                   &        &                     &           &                       &           &                         &        &                             & &           & \reflectbox{$\ddots$} &           &                 &                             &                 &                           & &            \\[-1.1em]
        \lstick{$\ket{+}$} & \phantomgate{H} \labeledwire{gold}{X}     & & \push{\hspace{0.5em}\cdots\hspace{0.5em}}                  &                   &        &                     &           &                       &           &                         &        &                             & & \ctrl{3}  &                       &           &                 & \labeledwire{gold}{X}       &                 & \labeledwire{niceblue}{Z} & & \\
        \lstick{$\ket{+}$} & \phantomgate{H} \labeledwire{niceblue}{X} & & \push{\hspace{0.5em}\cdots\hspace{0.5em}}                  &                   &        &                     &           &                       & \ctrl{3}  & \mygate{3}{\bar{S}_1^X} &        & \mygate{3}{\bar{S}_{m}^X}   & &           &                       & \ctrl{-3} &                 & \labeledwire{niceblue}{X}   &                 &                           & & \meterD{X} \\[-1.5em]
                           & \setwiretype{n}                           & & \vdots                                                     &                   &        &                     &           & \reflectbox{$\ddots$} &           &                         & \cdots &                             & &           & \reflectbox{$\ddots$} &           &                 &                             &                 &                           & &            \\[-1.1em]
        \lstick{$\ket{+}$} & \phantomgate{H} \labeledwire{niceblue}{X} & & \push{\hspace{0.5em}\cdots\hspace{0.5em}}                  &                   &        &                     & \ctrl{3}  &                       &           &                         &        &                             & & \ctrl{-3} &                       &           &                 & \labeledwire{niceblue}{X}   &                 &                           & & \meterD{X} \\
        \lstick{$\ket{+}$} &                                           & & \push{\hspace{0.5em}\cdots\hspace{0.5em}}                  & \mygate{3}{S_1^Z} &        & \mygate{3}{S_{m}^Z} &           &                       & \ctrl{-3} &                         &        &                             & &           &                       &           &                 &                             &                 &                           & & \meterD{X} \\[-1.5em]
                           & \setwiretype{n}                           & & \vdots                                                     &                   & \cdots &                     &           & \reflectbox{$\ddots$} &           &                         &        &                             & &           &                       &           &                 &                             &                 &                           & &            \\[-1.1em]
        \lstick{$\ket{+}$} &                                           & & \push{\hspace{0.5em}\cdots\hspace{0.5em}}                  &                   &        &                     & \ctrl{-3} &                       &           &                         &        &                             & &           &                       &           &                 &                             &                 &                           & & \meterD{X} \\
        \lstick{$\ket{+}$} & \phantomgate{H} \labeledwire{nicered}{X}  & & \push{\hspace{0.5em}\cdots\hspace{0.5em}}                  & \ctrl{-1}         &        &                     &           &                       &           & \ctrl{-4}               &        &                             & &           & \mygate{3}{CZ_A}      &           &                 & \labeledwire{nicered}{M_1}  &                 &                           & & \meterD{M_1} \\[-1.2em]
                           & \setwiretype{n}                           & & \vdots                                                     &                   & \ddots &                     &           &                       &           &                         & \ddots &                             & &           &                       &           &                 &                             &                 &                           & &            \\[-0.7em]
        \lstick{$\ket{+}$} &                                           & & \push{\hspace{0.5em}\cdots\hspace{0.5em}}                  &                   &        & \ctrl{-3}           &           &                       &           &                         &        & \ctrl{-6}                   & &           &                       &           &                 &                             &                 &                           & & \meterD{M_m}
    \end{quantikz}
\end{align*}
where we again label in blue the Pauli operators acting on the support of $S_1^Z$ and in yellow those acting on the support of $S_1^X$.
We can see that the corresponding detector node is connected to the ancilla qubit $a^{(2T-1)}_1$, as well as to the data qubits $q^{(2T)}_{j_1},\ldots,q^{(2T)}_{j_\alpha}$, $q^{(2T+1)}_{k_1},\ldots,q^{(2T+1)}_{k_\beta}$ and $q^{(2T+2)}_{j_1},\ldots,q^{(2T+2)}_{j_\alpha}$, where $q^{(t)}_{j_i}$ ($q^{(t)}_{k_i}$) are again the data qubits in the support of $S^Z_1$ ($S^X_i$) at the layer $t$. The connection to data qubits at the layer $2T+2$ is due to the presence of $Z$s on the output qubits.

Finally, the detector corresponding to the first $S_1$ measurement has the following form in the circuit picture:
\begin{align*}
    \tikzsetnextfilename{sec6-general-proof-12}
    \begin{quantikz}[row sep=10pt, column sep=2pt]
        \lstick{$\ket{+}$} & \phantomgate{H}                           &                                       & & \gatebox{12}{15}  &                   &        &                     &           &                       &           &                         &        &                           & &           &                       & \ctrl{3}  & \push{\hspace{0.5em}\cdots\hspace{0.5em}} & \phantomgate{H}  &                             & \phantomgate{H} & \meterD{X} \\[-1.5em]
                           & \setwiretype{n}                           &                                       & & \vdots            &                   &        &                     &           &                       &           &                         &        &                           & &           & \reflectbox{$\ddots$} &           &                                           &                  &                             &                 &            \\[-1.1em]
        \lstick{$\ket{+}$} & \phantomgate{H}                           &                                       & &                   &                   &        &                     &           &                       &           &                         &        &                           & & \ctrl{3}  &                       &           & \push{\hspace{0.5em}\cdots\hspace{0.5em}} &                  &                             &                 & \meterD{X} \\
        \lstick{$\ket{+}$} &                                           &                                       & &                   &                   &        &                     &           &                       & \ctrl{3}  & \mygate{3}{\bar{S}_1^X} &        & \mygate{3}{\bar{S}_{m}^X} & &           &                       & \ctrl{-3} & \push{\hspace{0.5em}\cdots\hspace{0.5em}} &                  &                             &                 & \meterD{X} \\[-1.5em]
                           & \setwiretype{n}                           &                                       & & \vdots            &                   &        &                     &           & \reflectbox{$\ddots$} &           &                         & \cdots &                           & &           & \reflectbox{$\ddots$} &           &                                           &                  &                             &                 &            \\[-1.1em]
        \lstick{$\ket{+}$} &                                           &                                       & &                   &                   &        &                     & \ctrl{3}  &                       &           &                         &        &                           & & \ctrl{-3} &                       &           & \push{\hspace{0.5em}\cdots\hspace{0.5em}} &                  &                             &                 & \meterD{X} \\
                           & \phantomgate{H} \labeledwire{niceblue}{Z} & \phantomgate{H} \labeledwire{gold}{X} & &                   & \mygate{3}{S_1^Z} &        & \mygate{3}{S_{m}^Z} &           &                       & \ctrl{-3} &                         &        &                           & &           &                       &           & \push{\hspace{0.5em}\cdots\hspace{0.5em}} &                  & \labeledwire{gold}{X}       &                 & \meterD{X} \\[-1.5em]
                           & \setwiretype{n}                           &                                       & & \vdots            &                   & \cdots &                     &           & \reflectbox{$\ddots$} &           &                         &        &                           & &           &                       &           &                                           &                  &                             &                 &            \\[-1.1em]
                           & \phantomgate{H} \labeledwire{niceblue}{Z} & \phantomgate{H} \labeledwire{gold}{X} & &                   &                   &        &                     & \ctrl{-3} &                       &           &                         &        &                           & &           &                       &           & \push{\hspace{0.5em}\cdots\hspace{0.5em}} &                  & \labeledwire{gold}{X}       &                 & \meterD{X} \\
        \lstick{$\ket{+}$} & \phantomgate{H} \labeledwire{nicered}{X}  &                                       & &                   & \ctrl{-1}         &        &                     &           &                       &           & \ctrl{-4}               &        &                           & &           & \mygate{3}{CZ_A}      &           & \push{\hspace{0.5em}\cdots\hspace{0.5em}} &                  & \labeledwire{nicered}{M_1}  &                 & \meterD{M_1} \\[-1.2em]
                           & \setwiretype{n}                           &                                       & & \vdots            &                   & \ddots &                     &           &                       &           &                         & \ddots &                           & &           &                       &           &                                           &                  &                             &                 &            \\[-0.7em]
        \lstick{$\ket{+}$} &                                           &                                       & &                   &                   &        & \ctrl{-3}           &           &                       &           &                         &        & \ctrl{-6}                 & &           &                       &           & \push{\hspace{0.5em}\cdots\hspace{0.5em}} &                  &                             &                 & \meterD{M_m}
    \end{quantikz}
\end{align*}
where again we labeled in blue the Pauli operators acting on the support of $S_1^Z$ and in yellow those acting on the support of $S_1^X$. This corresponds to a detector node connected to the ancilla qubit $a^{(1)}_1$, as well as to the data qubits $q^{(0)}_{j_1},\ldots,q^{(0)}_{j_\alpha}$ and $q^{(1)}_{k_1},\ldots,q^{(1)}_{k_\beta}$, where $q^{(t)}_{j_i}$ ($q^{(t)}_{k_i}$) are again the data qubits in the support of $S^Z_1$ ($S^X_1$) at layer $t$.

The final detectors connected only to the virtual ancilla node and again come from the construction of the augmented cluster state complex in \cref{def:compressed-rep-cluster-state}.

\end{proof}

\section{Cluster state complex from a dynamical code}
\label{sec:from-floquet-codes-to-cluster-states}
As our final example of mapping from codes to MBQC patterns, we consider the class of dynamical codes \cite{hastings2021dynamically,fu2024error,Davydova_2024}.
In the context of this work, we broadly define a \textit{dynamical code} as a protocol running for $T$ rounds, such that a set $G^{(t)}=\{G_1^{(t)},\ldots,G_{m_t}^{(t)}\}$ of commuting Pauli operators is measured at each round $t \in \{1,\ldots,T\}$.
A similar definition of dynamical codes can be found in Fu and Gottesman \cite{fu2024error}.
We call the family of sets $\{G^{(t)}\}_{1 \leq t \leq T}$ the \textit{measurement schedule} of the code.

Dynamical codes with a periodic function $t \mapsto G^{(t)}$ are called Floquet codes, and were introduced by Hastings and Haah as a broad generalization of stabilizer and subsystem codes \cite{hastings2021dynamically}.
The honeycomb code constructed in their work is a Floquet code of period $3$ defined on a honeycomb lattice with qubits on vertices.
For this code,  $G^{(3t)}$, $G^{(3t+1)}$ and $G^{(3t+2)}$ are respectively the sets of $XX$, $YY$ and $ZZ$ Pauli operators associated with subsets of the edges the lattice.
While the stabilizers and logical operators of the code change at each round, the logical information is preserved during the protocol and the presence of spacetime stabilizers in the corresponding circuit enable fault-tolerant error correction.
Many variations of the honeycomb code have been constructed since then, including a CSS version of \cite{Kesselring_2024,Davydova_2023}, a planar version \cite{vuillot2021planar,haah2022boundaries,Kesselring_2024, sullivan2023floquet}, a hyperbolic version \cite{fahimniya2025faulttolerant,higgott2024constructions} and a 3D version \cite{Davydova_2023, zhang2023xcube,dua2024engineering,bauer2024topological}.
Other examples have been obtained by "Floquetifying" stabilizer and subsystem codes.
This works by starting from the measurement schedule of a given code and modifying it, often using ZX-calculus, to obtain new schedules with either lower-weight measurements or better general performance under some noise models \cite{Townsend_Teague_2023,rodatz2024floquetifying,delafuente2024xyzrubycode,delafuente2024dynamical,xu2025faulttolerant}.
Floquetifying the Bacon-Shor code has for instance led to variations of the code with a non-zero threshold \cite{alam2024dynamical,alam2025baconshor}.

However, our definition also allows the existence of non-periodic dynamical codes \cite{Davydova_2023,Kesselring_2024,Davydova_2024}.
A common case arises when the measurement schedule carries out a specific logical operation on the code---for instance, using dynamical automorphisms \cite{Davydova_2024}. More traditional protocols, such as lattice surgery \cite{horsman2012surface} or gauge fixing \cite{paetznick2013universal,bombin2015gauge,yoder2016universal,higgott2021subsystem}, can likewise be framed as non-periodic dynamical codes.

\begin{figure}
    \centering
    \subfloat[]{
        \includegraphics[width=0.34\linewidth]{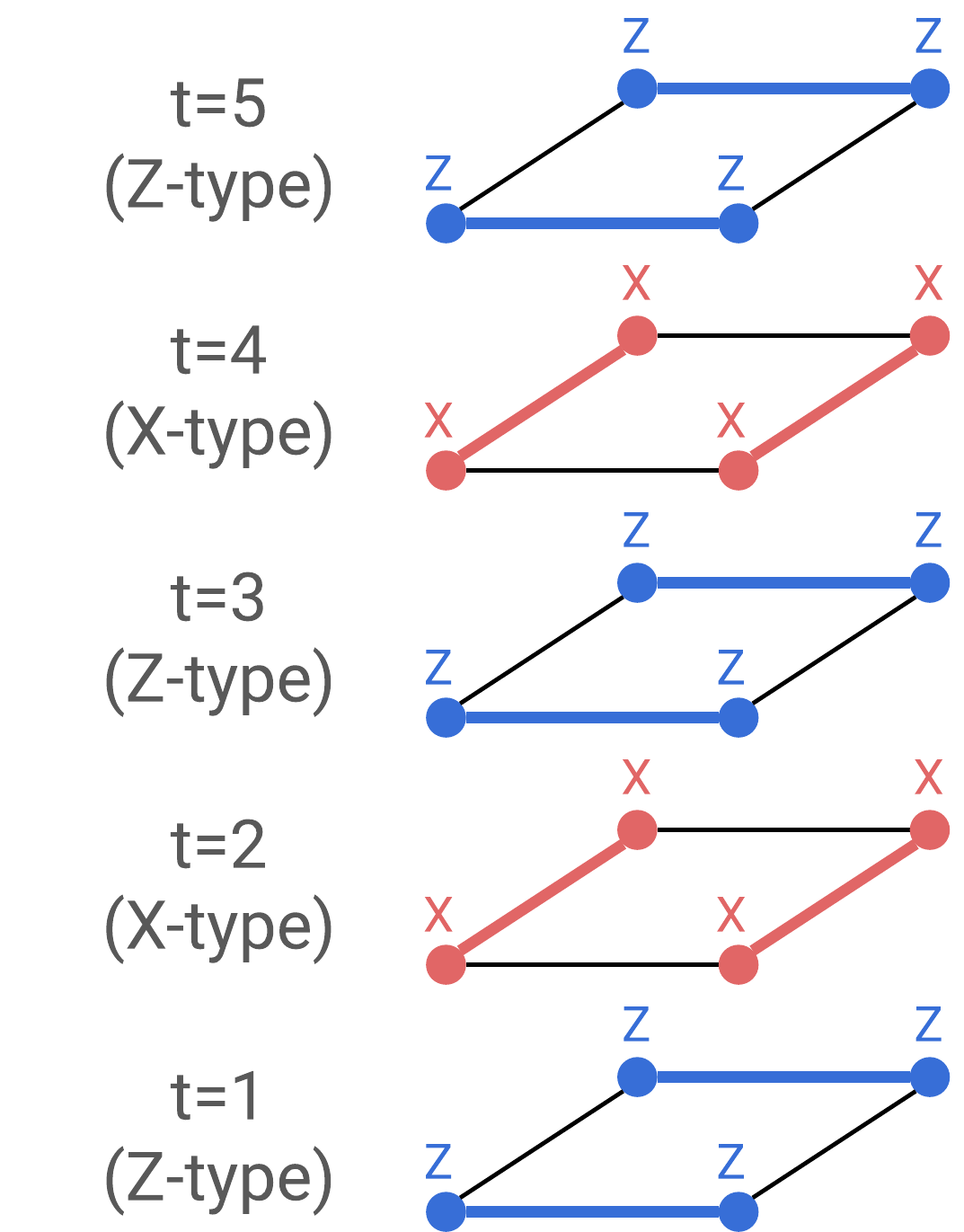}
    }
    \hspace{4em}
    \subfloat[]{
        \includegraphics[width=0.3\linewidth]{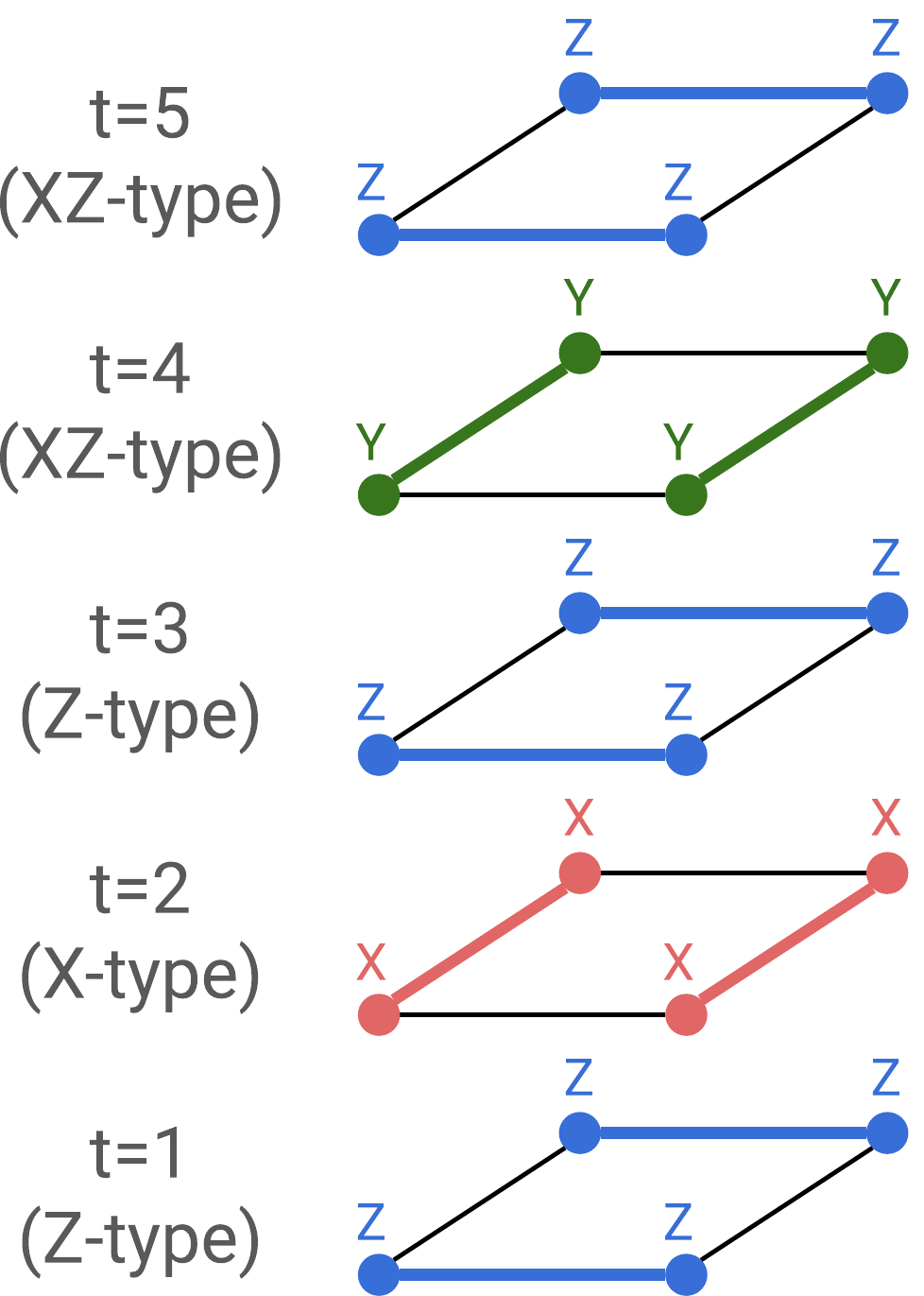}
    }
    \caption{ \label{fig:dynamical-code-examples}
        Schedule and round types for two dynamical variants of the $[\![4,2,2]\!]$-code.
        \textbf{(a)} $[\![4,1,2]\!]$ subsystem code, which can be seen as a period-2 Floquet code.
        \textbf{(b)} Four-qubit ladder code, as introduced by Hastings and Haah~\cite{hastings2021dynamically}. It varies from the $[\![4,1,2]\!]$-code schedule by the introduction of $Y$ measurements every four rounds. We note that the last round is of $XZ$-type due to the previous one being of $XZ$-type as well (a $Z$-type round cannot follow an $XZ$-type round).
    }
\end{figure}

\begin{figure}
    \centering
    \includegraphics[width=0.46\linewidth]{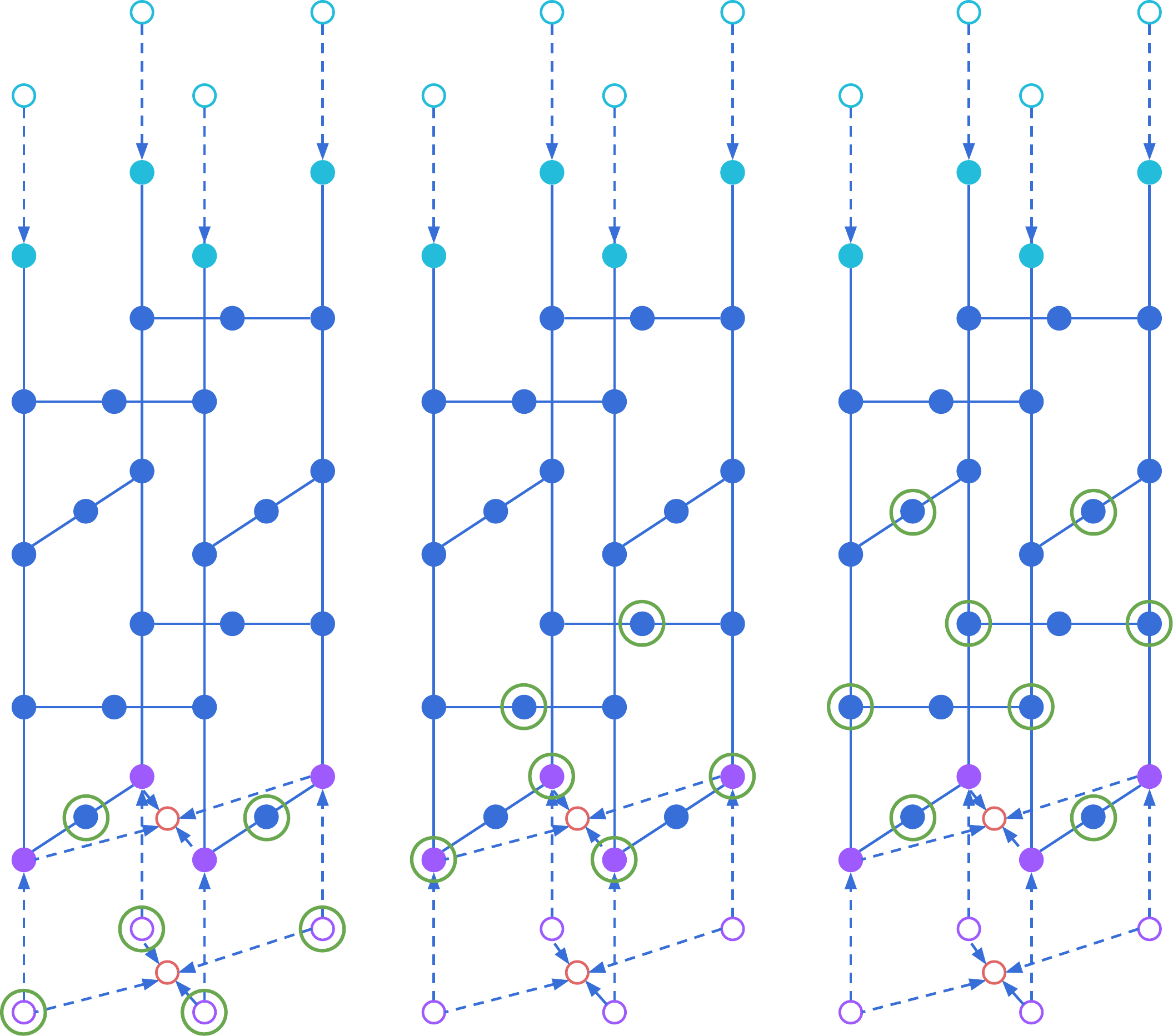}
    \hspace{0.8em}
    \includegraphics[width=0.46\linewidth]{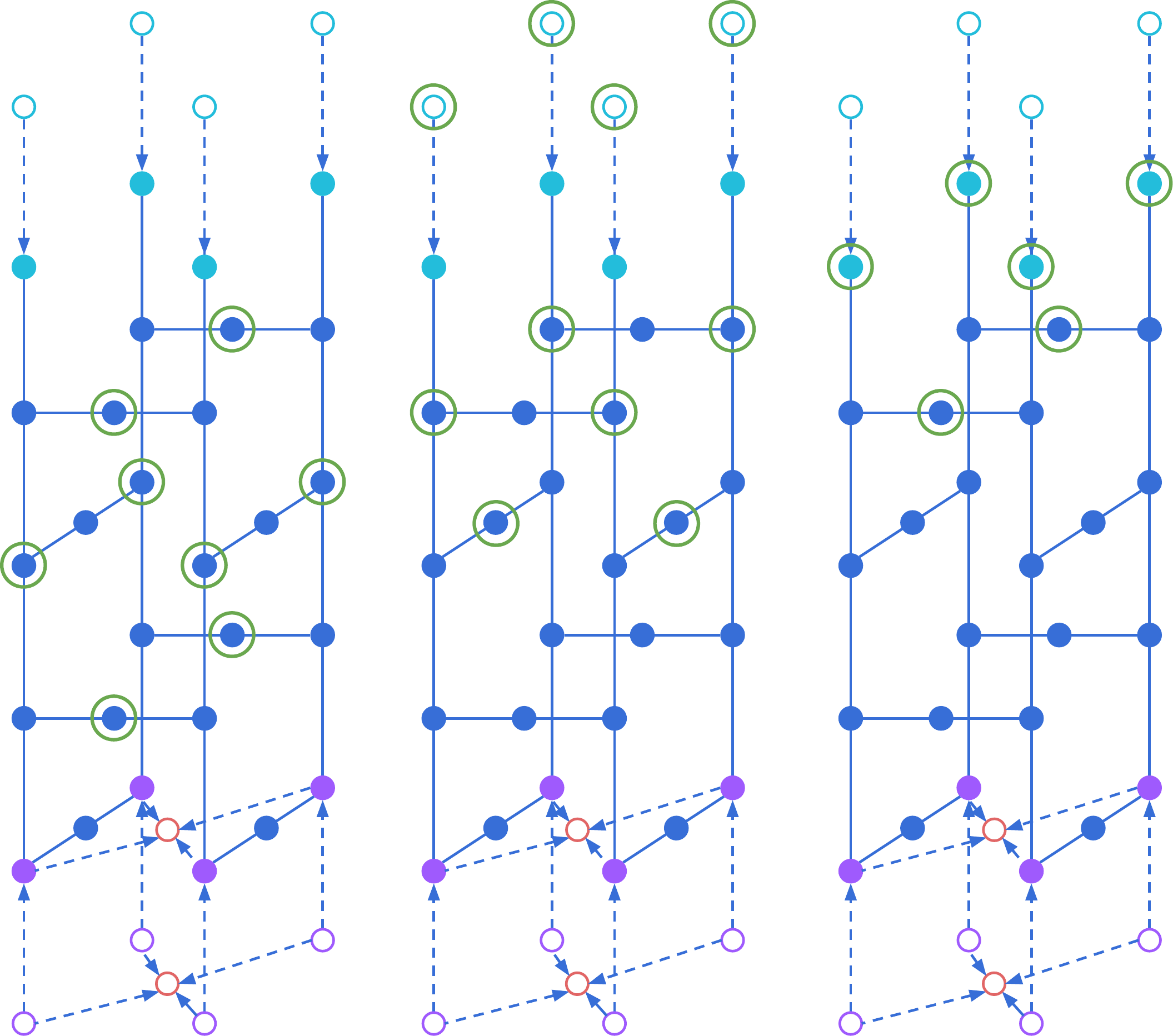}
    \caption{ \label{fig:412-code-cluster-state-detectors}
        Cluster state complex of the $[\![4,1,2]\!]$ subsystem code for $T=4$.
        For clarity, the six detectors are shown on separate graphs, by circling the nodes that are part of each detector.
    }
\end{figure}

To any dynamical code, we associate a circuit of the form:
\begin{equation*}
    \tikzsetnextfilename{sec7-dynamical-code-circuit}
    \begin{quantikz}[row sep=10pt, column sep=6pt]
                           &                          & \gatebox{9}{4} & \mygate{3}{G^{(1)}_1}  &        & \mygate{3}{G^{(1)}_{m_1}} & \phantomgate{H} & \ \ldots \ & \phantomgate{H} & \mygate{3}{G^{(T)}_1} \gatebox{9}{3} &        & \mygate{3}{G^{(T)}_{m_{T}}} & \phantomgate{H} &                        \\[-1.5em]
                           & \setwiretype{n}          & \vdots         &                        & \cdots &                           &                 &            &                 &                                      & \cdots &                             &                 &                        \\[-1.1em]
                           &                          &                &                        &        &                           &                 & \ \ldots \ &                 &                                      &        &                             &                 &                        \\
        \lstick{$\ket{+}$} &                          &                & \ctrl{-1}              &        &                           &                 & \ \ldots \ &                 &                                      &        &                             &                 & \meterD{M^{(1)}_1}     \\[-1.2em]
                           & \setwiretype{n}          & \vdots         &                        & \ddots &                           &                 &            &                 &                                      &        &                             &                 &                        \\[-0.7em]
        \lstick{$\ket{+}$} &                          &                &                        &        & \ctrl{-3}                 &                 & \ \ldots \ &                 &                                      &        &                             &                 & \meterD{M^{(1)}_{m_1}} \\
        \lstick{$\ket{+}$} &                          & \vdots         &                        &        &                           &                 & \ \ldots \ &                 & \ctrl{-4}                            &        &                             &                 & \meterD{M^{(T)}_1}     \\[-1.2em]
                           & \setwiretype{n}          & \vdots         &                        & \ddots &                           &                 &            &                 &                                      & \ddots &                             &                 &                        \\[-0.7em]
        \lstick{$\ket{+}$} &                          &                &                        &        &                           &                 & \ \ldots \ &                 &                                      &        & \ctrl{-6}                   &                 & \meterD{M^{(T)}_{m_T}}
    \end{quantikz}
\end{equation*}
where $M_i^{(t)}$ is either $X$ or $Y$ depending on whether the number of $Y$ terms in $G_i^{(t)}$ is even ($M_i^{(t)}=X$) or odd ($M_i^{(t)}=Y$).
To construct a cluster state complex that corresponds to this circuit, we need to progressively compile every round into a \texttt{CZ} network and merge those networks using \cref{lemma:cz-cz-merge}.
This can only be achieved if all the \texttt{CZ} networks are Hadamard-separated.
In order to ensure this property during the compilation and have an MBQC circuit that is as compact as possible, we will derive a specific recursive compilation of this circuit, where the compilation of one round depends on that of the previous round.
We start by labeling each round as $X$-\textit{type}, $Z$-\textit{type}, $XZ$-\textit{type} or $ZX$-\textit{type}.
The assignment is done recursively for each round $t$ as follows:
\begin{itemize}
    \item If all the measurement in $G^{(t)}$ are pure $X$ measurements, the round is labeled $X$-type
    \item If all the measurement in $G^{(t)}$ are pure $Z$ measurements, then
    \begin{itemize}
        \item if either $t=1$ or the previous round is of $X$-type or $ZX$-type, we label the round as $Z$-type,
        \item or if the previous round is of $Z$-type or $XZ$-type, we label the round as $XZ$-type.
    \end{itemize}
    \item If the measurement in $G^{(t)}$ are made of both $X$ and $Z$, then
    \begin{itemize}
        \item if either $t=1$ or the previous round is of $X$-type or $ZX$-type, we label the round as $ZX$-type,
        \item or if the previous round is of $Z$-type or $XZ$-type, we label the round as $XZ$-type.
    \end{itemize}
\end{itemize}
Every round will be compiled depending on its label, to ensure that rounds are always Hadamard-separated before merging and to avoid unnecessary double-Hadamard between two rounds.
Two examples of dynamical codes with a labelled schedule are shown in \cref{fig:dynamical-code-examples}.

\begin{figure}
    \centering
    \includegraphics[width=0.46\linewidth]{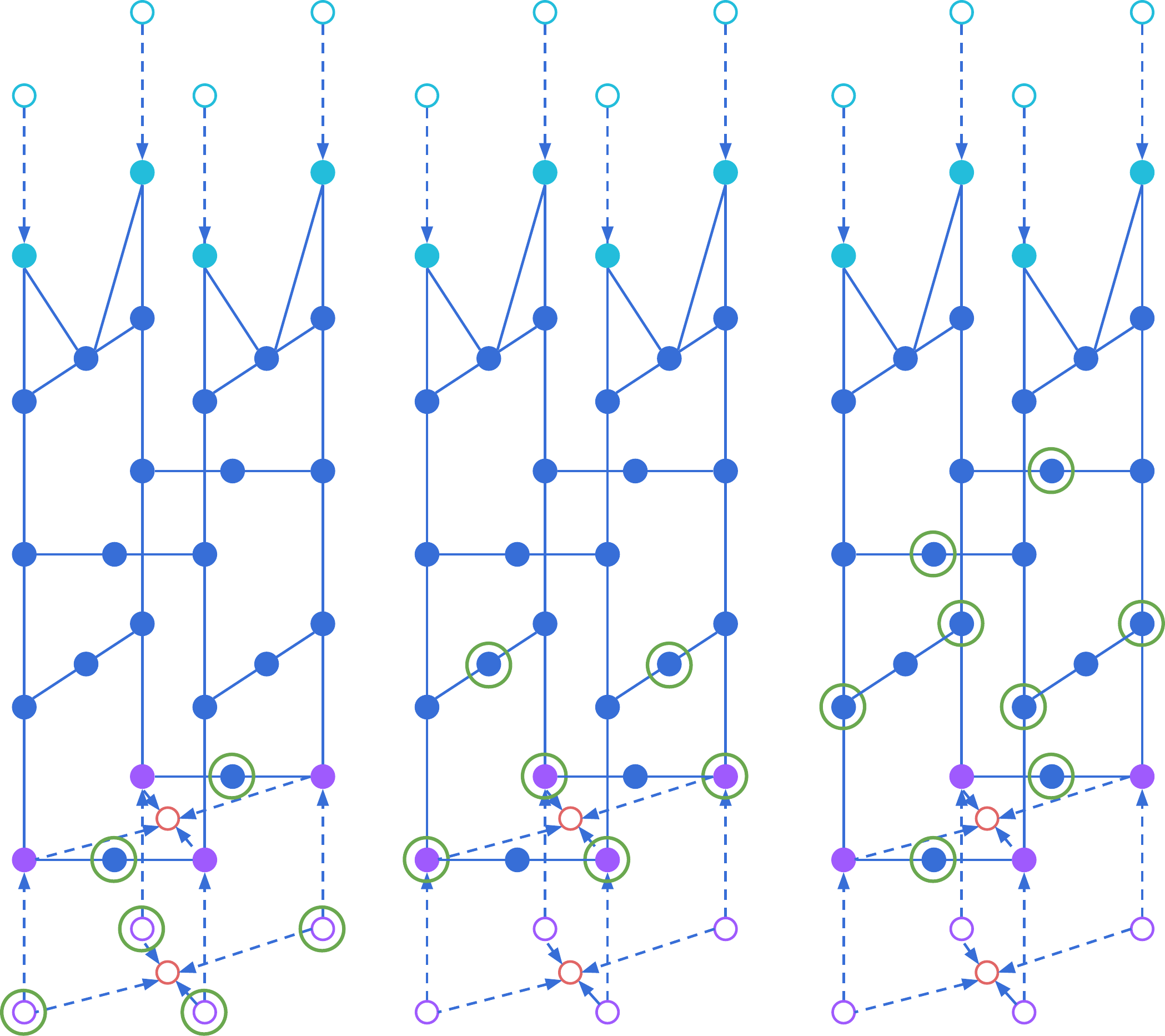}
    \hspace{0.8em}
    \includegraphics[width=0.46\linewidth]{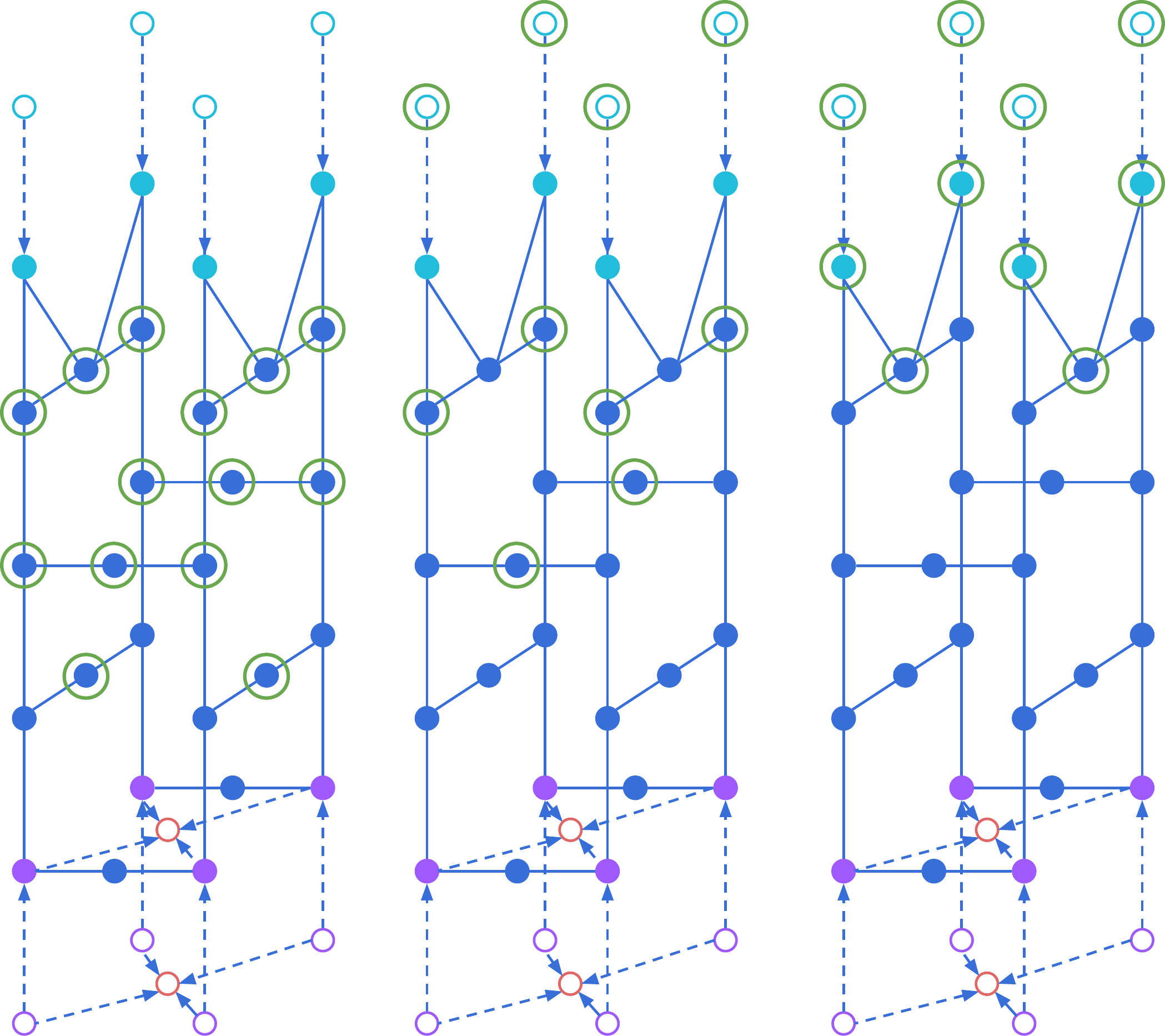}
    \caption{ \label{fig:ladder-code-cluster-state-detectors}
        Cluster state complex of the four-qubit ladder code for $T=4$, in its compressed co-representation.
        For clarity, the six detectors are shown on separate graphs, by circling the nodes that are part of each detector.
        Moreover, while the two ancilla nodes (red empty circles) are shown at two different layers in the figure for visual clarity, they formally both belong to layer $0$ in our construction.
    }
\end{figure}

We are now ready to describe the cluster state complex corresponding to this circuit.
\begin{theorem}
    Let us consider a dynamical code with measurement schedule $\{G^{(t)}\}_{1 \leq t \leq T}$, $G^{(t)}=\{G_1^{(t)},\ldots,G_{m_t}^{(t)}\}$, and input stabilizer group $\mathcal{S}=\langle S_1,\ldots,S_{m_0} \rangle$, as well as a noise model where errors only happen in-between rounds. Let $\mathcal{D}_S$ be the set of (full and incomplete) detectors that come from a spackle, where we see every element $D \in \mathcal{D}_S$ as a subset of $\bigsqcup_t G^{(t)}$ corresponding to all the measurements involved in the detector. Similarly, let $\mathcal{D}_B$ be the set of incomplete detectors that come from a backle. The circuit is equivalent to the following cluster state complex, described in the compressed representation:
    \begin{itemize}
        \item It has a layered structure, where layers are added recursively.
        We start by defining a layer $\ell=0$ of virtual data nodes $q_1^{(0)},\ldots,q_n^{(0)}$ and a layer $\ell=1$ with $n$ data nodes $q_1^{(1)},\ldots,q_n^{(1)}$, with a directed edge from $q_i^{(1)}$ to $q_i^{(0)}$ for all $i \in \{1,\ldots,n\}$. We also add ancilla nodes $a_1^{(0)},\ldots,a_m^{(0)}$, with a directed edge from each ancilla node $a_i^{(0)}$ to the data nodes at layer $0$ according to the $Z$ part of $S_i$, and to the data nodes of layer $1$ according to the $X$ part of $S_i$.
        Then, for each round $t$ of the dynamical code, let $\ell$ be the last layer added to the complex.
        We distinguish three cases, depending on the type of $G^{(t)}$:
        \begin{enumerate}
            \item $G^{(t)}$ is of $Z$-type.
            We add $m_t$ ancilla nodes $a_1^{(\ell)},\ldots,a_{m_t}^{(\ell)}$ to layer $\ell$, and connect them to the data nodes at this layer according to the support of each $G_i^{(t)}$.

            \item $G^{(t)}$ is of $X$-type.
            We add two new layers $\ell+1$ and $\ell+2$ with new data nodes $q_1^{(\ell+1)},\ldots,q_n^{(\ell+1)}$ and $q_1^{(\ell+2)},\ldots,q_n^{(\ell+2)}$.
            We connect each data node $q_i^{(\ell)}$ to $q_i^{(\ell+1)}$, and $q_i^{(\ell+1)}$ to $q_i^{(\ell+2)}$.
            We add $m_t$ ancilla nodes $a_1^{(\ell+1)},\ldots,a_{m_t}^{(\ell+1)}$ to layer $\ell+1$ and connect them to the data nodes at this layer according to the support of the $G_i^{(t)}$.

            \item $G^{(t)}$ is of $ZX$-type.
            We add two new layers $\ell+1$ and $\ell+2$ with new data nodes $q_1^{(\ell+1)},\ldots,q_n^{(\ell+1)}$ and $q_1^{(\ell+2)},\ldots,q_n^{(\ell+2)}$.
            We add $m_t$ ancilla nodes $a_1^{(\ell)},\ldots,a_{m_t}^{(\ell)}$ to layer $\ell$, and connect them to the data nodes at this layer according to the $Z$ part of each $G_i^{(t)}$, and to the data nodes of layer $\ell+1$ according to the $X$ part of each $G_i^{(t)}$.
            For all $i,j$, $i \neq j$, if the $X$ part of $G_i^{(t)}$ has an odd intersection with the $Z$ part of $G_j^{(t)}$, we add an edge between $a_i^{(\ell)}$ and $a_j^{(\ell)}$. This includes self-loops on in the case $i=j$.

            \item $G^{(t)}$ is of $XZ$-type.
            We add two new layers $\ell+1$ and $\ell+2$ with new data nodes $q_1^{(\ell+1)},\ldots,q_n^{(\ell+1)}$ and $q_1^{(\ell+2)},\ldots,q_n^{(\ell+2)}$.
            We add $m_t$ ancilla nodes $a_1^{(\ell+2)},\ldots,a_{m_t}^{(\ell+2)}$ to layer $\ell+2$, and connect them to the data nodes at this layer according to the $X$ part of $G_i^{(t)}$, and to the data nodes of layer $\ell+2$ according to the $Z$ part of $G_i^{(t)}$.
            For all $i,j$, if the $X$ part of $G_i^{(t)}$ has an odd intersection with the $Z$ part of $G_j^{(t)}$, we add an edge between $a_i^{(\ell+2)}$ and $a_j^{(\ell+2)}$. This includes self-loops on in the case $i=j$.
        \end{enumerate}
        Let $L$ be the last layer index $\ell$ added through this procedure. We then add a final output layer with data nodes $q_1^{(L+1)},\ldots,q_n^{(L+1)}$ with directed edges from $q_i^{(L)}$ to $q_i^{(L+1)}$ for all $i \in \{1,\ldots,n\}$.

        \item To construct the detectors, we define the \emph{spackle of an ancilla node} $a_i^{(\ell)}$, connected to the data nodes $q_{j_1}^{(\ell_1)},\ldots,q_{j_k}^{(\ell_k)}$ as the set of nodes consisting of $a_i^{(\ell)}$ and $q_{j_1}^{(\ell_1+2t+1)},\ldots,q_{j_k}^{(\ell_k+2t+1)}$ for all $t\geq 0$ such that $\ell + 2t +1 \leq L+1$.
        Similarly, the \textit{backle} of $a_i^{(\ell)}$ is the set of nodes consisting in $a_i^{(\ell)}$ and $q_{j_1}^{(\ell_1-2t-1)},\ldots,q_{j_k}^{(\ell_k-2t-1)}$ for all $t$ such that $\ell - 2t -1 \geq 0$.
        We now build one detector node per element $D \in \mathcal{D}_S$ ($D \in \mathcal{D}_B$), as the symmetric difference of the spackles (backles) of all the ancilla nodes corresponding to the measurements in $D$.
    \end{itemize}
\end{theorem}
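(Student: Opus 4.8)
The plan is to follow the same strategy as the proofs in \cref{sec:from-stabilizer-codes-to-cluster-states}: starting from the circuit associated to the dynamical code displayed just above the statement, I compile each round into a \texttt{CZ} network together with some measurement-based Hadamard gates, merge all the resulting networks into a single one, and then invoke \cref{theorem:mbqc-circuit-equivalent-to-cluster-state-complex} to conclude that the spacetime complex of the circuit is equivalent to the cluster state complex of the resulting MBQC pattern. The recursive round-labeling is precisely what guarantees that consecutive \texttt{CZ} networks are Hadamard-separated without having to insert extra pairs of Hadamard gates, so that \cref{lemma:cz-cz-merge}, \cref{lemma:h-and-hs-merge-left} and \cref{lemma:h-and-hs-merge-right} apply directly and the resulting MBQC pattern stays compact.

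First, I treat a single round in isolation according to its label. For a $Z$-type round, the controlled-$Z$ parts are already \texttt{CZ} gates between the data wires and fresh ancilla wires, so no Hadamard is needed and the ancilla nodes attach to the current layer $\ell$ through the supports of the $G_i^{(t)}$. For an $X$-type round, I decompose each controlled-$X$ into a \texttt{CZ} network conjugated by Hadamards, push the Hadamards out of the box with \cref{lemma:pushing-h-away}, and compile them into their measurement-based form with \cref{lemma:h-cz-sandwich}; this produces the two extra teleportation layers $\ell+1,\ell+2$ and places the ancilla nodes on layer $\ell+1$. For $ZX$- and $XZ$-type rounds, where the generators have mixed $X$ and $Z$ support, I first apply the circuit identity \cref{eq:cx-cz-identity} to commute all $Z$-parts to one side of all $X$-parts, with the orientation dictated by the previous round's label; each commutation of a control-$S_i^X$ past a control-$S_j^Z$ with odd overlap deposits a \texttt{CZ} between the two ancilla wires, which is exactly the ancilla--ancilla edge (and self-loop when $i=j$, reflecting an odd number of $Y$'s in $G_i^{(t)}$) described in the statement. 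The remaining pure-Pauli block is then compiled as in the $X$- or $Z$-type case, which also fixes whether the ancilla nodes land on layer $\ell$ ($ZX$-type) or on the newly created top layer ($XZ$-type).

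Next, since the labeling ensures Hadamard separation, I iteratively apply \cref{lemma:h-and-hs-merge-left}, \cref{lemma:h-and-hs-merge-right} and \cref{lemma:cz-cz-merge} to absorb every measurement-based Hadamard into the adjacent \texttt{CZ} network and fuse all the per-round networks into one, producing an MBQC pattern $\mbqc(A,\mathcal{I},\mathcal{O},b)$. The bookkeeping---input wires giving layer-$0$ virtual nodes and layer-$1$ data nodes, each round contributing either zero ($Z$-type) or two ($X$/$ZX$/$XZ$-type) data layers, a final output layer, and input stabilizers $S_i$ giving gauge nodes $a_i^{(0)}$ connected through the $Z$-part of $S_i$ on layer $0$ and the $X$-part on layer $1$ as in \cref{def:compressed-rep-cluster-state}---is routine and reproduces exactly the graph in the statement. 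For the detectors, \cref{prop:stabilizers_backle_spackle} says every spacetime stabilizer is a spackle of a set of measurements and input stabilizers or a backle of a set of measurements; tracking the propagation of a single ancilla's $X$-operator through the merged network (using, as in \cref{sec:from-stabilizer-codes-to-cluster-states}, that the ancilla--ancilla \texttt{CZ}s and the propagated data-qubit operators together offset all stray support on other ancilla wires) shows that the spackle of $G_i^{(t)}$ corresponds to the set of nodes called the spackle of $a_i^{(\ell)}$, and likewise for backles. A detector $D\in\mathcal{D}_S$ (resp.\ $D\in\mathcal{D}_B$), being a combination of such measurements whose combined propagation is a genuine spacetime stabilizer, then maps to the symmetric difference of the corresponding ancilla spackles (backles); since rules A and B preserve the support of detectors on input and output nodes (the observation opening \cref{sec:from-spacetime-codes-to-cluster-states}), this correspondence survives the whole compilation, and the detectors attached only to virtual ancilla nodes come from the identity block of $\tilde H$ in \cref{def:compressed-rep-cluster-state}.

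I expect the main obstacle to be the $ZX$- versus $XZ$-type distinction and verifying that the recursive labeling genuinely keeps every pair of \texttt{CZ} networks Hadamard-separated throughout the merging---in particular that a $Z$-type round never follows an $XZ$-type round, and that the orientation of the teleportation layers (which end of a mixed round carries its ancilla nodes, layer $\ell$ for $ZX$-type and layer $\ell+2$ for $XZ$-type) stays consistent across the induction. Aligning the layer indices and the $\ell+2t+1$ offsets in the spackle/backle definitions with the actual number of teleportations contributed by each round is the other delicate point, though it is ultimately a careful induction rather than a conceptual difficulty.
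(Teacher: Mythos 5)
Your proposal follows essentially the same route as the paper's proof: per-round compilation into \texttt{CZ} networks according to the round label, use of the circuit identities to commute $X$- and $Z$-parts in mixed rounds (producing the ancilla--ancilla edges and self-loops), iterative merging via \cref{lemma:h-and-hs-merge-left}, \cref{lemma:h-and-hs-merge-right} and \cref{lemma:cz-cz-merge}, and tracking detectors by propagating spackles and backles through the compilation with the residual operators cancelling in products. The paper's proof carries out exactly this induction, so no substantive difference to report.
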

\begin{proof}
We construct the cluster state complex by progressively compiling every round of the dynamical code into a \texttt{CZ} network and merging it into the previous one. The proof works by induction on the number of compiled rounds.

The $n$ qubits of the initial circuit are first added to the graph as nodes $q_1^{(1)},\ldots,q_n^{(1)}$ on layer $\ell=1$.

We now assume that the first $t-1$ rounds of the dynamical code have been compiled into a \texttt{CZ} network, which gives the first $\ell$ layers of the cluster state complex described in the theorem statement. We examine the different possibilities for the $t$-th round of the dynamical code.

If $G^{(t)}$ is of $X$-type, we compile the measurements into a \texttt{CZ} network with \texttt{H} gates on both sides, using \cref{lemma:pushing-h-away} to move the \texttt{H} gates outside the blue box:
\begin{align*}
    \tikzsetnextfilename{sec7-proof-1}
    \begin{quantikz}[row sep=10pt, column sep=6pt]
                           & \phantomgate{H} & \gatebox{6}{4}  & \mygate{3}{G^{(t)}_1}  &        & \mygate{3}{G^{(t)}_{m_t}} &  &                        \\[-1.5em]
                           & \setwiretype{n} & \vdots          &                        & \cdots &                           &  &                        \\[-1.1em]
                           &                 &                 &                        &        &                           &  &                        \\
        \lstick{$\ket{+}$} &                 &                 & \ctrl{-1}              &        &                           &  & \meterD{M_1^{(t)}}     \\[-1.2em]
                           & \setwiretype{n} & \vdots          &                        & \ddots &                           &  &                        \\[-0.7em]
        \lstick{$\ket{+}$} &                 &                 &                        &        & \ctrl{-3}                 &  & \meterD{M_{m_t}^{(t)}}
    \end{quantikz}
    \sim
    \tikzsetnextfilename{sec7-proof-2}
    \begin{quantikz}[row sep=10pt, column sep=6pt]
                           & \gate{H}        & & \gatebox{6}{4}  & \mygate{3}{\bar{G}^{(t)}_1}  &        & \mygate{3}{\bar{G}^{(t)}_{m_t}} & & \gate{H} &                        \\[-1.5em]
                           & \setwiretype{n} & & \vdots          &                              & \cdots &                                 & &          &                        \\[-1.1em]
                           & \gate{H}        & &                 &                              &        &                                 & & \gate{H} &                        \\
        \lstick{$\ket{+}$} &                 & &                 & \ctrl{-1}                    &        &                                 & &          & \meterD{M_1^{(t)}}     \\[-1.2em]
                           & \setwiretype{n} & & \vdots          &                              & \ddots &                                 & &          &                        \\[-0.7em]
        \lstick{$\ket{+}$} &                 & &                 &                              &        & \ctrl{-3}                       & &          & \meterD{M_{m_t}^{(t)}}
    \end{quantikz}
\end{align*}
Here, $\bar{G}^{(t)}_i$ denotes the Pauli operator coming from ${G}^{(t)}_i$, where every $X$ has been turned into a $Z$.
The circuit inside the blue box is now a \texttt{CZ} network, which we can merge with a preceding \texttt{CZ} network by compiling the \texttt{H} gates before it into their MBQC version and using \cref{lemma:cz-cz-merge} to merge the networks.
The new ancilla qubits used to compile the Hadamard gates give us a new layer of data nodes $q^{(\ell+1)}_1,\ldots,q^{(\ell+1)}_n$ in the cluster state complex, which are connected to the data qubits of the layer below.
The ancilla qubits used for measuring the operators $\bar{G}^{(t)}_i$ correspond to the ancilla nodes $a_i^{(\ell)}$ in the theorem statement.
Finally, we also compile and merge the \texttt{H} gates coming after the \texttt{CZ} network using \cref{lemma:h-and-hs-merge-right}.
This gives us a new layer of data nodes $q^{(\ell+2)}_1,\ldots,q^{(\ell+2)}_n$ in the cluster state complex.

If $G^{(t)}$ is of $Z$-type, the circuit corresponding to the measurement of the $Z$-stabilizers $G_1^{(t)},\ldots,G_{m_t}^{(t)}$ is already a \texttt{CZ} network. If $t>1$, since we are guaranteed that the previous round is of $ZX$- or $X$-type by construction of the labeling, it can be merged to the previous \texttt{CZ} network by \cref{lemma:cz-cz-merge}.
This compilation does not add any new layer to the cluster state complex, but only ancilla nodes connected to the data nodes of the current layer $\ell$ depending on the support of each $G_i^{(t)}$.

If $G^{(t)}$ is of $XZ$-type or $ZX$-type, we start by decomposing each measurement $G_{i}^{(t)}$ into its $X$ part---denoted $G_{i,X}^{(t)}$---and its $Z$ part---denoted $G_{i,Z}^{(t)}$:
\begin{align*}
    \tikzsetnextfilename{sec7-proof-3}
    \begin{quantikz}[row sep=10pt, column sep=6pt]
                           & \phantomgate{H} & \gatebox{6}{6}  & \mygate{3}{G_{1,Z}^{(t)}} & \mygate{3}{G_{1,Z}^{(t)}} &        & \mygate{3}{G_{n,Z}^{(t)}} & \mygate{3}{G_{n,X}^{(t)}} &  &                        \\[-1.5em]
                           & \setwiretype{n} & \vdots          &                           &                           & \cdots &                           &                           &  &                        \\[-1.1em]
                           &                 &                 &                           &                           &        &                           &                           &  &                        \\
        \lstick{$\ket{+}$} &                 &                 & \ctrl{-1}                 & \ctrl{-1}                 &        &                           &                           &  & \meterD{M_1^{(t)}}     \\[-1.2em]
                           & \setwiretype{n} & \vdots          &                           &                           & \ddots &                           &                           &  &                        \\[-0.7em]
        \lstick{$\ket{+}$} &                 &                 &                           &                           &        & \ctrl{-3}                 & \ctrl{-3}                 &  & \meterD{M_{m_t}^{(t)}}
    \end{quantikz}
\end{align*}
We might now be tempted to turn every control Pauli into \texttt{CZ} and \texttt{H}, in order to then compile the \texttt{H} gates into their measurement-based versions.
However, the presence of multiple layers of \texttt{H} gates within a blue box would prevent us from using \cref{lemma:h-cz-sandwich} to compile the \texttt{H} gates into \texttt{CZ} gates.
The key is to use one of the two circuit identities:
\begin{align}
    \label{eq:cx-cz-identity-1}
    \tikzsetnextfilename{sec7-proof-4}
    \begin{quantikz}
        & \targ{}   & \ctrl{2}  &  &  \\
        & \ctrl{-1} &           &  & \\
        &           & \ctrl{-2} &  &
    \end{quantikz}
    =
    \tikzsetnextfilename{sec7-proof-5}
    \begin{quantikz}
        & \ctrl{2}  & \targ{}   &           &  \\
        &           & \ctrl{-1} & \ctrl{1}  & \\
        & \ctrl{-2} &           & \ctrl{-1} &
    \end{quantikz}
    \\
    \label{eq:cx-cz-identity-2}
    \tikzsetnextfilename{sec7-proof-6}
    \begin{quantikz}
        & \ctrl{1}  & \targ{}  &  &  \\
        & \ctrl{-1} &           &  & \\
        &           & \ctrl{-2} &  &
    \end{quantikz}
    =
    \tikzsetnextfilename{sec7-proof-7}
    \begin{quantikz}
        & \targ{}   & \ctrl{1}  &           &  \\
        &           & \ctrl{-1} & \ctrl{1}  & \\
        & \ctrl{-2} &           & \ctrl{-1} &
    \end{quantikz}
\end{align}
which allows us to push all the \texttt{CZ}s to the left and all the \texttt{CNOT}s to the right (or the reverse), at the cost of adding new connections between ancilla qubits.
We can use this identity to commute the control $G_{i,X}^{(t)}$ and $G_{i+1,Z}^{(t)}$ gates, inserting a \texttt{CZ} between their control qubits if $G_{i,X}^{(t)}$ and $G_{i+1,Z}^{(t)}$ intersect non-trivially on an odd number of qubits.
For instance, if $G_{1,X}^{(t)}$ and $G_{2,Z}^{(t)}$ have an odd intersection, using \cref{eq:cx-cz-identity-1} on the first two measurements gives us the following the circuit:
\begin{align*}
    \tikzsetnextfilename{sec7-proof-8}
    \begin{quantikz}[row sep=11pt, column sep=6pt]
                           & \phantomgate{H} & \gatebox{7}{8} & \mygate{3}{G_{1,Z}^{(t)}} & \mygate{3}{G_{2,Z}^{(t)}} & \mygate{3}{G_{1,X}^{(t)}} & \mygate{3}{G_{2,X}^{(t)}} &        & \mygate{3}{G_{n,Z}^{(t)}} & \mygate{3}{G_{n,X}^{(t)}} &  &                        \\[-1.5em]
                           & \setwiretype{n} & \vdots         &                           &                           &                           &                           & \cdots &                           &                           &  &                        \\[-1.1em]
                           &                 &                &                           &                           &                           &                           &        &                           &                           &  &                        \\
        \lstick{$\ket{+}$} &                 & \ctrl{1}       & \ctrl{-1}                 &                           & \ctrl{-1}                 &                           &        &                           &                           &  & \meterD{M_1^{(t)}}     \\[-0.5em]
        \lstick{$\ket{+}$} &                 & \ctrl{-1}      &                           & \ctrl{-2}                 &                           & \ctrl{-2}                 &        &                           &                           &  & \meterD{M_2^{(t)}}     \\[-1.2em]
                           & \setwiretype{n} & \vdots         &                           &                           &                           &                           & \ddots &                           &                           &  &                        \\[-0.7em]
        \lstick{$\ket{+}$} &                 &                &                           &                           &                           &                           &        & \ctrl{-4}                 & \ctrl{-4}                 &  & \meterD{M_{m_t}^{(t)}}
    \end{quantikz}
\end{align*}
Repeating this commuting process for all the measurements, our circuit takes the form:
\begin{align*}
    \tikzsetnextfilename{sec7-proof-9}
    \begin{quantikz}[row sep=11pt, column sep=6pt]
                           & \phantomgate{H} & \gatebox{6}{8} & \mygate{3}{G_{1,Z}^{(t)}} &        & \mygate{3}{G_{n,Z}^{(t)}} & \mygate{3}{G_{1,X}^{(t)}} &        & \mygate{3}{G_{n,X}^{(t)}} &                    & &            \\[-1.5em]
                           & \setwiretype{n} & \vdots         &                           & \cdots &                           &                           & \cdots &                           &                    & &            \\[-1.1em]
                           &                 &                &                           &        &                           &                           &        &                           &                    & &            \\
        \lstick{$\ket{+}$} &                 &                & \ctrl{-1}                 &        &                           & \ctrl{-1}                 &        &                           & \mygate{3}{CZ_A}   & & \meterD{X} \\[-1.2em]
                           & \setwiretype{n} & \vdots         &                           & \ddots &                           &                           & \ddots &                           &                    & &            \\[-0.7em]
        \lstick{$\ket{+}$} &                 &                &                           &        & \ctrl{-3}                 &                           &        & \ctrl{-3}                 &                    & & \meterD{X}
    \end{quantikz}
\end{align*}
where $CZ_A$ is a network of \texttt{CZ} gates with adjacency matrix $A$, defined by $A_{ij}=1$ if and only if $G_{i,Z}^{(t)}$ and $G_{j,X}^{(t)}$ have an odd intersection.
We can now compile the control $G_{i,X}^{(t)}$ into \texttt{H} and \texttt{CZ}
\begin{align} \label{eq:zx-circuit}
    \tikzsetnextfilename{sec7-proof-10}
    \begin{quantikz}[row sep=11pt, column sep=6pt]
                           & \phantomgate{H} & \gatebox{6}{9} & \mygate{3}{G_{1,Z}^{(t)}} &        & \mygate{3}{G_{n,Z}^{(t)}} & \mygate{1}{H} & \mygate{3}{\bar{G}_{1,X}^{(t)}} &        & \mygate{3}{\bar{G}_{n,X}^{(t)}} &                  & & \mygate{1}{H} &                        \\[-1.5em]
                           & \setwiretype{n} & \vdots         &                           & \cdots &                           &               &                                 & \cdots &                                 &                  & &               &                        \\[-1.1em]
                           &                 &                &                           &        &                           & \mygate{1}{H} &                                 &        &                                 &                  & & \mygate{1}{H} &                        \\
        \lstick{$\ket{+}$} &                 &                & \ctrl{-1}                 &        &                           &               & \ctrl{-1}                       &        &                                 & \mygate{3}{CZ_A} & &               & \meterD{M_1^{(t)}}     \\[-1.2em]
                           & \setwiretype{n} & \vdots         &                           & \ddots &                           &               &                                 & \ddots &                                 &                  & &               &                        \\[-0.7em]
        \lstick{$\ket{+}$} &                 &                &                           &        & \ctrl{-3}                 &               &                                 &        & \ctrl{-3}                       &                  & &               & \meterD{M_{m_t}^{(t)}}
    \end{quantikz}
\end{align}
where we used \cref{lemma:pushing-h-away} to move the right-most \texttt{H} gates outside the blue box.
Similarly, using \cref{eq:cx-cz-identity-2}, we obtain the following circuit where the $X$ measurements come first:
\begin{align} \label{eq:xz-circuit}
    \tikzsetnextfilename{sec7-proof-11}
    \begin{quantikz}[row sep=11pt, column sep=6pt]
                           & \phantomgate{H} & \mygate{1}{H} &        & \mygate{3}{\bar{G}_{1,X}^{(t)}} \gatebox{6}{8} &        & \mygate{3}{\bar{G}_{n,X}^{(t)}} & \mygate{1}{H} & \mygate{3}{G_{1,Z}^{(t)}} &        & \mygate{3}{G_{n,Z}^{(t)}} &                  &  &                        \\[-1.5em]
                           & \setwiretype{n} &               & \vdots &                                                & \cdots &                                 &               &                           & \cdots &                           &                  &  &                        \\[-1.1em]
                           &                 & \mygate{1}{H} &        &                                                &        &                                 & \mygate{1}{H} &                           &        &                           &                  &  &                        \\
        \lstick{$\ket{+}$} &                 &               &        & \ctrl{-1}                                      &        &                                 &               & \ctrl{-1}                 &        &                           & \mygate{3}{CZ_A} &  & \meterD{M_1^{(t)}}     \\[-1.2em]
                           & \setwiretype{n} &               & \vdots &                                                & \ddots &                                 &               &                           & \ddots &                           &                  &  &                        \\[-0.7em]
        \lstick{$\ket{+}$} &                 &               &        &                                                &        & \ctrl{-3}                       &               &                           &        & \ctrl{-3}                 &                  &  & \meterD{M_{m_t}^{(t)}}
    \end{quantikz}
\end{align}
If the round $t$ is of $ZX$-type, we pick \cref{eq:zx-circuit} as our compilation, while if it is of $XZ$-type, we pick \cref{eq:xz-circuit}.
We can now compile the \texttt{H} gates in-between the \texttt{CZ} gates into their MBQC version using \cref{lemma:h-cz-sandwich}, giving us a \texttt{CZ} network within the blue box.

If the round is of $ZX$-type, we are guaranteed that the previous round is of $X$- or $ZX$-type, meaning that the compiled version finishes with an MBQC Hadamard gate.
We can therefore merge the  \texttt{CZ} network of \cref{eq:zx-circuit} with the previous \texttt{CZ} network by using \cref{lemma:cz-cz-merge}.
We also compile the right-most Hadamard gates of the circuit into their MBQC versions.
If the round is of $XZ$-type, we compile the left-most Hadamard gates of \cref{eq:xz-circuit} into their MBQC version and merge the \texttt{CZ} networks using \cref{lemma:cz-cz-merge}.
For both types of round, this new addition to the \texttt{CZ} network can be seen as adding two layers of data nodes to the cluster state complex. For a $ZX$-type round, the ancilla nodes connect the data nodes at layers $\ell$ and $\ell+1$, according to $G_{1,Z}^{(t)}$ and $G_{1,Z}^{(t)}$ respectively. For an $XZ$-type round, the ancilla nodes connect layers $\ell+1$ and $\ell+2$, according to $G_{1,X}^{(t)}$ and $G_{1,Z}^{(t)}$ respectively. Finally, due to the $\texttt{CZ}_A$ gate in the circuits, the ancilla nodes of index $i$ and $j$ are connected to each other if $G_{i,X}^{(t)}$ and $G_{j,Z}^{(t)}$ have an odd intersection.

We finish the construction by adding the virtual data nodes at layer $\ell=0$ and $\ell=L+1$, and their connections to layers $\ell=1$ and $\ell=L$ respectively, which are added following the definition of the compressed representation of the cluster state complex.

Let us now derive the detector nodes of the complex.
We start with the original circuit and study the evolution of the spacetime stabilizers as the circuit transforms in the process described above.
For each $D \in \mathcal{D}_S$, we have a spacetime stabilizer consisting of the spackle of all the measurements in $D$.
Since every measurement is implemented in the original circuit with a controlled-Pauli gate and an ancilla qubit initialized in the $\ket{+}$ state, the spackle of a measurement is here the spackle of the $X$ operator stabilizing the corresponding ancilla qubit.
For instance, the spackle of the measurement $G^{(t)}_1$ looks as follows
\begin{align*}
    \tikzsetnextfilename{sec7-proof-12}
    \begin{quantikz}[row sep=10pt, column sep=6pt]
                       &                          & \ \ldots \ & & \gatebox{9}{4} & \mygate{3}{G^{(t)}_1}  &        & \mygate{3}{G^{(t)}_{m_t}} & \phantomgate{H} & \labeledwire{nicered}{P_1}           & \phantomgate{H} & \mygate{3}{G^{(t+1)}_1} \gatebox{9}{3} &        & \mygate{3}{G^{(t+1)}_{m_{t+1}}} & \phantomgate{H} & \labeledwire{nicered}{P_1}                 & \phantomgate{H} & \ \ldots \ & \labeledwire{nicered}{P_1}                 & \phantomgate{H} &                        \\[-1.5em]
                       & \setwiretype{n}          &            & & \vdots         &                        & \cdots &                           &                 &                                      &                 &                                        & \cdots &                                 &                 &                                            &                 &            &                                            &                 &                        \\[-1.1em]
                       &                          & \ \ldots \ & &                &                        &        &                           &                 & \labeledwire{nicered}{P_n}           &                 &                                        &        &                                 &                 & \labeledwire{nicered}{P_n}                 &                 & \ \ldots \ & \labeledwire{nicered}{P_n}                 &                 &                        \\
    \lstick{$\ket{+}$} & \labeledwire{nicered}{X} & \ \ldots \ & &                & \ctrl{-1}              &        &                           &                 & \labeledwire{nicered}{Q^{(t)}_1}     &                 &                                        &        &                                 &                 & \labeledwire{nicered}{Q^{(t)}_1}           &                 & \ \ldots \ & \labeledwire{nicered}{Q^{(t)}_1}           &                 & \meterD{M^{(t)}_1}     \\[-1.2em]
                       & \setwiretype{n}          &            & & \vdots         &                        & \ddots &                           &                 &                                      &                 &                                        &        &                                 &                 &                                            &                 &            &                                            &                 &                        \\[-0.7em]
    \lstick{$\ket{+}$} &                          & \ \ldots \ & &                &                        &        & \ctrl{-3}                 &                 & \labeledwire{nicered}{Q^{(t)}_{m_t}} &                 &                                        &        &                                 &                 & \labeledwire{nicered}{Q^{(t)}_{m_t}}       &                 & \ \ldots \ & \labeledwire{nicered}{Q^{(t)}_{m_t}}       &                 & \meterD{M^{(t)}_{m_t}} \\
    \lstick{$\ket{+}$} &                          & \ \ldots \ & &                &                        &        &                           &                 &                                      &                 & \ctrl{-4}                              &        &                                 &                 & \labeledwire{nicered}{Q^{(t+1)}_1}         &                 & \ \ldots \ & \labeledwire{nicered}{Q^{(t+1)}_1}         &                 & \meterD{M^{(t+1)}_1}   \\[-1.2em]
                       & \setwiretype{n}          &            & & \vdots         &                        & \ddots &                           &                 &                                      &                 &                                        & \ddots &                                 &                 &                                            &                 &            &                                            &                 &                        \\[-0.7em]
    \lstick{$\ket{+}$} &                          & \ \ldots \ & &                &                        &        &                           &                 &                                      &                 &                                        &        & \ctrl{-6}                       &                 & \labeledwire{nicered}{Q^{(t+1)}_{m_{t+1}}} &                 & \ \ldots \ & \labeledwire{nicered}{Q^{(t+1)}_{m_{t+1}}} &                 & \meterD{M^{(t+1)}_{m_{t+1}}}
    \end{quantikz}
\end{align*}
where $P_1,\ldots,P_n$ are the Pauli operators that comprise $G^{(t)}_1$, and $Q^{(\ell)}_1,\ldots,Q^{(\ell)}_{m_\ell}$ for $\ell \geq t$ are residual Pauli operators due to $P_1,\ldots,P_n$ potentially not commuting with some $G^{(\ell)}_i$.
When multiplying the spackles of all the measurements involved in a spacetime stabilizer, all those residual operators will be equal to $I$ or $M^{(\ell)}_{i}$ (such that the operator commutes with the final measurement).
We therefore only need to look at the evolution of $P_1,\ldots,P_n$ at the different rounds when transforming the circuit.

If the round is of type $XZ$ or $ZX$, the first step was to compile each round of the circuit into \cref{eq:xz-circuit} or \cref{eq:zx-circuit}.
When compiling the Hadamard gates into their MBQC versions in the next step, any spacetime operator transforms as:
\begin{align} \label{eq:hadamard-detector-propagation}
    \begin{split}
        \tikzsetnextfilename{sec7-proof-13}
        \begin{quantikz}[row sep=10pt, column sep=8pt]
            & \labeledwire{nicered}{Z} & \gate{H} & \labeledwire{nicered}{X} &
        \end{quantikz}
        \longrightarrow
        \tikzsetnextfilename{sec7-proof-14}
        \begin{quantikz}[row sep=10pt, column sep=8pt]
            \lstick{$\ket{+}$} & \labeledwire{nicered}{X}  & \ctrl{1}  & \labeledwire{nicered}{X}  &            \\
                            & \labeledwire{nicered}{Z}  & \ctrl{-1} &                           & \meterD{X}
        \end{quantikz}
        \\
        \tikzsetnextfilename{sec7-proof-15}
        \begin{quantikz}[row sep=10pt, column sep=8pt]
            & \labeledwire{nicered}{X} & \gate{H} & \labeledwire{nicered}{Z} &
        \end{quantikz}
        \longrightarrow
        \tikzsetnextfilename{sec7-proof-16}
        \begin{quantikz}[row sep=10pt, column sep=8pt]
            \lstick{$\ket{+}$} &                           & \ctrl{1}  & \labeledwire{nicered}{Z}  &            \\
                            & \labeledwire{nicered}{X}  & \ctrl{-1} & \labeledwire{nicered}{X}  & \meterD{X}
        \end{quantikz}
    \end{split}
\end{align}
For instance, the spackle of a measurement at a $ZX$-type round transforms as:
\begin{align*} \label{eq:zx-circuit-detector}
    \tikzsetnextfilename{sec7-proof-17}
    \begin{quantikz}[row sep=11pt, column sep=6pt]
        \lstick{$\ket{+}$} & \phantomgate{H} \labeledwire{nicered}{P_1^X}       & \gatebox{12}{13} &                           &        &                           &               &                       &           &                                 &        &                                 &                  &                       & \ctrl{3}  &                 & \labeledwire{nicered}{P_1}           & &                        \\[-1.2em]
                           & \setwiretype{n}                                    & \vdots           &                           &        &                           &               &                       &           &                                 &        &                                 &                  & \reflectbox{$\ddots$} &           &                 &                                      & &                        \\[-0.7em]
        \lstick{$\ket{+}$} & \labeledwire{nicered}{P_n^X}                       &                  &                           &        &                           &               &                       &           &                                 &        &                                 & \ctrl{3}         &                       &           &                 & \labeledwire{nicered}{P_n}           & &                        \\
        \lstick{$\ket{+}$} & \phantomgate{H} \labeledwire{nicered}{\bar{P}_1^X} &                  &                           &        &                           &               &                       & \ctrl{3}  & \mygate{3}{\bar{G}_{1,X}^{(t)}} &        & \mygate{3}{\bar{G}_{n,X}^{(t)}} &                  &                       & \ctrl{-3} & \phantomgate{H} & \labeledwire{nicered}{\bar{P}_1^X}   & & \meterD{X}             \\[-1.5em]
                           & \setwiretype{n}                                    & \vdots           &                           &        &                           &               & \reflectbox{$\ddots$} &           &                                 & \cdots &                                 &                  & \reflectbox{$\ddots$} &           &                 &                                      & &                        \\[-1.1em]
        \lstick{$\ket{+}$} & \labeledwire{nicered}{\bar{P}_n^X}                 &                  &                           &        &                           & \ctrl{3}      &                       &           &                                 &        &                                 & \ctrl{-3}        &                       &           &                 & \labeledwire{nicered}{\bar{P}_n^X}   & & \meterD{X}             \\
                           & \phantomgate{H}                                    &                  & \mygate{3}{G_{1,Z}^{(t)}} &        & \mygate{3}{G_{n,Z}^{(t)}} &               &                       & \ctrl{-3} &                                 &        &                                 &                  &                       &           &                 &                                      & & \meterD{X}             \\[-1.5em]
                           & \setwiretype{n}                                    & \vdots           &                           & \cdots &                           &               & \reflectbox{$\ddots$} &           &                                 & \cdots &                                 &                  &                       &           &                 &                                      & &                        \\[-1.1em]
                           &                                                    &                  &                           &        &                           & \ctrl{-3}     &                       &           &                                 &        &                                 &                  &                       &           &                 &                                      & & \meterD{X}             \\
        \lstick{$\ket{+}$} & \labeledwire{nicered}{X}                           &                  & \ctrl{-1}                 &        &                           &               &                       &           & \ctrl{-4}                       &        &                                 &                  & \mygate{3}{CZ_A}      &           &                 & \labeledwire{nicered}{Q_1^{(t)}}     & & \meterD{M_1^{(t)}}     \\[-1.2em]
                           & \setwiretype{n}                                    & \vdots           &                           & \ddots &                           &               &                       &           &                                 & \ddots &                                 &                  &                       &           &                 &                                      & &                        \\[-0.7em]
        \lstick{$\ket{+}$} &                                                    &                  &                           &        & \ctrl{-3}                 &               &                       &           &                                 &        & \ctrl{-6}                       &                  &                       &           &                 & \labeledwire{nicered}{Q_{m_t}^{(t)}} & & \meterD{M_{m_t}^{(t)}}
    \end{quantikz}
\end{align*}
where $\bar{P}_i$ is the Pauli operator obtained from $P_i$ by exchanging $X$ and $Z$, and where $P_i^X$ (resp.\ $\bar{P}_i^X$) is the restriction of $P_i$ (resp.\ $\bar{P}_i$) to its $X$ part.
We can see that we have $X$ operators on all the data qubits located one layer above those connected to the ancilla qubit in the cluster state complex.
Compiling the subsequent Hadamard gates and propagating the $P_1,\ldots,P_n$ to the rest of the circuit leads to an $X$ operator at every other layer, for each data qubit that supports an $X$ at a given layer, as one can see from \cref{eq:hadamard-detector-propagation}.
This includes the last layer of virtual nodes, which represent the $Z$ part of $P_1,\ldots,P_n$ or $\bar{P}_1,\ldots,\bar{P}_n$ at the output of the circuit.
The same analysis can be done for ancilla qubits located at $XZ$-type, $X$-type and $Z$-type rounds, leading to $X$ operators on data qubits located every other layer, starting one layer above every data qubit connected to the ancilla qubit in the cluster state complex.

Consequently, the spackle of a measurement maps, up to the residual operators $Q_i^{(\ell)}$, to the spackle of the corresponding ancilla node in the cluster state complex.
Since those residual operators cancel when considering the products of spackles forming a spacetime stabilizer, such a spacetime stabilizer maps exactly to the product of spackles of the associated ancilla nodes in the cluster state complex.

It remains to show that the spacetime stabilizers built from the backle of measurements map to the backles of the corresponding ancilla nodes in the cluster state complex.
Let us look at the backle of the measurement $G_{1,Z}^{(t)}$ for a $ZX$-type round preceded by a Hadamard gate:
\begin{align*}
    \tikzsetnextfilename{sec7-proof-18}
    \begin{quantikz}[row sep=11pt, column sep=6pt]
        \lstick{$\ket{+}$} & \phantomgate{H}                      & & \gatebox{12}{14} &           &                       &           &                           &        &                           &               &                       & \ctrl{3}  & \mygate{3}{\bar{G}_{1,X}^{(t)}} &        & \mygate{3}{\bar{G}_{n,X}^{(t)}} &                   & \phantomgate{H} &                                    & &                        \\[-1.5em]
                           & \setwiretype{n}                      & & \vdots           &           &                       &           &                           &        &                           &               & \reflectbox{$\ddots$} &           &                                 & \cdots &                                 &                   &                 &                                    & &                        \\[-1.1em]
        \lstick{$\ket{+}$} &                                      & &                  &           &                       &           &                           &        &                           & \ctrl{3}      &                       &           &                                 &        &                                 &                   &                 &                                    & &                        \\
        \lstick{$\ket{+}$} & \phantomgate{H}                      & &                  &           &                       & \ctrl{3}  & \mygate{3}{G_{1,Z}^{(t)}} &        & \mygate{3}{G_{n,Z}^{(t)}} &               &                       & \ctrl{-3} &                                 &        &                                 &                   &                 & \labeledwire{nicered}{P_1^X}       & & \meterD{X}             \\[-1.5em]
                           & \setwiretype{n}                      & & \vdots           &           & \reflectbox{$\ddots$} &           &                           & \cdots &                           &               & \reflectbox{$\ddots$} &           &                                 & \cdots &                                 &                   &                 &                                    & &                        \\[-1.1em]
        \lstick{$\ket{+}$} &                                      & &                  & \ctrl{3}  &                       &           &                           &        &                           & \ctrl{-3}     &                       &           &                                 &        &                                 &                   &                 & \labeledwire{nicered}{P_n^X} & & \meterD{X}             \\
                           & \labeledwire{nicered}{\bar{P}_1}     & &                  &           &                       & \ctrl{-3} &                           &        &                           &               &                       &           &                                 &        &                                 &                   &                 & \labeledwire{nicered}{\bar{P}_1^X} & & \meterD{X}             \\[-1.5em]
                           & \setwiretype{n}                      & & \vdots           &           & \reflectbox{$\ddots$} &           &                           &        &                           &               &                       &           &                                 &        &                                 &                   &                 &                                    & &                        \\[-1.1em]
                           & \labeledwire{nicered}{\bar{P}_n}     & &                  & \ctrl{-3} &                       &           &                           &        &                           &               &                       &           &                                 &        &                                 &                   &                 & \labeledwire{nicered}{\bar{P}_n^X}       & & \meterD{X}             \\
        \lstick{$\ket{+}$} & \labeledwire{nicered}{Q_1^{(t)}}     & &                  &           &                       &           & \ctrl{-4}                 &        &                           &               &                       &           & \ctrl{-7}                       &        &                                 & \mygate{3}{CZ_A}  &                 & \labeledwire{nicered}{M_1^{(t)}}   & & \meterD{M_1^{(t)}}     \\[-1.2em]
                           & \setwiretype{n}                      & & \vdots           &           &                       &           &                           & \ddots &                           &               &                       &           &                                 & \ddots &                                 &                   &                 &                                    & &                        \\[-0.7em]
        \lstick{$\ket{+}$} & \labeledwire{nicered}{Q_{m_t}^{(t)}} & &                  &           &                       &           &                           &        & \ctrl{-6}                 &               &                       &           &                                 &        & \ctrl{-9}                       &                   &                 &                                    & & \meterD{M_{m_t}^{(t)}}
    \end{quantikz}
\end{align*}
We can see that we have $X$ operators on all data qubits located one layer below those connected to the ancilla qubit in the cluster state complex.
Compiling the preceding Hadamard gates and propagating the $\bar{P}_1,\ldots,\bar{P}_n$ to the rest of the circuit leads to $X$ operators every other layers, as one can see from \cref{eq:hadamard-detector-propagation}.
This includes the first layer of virtual nodes, which represent the $Z$ part of $P_1,\ldots,P_n$ or $\bar{P}_1,\ldots,\bar{P}_n$ at the input of the circuit.
The same analysis can be done for ancilla qubits located at $XZ$-type, $X$-type and $Z$-type rounds, leading to $X$ operators on data qubits located every other layer, starting one layer below every data qubit connected to the ancilla qubit in the cluster state complex.
\end{proof}

\section{Discussion}
\label{sec:discussion}

We introduced a new framework for spacetime codes based on chain complexes, enabling the compilation of Clifford circuits into cluster states while preserving their fault-tolerant properties. This framework can be applied to construct cluster states from non-CSS, subsystem, or dynamical codes. It can also be used to construct cluster states implementing logical operations by compiling the logical circuit into MBQC using our prescription.

We believe that our framework can be applied more broadly. For example, in \cref{sec:from-spacetime-codes-to-cluster-states}, we mapped certain Clifford circuits--—those built from single-qubit gates, controlled-Pauli gates, and single-qubit Pauli measurements—--into cluster states, leaving open whether more general Clifford circuits admit such a mapping.
Furthermore, while we only presented applications of our framework to MBQC compiling, we believe it could also be applied to other areas of fault-tolerant compilation, such as more general fusion-based quantum computing~\cite{bartolucci2021fusionbased}, hook error detection~\cite{beverland2024fault} and flag qubit design~\cite{chao2018quantum,chamberland2018flag}, or Floquetification of codes~\cite{Townsend_Teague_2023,delafuente2024xyzrubycode,delafuente2024dynamical,rodatz2024floquetifying}. Finally, recent discoveries show that cluster states can be constructed beyond foliation~\cite{nickerson2018measurement}.
Examples include crystal structures~\cite{newman2020generating}, which exhibit similar macroscopic behaviour to the foliated surface code but with different microscopic implementations.
Our framework could help us understand these structures either as the compilation of spacetime codes or as fault-tolerant transformations of the foliated surface code circuit

Another interesting research direction emerging from our work concerns its connection to other frameworks. For instance, it would be interesting to study the connection between fault-tolerant maps and distance-preserving rewrite rules in ZX-calculus, which have been widely used in the context of Floquetifying codes~\cite{rodatz2024floquetifying,delafuente2024xyzrubycode,delafuente2024dynamical,rodatz2025fault}, but also to unify different models of fault-tolerance~\cite{bombin2024unifying}.
Similarly, chain maps have been used to reduce the stabilizer weights of low-density parity-check codes~\cite{hastings2023quantumweightreduction,wills2025tradeoffConstructions,sabo2024weight}, to perform lattice surgery on quantum low-density parity-check codes~\cite{cowtan2024css,ide2025fault,poirson2025engineering}, and to map decoders between different codes~\cite{delfosse2014decoding,kubica2023efficient}, though it remains unclear whether these chain maps fit within our framework.

Finally, we identify several promising directions for extending our framework.
One natural extension is to explicitly track the noise model as chain complexes are transformed by fault-tolerant maps.
Doing so would ensure that simulations of different chain complexes yield identical logical error rates.
Concretely, this could be achieved by associating a probability distribution to the middle space of the chain complex and tracking how this distribution evolves under circuit transformations.
A second direction is to generalize fault-tolerant maps from individual chain complexes to families of chain complexes, where distance is preserved only asymptotically. Such a generalization could enable more powerful mapping results and, when combined with the probabilistic approach above, allow threshold theorems to be carried over from one family of fault-tolerant circuits to another.

\section*{Acknowledgments}

The authors would like to thank Dan Browne, Nicolas Delfosse, Oscar Higgott, Timo Hillmann, Julio Magdalena De La Fuente, Ewan Murphy, Clément Poirson and Abhishek Rajput for enlightening discussions.
Research at Perimeter Institute is supported in part by the Government of Canada through the Department of Innovation, Science and Economic Development Canada and by the Province of Ontario through the Ministry of Colleges and Universities.
This research was supported in part by grant NSF PHY-2309135 to the Kavli Institute for Theoretical Physics (KITP).
AP was supported in part by the Engineering and Physical Sciences Research Council (EP/S021582/1).

\clearpage

\bibliographystyle{quantum}
\bibliography{refs}

\end{document}